\tikzset{mybrace/.style={decoration={brace,raise=1.8mm},decorate}}
\numberwithin{equation}{section}
\def\@noindentfalse{\global\let\if@noindent\iffalse}
\def\@noindenttrue {\global\let\if@noindent\iftrue}
\def\@aftertheorem{%
	\@noindenttrue
	\everypar{%
		\if@noindent%
		\@noindentfalse\clubpenalty\@M\setbox\z@\lastbox%
		\else%
		\clubpenalty \@clubpenalty\everypar{}%
		\fi}}
\theoremstyle{plain}
\newtheorem{theorem}{Theorem}[section]
\newtheorem{definition}[theorem]{Definition}
\newtheorem{lemma}[theorem]{Lemma}
\newtheorem{corollary}[theorem]{Corollary}
\newtheorem{proposition}[theorem]{Proposition}
\theoremstyle{definition}
\newtheorem{remark}[theorem]{Remark}
\newtheorem{example}[theorem]{Example}
\newtheorem{assumption}[theorem]{Assumption}
\bf\mathversion{bold}}{\thesubsection\kern1em}{0pt}{}
\bf\mathversion{bold}}{}{0pt}{}
\def\note#1{\par\smallskip%
	\noindent\kern-0.01\hsize%
	{\setlength\fboxrule{0pt}\fbox{\setlength\fboxrule{0.5pt}\fbox{%
				\llap{$\boldsymbol\Longrightarrow$ }%
				\vtop{\hsize=0.98\hsize\parindent=0cm\small\rm #1}%
				\rlap{$\enskip\,\boldsymbol\Longleftarrow$}
	}}}%
}
\def\given{\mskip 0.5mu plus 0.25mu\vert\mskip 0.5mu plus 0.15mu}
\newcounter{bracketlevel}%
\def\@bracketfactory#1#2#3#4#5#6{%
	\expandafter\def\csname#1\endcsname##1{%
		\global\advance\c@bracketlevel 1\relax%
		\global\expandafter\let\csname @middummy\alph{bracketlevel}\endcsname\given%
		\global\def\given{\mskip#5\csname#4\endcsname\vert\mskip#6}\csname#4l\endcsname#2##1\csname#4r\endcsname#3%
		\global\expandafter\let\expandafter\given\csname @middummy\alph{bracketlevel}\endcsname%
		\global\advance\c@bracketlevel -1\relax%
	}%
}
\def\bracketfactory#1#2#3{%
	\@bracketfactory{#1}{#2}{#3}{relax}{0.5mu plus 0.25mu}{0.5mu plus 0.15mu}
	\@bracketfactory{b#1}{#2}{#3}{big}{1mu plus 0.25mu minus 0.25mu}{0.6mu plus 0.15mu minus 0.15mu}
	\@bracketfactory{bb#1}{#2}{#3}{Big}{2.4mu plus 0.8mu minus 0.8mu}{1.8mu plus 0.6mu minus 0.6mu}
	\@bracketfactory{bbb#1}{#2}{#3}{bigg}{3.2mu plus 1mu minus 1mu}{2.4mu plus 0.75mu minus 0.75mu}
	\@bracketfactory{bbbb#1}{#2}{#3}{Bigg}{4mu plus 1mu minus 1mu}{3mu plus 0.75mu minus 0.75mu}
}
\newcounter{ctr}\loop\stepcounter{ctr}\edef\X{\@Alph\c@ctr}%
\edef\csname s\X\endcsname{\noexpand\mathscr{\X}}
\edef\csname c\X\endcsname{\noexpand\mathcal{\X}}
\edef\csname b\X\endcsname{\noexpand\boldsymbol{\X}}
\edef\csname I\X\endcsname{\noexpand\mathbb{\X}}
\let\@IE\IE\let\IE\undefined
\newcommand{\IE}{\mathop{{}\@IE}\mathopen{}}
\let\@IP\IP\let\IP\undefined
\newcommand{\IP}{\mathop{{}\@IP}}
\def\^#1{\relax\ifmmode {\mathaccent"705E #1} \else {\accent94 #1}\fi}
\def\~#1{\relax\ifmmode {\mathaccent"707E #1} \else {\accent"7E #1}\fi}
\def\*#1{\relax#1^\ast}
\edef\-#1{\relax\noexpand\ifmmode {\noexpand\bar{#1}} \noexpand\else \-#1\noexpand\fi}
\def\>#1{\vec{#1}}
\def\.#1{\dot{#1}}
\def\atop{\@@atop}
\renewcommand{\leq}{\leqslant}
\renewcommand{\geq}{\geqslant}
\newcommand{\ub}{\mathbf{u}}
\newcommand{\ro}{\mathrm{o}}
\newcommand{\rO}{\mathrm{O}}
\newcommand{\ri}{\mathrm{i}}
\newcommand\indep{\protect\mathpalette{\protect\@indep}{\perp}}
\def\@indep#1#2{\mathrel{\rlap{$#1#2$}\mkern2mu{#1#2}}}
\def\parsetime#1#2#3#4#5#6{#1#2:#3#4}
\def\parsedate#1:20#2#3#4#5#6#7#8+#9\empty{20#2#3-#4#5-#6#7 \parsetime #8}
\def\moddate{\expandafter\parsedate\pdffilemoddate{\jobname.tex}\empty}
\theoremstyle{definition}
\theoremstyle{remark}
\theoremstyle{definition}
\theoremstyle{plain}
\theoremstyle{plain}
\theoremstyle{plain}
\theoremstyle{plain}
\providecommand{\conditionname}{Condition}
\providecommand{\definitionname}{Definition}
\providecommand{\lemmaname}{Lemma}
\providecommand{\propositionname}{Proposition}
\providecommand{\remarkname}{Remark}
\providecommand{\corollaryname}{Corollary}
\providecommand{\theoremname}{Theorem}
\begin{document}
	
	\title{\bf \Large{Extreme eigenvalues of sample covariance matrices under generalized elliptical models with applications}} 
	
\author[1]{Xiucai Ding \thanks{E-mail: xcading@ucdavis.edu. XCD is partially supported by NSF-DMS 2113489 and a grant from UC Davis COVID-19 Research Accelerator Funding Track.  The author also wants to thank Zhigang Bao, Hong Chang Ji, Jiang Hu, Miles Lopes, Debashis Paul and Fan Yang  for many helpful discussions. }}
\author[2]{Jiahui Xie \thanks{E-mail:  jiahui.xie@u.nus.edu.}}
\author[3]{Long Yu  \thanks{E-mail: yulong@mail.shufe.edu.cn}}
\author[2]{Wang Zhou \thanks{E-mail: wangzhou@nus.edu.sg.}}

\affil[1]{Department of Statistics, University of California, Davis}

\affil[2]{Department of Statistics and Data Science, National University of Singapore}

\affil[3]{School of Statistics and Management, Shanghai University of Finance and Economics}

	\date{}
	
	\maketitle
	\begin{abstract}
We consider the extreme eigenvalues of the sample covariance matrix $Q=YY^*$ under the generalized elliptical model that $Y=\Sigma^{1/2}XD.$ Here $\Sigma$ is a bounded $p \times p$ positive definite deterministic matrix representing the population covariance structure, $X$ is a $p \times n$ random matrix containing either independent columns sampled from the unit sphere in $\mathbb{R}^p$ or i.i.d. centered entries with variance $n^{-1},$ and $D$ is a diagonal random matrix containing i.i.d. entries and independent of $X.$ Such a model finds important applications in statistics and machine learning. For example, when $X$ contains independent samples from the unit sphere, $Q$ is the sample covariance matrix of elliptically distributed data. For another instance, when $X$ contains i.i.d. entries, then $Q$ can be understood as the bootstrapped sample covariance matrix for $\Sigma^{1/2}X$ where $D$ represents the resampling scheme or random weights.

In this paper, assuming that $p$ and $n$ are comparably large, we prove that the extreme edge eigenvalues of $Q$  can have several types of distributions  depending on $\Sigma$ and $D$ asymptotically. These distributions include: Gumbel, Fr{\'e}chet, Weibull, Tracy-Widom, Gaussian or their mixtures. On the one hand, when the random variables in $D$ have unbounded support, the edge eigenvalues of $Q$ can have either Gumbel or Fr{\'e}chet distribution depending on the tail decay property of $D.$ On the other hand, when the random variables in $D$ have bounded support, under some mild regularity assumptions on $\Sigma,$ the edge eigenvalues of $Q$ can exhibit Weibull, Tracy-Widom, Gaussian or their mixtures. The phase transitions rely on the behavior of the random variables of $D$ near the edges of their supports and the aspect ratio $p/n.$ Based on our theoretical results,  we consider two important applications. First, we propose some statistics and procedure based on edge statistics
to detect and estimate the possible spikes for elliptically distributed data. We also justify their superior theoretical properties. Second, in the context of a factor model, by using the multiplier bootstrap procedure via selecting the weights in $D,$ we establish the second order asymptotic relation of the eigenvalues between the bootstrapped and unbootstrapped sample covariance matrices. Based on the results, we propose a new algorithm to infer and estimate the number of factors in the factor model. Numerical simulations also confirm the accuracy and powerfulness of our proposed methods and illustrate better performance compared to some existing methods in the literature.     
	\end{abstract}

\maketitle



\section{Introduction}
Covariance matrix plays prominent roles in almost every aspect of multivariate data analysis. In the last few decades, due to technological advancements and availability of massive data
collected from novel resources, there has been a growing interest in developing methodologies
and tools to response to this high-dimensionality and complexity. This situation is certainly not
suited for the classical multivariate statistics, but rather calls for techniques from high dimensional
statistics \cite{yao2015sample}. Consider $\mathbf{y}_i \sim \mathbf{y} \in \mathbb{R}^p,  1 \leq i \leq n,$ are i.i.d. observations of a random vector $\mathbf{y} $ that
\begin{equation}\label{eq_generatingmodel}
\mathbf{y}=\xi  T \mathbf{x} \in \mathbb{R}^p, 
\end{equation}
where $\xi \in \mathbb{R}$ is a random variable, $T^*T=\Sigma \in \mathbb{R}^{p \times p}$ is some positive definite deterministic matrix, and $\mathbf{x} \in \mathbb{R}^p$ is a random vector that is independent of $\xi.$ For high dimensionality, we mean that $p$ and $n$ are comparably large.

The model (\ref{eq_generatingmodel}) is referred to as the \emph{generalized elliptical model} \cite{Karoui2009} which finds important applications in statistics.  We now list but a few examples. First, when $\mathbf{x}$ is distributed on the unit sphere $\mathbb{S}^{p-1},$ it becomes the elliptically distributed data as in \cite{fang1990} which includes multivariate Pearson and multivariate student-$t$ distributions as special examples. The model is commonly used in finance, robust statistics and signal processing to model the heterogeneity and heavy tailness \cite{Hu2019,hu2019central,kariya2014robustness, owen1983class,schmidt2003credit,usseglio2018estimation}. Second, when $\mathbf{x}$ contains i.i.d. centered random variables with variance $n^{-1},$ it has been widely used in financial econometrics, multivariate data analysis and modern statistical learning theory \cite{el2018impact, el2013robust,li2018structure,  yang2021testing}. Finally, when $\xi$ is regarded as a random weight or sampling, (\ref{eq_generatingmodel}) is closely related to high dimensional bootstrap method and deep learning theory. For example,  in  \cite{el2019non,2022arXiv220206188Y}, the authors consider the problem
of non-parametric bootstrap for individual eigenvalues where $\xi$ follows the multinomial distribution. For another instance, in \cite{lopes2019bootstrapping,naumov2019bootstrap}, the authors consider
the construction of bootstrapped confidence intervals for spectral projectors or spectrum where $\xi$ follows either Gaussian distribution or Pareto distribution.  Additionally, in the analysis of neural networks \cite{PLone,PLthree,PLtwo}, the columns of the input-output
Jacobian matrices have the form of (\ref{eq_generatingmodel}); 
see Examples \ref{exam_elliptical} and \ref{exam_boostrapping} for more details. 

Given the samples $\mathbf{y}_i=\xi_i T \mathbf{x}_i, 1 \leq i \leq n,$ we can write the data matrix  $Y=TXD,$ where $X=(\mathbf{x}_i)$ and $D$ is a diagonal matrix containing $\{\xi_i\}.$ Then the sample covariance matrix can be constructed as follows 
\begin{equation}\label{eq_samplecov}
Q:=YY^* \equiv TXD^2X^*T^*.
\end{equation}
We refer the readers to Section \ref{sec_modelandassumption} for more precise definitions. An important topic in the statistical study of sample covariance matrices is the
asymptotics of the largest eigenvalues of $Q.$ They are of great interests to signal processing \cite{Bao2015,9779233,4493413,silverstein1992signal}, principal component analysis and factor model analysis \cite{bai2002determining,fan2008high,fan2022estimating,onatski2009testing,onatski2010determining,passemier2014estimation}, covariance matrix testing \cite{johnstone2020testing,johnstone2018pca}, statistical learning theory \cite{fan2019spectral, fan2020spectra, han2021eigen} and time series analysis \cite{bi2022spiked,7347462,zhang2018clt}, to name but a few. Motivated by these applications, in this paper, we study the largest eigenvalues of sample covariance matrices $Q$ in (\ref{eq_samplecov}) under the generalized elliptical model (\ref{eq_generatingmodel}) and consider various applications in signal detection and high dimensional bootstrap. In what follows, we first provide a summary of some related results in Section \ref{subsec_existingresult}. Then we offer an overview of our results in Section \ref{sec_overviewofresults}.


\subsection{Summary of some existing related results}\label{subsec_existingresult}
In this subsection, we summarize the results related to the model (\ref{eq_samplecov}) in random matrix theory and high dimensional statistics literature, with a focus on the extreme eigenvalues and their applications. 

It can be seen that (\ref{eq_samplecov}) has a \emph{separable covariance} structure with $D$ being random and possibly unbounded. In the literature, such a model has been studied to some extent when both $T$ and $D$ are bounded and deterministic. Under this setting, on the global scale, the empirical spectral distribution (ESD) of $Q$ can be best formulated by its Stieltjes transform whose limit can be described via a system of two equations. For example, when $X$ contains i.i.d. entries satisfying some moment assumptions, the system of equations has been derived and studied in various works, for example \cite{couillet2014analysis,Karoui2009,paul2009no,Zhanggeneral}. A special case is when $D=I,$ the two equations will be reduced to a single equation whose solution is known as the (deformed) \emph{Marchenko-Pastur} (MP) law as studied in \cite{MarchenkoandPastur1967}. Similar results have been obtained when $X$ contains i.i.d. columns sampled from the unit sphere \cite{Karoui2009,hu2019central}.  
 
 On the local scale, individual edge eigenvalues have also been studied in various contexts. Especially, under some regularity conditions and moment assumptions, the edge eigenvalues of $Q$ will obey the \emph{Tracy-Widom} (TW) distribution \cite{tracy1994level} asymptotically. Such results have first been established for the case when $D=I$ under various moment assumptions in \cite{Bao2015,Ding&Yang2018,el2007tracy,hachem2016large,Joha2000,John2001,knowles2017anisotropic,lee2016tracy,PillaiandYin2014} and then extended to the general $D$ settings in \cite{9779233,yang2019edge}.  We emphasize that in order to see the TW law, the limiting ESD near the edge needs to exhibit a square root decay behavior. Additionally, motivated by statistical applications, another research line is to study the outlier eigenvalues of $Q$ if a few spikes are added on $T^*T$ or $D^2.$ When $X$ has i.i.d. entries, the \emph{spiked covariance matrix} model (i.e., $D=I$)  has been studied under different setups, for example, in \cite{BY, Baik2005,Baik2006,bao2022statistical,benaych2011eigenvalues,bloemendal2016principal,ding2021spiked11,John2001, zhang2022asymptotic} and the \emph{spiked separable covariance matrix} model (i.e., general $D$) has been investigated in \cite{ding2021spiked}. In these spiked models, roughly speaking, when the spikes are larger than some threshold, the outlier eigenvalues will detach from the bulk spectrum and follow  Gaussian asymptotically. The non-outlier eigenvalues will stick to the right-most edge of the spectrum and follow TW law asymptotically.

In contrast, less is touched when $D$ is random. On the macroscopic scale, the limiting ESD of $Q$ has been studied to some extend in \cite{Karoui2009,hu2019central,Zhanggeneral}. Especially, when $X$ contains independent samples from the unit sphere or i.i.d. entries satisfying certain moment conditions, the Stieltjes transforms of the limiting ESD can also be characterized by two equations. Much less is known for each individual edge eigenvalue.  On the one hand, when $X$ has i.i.d. columns sampled from the unit sphere, under strong conditions on $D^2$ so that the Stieltjes transform of the liming ESD of $Q$ is governed by only one equation, \cite{Wen2021} proved that the TW law held for the extreme eigenvalues and  \cite{Hu2019} studied an associated spiked model. We emphasize that even though the aforementioned two papers allow certain randomness on $D,$ it requires that its entries are infinitely divisible satisfying certain moment assumptions. In this sense, after being properly scaled, $D$ is almost deterministic and isotropic. This also explains the reason that this assumption reduces two equations to only one for the limiting ESD. To relax the assumptions on $D$ for elliptically distributed data, we mention that in a recent work \cite{DX_ell}, the first two authors of this paper prove that when $D$ has bounded support and the limiting ESD exhibits a square-root decay behavior, the edge eigenvalues of $Q$ also follow TW asymptotically. On the other hand, when $X$ has i.i.d. entries, $\Sigma=I$ and the entries of $D$ have bounded support,   it was shown in \cite{Kwak2021} that the edge eigenvalues can have Weibull or Gaussian distributions depending on the edge behavior of the density of the entries of $D.$

 

\subsection{An overview of our results, contributions and novelties}\label{sec_overviewofresults}

In this subsection, we provide an informal overview of our results and outline the main contributions and novelties of our paper. Motivated by the statistical applications discussed above and the challenges summarized in Section \ref{subsec_existingresult}, we study the distributions of the edge eigenvalues (cf. the first few largest eigenvalues) of $Q$ in (\ref{eq_samplecov}) when $D^2$ is random and possibly unbounded, and $X$ either contains independent columns from unit sphere or i.i.d. entries.  We prove that depending on $D^2,$ $\Sigma$ and the aspect ratio $p/n,$ the edge eigenvalues can have various types of distributions asymptotically, including the three extreme value distributions for sequences of i.i.d. random variables \cite{beirlant2004statistics} (cf. Gumbel, Fr{\' e}chet and Weibull), the TW law, Gaussian, or the mixture of TW law and Gaussian. To guide the readers, we now give a heuristic
description of our results.

On the one hand, when $D$ has unbounded support, for convenience, as in Assumption \ref{assum_D}, we impose some mild assumptions on the tail decay behavior due to their popular usage which cover most of the commonly used random variables with unbounded supports. In Theorem \ref{thm_main_unbounded} below, we prove that after being properly scaled by some constant $\varphi$ (cf. (\ref{eq_defnvarphi})), with probability tending to one, $\varphi^{-1} \lambda_1(Q)$ will be sufficiently close to $\xi_{(1)}^2.$ Furthermore, depending on the behavior of $D$, since $\xi_{(1)}^2$ follows Fr{\' e}chet distribution for polynomial type tail decay (cf. (\ref{ass3.1})) and Gumbel distribution for exponential type tail decay (cf. (\ref{ass3.2})) (see Lemma \ref{lem_summaryevt} below), we see that $\varphi^{-1} \lambda_1(Q)$ will follow either Fr{\' e}chet or Gumbel distribution.           

On the other hand, when $D$ has bounded support, say $(0,l],$  we prove that under some mild regularity conditions on $\Sigma$ (cf. Assumption \ref{assum_additional_techinical}), $\lambda_1(Q)$ can have different behaviors depending on the properties of the cumulative distribution function (CDF) of the entries of $D^2$ near $l,$ $\Sigma$ and $p/n.$ We assume that the CDF near $l$ exhibits polynomial decay with exponent $d+1$ as in (\ref{ass3.4}). In Theorem \ref{thm_main_bounded} below, we prove that if $d>1,$ there exists a threshold $\varsigma_3 \equiv \varsigma_3(D, \Sigma)$ in (\ref{eq_phasetransition}) so that when $n/p>\varsigma_3,$ there exist some constants $c_1, c_2$ and $c_3,$  with probability tending to one, $c_1n^{\frac{1}{d+1}}(\lambda_1(Q)-c_2)$ will be sufficiently close to $c_3 n^{\frac{1}{d+1}}(\xi_{(1)}^2-l).$ Since the distribution of the latter statistic can be studied using extreme value theory (cf. Lemma \ref{lem_summaryevt}), we conclude that after being properly scaled and centered, $\lambda_1(Q)$ will follow Weibull distribution asymptotically. Moreover, when $d>1$ but $n/p<\varsigma_3,$ a transition will occur in the sense that $\lambda_1(Q)$ will be influenced by an average of all $\{\xi_i^2\}$ so that $\lambda_1(Q)$ will follow Gaussian asymptotically. Finally, when $-1<d \leq 1,$ the asymptotic distribution of $\lambda_1(Q)$ can be characterized by a convolution of TW law and Gaussian with potentially vanishing variance. Especially, we will see a sharp transition between TW limit and Gaussian limit when the variance of the Gaussian part crosses the order of $n^{-1/6}.$

Our results also find important applications in high dimensional statistics. In Section \ref{sec_statapplication}, we consider two such applications. First, based on the edge statistics, we propose some methodology to detect and estimate the number of spikes in $\Sigma$ for the generalized elliptical model (\ref{eq_generatingmodel}) for general settings of $\xi$ with possibly unbounded support and heavy tails. Second, in the context of high-dimensional factor model, we show that the multiplier bootstrap procedure together with a resampling scheme can still work if the random sampling weights are properly chosen. For better theoretical understanding of our methodologies, we establish the first order convergence results and transitions for outlier and nonoutlier eigenvalues in Theorem \ref{thm_main_spike}, and further theoretical justification of our applications can be found in Corollary \ref{sec_spikeellipticaldata}.  Moreover, in Theorem \ref{sec_boostrapping}, we establish the results on the eigenvalues of the sample covariance matrices with and without multiplier bootstrap. We remark that our theory and applications are highly compatible. The theoretical results are not only interesting and natural on their
own, they are also highly motivated by and indispensable in the applications, which are all fundamental problems in the statistics literature.

We highlight several technical components, insights and novelties of our paper. We refer the readers to Section \ref{sec_sec_ofproof} for more details.  The arguments for $D$ with unbounded and bounded supports are quite different. First, when $D$ has unbounded support, the eigenvalues will be divergent. In this setting, we utilize a perturbation argument. However, as in this case, the limiting ESD of $Q$ may also have unbounded support, the perturbation approach developed in \cite{bloemendal2016principal,ding2021spiked,knowles2013isotropic} cannot be applied directly. Instead, we modify the perturbation arguments by isolating $\mathbf{y}_i$ corresponding to $\xi_{(1)}^2 $ from the data matrix $Y$ as in (\ref{eq_samplecov}); see Section \ref{sec_sec_ofproof} for more detailed discussion. We mention that in our discussion, the scaling in ESD is still of the order $n^{-1}$ so that the smaller eigenvalues of $Q$ are bounded and the larger eigenvalues will be divergent. By doing so, the Stieltjes transforms are still governed by a system of two equations. In contrast, for some other random matrix models with i.i.d. heavy tailed entries \cite{auffinger2009poisson11}, where the first few eigenvalues are also divergent, the authors used the scaling so that the larger eigenvalues were bounded and the smaller eigenvalues would be vanishing. This will result in Poisson convergence. Due to the complicatedness of our model (\ref{eq_generatingmodel}), we find that it is more involved to apply the idea of \cite{auffinger2009poisson11}, if applicable. Second, when $D$ has bounded support, our arguments are non-perturbative and generalize those used in \cite{lee2016extremal,Kwak2021}. In this case, the limiting ESD is bounded and we denote its rightmost edge as $L_+.$ When $d>1$ and $n/p>\varsigma_3,$  $L_+$ can be fully characterized by only one equation (cf. (\ref{eq_onlyoneequationdecide})) and $\lambda_1$ can be connected to $\xi_{(1)}^2$ by a sophisticated understanding of the Stieltjes transforms of the limiting ESDs. On the other hand, when $n/p<\varsigma_3,$ $L_+$ and some other quantities involving the Stieltjes transforms will be determined together by a system of equations (cf. (\ref{eq_edgeequationstwodecide})). Then the distribution of $\lambda_1$ can be connected with an average of all $\{\xi_i^2\}.$ Similar arguments apply to the setting $-1<d \leq 1;$ see Section \ref{sec_sec_ofproof} for more details. Finally, our actual proof relies on two technical inputs. One is the detailed analysis of the Stieltjes transforms of the limiting ESD on local scales in some carefully chosen spectral domains. The other one is the local laws of the resolvents; see Section \ref{sec_asymptoticlocalaveragedlocal} for more details.     



The rest of this article is organized as follows. In Section \ref{sec_modelsetup}, we define the generalized elliptical models, provide some examples and some basic assumptions. We also give the asymptotic laws. In Section \ref{sec_mainresults}, we provide the main results and offer the sketch of the proof strategies. In Section \ref{sec_statapplication}, we study two statistical applications using our established results. Theoretical justifications and numerical simulations are also provided.   The technical proofs are deferred to the appendices. In particular, in Appendix \ref{appendix_preliminary}, we provide some preliminary results and the averaged local laws.  In Appendix \ref{appendxi_locallawproof}, we prove the averaged local laws near the edges. In Appendix \ref{sec_locationofeigenvalues}, we provide the asymptotic locations of the edge eigenvalues and prove the main results and  other results related to our statistical applications. Finally, some auxiliary lemmas are proved in Appendix \ref{appendix_last}.  


\vspace{10pt}
\noindent {\bf Conventions.} Let $\mathbb{C}_+$ be the complex upper half plane. We denote $C>0$ as a generic constant whose value may change from line to line. For two sequences of deterministic positive values $\{a_n\}$ and $\{b_n\},$ we write $a_n=\rO(b_n)$ if $a_n \leq C b_n$ for some positive constant $C>0.$ In addition, if both $a_n=\rO(b_n)$ and $b_n=\rO(a_n),$ we write $a_n \asymp b_n.$ Moreover, we write $a_n=\ro(b_n)$ if $a_n \leq c_n b_n$ for some positive sequence $c_n \downarrow 0.$ Moreover, for a sequence of random variables $\{x_n\}$ and positive real values $\{a_n\},$ we use $x_n=\rO_{\mathbb{P}}(a_n)$ to state that $x_n/a_n$ is stochastically bounded. Similarly, we use $x_n=\ro_{\mathbb{P}}(a_n)$ to say that $x_n/a_n$ converges to zero in probability. For a sequence of positive random variables $\{y_n\},$ we use $y_{(k)}, 1 \leq k \leq n,$ for its order statistics with $y_{(1)} \geq y_{(2)} \geq \cdots \geq y_{(n)}>0.$

\section{Generalized elliptical models and asymptotic laws}\label{sec_modelsetup}

\subsection{The model, motivating examples and basic assumptions}\label{sec_modelandassumption}
In this subsection, we introduce our model and some assumptions. Throughout the paper, we consider data matrix of the following form
\begin{equation}\label{eq_datamatrix}
Y=TXD,
\end{equation}
where $T$ is a $p_1 \times p, \ p_1 \geq p,$ deterministic matrix, $D$ is an $n \times n$ diagonal random matrix containing i.i.d. random variables, and $X$ is a $p \times n$ random matrix independent of $D.$

Motivated by statistical applications and for the purpose of definiteness, we consider the following two general classes of random matrix models in the form of (\ref{eq_datamatrix}). To avoid repetition, we summarize the model settings as follows. 

\begin{assumption}\label{assum_model}
Throughout the paper, we consider the following two model settings in the form of (\ref{eq_datamatrix}):
\begin{enumerate}
\item[(1).] {\bf Elliptically distributed data.} In this setting, we assume that the columns of $X$ are i.i.d. distributed on the unit sphere; that is to say 
\begin{equation*}
X=(\ub_1, \cdots, \ub_n), \ \ub_i \overset{\mathrm{i.i.d.}}{\sim} \mathsf{U}(\mathbb{S}^{p-1}).
\end{equation*}  
Moreover, we assume that $T^*T=\Sigma.$ Due to rotational invariance, without loss of generality, we assume that $\Sigma$ is a diagonal matrix so that  $\Sigma=\operatorname{diag}\left\{ \sigma_1, \cdots, \sigma_p \right\}. $
\item[(2).] {\bf Separable covariance i.i.d. data.} In this setting, we assume that the entries of $X=(x_{ij})$ are centered  i.i.d. random variables satisfying that for $1 \leq i \leq p, 1 \leq j \leq n,$
\begin{equation}\label{eq_standard1n}
\mathbb{E} x_{ij}=0, \ \mathbb{E} x_{ij}^2=\frac{1}{n}. 
\end{equation} 
Moreover, we assume that for all $k \in \mathbb{N},$ there exists some constant $C_k>0$ so that $\mathbb{E}|\sqrt{n} x_{ij}|^k \leq C_k. $ Finally, for $T,$ we assume that $p_1=p$ and $T=\Sigma^{1/2}$ for some positive definite matrix $\Sigma.$
\end{enumerate} 
\end{assumption}

To illustrate the generality and usefulness of the concerned models in Assumption \ref{assum_model}, we provide a few examples and discuss their concrete applications in the statistical literature. 

\begin{example}\label{exam_elliptical} For model (1) of Assumption \ref{assum_model}, according to \cite{CAMBANIS1981368}, the columns of $Y, \{\mathbf{y}_i\}, 1 \leq i \leq n,$ follow the elliptical distributions. More specifically, we can denote 
\begin{equation}\label{eq_ellipiticalmodel}
\mathbf{y}_i=\xi_i T \mathbf{u}_i,
\end{equation}
where $\xi_i \geq 0$ are some nonnegative random variables independent of $\mathbf{u}_i.$ The class of elliptical distributions are  natural generalization of the multivariate normal distributions which remains the
simple linear dependence structure but allows for heavy tails. For example, when $\xi_i \sim \sqrt{F_{p, \nu}},$ where $F_{p,\nu}$ is an $F$ distributed random variable with $p$ and $\nu$ degrees of freedom, then $\mathbf{y}_i$ has a multivariate student-t distribution with $\nu$ degree of freedom and dispersion matrix $\Sigma$ provided it has full rank; see Example 4 of \cite{frahm2004generalized} for more details. Such a distribution has found important applications in finance \cite{schmidt2003credit}.

\end{example}

\begin{example}\label{exam_boostrapping}
For model (2) of Assumption \ref{assum_model}, it has been used in many different contexts. We can write the columns of $Y, \{\mathbf{y}_i\}, 1 \leq i \leq n,$ as follows
\begin{equation*}
\mathbf{y}_i=\xi_i \Sigma^{1/2} \mathbf{x}_i. 
\end{equation*}
First, when $\mathbf{x}_i$'s are Gaussian, the data has been used in \cite{el2018impact,el2013robust} to study the performance of high dimensional robust regressions and used in \cite{hu2019central} to study the large MIMO systems in wireless communications. Second, when the entries $\mathbf{x}_i$ have more general distributions,  in \cite{li2018structure,yang2021testing}, the model has been utilized to study the covariance structures in various settings. Third, when $\xi_i$'s are chosen as the random sampling weights, the model has been used to study the high dimensional bootstrap in \cite{el2019non,naumov2019bootstrap,2022arXiv220206188Y}. Finally, model (2) with the data matrix (\ref{eq_datamatrix}) appears frequently in deep neural networks and are closely related to the  input-output Jacobian matrices \cite{PLone, PLthree, PLtwo}.       
\end{example}

\quad In what follows, we study the extreme singular values of $Y,$ i.e., the first few largest eigenvalues of the $p \times p$ sample covariance matrix $Q$ in (\ref{eq_samplecov}). Or equivalently, the first few largest eigenvalues of its $n \times n$ companion $\mathcal Q$ 
\begin{equation}\label{eq_gram}
\mathcal Q:=D X^*T^* T X D \equiv DX^*\Sigma XD.
\end{equation}  
\noindent In the rest of this subsection, we introduce the  two main technical assumptions. The first assumption (cf. Assumption \ref{assum_D}) is imposed on the random diagonal matrix $D.$
\begin{assumption}\label{assum_D}
Let $D^2=\operatorname{diag} \left\{ \xi_1^2, \cdots, \xi_n^2 \right\}.$ Moreover, for its entries, we assume $\xi_i^2 \sim \xi^2, 1 \leq i \leq n,$ are i.i.d. generated from a nonnegative and non-degenerated random variable $\xi^2$ satisfying the following assumptions.
\begin{enumerate}
\item[(i)] {\bf Unbounded support case.} We assume that $\xi^2$ has an unbounded support and  satisfies either of the following two conditions:  \\
 (a). $\xi^2$ is a  regularly varying random variable  \cite{resnick2008extreme}  that 
	\begin{equation}\label{ass3.1}
	    \mathbb{P}(\xi^2>x)=\frac{L(x)}{x^{\alpha}},\quad x\rightarrow\infty,
	\end{equation}
	for some $\alpha\in[2,+\infty)$, where $L(x)$ is a slowly varying function in the sense that for all $t>0,$ $\lim_{x \rightarrow \infty} L(tx)/L(x)=1.$

\vspace{3pt}
	
(b). $\xi^2$ has an exponential decay tail in the sense that for some constant $\beta>0$ and any fixed constant $t>0$
\begin{equation}\label{ass3.2}
\mathbb{E}e^{t\xi^{2\beta}}<\infty. 
\end{equation}
\item[(ii)] {\bf Bounded support case}.  We assume that
 $\xi^2$ has a bounded support on $(0,l]$ for fixed some constant
$l>0.$ Moreover, for some constant $d>-1,$ we assume that  
\begin{equation}\label{ass3.4}
\mathbb{P}( l-\xi^2 \leq x) \asymp x^{d+1}.
\end{equation}
Finally, let $F(x)$ be the cumulative distribution function (CDF) of $\xi^2,$ we assume that  
\begin{equation}\label{eq_defnbfrak}
0< \mathfrak{b}:= \lim_{x \uparrow l} \frac{1-F(x)}{(l-x)^{d+1}}<\infty. 
\end{equation}

\end{enumerate}


\end{assumption}

\begin{remark}
Several remarks are in order. First, for the unbounded case, (\ref{ass3.1}) indicates that the tails of $\xi^2$ decay polynomially. Many commonly used distributions are included in this category. To name but a few, Pareto distribution, $F$ distribution and student-$t$ distribution. Moreover, according to extreme value theory (see Lemma \ref{lem_summaryevt} below), when (\ref{ass3.1}) is satisfied, $\xi_{(1)}^2$ follows Fr{\' e}chet distribution asymptotically. Second, for the unbounded setting, (\ref{ass3.2}) implies that the tails of $\xi^2$ decay exponentially. In fact, by elementary calculations \cite{hansen2020three}, it is not hard to see that when (\ref{ass3.2}) holds, it is necessarily that the CDF of $\xi^2$ admits 
\begin{equation}\label{eq_defng}
\mathbb{P}(\xi^2>x)=\exp(-\mathsf{g}(x)),
\end{equation}
for some positive decreasing function $\mathsf{g}(x)>0.$ Furthermore, if \begin{equation}\label{assum_gg}
\mathsf{g} \in C^{\infty}([0, \infty)), \ \ \lim_{x \uparrow \infty} (1/\mathsf{g}'(x))'=0,
\end{equation}     
we see from Lemma \ref{lem_summaryevt} that $\xi_{(1)}^2$ follows Gumbel distribution asymptotically. In fact, many commonly used distributions, for instance, Chi-squared distribution, exponential distribution and Gamma distribution, satisfy these conditions. Third, for the bounded case, (\ref{ass3.4}) indicates that $\xi^2$ has a possible polynomial decay behavior near the edge. Under the assumption of (\ref{eq_defnbfrak}), we see from Lemma \ref{lem_summaryevt} that $\xi_{(1)}^2$ obeys Weibull distribution asymptotically. The conditions allow for many distributions like (shifted) Beta distribution, uniform distribution and U-quadratic distribution. In summary, we emphasize that our assumptions in Assumption \ref{assum_D} are general and mild and cover many commonly used examples. In contrast, as mentioned in Section \ref{subsec_existingresult}, existing literature only handles deterministic or nearly deterministic $\xi_i^2, 1 \leq i \leq n.$

%
%
%
\end{remark}

\quad The second assumption (cf. Assumption \ref{assumption_techincial}) introduces some mild conditions on the aspect ratio $p/n$ and the population covariance matrix $\Sigma.$

\begin{assumption}\label{assumption_techincial}
We assume the following conditions hold true for some small constant $0<\tau<1$. 
\begin{enumerate}
\item[(i)] {\bf On dimensionality}. Throughout the paper, we consider the high dimensional regime that 
\begin{equation}\label{ass1}
 \tau \leq \phi:=\frac{p}{n} \leq \tau^{-1}. 
\end{equation}
\item[(ii)] {\bf On $\Sigma.$} For the eigenvalues of $\Sigma,$ denoted as $\sigma_i, 1 \leq i \leq p,$ we assume that
\begin{equation}\label{ass2}
\tau \leq \sigma_p \leq \sigma_{p-1} \leq \cdots \leq \sigma_2 \leq \sigma_1 \leq \tau^{-1}. 
\end{equation}
\end{enumerate}
\end{assumption}

\quad We remark that (\ref{ass1}) is commonly used in random matrix theory and high dimensional statistics literature for quantifying the high dimensionality. (\ref{ass2}) states the eigenvalues of the population covariance matrix are bounded from above and below. On the one hand, when $\xi^2$ has unbounded support as in Case (i) of Assumption \ref{assum_D}, (\ref{ass2}) is the only assumption imposed on $\Sigma.$  On the other hand, when $\xi^2$ has bounded support as in Case (ii) of Assumption \ref{assum_D}, we will provide an additional mild assumption, Assumption \ref{assum_additional_techinical}, after some necessary notations are introduced.



\subsection{Resolvents and asymptotic laws}
%
In this subsection, we introduce some results on the limiting global laws of the eigenvalues of the sample covariance matrices. 
Recall that the empirical spectral distributions (ESD) of $Q$ and $\mathcal{Q}$ in (\ref{eq_samplecov}) and (\ref{eq_gram}) are defined as 
\begin{equation*}
\mu_{Q}:=\frac{1}{p}\sum_{i=1}^p \delta_{\lambda_i(Q)},\quad \mu_{\mathcal{Q}}:=\frac{1}{n}\sum_{j=1}^n \delta_{\lambda_j(\mathcal{Q})}.
\end{equation*}
For $z=E+\ri \eta \in \mathbb{C}_+,$ denote the resolvents  
%
%
\begin{equation}\label{eq_resolvents}
G(z)=(Q-zI)^{-1} \in \mathbb{R}^{p \times p},\quad \mathcal{G}(z)=(\mathcal{Q}-zI)^{-1} \in \mathbb{R}^{n \times n}.
\end{equation}
Correspondingly, the Stieltjes transforms are denoted as 
\begin{equation}\label{eq_mq}
m_{Q}:=\int\frac{1}{x-z}\mu_{Q}=\frac{1}{p}{\rm tr}(G(z)),\quad m_{\mathcal{Q}}:=\int\frac{1}{x-z}\mu_{\mathcal{Q}}=\frac{1}{n}{\rm tr}(\mathcal{G}(z)). 
\end{equation}
Since $Q$ and $\mathcal{Q}$ share the same non-trivial eigenvalues, it suffices to study $\mu_Q$ and $m_Q.$ To characterize the limit of $\mu_Q,$ similarly to \cite{couillet2014analysis, ding2021spiked,Karoui2009, hu2019central,paul2009no, Zhanggeneral}, we consider a system of equations. To avoid repetitions, we summarize these equations in the following definition. 
\begin{definition}[Systems of consistent equations]\label{defn_couplesystem} For $z \in \mathbb{C}_+,$ we define the triplets $(m_{1n}, m_{2n}, m_n) \in \mathbb{C}^3_+,$  via the following systems of equations.
\begin{enumerate}
\item When $Y$ in (\ref{eq_datamatrix}) is generated from the elliptically distributed data as in {\normalfont Case (1)} of Assumption \ref{assum_model},  the equations denoted are as follows
\begin{align}\label{eq_systemequationsm1m2elliptical}
   & m_{1n}(z)=\frac{1}{p}\sum_{i=1}^p\frac{\sigma_i}{-z(1+\sigma_i  m_{2n}(z))},\quad m_{2n}(z)=\frac{1}{p}\sum_{i=1}^n\frac{\xi_i^2}{-z(1+\xi^2_im_{1n}(z))},\\
   & m_{n}(z)=\frac{1}{p}\sum_{i=1}^p\frac{1}{-z(1+\sigma_i  m_{2n}(z))}. \nonumber
\end{align}
\item When $Y$ in (\ref{eq_datamatrix}) is generated from the separable covariance i.i.d. data as in {\normalfont Case (2)} of Assumption \ref{assum_model}, the equations are denoted as follows
\begin{align}\label{eq_systemequationsm1m2}
   & m_{1n}(z)=\frac{1}{n}\sum_{i=1}^p\frac{\sigma_i}{-z(1+\sigma_im_{2n}(z))},\quad m_{2n}(z)=\frac{1}{n}\sum_{i=1}^n\frac{\xi_i^2}{-z(1+\xi^2_im_{1n}(z))},\\
   & m_{n}(z)=\frac{1}{p}\sum_{i=1}^p\frac{1}{-z(1+\sigma_im_{2n}(z))}. \nonumber
\end{align}
\end{enumerate}
\end{definition}

\quad For sufficiently large $n,$ we find that $\mu_Q$ has a nonrandom deterministic equivalent and can be uniquely characterized by the above consistent equations.
This is summarized by the following theorem.   
%
%

\begin{theorem}\label{lem_solutionsystem}
Suppose Assumptions \ref{assum_model}, \ref{assum_D} and \ref{assumption_techincial} hold. Then conditional on some event $\Omega \equiv \Omega_n$ that $\mathbb{P}(\Omega)=1-\ro(1),$ for any $z \in \mathbb{C}_+,$ when $n$ is sufficiently large, there exists a unique solution $(m_{1n}(z), m_{2n}(z), m_{n}(z)) \in \mathbb{C}_+^3$ to the systems of equations in (\ref{eq_systemequationsm1m2elliptical}) and (\ref{eq_systemequationsm1m2}). Moreover, $m_n(z)$ is the Stieltjes transform of some probability density function $\rho \equiv \rho_n$ defined on $\mathbb{R}$ which can be obtained using the inversion formula.
\end{theorem}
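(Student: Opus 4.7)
\emph{Plan.}

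\emph{Step 1: reduce to a deterministic problem on a high-probability event.} In Case (i) of Assumption \ref{assum_D} the entries $\xi_i^2$ can be unbounded, so the first move is to freeze $D$ on a good event. Set $\Omega := \clc{\max_{1 \le i \le n} \xi_i^2 \le C_n}$, where $C_n$ is chosen to diverge just fast enough that a union bound over the assumed tails gives $\IP(\Omega) = 1 - \lito(1)$; concretely one may take $C_n \gg n^{1/\alpha}$ under \eqref{ass3.1}, $C_n \gg (\log n)^{1/\beta}$ under \eqref{ass3.2}, and $C_n = l$ under \eqref{ass3.4}. Conditional on $\Omega$ the $\xi_i^2$ are deterministic parameters inside \eqref{eq_systemequationsm1m2elliptical}--\eqref{eq_systemequationsm1m2}, so the claim reduces to an analytic statement about those systems uniformly over the admissible parameter ranges.

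\emph{Step 2: existence and uniqueness of the coupled solution.} With the $\xi_i^2$ frozen, both systems fit into the Marchenko--Pastur/separable-covariance framework analyzed in \cite{couillet2014analysis,Karoui2009,paul2009no,Zhanggeneral}; the only structural difference between \eqref{eq_systemequationsm1m2elliptical} and \eqref{eq_systemequationsm1m2} is a $1/p$ versus $1/n$ normalization, which does not affect the analytic manipulation. For each fixed $z \in \IC_+$ I would eliminate $m_{2n}$ by substitution to reduce to a scalar self-consistent equation for $m_{1n}$, then prove solvability and uniqueness in $\IC_+$ by the classical fixed-point argument on a suitable invariant subset of $\IC_+$ (the Silverstein-companion construction). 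The other components $m_{2n}(z)$ and $m_n(z)$ are then read off from the remaining equations.

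\emph{Step 3: identifying $m_n$ as a Stieltjes transform with a density.} For the solution produced in Step 2 I would verify the three Nevanlinna--Herglotz conditions: analyticity of $z \mapsto m_n(z)$ on $\IC_+$, from analytic dependence of the fixed point on $z$; positivity $\Im m_n(z) > 0$ on $\IC_+$, from $\Im m_{2n}(z) > 0$ together with the explicit third equation in each system; and the normalization $\lim_{\eta \to \infty} -\ri\eta\, m_n(\ri\eta) = 1$, from the large-$|z|$ expansion $m_{1n}(z), m_{2n}(z) = \rO(|z|^{-1})$ obtained by iterating the system once. These properties together identify $m_n$ as the Stieltjes transform of a probability measure $\rho$ on $\IR$. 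To upgrade $\rho$ to an absolutely continuous density I would show that $m_n$ extends continuously to the real axis off a finite set of edges, so that the Stieltjes inversion formula $\rho(x) = \pi^{-1}\lim_{\eta \downarrow 0}\Im m_n(x + \ri\eta)$ produces the claimed density.

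\emph{Main obstacle.} The chief technical difficulty is making Step 2 uniform in $n$ once $\|D\|^2$ is allowed to grow on $\Omega$: naive contractivity estimates degrade polynomially in $\|D\|^2$ and would force a prohibitively smaller high-probability event. I plan to work inside $z$-dependent invariant subsets of $\IC_+$ whose geometry is controlled by $\Im z$ rather than by $\|D\|$, exploiting the fact that each Herglotz composition contributes a positive imaginary-part gain that absorbs the potentially very large $\xi_i^2$ factors. The secondary subtlety, continuity of $m_n$ up to $\IR$, dovetails with the edge analysis carried out in Appendix \ref{appendix_preliminary}.
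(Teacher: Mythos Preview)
Your proposal is essentially aligned with the paper's own approach: the paper simply cites \cite{Karoui2009}, \cite{paul2009no}, and \cite{ding2021spiked} and states the argument follows verbatim, and your Steps 1--3 are a faithful unpacking of what those references contain (condition on a good event for $D$, then invoke the standard separable-covariance fixed-point machinery and Herglotz representation).

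One comment on your ``main obstacle'': you may be overcomplicating the uniformity issue. The theorem is stated pointwise---for each fixed $z\in\mathbb{C}_+$ and each sufficiently large $n$, conditional on $\Omega$. Once a realization $\{\xi_i^2\}\in\Omega$ is frozen, the system is a deterministic finite system with fixed (possibly large) parameters, and the existence/uniqueness results from the cited references apply for that single $n$ without needing any contractivity constant uniform in $n$ or in $\|D\|$. The growth of $\|D\|^2$ along the sequence only matters for the later local-law and edge analysis, not for the well-posedness of the self-consistent equations at a fixed $n$. So your Step~2 goes through more cheaply than your obstacle paragraph suggests, and you need not engineer $z$-dependent invariant regions to absorb large $\xi_i^2$ factors here.
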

\begin{proof}
The proofs can be obtained by following lines of the arguments  of \cite[Theorem 2]{Karoui2009} and \cite[Theorem 1]{paul2009no}, or \cite[Theorem 2.4]{ding2021spiked} verbatim. We omit the details. 

\end{proof}

\begin{remark}\label{rmk_systemequationsremark}
Several remarks on Theorem \ref{lem_solutionsystem} are in order. First, the probability event $\Omega$ can be constructed explicitly as in Defition \ref{defn_probset} and Lemma \ref{lem_probabilitycontrol}. Second, we prove an unconditional counterpart for Theorem \ref{lem_solutionsystem} by integrating out the randomness of $\xi^2.$ Recall $F(x)$ is the CDF of $\xi^2.$ We take (\ref{eq_systemequationsm1m2}) for an example where the systems of equations are defined as follows
\begin{align}\label{eq_systemequationsm1m2intergrate}
   & m_{1n,c}(z)=\frac{1}{n}\sum_{i=1}^p\frac{\sigma_i}{-z(1+\sigma_im_{2n,c}(z))},\quad m_{2n,c}(z)=\int_0^l \frac{s}{-z(1+s m_{1n,c}(z))} \mathrm{d} F(s),\\
   & m_{n,c}(z)=\frac{1}{p}\sum_{i=1}^p\frac{1}{-z(1+\sigma_im_{2n,c}(z))}. \nonumber
\end{align}
In this setting, $(m_{1n,c}, m_{2n,c}, m_{n,c})$ is always deterministic. Especially, when $\xi^2$ has bounded support as in Case (ii), we can actually obtain stronger results as in \cite{paul2009no} that the support of the associated probability density function $\widetilde{\rho}$ is bounded and denoted as 
\begin{equation}\label{edge_edge_edge}
\text{supp}(\widetilde{\rho})=[L_-, L_+].   
\end{equation}
Finally, we will see later that the conditional and unconditional version are both useful in their own aspects. To be more specific, the conditional version is more powerful when $\xi^2$ has unbounded support whereas the unconditional version is more convenient when $\xi^2$ has bounded support. 
\end{remark}

\quad Thanks to Theorem \ref{lem_solutionsystem}, it is easy to see that the study of the systems of equations in (\ref{eq_systemequationsm1m2elliptical}) and (\ref{eq_systemequationsm1m2}) can be reduced to the analysis of 
\begin{equation}\label{eq_functionFequal}
F_n(m_{1n}(z),z)=0, \quad z\in\mathbb{C}_{+},
\end{equation}
where $F_n(\cdot, \cdot)$ are defined as follows corresponding to (\ref{eq_systemequationsm1m2elliptical}) and (\ref{eq_systemequationsm1m2}), respectively 
\begin{equation}\label{eq:F(m,z)1}
    F_n(m_{1n}(z),z)
    =\frac{1}{p}\sum_{i=1}^p\frac{\sigma_i}{-z+\frac{\sigma_i}{p}\sum_{j=1}^n\frac{\xi^2_j}{1+\xi^2_j m_{1n}(z)}}-m_{1n}(z),
\end{equation} 
and 
\begin{equation}\label{eq:F(m,z)}
    F_n(m_{1n}(z),z)
    =\frac{1}{n}\sum_{i=1}^p\frac{\sigma_i}{-z+\frac{\sigma_i}{n}\sum_{j=1}^n\frac{\xi^2_j}{1+\xi^2_j m_{1n}(z)}}-m_{1n}(z).
\end{equation}

\quad Finally, armed with the above notations, we introduce some additional assumption on $\Sigma$ which will be used when $\xi^2$ has bounded support in the sense of (ii) of Assumption \ref{assum_D}. Such an assumption has been frequently used in the random matrix theory literature, for example, see \cite{Bao2015,Ding&Yang2018,ding2021spiked,9779233,el2007tracy,knowles2017anisotropic,lee2016tracy}. Recall the notations $m_{2n,c}$ and $L_+$ in Remark \ref{rmk_systemequationsremark}. 

\begin{assumption}\label{assum_additional_techinical} When (ii) of Assumption \ref{assum_D} holds, for $\Sigma$ satisfying Assumption \ref{assumption_techincial}, we assume that for some constant $\tau>0$ 
\begin{equation*}
\min_{1 \leq i \leq p} |1+\sigma_i m_{2n,c}(L_+)| \geq \tau. 
\end{equation*}  
\end{assumption}


\section{Main results and proof strategies}\label{sec_mainresults}
  
\subsection{Main results}
Our main results are summarized in Theorems \ref{thm_main_unbounded} and \ref{thm_main_bounded} below. We first provide the results for the extreme eigenvalues when $\xi^2$ has unbounded support in the sense that (i) of Assumption \ref{assum_D} holds. Denote
\begin{equation*}
\bar{\sigma}=\frac{1}{p} \sum_{i=1}^p \sigma_i.
\end{equation*}
Recall $F(x)$ is the cumulative distribution function (CDF) of $\xi^2_i, 1 \leq i \leq n.$ Denote 
\begin{equation}\label{eq_defnbn}
b_n:=\inf \left\{ x: 1-F(x) \leq \frac{1}{n} \right\}.
\end{equation}


Let $\lambda_1$ be the largest eigenvalue of $Q$. Our first result is stated as follows. 

\begin{theorem}[The unbounded support case]\label{thm_main_unbounded} Suppose Assumptions \ref{assum_model}, \ref{assumption_techincial} and (i) of Assumption \ref{assum_D} hold. Then we have that when $n$ is sufficiently large
\begin{equation*}
\frac{\lambda_1}{\xi_{(1)}^2}=\varphi+\ro_{\mathbb{P}}(1).
\end{equation*}
where 
\begin{equation}\label{eq_defnvarphi}
\varphi:=
\begin{cases}
\bar{\sigma}, & \text{if {\normalfont Case (1)} of Assumption \ref{assum_model} holds},\\
\phi \bar{\sigma}, & \text{if {\normalfont Case (2)} of Assumption \ref{assum_model} holds}. 
\end{cases}
\end{equation}
Consequently, when (\ref{ass3.1}) holds, $\varphi^{-1}\lambda_1 $ follows the Fr{\'e}chet distribution asymptotically in the sense that for $x \geq 0$
\begin{equation}\label{eq_mainresultunboundedfrechet}
\lim_{n \rightarrow \infty} \mathbb{P} \left( \frac{\lambda_1}{ \varphi b_n} \leq x \right)=\exp\left(-x^{-\alpha} \right).
\end{equation}
Moreover, when (\ref{ass3.2}) and (\ref{assum_gg}) hold, $ \varphi^{-1}\lambda_1 $ follows the Gumbel distribution asymptotically in the sense that for $x \in \mathbb{R}$
 \begin{equation}\label{eq_mainresultunboundedgumbel}
 \lim_{n \rightarrow \infty} \mathbb{P}\left( \mathsf{g}'(b_n) \left[\varphi^{-1}\lambda_1-b_n \right] \leq x \right)=\exp \left(-e^{-x} \right),
 \end{equation}
 where we recall $\mathsf{g}$ is defined in (\ref{eq_defng}). 

%

\end{theorem}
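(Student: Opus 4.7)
My strategy is to isolate, from the data matrix $Y=TXD$, the single column $\mathbf{y}_{i_*}$ whose weight $\xi_{i_*}^2=\xi_{(1)}^2$ is the largest, and then view $Q$ as a rank-one perturbation of the ``reduced'' sample covariance. Writing $\mathbf{v}=T\mathbf{x}_{i_*}$ and $\theta=\xi_{(1)}^2$, we have
\[
Q \;=\; \widehat Q \;+\; \theta\, \mathbf{v}\mathbf{v}^*,\qquad \widehat Q=\sum_{i\neq i_*}\xi_i^2\,T\mathbf{x}_i\mathbf{x}_i^*T^*.
\]
By exchangeability of the columns of $X$ together with $X\indep D$, conditioning on the order statistics of $D^2$ makes $\mathbf{x}_{i_*}$ independent of $\widehat Q$ and identically distributed with every other column of $X$. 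The proof then has two stages: a first-order location statement $\lambda_1/\xi_{(1)}^2=\varphi+\ro_{\mathbb{P}}(1)$, followed by a Slutsky transfer of the Fr\'echet or Gumbel limit from $\xi_{(1)}^2$ to $\lambda_1$.

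For the first-order location, any eigenvalue $\lambda$ of $Q$ strictly above $\|\widehat Q\|$ is a root of the secular equation
\[
1 + \theta\, \mathbf{v}^*\bigl(\widehat Q - \lambda I\bigr)^{-1}\mathbf{v} = 0.
\]
Under Assumption~\ref{assum_D}(i), $\theta=\xi_{(1)}^2\to\infty$, while the global law in Theorem~\ref{lem_solutionsystem} confines the bulk of $\widehat Q$ to a compact interval. For $\lambda$ comparable to $\theta\varphi$ we may therefore expand
\[
\mathbf{v}^*\bigl(\widehat Q-\lambda I\bigr)^{-1}\mathbf{v}=-\lambda^{-1}\|\mathbf{v}\|^2-\lambda^{-2}\mathbf{v}^*\widehat Q\mathbf{v}+\rO_{\mathbb{P}}(\lambda^{-3}),
\]
and solving at leading order gives $\lambda \approx \theta\|\mathbf{v}\|^2 = \xi_{(1)}^2\,\mathbf{x}_{i_*}^*\Sigma\mathbf{x}_{i_*}$. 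Standard concentration of quadratic forms---L\'evy concentration on $\mathbb{S}^{p-1}$ in Case~(1), Hanson--Wright in Case~(2)---yields $\mathbf{x}_{i_*}^*\Sigma\mathbf{x}_{i_*}=\varphi+\rO_{\mathbb{P}}(p^{-1/2})$, which already establishes $\lambda_1/\xi_{(1)}^2=\varphi+\ro_{\mathbb{P}}(1)$.

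The main technical obstacle is certifying that the Neumann remainder is small enough for the extreme-value scalings. In the Fr\'echet regime (\ref{ass3.1}), the ratio $\xi_{(2)}^2/\xi_{(1)}^2$ has a nondegenerate limit strictly less than one, so $\|\widehat Q\|/\theta$ is bounded away from one with high probability and the secular series converges geometrically; an induction handles $\lambda_2,\lambda_3,\ldots$ simultaneously. In the Gumbel regime (\ref{ass3.2})--(\ref{assum_gg}), however, $\xi_{(1)}^2-\xi_{(2)}^2=\rO_{\mathbb{P}}(1/\mathsf{g}'(b_n))$ and $\|\widehat Q\|$ is comparable to $\theta$, so the na\"ive Neumann bound fails. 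The resolution is to invoke the averaged local laws for $G$ and $\mathcal{G}$ from Appendix~\ref{appendxi_locallawproof} at the carefully chosen scales advertised in Section~\ref{sec_asymptoticlocalaveragedlocal}, which promote the correction $\theta^{-2}\mathbf{v}^*\widehat Q\mathbf{v}\approx \theta^{-2}\varphi\cdot\operatorname{tr}(\widehat Q)/p$ to an explicit deterministic quantity and control the residual in the secular equation at order $\ro(1/\mathsf{g}'(b_n))$---just sharp enough to be absorbed by the Gumbel normalization. In my view, delivering this sharpened remainder bound is the real heart of the proof.

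Once $\lambda_1=\varphi\,\xi_{(1)}^2+\ro_{\mathbb{P}}(r_n)$ is secured with an adequate rate $r_n$, the Fr\'echet conclusion (\ref{eq_mainresultunboundedfrechet}) follows from $\xi_{(1)}^2/b_n$ converging to the Fr\'echet$(\alpha)$ law (Lemma~\ref{lem_summaryevt} applied under (\ref{ass3.1})) combined with Slutsky's theorem; the Gumbel conclusion (\ref{eq_mainresultunboundedgumbel}) follows analogously from $\mathsf{g}'(b_n)(\xi_{(1)}^2-b_n)$ converging to the Gumbel law under (\ref{ass3.2})--(\ref{assum_gg}), using the refined remainder control just described.
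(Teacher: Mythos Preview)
Your high-level strategy---isolate the column carrying $\xi_{(1)}^2$, write $Q=\widehat Q+\theta\mathbf{v}\mathbf{v}^*$, analyze the secular equation $1+\theta\,\mathbf{v}^*(\widehat Q-\lambda I)^{-1}\mathbf{v}=0$, then transfer the extreme-value limit from $\xi_{(1)}^2$ to $\lambda_1$ via Slutsky---is exactly the paper's. Where you diverge is in how you control the resolvent quadratic form, and there the Neumann expansion is the wrong tool.

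The specific error is the sentence ``the global law in Theorem~\ref{lem_solutionsystem} confines the bulk of $\widehat Q$ to a compact interval.'' It does not: in the unbounded-support regime the spectrum of $\widehat Q$ has its own divergent top, since $\lambda_1(\widehat Q)\approx\varphi\,\xi_{(2)}^2\to\infty$. Consequently the Neumann series $(\widehat Q-\lambda I)^{-1}=-\lambda^{-1}\sum_{k\geq 0}\lambda^{-k}\widehat Q^k$ is controlled by the ratio $\|\widehat Q\|/\lambda\approx\xi_{(2)}^2/\xi_{(1)}^2$, which in both the Fr\'echet and Gumbel cases is a random quantity in $(0,1)$ not bounded away from~$1$; your remainder $\rO_{\mathbb P}(\lambda^{-3})$ is therefore unjustified, and the ``geometric convergence'' you claim in the Fr\'echet case already fails.

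The paper avoids the expansion altogether. It writes $\mathbf{v}^*(\widehat Q-\lambda I)^{-1}\mathbf{v}=\xi_{(1)}^2\,\mathbf{x}_{i_*}^*\Sigma^{1/2}G^{(1)}(\lambda)\Sigma^{1/2}\mathbf{x}_{i_*}$ and uses large-deviation concentration to replace the quadratic form by $\xi_{(1)}^2\,m_1^{(1)}(\lambda)$, then invokes the averaged local law (Theorem~\ref{thm_unboundedcaselocallaw}) to replace $m_1^{(1)}$ by the deterministic $m_{1n}^{(1)}$. The location of $\lambda_1$ is pinned down not by solving an expanded equation but by introducing the auxiliary point $\mu_1$ defined through $1+(\xi_{(1)}^2+d_1)m_{1n}(\mu_1)=0$ (cf.~(\ref{eq: def of mu_1})), showing $\mu_1/\xi_{(1)}^2=\varphi+\ro_{\mathbb P}(1)$ directly from the self-consistent equations, and then proving that the random function $M(\lambda)=1+\mathbf{y}_{(1)}^*G^{(1)}(\lambda)\mathbf{y}_{(1)}$ changes sign on $[\mu_1-n^{-1/2+2\epsilon}d_1,\,\mu_1+n^{-1/2+2\epsilon}d_1]$ (Proposition~\ref{lem: eigenvalue rigidity}). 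This gives an explicit rate $|\lambda_1-\mu_1|=\rO_{\mathbb P}(n^{-1/2+2\epsilon}d_1)$ uniformly in both tail regimes, from which the Fr\'echet and Gumbel statements follow immediately via Lemma~\ref{lem_summaryevt}---no separate ``sharpened remainder bound'' in the Gumbel case is needed. Your instinct that the local laws are the heart of the matter is right; but they enter by controlling $m_1^{(1)}-m_{1n}^{(1)}$ on the spectral domain $\mathbf{D}_u$, not by rescuing a divergent Neumann series.
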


\begin{remark}\label{rmk_mainresults_unbounded}
Three remarks are in order. First, Theorem \ref{thm_main_unbounded} states that when $\xi^2$ has unbounded support, $\lambda_1$ will be divergent. Moreover, after being properly centered and scaled, $\lambda_1$ will have a similar behavior to $\xi_{(1)}^2.$ Especially, when $\xi^2$ has a polynomial decay tail as in (\ref{ass3.1}), we can obtain the Fr{\' e}chet limit and when $\xi^2$ has an exponential decay tail as in (\ref{ass3.2}), we can get the Gumbel limit.    Second, the above results can be generalized to the joint distribution of $k$ largest eigenvalues for any fixed $k.$ That is, for all $s_i \in \mathbb{R}, 1 \leq i \leq k,$ (\ref{eq_mainresultunboundedfrechet}) can be generalized to 
\begin{equation*}
\lim_{n \rightarrow \infty} \mathbb{P}  \left( \left( \frac{\lambda_i}{\varphi b_n} \leq s_i \right)_{1 \leq i \leq k} \right)=\lim_{n \rightarrow \infty}\mathbb{P} \left( \left( \frac{\xi^2_{(i)}}{b_n} \leq s_i \right)_{1 \leq i \leq k} \right),
\end{equation*}
and (\ref{eq_mainresultunboundedgumbel}) can be generalized to 
\begin{equation*}
\lim_{n \rightarrow \infty} \mathbb{P}  \left( \mathsf{g}'(b_n)\left(\varphi^{-1} \lambda_i-b_n \leq s_i \right)_{1 \leq i \leq k} \right)=\lim_{n \rightarrow \infty} \mathbb{P}  \left( \mathsf{g}'(b_n)\left(\xi_{(i)}^2-b_n \leq s_i \right)_{1 \leq i \leq k} \right).
\end{equation*}
Since the joint distribution of the order statistics of $\{\xi_i^2\}$ can be computed explicitly \cite{coles2001introduction}, the above formulas give a complete description of the finite-dimensional correlation
functions of the extremal eigenvalues. Third, Theorem \ref{thm_main_unbounded} shows that even when $\Sigma$ has no spikes, due to the effect of extreme values of $\{\xi_i^2\},$ the first few eigenvalues of $Q$ can also be divergent. Consequently, in order to be properly detected, if exist, the true spikes of $\Sigma$ have to be divergent; see Theorem \ref{thm_main_spike} for more detail.

\end{remark}


\quad Next, we state the results when $\xi^2$ has bounded support in the sense that (ii) of Assumption \ref{assum_D} holds. Recall $F(x)$ and $l$ from (\ref{eq_defnbfrak}) and $L_+$ from (\ref{edge_edge_edge}). Let 
\begin{equation*}
    \mathsf{s}_1:=\int_0^l\frac{l^2s^2}{(l-s)^2}\mathrm{d}F(s),\quad \mathsf{s}_2:= \int_0^l\frac{ls}{l-s}\mathrm{d}F(s),\quad \mathsf{s}_3:=\frac{1}{p} \sum_{i=1}^p \frac{\sigma_i^2 \varsigma_1}{(L_+-\sigma_i \varsigma_2)^2}, \ \mathsf{s}_4:=\frac{1}{n}\sum_{i=1}^p \frac{\sigma_i}{(-L_++\sigma_i \varsigma_2)^2},
\end{equation*}
and 
\begin{equation*}
\mathsf{v}:=\left( \frac{s}{1+s m_{1n,c}(L_+)} \right)^2 \mathrm{d}F(s)-\left( \int \frac{s}{1+s m_{1n,c}(L_+)}  \mathrm{d}F(s) \right)^2.
\end{equation*}
Using the above notations, we further denote that for $k=1,2,3,4,$
\begin{equation}\label{eq_phasetransition}
      \varsigma_k:=
    \begin{cases}
    \phi^{-1}\mathsf{s}_k, & \text{when (1) of Assumption \ref{assum_model} holds} \\
    \mathsf{s}_k,  & \text{when (2) of Assumption \ref{assum_model} holds} \\
    \end{cases}.
\end{equation}
%
Moreover, we denote 
\begin{equation}\label{eq_cltvariance}
\vartheta:=
\begin{cases}
\phi^{-2} \mathsf{v}, & \text{when (1) of Assumption \ref{assum_model} holds} \\
 \mathsf{v}, & \ \text{when (2) of Assumption \ref{assum_model} holds}
\end{cases}.
\end{equation}
Our second result is summarized as follows. Recall the exponent $d$ in (\ref{ass3.4}).  
\begin{theorem}[The bounded support case]\label{thm_main_bounded} Suppose Assumptions \ref{assum_model}, \ref{assumption_techincial}, \ref{assum_additional_techinical} and (ii) of Assumption \ref{assum_D} hold. Then we have that when $n$ is sufficiently large, 
\begin{enumerate}
\item[(1).]  When $d>1$ and $\phi^{-1}>\varsigma_3$ in (\ref{eq_phasetransition}), we have that $L_+$ satisfies
\begin{equation*}
    1=\frac{1}{p}\sum_{i=1}^p \frac{-l\sigma_i}{-L_{+}+\sigma_i \varsigma_2}, \ \text{when (1) of Assumption \ref{assum_model} holds}, 
\end{equation*}
and 
\begin{equation}\label{eq_onlyoneequationdecide}
    1=\frac{1}{n}\sum_{i=1}^p \frac{-l\sigma_i}{-L_{+}+\sigma_i \varsigma_2}, \ \text{when (2) of Assumption \ref{assum_model} holds}. 
\end{equation}
Moreover, we have that 
\begin{equation}\label{eq_boundednnnnn}
n^{\frac{1}{d+1}}\left| \left( \frac{\lambda_1-L_+}{\varsigma_4^{-1}(1-\phi \varsigma_3)} \right)-\left( \xi_{(1)}^2-l \right) \right|=\ro_{\mathbb{P}}(1). 
\end{equation} 
Consequently, we have that $\lambda_1-L_+$ follows Weibull distribution with parameter $d+1$ asymptotically in the sense that for $x \leq 0$
\begin{equation}\label{eq_distributionresultweibull}
\lim_{n\rightarrow \infty} \mathbb{P} \left( \frac{(\mathfrak{b}n)^{d+1}}{\varsigma_4^{-1}(1-\phi \varsigma_3)}(\lambda_1-L_+) \leq x \right)=\exp\left(-|x|^{d+1} \right),
\end{equation}
where $\mathfrak{b}$ is defined in (\ref{eq_defnbfrak}). 
\item[(2).] When $d>1$ and $\phi^{-1}<\varsigma_3,$ we have that $\lambda_1$ is asymptotically Gaussian in the sense that 
\begin{equation}\label{eq_cltresult}
\lim_{n\rightarrow \infty} \mathbb{P} \left( \sqrt{n \vartheta^{-1}} (\lambda_1-L_+) \leq x \right)=\Phi(x),
\end{equation}
where $\Phi(x)$ is the CDF of a real standard Gaussian random variable. 
\item[(3).] When $-1<d \leq 1,$ we have that 
\begin{equation*}
\lambda_1-L_+=\nu_1+\nu_2+\rO_{\mathbb{P}}(n^{-1}),
\end{equation*}
where for some $\gamma$ defined in (\ref{eq_gammadefinition}) below, $n^{2/3} \gamma \nu_1$ follows the type-1 Tracy-Widom law asymptotically  and $\sqrt{n}\nu_2/\vartheta^{1/2}$ follows standard Gaussian distribution asymptotically. More specifically, if $\vartheta=\ro(n^{-1/3}),$ $$ \lim_{n \rightarrow \infty}\left(n^{2/3}\gamma (\lambda_1-L_+) \leq x \right)=\mathrm{T}(x),$$ where $\mathrm{T}(x)$ is the CDF of the type-1 Tracy-Widom distribution. Moreover, if $\vartheta \gg n^{-1/3},$ (\ref{eq_cltresult}) holds. 
\end{enumerate}

\end{theorem}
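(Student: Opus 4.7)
The plan is to combine the averaged local laws for the resolvents $G(z),\mathcal{G}(z)$ (to be developed in Section \ref{sec_asymptoticlocalaveragedlocal}) with a careful spectral-edge analysis of the consistent equations in (\ref{eq_systemequationsm1m2intergrate}) and (\ref{eq:F(m,z)1})/(\ref{eq:F(m,z)}). By the local laws, $\lambda_1$ is sandwiched between the right endpoint of the support of the \emph{random} density $\rho$ (associated with the conditional equations given $D$) and an $n^{-2/3}$-scale Tracy--Widom fluctuation around it. The whole problem therefore reduces to understanding how the random endpoint of $\rho$ differs from the deterministic $L_+$ of $\widetilde{\rho}$, with the difference driven by empirical functionals of $\{\xi_j^2\}$ that one reads off from the implicit equation $F_n(m_{1n}(z),z)=0$.

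\textbf{Subcritical case (Part 1).} When $d>1$ and $\phi^{-1}>\varsigma_3$, I would first show from (\ref{eq_systemequationsm1m2intergrate}) that $m_{1n,c}(L_+)=-1/l$, so that (\ref{eq_onlyoneequationdecide}) characterizes $L_+$. The stability condition in Assumption \ref{assum_additional_techinical} together with $d>1$ ensures that $\varsigma_3,\varsigma_4$ are finite, so applying the implicit function theorem to $F_n$ at $(m_{1n,c}(L_+),L_+)$ yields a well-defined sensitivity of the random edge to an individual $\xi_j^2$ near $l$, producing the coefficient $\varsigma_4^{-1}(1-\phi\varsigma_3)$. Since $\xi_{(1)}^2-l=\rO_{\mathbb{P}}(n^{-1/(d+1)})$ dominates both the CLT-scale $n^{-1/2}$ fluctuation of the averaged part (finite because $d>1$) and the subleading extremal terms $\xi_{(j)}^2-l$ for $j\geq 2$, one obtains (\ref{eq_boundednnnnn}); the Weibull limit (\ref{eq_distributionresultweibull}) then follows from Lemma \ref{lem_summaryevt}.

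\textbf{Supercritical and mixed regimes (Parts 2 and 3).} When $\phi^{-1}<\varsigma_3$, we have $m_{1n,c}(L_+)>-1/l$ strictly, so the extremal $\xi_{(1)}^2$ by itself cannot relocate the edge; instead $L_+$ and $m_{1n,c}(L_+)$ must be characterized jointly by the two-equation system (\ref{eq_edgeequationstwodecide}), and the leading shift of the random edge comes from the centered empirical functional $\frac{1}{n}\sum_j \xi_j^2/(1+\xi_j^2 m_{1n,c}(L_+))-\int s/(1+s m_{1n,c}(L_+))\,dF(s)$, whose variance is precisely $\mathsf{v}/n$. Passing this classical CLT through the implicit function theorem gives a Gaussian shift of order $\sqrt{\vartheta/n}$ for the random edge, which in Part (2) dominates the residual $n^{-2/3}$ Tracy--Widom fluctuation and produces (\ref{eq_cltresult}). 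For Part (3) with $-1<d\leq 1$, the same mechanism applies but $\vartheta$ can shrink with $n$; writing $\lambda_1-L_+=\nu_1+\nu_2$ with $\nu_2$ the Gaussian edge shift and $\nu_1$ the Tracy--Widom fluctuation about the random edge (after rescaling by the edge exponent $\gamma$ in (\ref{eq_gammadefinition})), one obtains the convolution, with the sharp crossover at $\vartheta\sim n^{-1/3}$ dictated by which of the two scales dominates.

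\textbf{Main obstacle.} The most delicate step will be establishing the local laws together with the square-root edge regularity of $\rho$ and a good control on the edge exponent $\gamma$, uniformly over the random $\{\xi_j^2\}$, in all three regimes. In particular, for $-1<d\leq 1$ the deterministic density $\widetilde{\rho}$ need not have a square-root edge, yet after conditioning on the $\xi_j^2$'s the random density $\rho$ must still exhibit such edge behavior for the Tracy--Widom analysis to go through. This will require a separate edge analysis of $F_n(\,\cdot\,,z)$ conditioned on $D$, combined with sharp concentration estimates on the relevant empirical functionals of $\{\xi_j^2\}$ (after removing a few near-extremal terms), and is substantially more involved than in the fixed-$D$ settings treated in \cite{lee2016extremal,Kwak2021} or in the bounded-$D$ elliptical setting of \cite{DX_ell}.
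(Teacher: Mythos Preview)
Your strategy for Parts (2) and (3) is essentially the paper's: decompose $\lambda_1-L_+=(\lambda_1-\widehat L_+)+(\widehat L_+-L_+)$, invoke the square--root edge of the conditional density $\rho$ (Lemma \ref{localestimate2}(b)) together with \cite{DX_ell,9779233} for the Tracy--Widom piece $\nu_1=\lambda_1-\widehat L_+$, and show $\widehat L_+-L_+=\mathcal X+\rO_\IP(n^{-1})$ by expanding the two systems (\ref{eq_edgeequationstwodecide})--(\ref{eq_edgeequationstwodecide2}) against each other (this is Lemma \ref{lem_keyfinalfinalkey}), which is exactly your implicit--function/CLT heuristic made precise.

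There is, however, a genuine gap in Part (1). You frame the whole argument as ``$\lambda_1$ near the random edge $\widehat L_+$ up to an $n^{-2/3}$ Tracy--Widom fluctuation, then track $\widehat L_+-L_+$.'' In the regime $d>1,\ \phi^{-1}>\varsigma_3$ this picture breaks down: by Lemma \ref{localestimate2}(a) the conditional density satisfies $\rho(\widehat L_+-\kappa)\asymp\kappa^d$, so there is no square--root edge and hence no Tracy--Widom rigidity that pins $\lambda_1$ to $\widehat L_+$ at scale $n^{-2/3}$. Indeed $\lambda_1-\widehat L_+$ is of order $n^{-1/(d+1)}\gg n^{-2/3}$ here. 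Relatedly, your implicit--function--theorem step is singular: since $m_{1n,c}(L_+)=-l^{-1}$, the summand $\xi_j^2/(1+\xi_j^2 m_{1n,c}(L_+))$ blows up as $\xi_j^2\to l$, so you cannot linearize the edge response to an individual $\xi_{(1)}^2$ near $l$ in the naive way.

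The paper instead uses a direct, non-perturbative mechanism modelled on \cite{lee2016extremal,Kwak2021}. One first shows (Lemma \ref{localestimate2}(a)) the linear expansion $m_{1n}(\widehat L_+)-m_{1n}(z)=\frac{\widehat\varsigma_4}{1-\phi\widehat\varsigma_3}(\widehat L_+-z)+\ro$, with $m_{1n}(\widehat L_+)=-l^{-1}$. Then, via a dedicated local law on $\mathbf D_b'$ (Theorem \ref{thm_boundedcaselocallaw}) and the eigenvalue--location Lemma \ref{lem_keycomponentend}, one proves (Lemma \ref{lem_connection}) that $\operatorname{Re}m_{1n}(\lambda_1+\ri\eta_0)=-\xi_{(1)}^{-2}+\rO_\IP(n^{-1/2+3\epsilon_d})$. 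Inserting this into the linear expansion immediately yields Proposition \ref{prop_boundedsetting}, i.e.\ $\lambda_1-\widehat L_+\approx \varsigma_4^{-1}(1-\phi\varsigma_3)(\xi_{(1)}^2-l)$; the Weibull limit then comes from Lemma \ref{lem_summaryevt}. So in this regime $\lambda_1$ is tied to $\xi_{(1)}^2$ through the resolvent equation itself, not through a comparison of random and deterministic edges. Your ``Main obstacle'' paragraph also has the difficulty located in the wrong regime: for $-1<d\le 1$ both $\rho$ and $\widetilde\rho$ do have square--root edges; the non-square-root behavior (and the attendant local--law complications) arises precisely in Part (1).
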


\begin{remark}
When $\xi^2$ has bounded support as in (\ref{ass3.4}), $\lambda_1$ will be bounded and can have several phase transitions depending on the exponent $d,$ aspect ratio $\phi$ and the threshold $\varsigma_3$ which encodes the information of $\Sigma$ and the distribution of $\xi^2.$ We provide several remarks here. First, in the setting when $d>1,$ on the one hand, when $\phi^{-1}>\varsigma_3,$ after being properly centered and scaled, $\lambda_1$ will have similar asymptotics as $\xi_{(1)}^2$ and Weibull limit will be obtained. On the other hand when $\phi^{-1}<\varsigma_3,$ $\lambda_1$ will be influenced by all $\{\xi_i^2\}$ and hence asymptotically Gaussian. For the critical case $\phi^{-1}=\varsigma_3,$ we believe there will be a phase transition connecting Gaussian and Weibull. Since this is out of the scope of the paper which focuses on statistical applications, we will pursue this direction in the future works. Second, when $-1<d \leq 1,$  $\rho$ in Theorem \ref{lem_solutionsystem} will have a square root decay behavior.  In this setting, $\lambda_1$ will be influenced by two components, the TW part $\nu_1$ and the Gaussian part $\nu_2.$ The TW part is due to the square root behavior and the Gaussian is due to the fact that $\lambda_1$ will be potentially influenced by all $\{\xi_i^2\};$ see Section \ref{sec_sec_ofproof} for more details. We mention that the variance of the Gaussian part can potentially decay and $\nu_1$ and $\nu_2$ are in generally dependent. Finally, as discussed in Remark \ref{rmk_mainresults_unbounded}, we can generalize the results of Theorem \ref{thm_main_bounded} to the joint distribution of $k$ largest eigenvalues for any fixed $k.$ We omit the details.     
\end{remark}

\subsection{Strategy for the proof}\label{sec_sec_ofproof}
In this subsection, we provide a sketch of the proof strategies. The bounded and unbounded settings will require different treatments. For simplicity of the discussion, we focus on the separable covariance i.i.d. data as in Case (2) of Assumption \ref{assum_model}. Similar arguments apply to the elliptical data as in Case (1) of Assumption \ref{assum_model} with minor modifications. 



\quad We start with the discussion of the main idea of proof when $\xi$ is bounded, i.e., Theorem \ref{thm_main_bounded}. The arguments are non-perturbative and require a sophisticated understanding on the systems of equations as in (\ref{eq_systemequationsm1m2}) on the local scales. A crucial input is the local behavior of the asymptotic law near the rightmost edge of the spectrum (cf. Lemma \ref{localestimate2}) which is the generalization of \cite{lee2016extremal}. More specifically, for our concerned matrix (\ref{eq_samplecov}), it turns out that the behavior varies according to the combination of the exponent $d$ in (\ref{ass3.4}) and the aspect ratio $p/n.$ Especially, for the unconditionally density function as in Remark \ref{rmk_systemequationsremark},  we find that the asymptotic law has square root decay behavior (i.e., $\widetilde{\rho}(x)$ in (\ref{edge_edge_edge}) satisfies that $\widetilde{\rho}(x) \sim \sqrt{L_+-x}$) when  either $-1<d \leq 1$ or $d>1$ and $\phi^{-1}<\varsigma_3$ as in (\ref{eq_phasetransition}). Moreover, the square root behavior will be updated to general polynomial behavior (i.e., $\widetilde{\rho}(x) \sim (L_+-x)^{d}$) when $d>1$ and $\phi^{-1}>\varsigma_3$); see Lemma \ref{localestimate2} for a more precise statement.  

\quad In the actual proof, we need to introduce a quantity $\widehat{L}_+$ which is the edge of the density function $\rho$ as in Theorem \ref{lem_solutionsystem} for some fixed realization of $\{\xi_i^2\}$ in some high probability event $\Omega$. In fact, as will be proved in Lemma \ref{localestimate2}, $L_+=\widehat{L}_++\rO_{\mathbb{P}}(n^{-1/2+\delta})$ for some small constant $\delta>0$ and the properties of $\widetilde{\rho}$ can be inherited to $\rho$ when conditional on $\Omega.$ Then for the former case when square root decay behavior exhibits, we can check that the assumptions of \cite{DX_ell,9779233} are satisfied so that conditional on $\Omega,$ $n^{2/3}(\lambda_1-\widehat{L}_+)$ obeys TW law asymptotically with constant order variance. For the unconditional distribution, it suffices to analyze the fluctuation of $\widehat{L}_+.$           In general, due to the i.i.d. assumption of $D^2,$ by CLT, $\widehat{L}$ is asymptotically Gaussian whose variance can decay to zero. Therefore, overall, the distribution can be represented as a summation of TW law and Gaussian (with possibly vanishing variance) as in (3) of Theorem \ref{thm_main_bounded}. Especially, when $d>1$ and $\phi^{-1}<\varsigma_3,$ the variance of the Gaussian part is of constant order one so that $\lambda_1$ is asymptotically Gaussian as in (2) of Theorem \ref{thm_main_bounded}. We mention that in this setting, the edge eigenvalues are influenced by all $\{\xi_i^2\}$ and the edge $\widehat{L}_+$ is regular in the sense that (\ref{rem1nbound}) holds.

\quad For the latter case when square root behavior disappears, i.e., (1) of Theorem \ref{thm_main_bounded}, the discussion is more subtle. Our idea provides a nontrivial generalization of \cite{lee2016extremal}.  Similarly to the aforementioned argument, we shall mainly work with $\widehat{L}_+-\lambda_1$ for some fixed realization $\{\xi_i^2\}$ from $\Omega$.  In this setting, a key observation is that $\widehat{L}_+$ is irregular and  can be represented as the solution of the equation that (cf. see (\ref{eq_conditionaledgedefinition}))
\begin{equation}\label{eq_masterequationtwo}
m_{1n}(\widehat{L}_+)=-l^{-1}. 
\end{equation}
Moreover, for $z \in \mathbb{C}_+$ in a small neighborhood of $\widehat{L}_+,$ $m_{1n}(z)$ can be expanded linearly with an error much smaller than $n^{-1/(d+1)}$ as in part (a) of Lemma \ref{localestimate2}. That is, for some constant $\alpha,$ $m_{1n}(z)-m_{1n}(\widehat{L}_+)=\alpha (z-\widehat{L}_+)+\ro(n^{-1/(d+1)}).$ In order to prove (\ref{eq_boundednnnnn}), we separate our proof into two steps.  First, we show that $\lambda_1$ is close to $\widehat{L}_+.$ The discussion relies on two parts. In the first part, we introduce some auxiliary quantity $E_0$ as follows. For some fixed sufficiently small constant $\epsilon_d>0,$ let 
\begin{equation}\label{eq_eta0definition}
\eta_0=n^{-1/2-\epsilon_d},
\end{equation}
 and $z_0=E_0+\ri \eta_0$ as the solution of the following equation   
\begin{gather}\label{eq: def of gamma}
    \operatorname{Re}m_{1n}(E_0+\ri\eta_0)=-\xi^{-2}_{(1)}.
\end{gather}
In fact, as discussed in Remark \ref{rmk_boundedatleastonesolution}, there exists at least one solution satisfying (\ref{eq: def of gamma}) and we always choose the one with the largest $E_0$. Combining (\ref{eq_masterequationtwo}), (\ref{eq: def of gamma}) and the linear expansion around $m_{1n}(\widehat{L}_+)$ for $z_0$, we see that $\widehat{L}_+$ is close to $E_0;$ see the proof of Proposition \ref{prop_boundedsetting}. In the second part, we prove that $\lambda_1$ is close to $E_0$ by analyzing the Stieltjes transforms of the matrix (\ref{eq_samplecov}); see (\ref{eq_holdspartoneoneone}).  Second, we connect $\lambda_1$ with $\xi_{(1)}^2$ by approximating (\ref{eq: def of gamma}) that $\operatorname{Re} m_{1n}(\lambda_1+\ri \eta_0) \approx -\xi_{(1)}^{-2}$; see (\ref{eq_secondconnect}). Combining the linear expansion around $\widehat{L}_+$ for $\lambda_1+\ri \eta_0,$ we can conclude the proof of (\ref{eq_boundednnnnn})  for $\widehat{L}_+.$ For the unconditional result with $L_+,$ it follows directly from Lemma \ref{localestimate2} that $L_+=\widehat{L}_++\rO_{\mathbb{P}}(n^{-1/2+\delta})$ and $d>1.$ Consequently, we can conclude that $\lambda_1$ is only influenced by $\xi_{(1)}^2$ and asymptotically Weibull using the extreme value theory as summarized in Lemma \ref{lem_summaryevt}. We point out the proof of the second step relies on the averaged local law. Due to the non-square root decay behavior and lack of stability bound, the local laws cannot be analyzed as in \cite{ding2021spiked,knowles2017anisotropic,lee2016tracy,PillaiandYin2014,yang2019edge}. As mentioned in (\ref{eq_masterequationtwo}), even the characterization of the edge $\widehat{L}_+$ is different from the form in the aforementioned works. To address this issue, we need to adapt and generalize the strategies of \cite{Kwak2021,lee2016extremal}.  

%

\quad Then we discuss the unbounded $\xi^2$ case as in Theorem \ref{thm_main_unbounded}. In these settings, the proof strategy is perturbative since the edge is divergent. Instead of directly analyzing the systems as in Definition \ref{defn_couplesystem}, we introduce a real auxiliary quantity $\mu_1>0$ governed by some master equations. More specifically, for the  pair $(\mu_1, m_{1n}(\mu_1))$ satisfying $F_n(m_{1n}(\mu_1), \mu_1)=0$ for $F_n(\cdot, \cdot)$ defined in (\ref{eq:F(m,z)1}) and (\ref{eq:F(m,z)}), we set $\mu_1$ to be the largest solution of the following equation
\begin{equation}\label{eq: def of mu_1}
    1+(\xi^2_{(1)}+d_1)m_{1n}(\mu_{1})=0,
\end{equation}
where for some fixed sufficiently small constant $\epsilon>0,$ $d_1$ is defined as 
\begin{equation}\label{eq_firstddefinition}
d_1:=
\begin{cases}
n^{1/\alpha-\epsilon}, & \text{if (\ref{ass3.1}) holds}; \\
1, & \text{if (\ref{ass3.2}) holds}. 
\end{cases}
\end{equation} 
According to the results of  extreme value theory as summarized in Lemma \ref{lem_summaryevt}, we have that $d_1=\ro_{\mathbb{P}}(\xi_{(1)}^2).$ As will be seen in (\ref{eq: 2.2 beta=2}), $d_1$ is introduced mainly for some technical reasons.

On the one hand, in contrast to (\ref{eq: def of gamma}), we point out that $m_{1n}(\mu_1)=\lim_{\eta \downarrow 0} m_{1n}(\mu_1+\ri\eta)$ is valid.  In fact, we can replace $z_1 \in \mathbb{C}_+$ with $\mu_1$ in (\ref{eq: def of mu_1}), according to (\ref{eq_functionFequal}) and (\ref{eq:F(m,z)}), by an argument similar to (\ref{eq_mu1part}), we have that $z_1 \asymp \xi_{(1)}^2. $ Then the validity is guaranteed by Lemma \ref{lem: basic bounds} and Remark \ref{rem_keyrem}; see Remark \ref{rem_JHX} for more detail. Second, observe that
\begin{equation}\label{eq: 2.1}
    \mathsf{h}(\mu_1):=\frac{1}{n}\sum_{i=1}^p\frac{\sigma_i}{\frac{\mu_1}{\xi^2_{(1)}+d_{1}}-\frac{\sigma_i}{n}\sum_{j=1}^n\frac{\xi_j^2}{\xi^2_{(1)}+d_{1}-\xi^2_j}}=1.
\end{equation} 
It is clear that $\mathsf{h}(\mu_1)$ a continuous decreasing function on $\mu_1$ and not hard to see there always exists a solution on the interval,
\[
\left(\mathrm{a},+\infty \right), \ \mathrm{a} \equiv \mathrm{a}(n):=\frac{\sigma_1}{n}\sum_{j=1}^n\frac{\xi^2_j(\xi^2_{(1)}+d_{1})}{\xi^2_{(1)}+d_{1}-\xi^2_j}.
\]
In fact, using (\ref{eq: 2.2 beta=2}) below, we find that when $n$ is sufficiently large, $\mathsf{h}(\mathrm{a}) \gg 1$ and $\mathsf{h}(\infty)=0.$ 

Now we proceed to explain how to utilize (\ref{eq: def of mu_1}) to conclude the proof. On the one hand, (\ref{eq: def of mu_1}) provides a natural way to connect $\mu_1$ and $\xi_{(1)}^2.$ In fact, by (\ref{eq: 2.1}) and the properties of order statistics of $\{\xi_i^2\}$ as in (\ref{def1}) and (\ref{def3}), for $\varphi$ defined in (\ref{eq_defnvarphi}), we see from (\ref{eq_mu1part}) that $\mu_1/\xi_{(1)}^2=\varphi+\ro_{\mathbb{P}}(1).$ On the other hand, (\ref{eq: def of mu_1}) links $\mu_1$ with $\lambda_1$ with a perturbation argument similar to \cite{bloemendal2016principal,ding2021spiked11, ding2021spiked}. The discussion contains two steps. In step one, we see that according to (\ref{eq_masterequation}), $\lambda_1$ can be uniquely characterized by the equation $M(\lambda_1)=0$ where $M(\cdot)$ is a random quantity defined in (\ref{eq_defnmlambda}) by isolating the column in (\ref{eq_datamatrix}) associated with $\xi_{(1)}^2$. To connect $M(\mu_1)$ back with (\ref{eq: def of mu_1}), we need to establish the local laws. Thanks to the divergence of $\mu_1$ and the large deviation control as in Lemma \ref{lem:large deviation}, we only need to prove the averaged local law as in Theorem \ref{thm_unboundedcaselocallaw}; see equations (\ref{eq_conconconone}) and (\ref{eq_bigexpansion}) for more details. The above arguments basically yields that $1+(\xi_{(1)}^2+d_1) m_{1n}(\lambda_1) \approx 0.$ Together with (\ref{eq: def of mu_1}), by a continuity and stability analysis, we can show that $\lambda_1/\mu_1=1+\ro_{\mathbb{P}}(1).$ Combining the above arguments, we can build the connection between $\lambda_1$ and $\xi_{(1)}^2$ and conclude the proof together with extreme value theory as summarized in Lemma \ref{lem_summaryevt}.

We mention that the two crucial equations for the bounded and unbounded settings are (\ref{eq: def of gamma}) and (\ref{eq: def of mu_1}), respectively.  The proof of the bounded case is non-perturbative because the random counterpart of (\ref{eq: def of gamma}) (cf. (\ref{eq_secondconnect})) is introduced non-perturbatively. In contrast, the unbounded case is perturbative since the random counterpart of (\ref{eq: def of mu_1}) (cf. (\ref{eq_defnmlambda})) is introduced by isolating the column of the data matrix associated with $\xi_{(1)}^2.$  Even though the pertubative approach is usually used in the literature for studying the deformed random matrix models, to list but a few \cite{bao2022statistical,benaych2011eigenvalues,bloemendal2016principal,ding2020high,ding2021spiked} or see Theorem \ref{thm_main_spike} below, it is also natural for us to study our model in the unbounded setting since $\xi_{(1)}^2$ is unbounded and also well separated from other order statistics of $\{\xi_i^2\}$ with high probability; see (\ref{def1}) and (\ref{def3}).

\section{Statistical applications}\label{sec_statapplication}
In this section, we will consider several statistical applications to illustrate the usefulness of the established results. In Section \ref{sec_spikeellipticaldata}, we will provide some useful statistics and procedure to detect and estimate the number of spikes for data generated from a possibly spiked elliptically distributed model as in Example \ref{exam_elliptical}. In Section \ref{sec_boostrapping}, we examine the performance for high dimensional bootstrap with different weights as discussed in Example \ref{exam_boostrapping}. 

\quad Before proceeding to the discussion of the two applications, we first pause to introduce the spiked population covariance matrix model which has gained increasing popularity in high dimensional data analysis. To avoid confusion, in what follows, we will consistently use $\Sigma$ for a population covariance matrix without spikes and use $\widetilde{\Sigma}$ for a population covariance matrix possibly with spikes. Moreover,    following \cite{ding2021spiked11,ding2021spiked,John2001}, for $\Sigma=\sum_{i=1}^p \sigma_i \mathbf{v}_i \mathbf{v}_i^*,$ where $\{\mathbf{v}_i\}$ are the eigenvectors of $\Sigma,$ and some fixed integer $r \geq 0,$ we define the spiked model  $\widetilde{\Sigma}$ as follows  
\begin{equation}\label{eq_truemodelspiked}
\widetilde{\Sigma}=\sum_{i=1}^p \widetilde{\sigma}_i \mathbf{v}_i \mathbf{v}_i^*,
\end{equation}   
where $\widetilde{\sigma}_i=\sigma_i$ for $i>r$ and $\widetilde{\sigma}_1 \geq \widetilde{\sigma}_2 \geq \cdots \geq \widetilde{\sigma}_r>\widetilde{\sigma}_{r+1}$ are $r$  values representing the larger spikes. When $r=0,$ we notice that $\widetilde{\Sigma}=\Sigma.$

\subsection{Detection and estimation of the number of spikes for elliptically distributed data}\label{sec_spikeellipticaldata}
Detection and estimation of the number of spikes is an important problem in many areas such as signal processing \cite{4493413,silverstein1992signal}, financial economics \cite{bai2002determining,fan2022estimating,onatski2009testing} and biomedical research \cite{MR3904784, ke2021estimation}. Most of the existing results and methods have focused on the spiked sample covariance matrix that $D=I$ in (\ref{eq_samplecov}) and $X$ has i.i.d. entries, to list but a few, \cite{bai2002determining,Bao2015,CHP, MR3904784,fan2022estimating,ke2021estimation,onatski2009testing, passemier2014estimation}. In this paper, we consider such a problem for elliptically distributed data when (1) of Assumption \ref{assum_D} holds. 

Motivated by the discussions in Example \ref{exam_elliptical}, for the matrix $D,$ we consider the unbounded support case as in (i) of Assumption \ref{assum_D} which allows high dimensional heavy tailed data sets. More specifically, we consider the data matrix $\widetilde{Y}=(\widetilde{\mathbf{y}}_i)$ that 
\begin{equation}\label{eq_datamodel}
\widetilde{\mathbf{y}}_i=\xi_i \widetilde{\Sigma}^{1/2} \mathbf{u}_i \in \mathbb{R}^p, \ 1 \leq i \leq n, 
\end{equation}      
where $\mathbf{u}_i, 1 \leq i \leq n,$ are independently distributed on the unit sphere and $\widetilde{\Sigma}$ is denoted in (\ref{eq_truemodelspiked}). Let the non-zero eigenvalues of $\widetilde{Y} \widetilde{Y}^*$ be 
\begin{equation}\label{eq_eigenvalueintheendspiked}
\mu_1 \geq \mu_2 \geq \cdots \geq \mu_{\min\{p,n\}}>0. 
\end{equation}
We are interested in proposing some statistics based on $\{\mu_i\}$ to study the following hypothesis testing problem  
\begin{equation}\label{eq_testingproblem}
\mathbf{H}_0: r=r_0 \ \ \text{vs} \ \ \mathbf{H}_a: r>r_0, 
\end{equation}
where $r_0$ is some pre-given integer representing our belief of the true value of $r.$ When $r_0=0,$ it reduces to the detection of the existence of signals. Note that based on (\ref{eq_testingproblem}), we can generate a natural sequential testing estimate of $r$, that is 
\begin{equation}\label{eq_sequencentialtestprocedure}
\widehat{r}:=\inf \left\{ r_0 \geq 0:  \mathbf{H}_0 \ \text{is accepted} \right\}. 
\end{equation}

In order to propose some data-adaptive statistics to test (\ref{eq_testingproblem}), we first analyze the theoretical proprieties of the first few largest eigenvalues of $\widetilde{Y} \widetilde{Y}^*.$ The results are summarized in the following theorem. Recall (\ref{eq_eigenvalueintheendspiked}) and denote 
\begin{equation*}
\mathsf{T}:= 
\begin{cases}
n^{1/\alpha} \log n, & \text{if (\ref{ass3.1}) holds}; \\
\log ^{1/\beta} n,  & \text{if (\ref{ass3.2}) holds}. 
\end{cases}
\end{equation*} 

\begin{theorem}\label{thm_main_spike} Suppose (i) of Assumption \ref{assum_D} and Assumption \ref{assumption_techincial} hold. For the spikes in (\ref{eq_truemodelspiked}), we assume that 
\begin{equation}\label{spiked_assumption}
\widetilde{\sigma}_r \gg \mathsf{T}. 
\end{equation}  
Then when $n$ is sufficiently large
\begin{enumerate}
\item[(1).] For $1 \leq i \leq r,$ we have that
\begin{equation*}
\frac{\mu_i}{\widetilde{\sigma}_i}=\phi^{-1} \mathbb{E} \xi^2+\mathrm{o}_{\mathbb{P}}(1). 
\end{equation*} 
\item[(2).] Recall $d_1$ in (\ref{eq_firstddefinition}) and  $\{\lambda_i\}$ are the eigenvalues of the non-spiked model $\Sigma^{1/2}XD^2X^* \Sigma^{1/2}$. For any fixed integer $k,$ we have that
\begin{equation}\label{eq_sticking}
\left|\mu_{r+i}-\lambda_i \right|=\rO_{\mathbb{P}} \left( n^{-1/2+2\epsilon} d_1 \right), \ 1 \leq i \leq k. 
\end{equation}
\end{enumerate} 

\end{theorem}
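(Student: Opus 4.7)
The argument splits into parts (1) and (2), both of which exploit that the spike perturbation $\widetilde{\Sigma}-\Sigma$ has rank at most $r$ and that the unspiked model $Q=\Sigma^{1/2}XD^2X^*\Sigma^{1/2}$ has already been fully analyzed by Theorem \ref{thm_main_unbounded}. In particular, that theorem gives $\lambda_1 \asymp \xi_{(1)}^2 = \rO_{\mathbb{P}}(b_n)$, and since $b_n \ll \mathsf{T}$, the hypothesis $\widetilde{\sigma}_r \gg \mathsf{T}$ places all $r$ spikes strictly above the top of the unspiked spectrum, so the outliers detach cleanly from the bulk.

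For part (1), I would decompose $\widetilde{\Sigma}^{1/2}=\Sigma^{1/2}+B$ with $B$ of rank $r$ supported on the first $r$ coordinates (recall that under Assumption \ref{assum_model}(1) one may take $\Sigma$ diagonal). Each outlier $\mu_i$ ($i\le r$) is then characterized by a Schur-complement master equation, i.e.\ an $r\times r$ determinant condition involving the resolvent $G(z)=(Q-zI)^{-1}$ at $z=\mu_i$. For $z\asymp\widetilde{\sigma}_i\gg\mathsf{T}$, $z$ sits far above the bulk so $|\xi_j^2\, m_{1n}(z)|\ll 1$ uniformly in $j$, and a direct Taylor expansion of the system in Definition \ref{defn_couplesystem} yields $m_{2n}(z) = -\phi^{-1}\mathbb{E}\xi^2/z + \ro(1/z)$ together with $m_{1n}(z) = -\bar\sigma/z + \ro(1/z)$. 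The isotropic local laws built for Theorem \ref{thm_main_unbounded} transfer these deterministic approximations to $G(z)$ itself, and substituting into the determinant equation collapses its leading order to $\widetilde{\sigma}_i\cdot\phi^{-1}\mathbb{E}\xi^2 = \mu_i(1+\ro_{\mathbb{P}}(1))$, which is the claim.

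For part (2), I would write $\widetilde{Y}=Y+BXD$; since $BXD$ has rank at most $r$, the min-max characterization of singular values gives $\mu_{r+i}\le\lambda_i$ for every $i\ge 1$, which is the easy direction of sticking. For the matching lower bound $\mu_{r+i}\ge\lambda_i-\rO_{\mathbb{P}}(n^{-1/2+2\epsilon}d_1)$ I would revisit the Schur-complement equation of part (1), now scanning $z$ near $\lambda_i$. After factoring out the $r$ already-identified outlier contributions from the determinant, the remaining equation reduces to a spectral condition on $G(z)$ whose deviation from its deterministic equivalent is controlled by the averaged local law of $G$ (the one developed in the proof of Theorem \ref{thm_main_unbounded}) at spectral parameter $z=\lambda_i+\ri\eta$ with $\eta$ slightly above $n^{-1}$. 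The size of that deviation is exactly $\rO_{\mathbb{P}}(n^{-1/2+2\epsilon}d_1)$, and a continuity/monotonicity argument then converts it into the stated bound on $|\mu_{r+i}-\lambda_i|$.

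The main difficulty is that in the unbounded-$\xi$ regime the unspiked spectrum itself diverges with $n$, so the bounded-spectrum perturbation and resolvent toolkit of \cite{bloemendal2016principal,knowles2013isotropic,ding2021spiked} cannot be invoked off the shelf. As in the proof of Theorem \ref{thm_main_unbounded}, the remedy is to isolate the column of $Y$ carrying $\xi_{(1)}^2$ from the data matrix before running the local-law argument, so that the reduced matrix has $\rO(1)$ spectrum. Making this isolation compatible with the rank-$r$ Schur-complement analysis — keeping both the deterministic equivalent on the outlier scale $z\asymp\widetilde{\sigma}_i$ and the averaged local law near $\lambda_i$ valid uniformly — is the most technical piece of the argument.
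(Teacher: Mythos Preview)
Your overall strategy is sound and would succeed, but it diverges from the paper in an instructive way, particularly for part~(1).

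For part~(1), the paper does \emph{not} use the Schur-complement/resolvent route you outline. Instead it splits $\widetilde{\Sigma}=\Sigma_s+\Sigma_o$, where $\Sigma_s=\sum_{i\le r}\widetilde{\sigma}_i\mathbf{v}_i\mathbf{v}_i^*$ carries the large spikes and $\Sigma_o=\sum_{i>r}\sigma_i\mathbf{v}_i\mathbf{v}_i^*$ is the bounded remainder. Because $\|DX^*\Sigma_oXD\|\le \sigma_{r+1}\xi_{(1)}^2\|X^*X\|=\rO_{\mathbb{P}}(\xi_{(1)}^2)=\ro_{\mathbb{P}}(\widetilde{\sigma}_i)$ under \eqref{spiked_assumption}, Weyl's inequality gives $\mu_i/\widetilde{\sigma}_i=\lambda_i(DX^*\Sigma_sXD)/\widetilde{\sigma}_i+\ro_{\mathbb{P}}(1)$. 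The rank-$r$ matrix $DX^*\Sigma_sXD$ has nonzero spectrum determined by $\det(V_1^*XD^2X^*V_1-\lambda\Lambda_s^{-1})=0$, and by rotational invariance one may take $V_1$ to be the first $r$ standard basis vectors, so that $V_1^*XD^2X^*V_1$ has entries $\sum_k\xi_k^2u_{kj}u_{kl}$, which concentrate to $\phi^{-1}\mathbb{E}\xi^2\,\delta_{jl}$ by elementary moment bounds for self-normalized Gaussians. No local law, no resolvent expansion, no column isolation is needed. Your approach via the master equation and the expansion $m_{2n}(z)\approx -\phi^{-1}\mathbb{E}\xi^2/z$ at $z\asymp\widetilde{\sigma}_i$ is correct and gives the same answer, but it requires extending the entrywise local law to a scale far above the domain $\mathbf{D}_u$ where it was proved; this is not hard (the resolvent is trivially bounded there) but is avoidable.

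For part~(2), your master-equation framework agrees with the paper's: writing $\widetilde{\Sigma}=\Sigma+U^*DU$ with $\operatorname{rank}D=r$ yields $\det\bigl(I-D^{1/2}U^*XD(xI-\mathcal{Q})^{-1}DX^*UD^{1/2}\bigr)=0$. However, the paper does not proceed by ``factoring out the outlier contributions''; it runs the standard three-step program of \cite{bloemendal2016principal,ding2021spiked}: (i) use the entrywise local law (Proposition~\ref{eq_propooursidebulk}) to show that the set $\Gamma_i=\{x\in[\lambda_i,\mu_1+n^{-1/2+2\epsilon}d_1]:\operatorname{dist}(x,\operatorname{spec}(Q))>\mathsf{C}n^{-1/2+2\epsilon}d_1\}$ contains no eigenvalue of $\widetilde{\mathcal{Q}}$; (ii) a counting argument under well-separated spikes; (iii) a continuity deformation to the general configuration. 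Your interlacing observation $\mu_{r+i}\le\lambda_i$ is valid but not used in the paper, and your lower-bound sketch is vaguer than this permissible-region machinery. The ``main difficulty'' you flag (divergent bulk spectrum) is handled here exactly as in the proof of Theorem~\ref{thm_main_unbounded}, via the entrywise law already established for $\mathcal{G}$ on $\mathbf{D}_u$; no new column-isolation argument is needed at this stage.
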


\begin{remark}\label{rem_multiplevaluespikedremark}
Two remarks are in order. First, on the one hand, part (1) of Theorem \ref{thm_main_spike} implies that the outlier eigenvalues have the same order as their associated spikes. On the other hand, part (2) shows that the extremal non-outlier eigenvalue will stick to the largest eigenvalue of the non-spiked model. Combining Theorem \ref{thm_main_spike} and Remark \ref{rmk_mainresults_unbounded}, we find that $\mu_{r+i}, 1 \leq i \leq k$ follow either Fr{\' e}chet or Gumbel distribution depending on the tail behavior of $\xi^2.$  Combining the results of (1) and (2), using (\ref{def1}) and (\ref{def3}), under the assumption of (\ref{spiked_assumption}), we find that the outliers and non-outliers are separated and this provides the theoretical guarantee for detecting the spikes. Second, in Theorem \ref{thm_main_spike}, motivated by our applications, we only prove the results for the elliptical data under the unbounded setting of $\xi^2$. However, similar results can be obtained for the seperable covariance i.i.d. data and bounded $\xi^2.$ Since this is out of the scope of the current paper, we will pursue this direction in the future, for example, see \cite{DXYZspiked}.  
\end{remark}

\quad Based on the discussions in Remark \ref{rem_multiplevaluespikedremark} that the distributions of $\{\mu_{r+i}\}$ are known, following the ideas of \cite{9779233,onatski2009testing}, we can make use of the non-outlier eigenvalues to test (\ref{eq_testingproblem}). However, in practice, the parameters in (\ref{eq_mainresultunboundedfrechet}) and (\ref{eq_mainresultunboundedgumbel}) are usually unknown. To resolve this issue, we use the following statistics
\begin{equation}\label{eq_statisticsdefinition1}
\mathbb{T} \equiv \mathbb{T}(r_0):=\max_{r_0< i \leq r_*} \frac{\mu_i-\mu_{i+1}}{\mu_{i+1}-\mu_{i+2}},
\end{equation}
and 
\begin{equation}\label{eq_statisticsdefinition2}
\mathbb{T}_{r_0}:=\frac{\mu_{r_0+1}-\mu_{r_0+2}}{\mu_{r_*+1}-\mu_{r_*+2}},
\end{equation}
where $r_*$ is a pre-chosen large fixed integer that is interpreted as the maximum possible number of spikes the model can have. It is easy to see from the discussions in Remark \ref{rem_multiplevaluespikedremark} that both $\mathbb{T}$ and $\mathbb{T}_{r_0}$ can be used to count the number of outlier eigenvalues that correspond to the spikes through a sequential testing procedure as in (\ref{eq_sequencentialtestprocedure}), except that (\ref{eq_statisticsdefinition2}) used fewer sample eigenvalues which can possibly have better finite sample performance as discussed in \cite{9779233}.

Corresponding to the above statistics, for some properly chosen critical value $\delta_n^{(1)}$ and $\delta_n^{(2)},$ we can follow (\ref{eq_sequencentialtestprocedure}) to define the sequential testing estimators 
\begin{equation}\label{eq_sequencialestimator}
 \widehat{r}_1:=\inf\left\{r_0 \geq 0: \ \mathbb{T}(r_0)<\delta_n^{(1)} \right\}, \  \widehat{r}_2:=\inf\left\{r_0 \geq 0: \ \mathbb{T}_{r_0}<\delta_n^{(2)} \right\}.
\end{equation}
Then we examine the properties of the proposed statistics and estimators and show that both of them will be consistent estimators of $r$ once the critical values are chosen properly. Denote 
\begin{equation*}
\mathbb{G}_1:=\max_{1 \leq i \leq r_*-r_0} \frac{\xi_{(i)}^2-\xi_{(i+1)}^2}{\xi_{(i+1)}^2-\xi_{(i+2)}^2}, \ \ \mathbb{G}_2:= \frac{\xi_{(1)}^2-\xi_{(2)}^2}{\xi_{(r_*-r_0+1)}^2-\xi_{(r_*-r_0+2)}^2}.
\end{equation*}

\begin{corollary}\label{coro_distribution}
Suppose the assumptions of Theorem \ref{thm_main_spike} hold and $r_*>r.$ Under the null hypothesis $\mathbf{H}_0$ in (\ref{eq_testingproblem}), we have that for all $x \in \mathbb{R}$
\begin{equation}\label{eq_universalityresult}
\lim_{n \rightarrow \infty} \mathbb{P} \left( \mathbb{T} \leq x \right)=\lim_{n \rightarrow \infty} \mathbb{P} (\mathbb{G}_1 \leq x), \ \text{and} \ \lim_{n \rightarrow \infty} \mathbb{P} \left( \mathbb{T}_{r_0} \leq x \right)=\lim_{n \rightarrow \infty} \mathbb{P} (\mathbb{G}_2 \leq x). 
\end{equation}
On the other hand, if $\delta_n^{(1)} \mathsf{T} \widetilde{\sigma}_r^{-1} \rightarrow 0, $ then 
\begin{equation}\label{eq_deltan(1)ha}
\lim_{n \rightarrow \infty} \mathbb{P}\left( \mathbb{T}>\delta_n^{(1)} \right)=1, \ \text{under} \ \mathbf{H}_a; 
\end{equation}
if $\delta_n^{(2)} \mathsf{T}(\widetilde{\sigma}_{r_0+1}-\widetilde{\sigma}_{r_0+2})^{-1} \rightarrow 0, $ then 
\begin{equation*}
\lim_{n \rightarrow \infty} \mathbb{P}\left( \mathbb{T}_{r_0}>\delta_n^{(2)} \right)=1, \ \text{under} \ \mathbf{H}_a; 
\end{equation*}
Consequently, if $\delta_n^{(1)} \rightarrow \infty$ and $\delta_n^{(1)} \mathsf{T} \widetilde{\sigma}_r^{-1} \rightarrow 0, $ then 
\begin{equation}\label{eq_deltan(1)harproperty}
\lim_{n \rightarrow \infty} \mathbb{P}\left( \widehat{r}_1=r \right)=1,  
\end{equation}
if $\delta_n^{(2)} \rightarrow \infty$ and $\delta_n^{(2)} \mathsf{T}(\widetilde{\sigma}_{r_0+1}-\widetilde{\sigma}_{r_0+2})^{-1} \rightarrow 0, $ then 
\begin{equation*}
\lim_{n \rightarrow \infty} \mathbb{P}\left( \widehat{r}_2=r \right)=1. 
\end{equation*}
\end{corollary}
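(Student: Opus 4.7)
The plan is to split the proof into three parts corresponding to the three displayed assertions, and in each part to reduce statements about $\{\mu_i\}$ to statements about $\{\xi_{(i)}^2\}$ by invoking Theorem \ref{thm_main_spike} and the joint-distribution version of Theorem \ref{thm_main_unbounded} described in Remark \ref{rmk_mainresults_unbounded}.

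For the universality statement (\ref{eq_universalityresult}) under $\mathbf{H}_0$, note that with $r=r_0$ the eigenvalues $\mu_{r_0+1},\mu_{r_0+2},\ldots,\mu_{r_*+2}$ are all non-outliers, so by part (2) of Theorem \ref{thm_main_spike}
\[
\mu_{r_0+i}=\lambda_i+\rO_{\mathbb{P}}\!\bclr{n^{-1/2+2\epsilon} d_1},\qquad 1\leq i\leq r_*-r_0+2,
\]
where $\{\lambda_i\}$ are the top eigenvalues of the non-spiked model. The joint version of Theorem \ref{thm_main_unbounded} (cf.\ Remark \ref{rmk_mainresults_unbounded}) gives
\[
\lambda_i=\varphi\,\xi_{(i)}^2+\lito_{\mathbb{P}}\bclr{\xi_{(i)}^2}\qquad\text{jointly for }1\le i\le r_*-r_0+2,
\]
so I would form the differences $\mu_{r_0+i}-\mu_{r_0+i+1}=\varphi(\xi_{(i)}^2-\xi_{(i+1)}^2)(1+\lito_{\mathbb{P}}(1))$. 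Dividing two such differences shows the factor $\varphi$ cancels, and standard extreme-value facts (Lemma \ref{lem_summaryevt} and its corollaries on the joint distribution of the top order statistics, cf.\ \cite{coles2001introduction}) show that the ratio $(\xi_{(i)}^2-\xi_{(i+1)}^2)/(\xi_{(i+1)}^2-\xi_{(i+2)}^2)$ has a nondegenerate limit. This gives exactly the limit $\mathbb{G}_1$ for $\mathbb{T}$ and $\mathbb{G}_2$ for $\mathbb{T}_{r_0}$ by the continuous mapping theorem applied to the $(r_*-r_0+2)$-dimensional vector of normalised order statistics.

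For (\ref{eq_deltan(1)ha}) under $\mathbf{H}_a$ (so $r>r_0$), I would observe that $\mu_{r_0+1}$ is now an outlier while $\mu_{r+1},\mu_{r+2},\ldots$ remain non-outliers. Part (1) of Theorem \ref{thm_main_spike} gives $\mu_{r_0+1}\asymp \widetilde{\sigma}_{r_0+1}\geq \widetilde{\sigma}_r$ with high probability, while part (2) combined with Theorem \ref{thm_main_unbounded} shows $\mu_{r+i}=\rO_{\mathbb{P}}(\mathsf T)$. Consequently at the index $i=r$ one of the terms in the maximum defining $\mathbb{T}$ satisfies
\[
\frac{\mu_r-\mu_{r+1}}{\mu_{r+1}-\mu_{r+2}}\gtrsim \frac{\widetilde{\sigma}_r}{\mathsf T}\longto\infty,
\]
and the hypothesis $\delta_n^{(1)}\mathsf T\widetilde{\sigma}_r^{-1}\to 0$ forces this ratio to exceed $\delta_n^{(1)}$ with probability tending to one. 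The argument for $\mathbb{T}_{r_0}$ is analogous: its numerator contains the gap $\mu_{r_0+1}-\mu_{r_0+2}\asymp \widetilde{\sigma}_{r_0+1}-\widetilde{\sigma}_{r_0+2}$ via part (1), while the denominator is again $\rO_{\mathbb{P}}(\mathsf T)$.

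Finally, the consistency statements for $\widehat{r}_1$ and $\widehat{r}_2$ follow from the standard sequential-testing argument: the requirement $\delta_n^{(k)}\to\infty$ together with (\ref{eq_universalityresult}) (and tightness of $\mathbb{G}_1,\mathbb{G}_2$) guarantees $\mathbb{P}(\mathbb{T}<\delta_n^{(k)}\mid \mathbf{H}_0)\to 1$, so with probability tending to one the procedure does not stop before $r_0=r$; the other requirement on $\delta_n^{(k)}$ combined with (\ref{eq_deltan(1)ha}) ensures that for every $r_0<r$ the test rejects with probability tending to one, so the procedure does not stop prematurely either. The main technical obstacle is the second part: I need to check that the divergence estimates on numerators are joint with the $\rO_{\mathbb{P}}(\mathsf T)$ bound on the denominators, which requires applying Theorem \ref{thm_main_spike} jointly with Theorem \ref{thm_main_unbounded} to a fixed finite collection of eigenvalues; that joint control is precisely what Remark \ref{rmk_mainresults_unbounded} provides, so the argument goes through.
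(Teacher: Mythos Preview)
Your proof is correct and follows essentially the same approach as the paper: you reduce the null case to the joint asymptotics of $\{\xi_{(i)}^2\}$ via part (2) of Theorem \ref{thm_main_spike} and Remark \ref{rmk_mainresults_unbounded}, and for the alternative you bound $\mathbb{T}$ from below by the single term at index $i=r$, controlling the numerator by $\widetilde{\sigma}_r$ through part (1) and the denominator by $\mathsf{T}$ through part (2) together with (\ref{def1})/(\ref{def3}). Your treatment of the sequential-testing consistency is slightly more explicit than the paper's, but the logic is identical.
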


\begin{remark}\label{remk_firstapplicationremark}
We remark that the conditions $\delta_n^{(1)} \rightarrow \infty$ and $\delta_n^{(2)} \rightarrow \infty$ are necessary and sufficient to guarantee that $\mathbb{T}$ and $\mathbb{T}_{r_0}$ have asymptotic zero type I errors. Moreover, for any fixed $r_*-r_0,$ the joint distributions of $\{\xi_{(i)}^2\}_{1 \leq i \leq r_*-r_0+2}$ can be calculated explicitly  as in \cite{resnick2008extreme}. In fact, in the study of extreme value theory, $\{\xi_{(i)}^2-\xi_{(i+1)}^2\}$ are called the \emph{spacings} of $\{\xi_{(i)}^2\}$ \cite{nagaraja2015spacings} whose distributions can also be calculated explicitly in some settings. For example, if $\xi_i^2$ are i.i.d. exponential distribution with parameter $t>0,$ then $\xi_{(i)}^2-\xi_{(i+1)}^2$ will be i.i.d. exponential distribution with parameter $t/(n-i).$ In general, it is hard to get explicit expressions for the limiting distributions of $\mathbb{G}_1$ and $\mathbb{G}_2,$ but it is easy to check that both distributions are supported on the whole positive real line so that it is necessary to let the critical values to diverge. Finally, following the ideas of \cite{ding2021spiked,9779233,onatski2009testing,passemier2014estimation},  since $\mathbb{G}_1$ and $\mathbb{G}_2$ only depend on $\{\xi_i^2\},$ we can numerically generate the critical values to calibrate the empirical distributions of $\mathbb{G}_1$ and $\mathbb{G}_2$. 
\end{remark}

\quad Inspired by Corollar \ref{coro_distribution} and Remark \ref{remk_firstapplicationremark}, we can test (\ref{eq_testingproblem}) using the statistics $\mathbb{T}$ and $\mathbb{T}_{r_0}.$ Given some type I error rate $\alpha$ (say $\alpha=0.1$), the critical values can be generated numerically through $\mathbb{G}_1$ and $\mathbb{G}_2,$ respectively as in the procedure below Theorem 4.3 of \cite{ding2021spiked}. More specifically, we can generate a sequence of $N,$ say $N=10^4,$ $\{\xi_{k,i}^2\}, 1 \leq k \leq N,$ and the associated sequence of statistics $\{\mathbb{G}_{k,1}\}$ and $\{\mathbb{G}_{k,2}\}.$ Given the level $\alpha,$ we can choose $\delta_n^{(t)}$ so that for $t=1,2,$ 
\begin{equation*}
\frac{\# \{ \mathbb{G}_{k,t} \leq \delta_n^{(t)}\}}{N} \geq 1-\alpha. 
\end{equation*}   
We point out that by the constructions of $\mathbb{G}_1$ and $\mathbb{G}_2,$ people need to know the distribution of $\{\xi_i^2\}.$ However, in the simulations below, we show that the performance of the proposed statistics and the choices of the critical values are relatively robust against the correct choices of $\{\xi_i^2\},$ as long as the tail behavior of the distributions are  specified.

\quad In what follows, we conduct Monte-Carlo simulations to 
demonstrate the accuracy, power and robustness of our proposed statistics for (\ref{eq_testingproblem}) for the elliptically distributed data under various settings of $\xi^2.$ More specifically, we will consider the following four setups:
(1). Gamma distribution with shape parameter $5$ and rate parameter $5$; (2). Pareto distribution with scale parameter $x_{\min}=0.75$ and shape parameter $3$; (3). Exponential distribution with rate parameter $1$; (4). Squares of student-$t$ distribution with 3 degrees of freedom.  For the possibly spiked population covariance matrix, we consider Johnstone's spiked model \cite{John2001}  
\begin{equation*}
\widetilde{\Sigma}=\text{diag}\left\{ \widetilde{\sigma}_1, \cdots, \widetilde{\sigma}_r, 1,1,\cdots, 1 \right\}. 
\end{equation*}

First, we study our proposed statistics.  We check the accuracy  under $\alpha=0.1$ when the null hypothesis $\mathbf{H}_0$ in (\ref{eq_testingproblem}) holds with the setting $r_0=2, \widetilde{\sigma}_1=25, \widetilde{\sigma}_2=20$  with various choices of $\phi=0.5,1,2.$ Under this null hypothesis, we also examine the power of the statistics when $\mathbf{H}_a$ in (\ref{eq_testingproblem}) holds with $r=3$ with $\widetilde{\sigma}_1=25, \widetilde{\sigma}_2=20$ and some $\widetilde{\sigma}_3>0$. We can conclude from Figures \ref{fig_test1typeis1} and \ref{fig_test1power} that both statistics are reasonably accurate and powerful. Especially for the power, once $\widetilde{\sigma}_3$ is in a certain range which depends on the behavior of $\xi^2$, i.e., (\ref{spiked_assumption}) is satisfied, our statistics will be able to reject $\mathbf{H}_a$. We can also make the same conclusion as \cite{9779233} that $\mathbb{T}_{r_0}$ has slightly better finite sample performance in terms of power since it uses a smaller number of eigenvalues.

\begin{figure}[!ht]
\subfigure[Accuracy of  $\mathbb{T}$ in (\ref{eq_statisticsdefinition1}).]{\label{fig:a}\includegraphics[width=8.3cm,height=5cm]{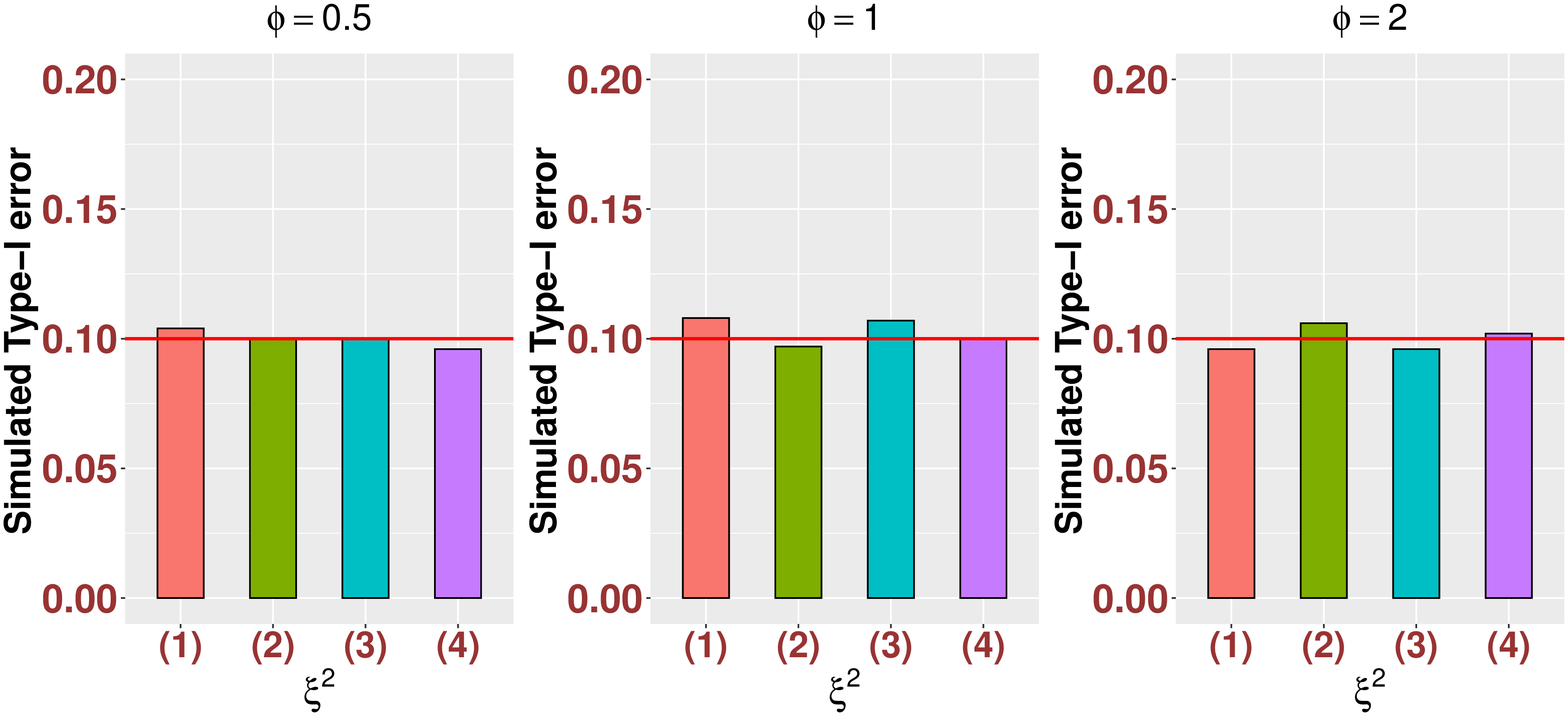}}
\hspace*{0.3cm}
\subfigure[Accuracy of  $\mathbb{T}_{r_0}$ in (\ref{eq_statisticsdefinition2}).]{\label{fig:b}\includegraphics[width=8.3cm,height=5cm]{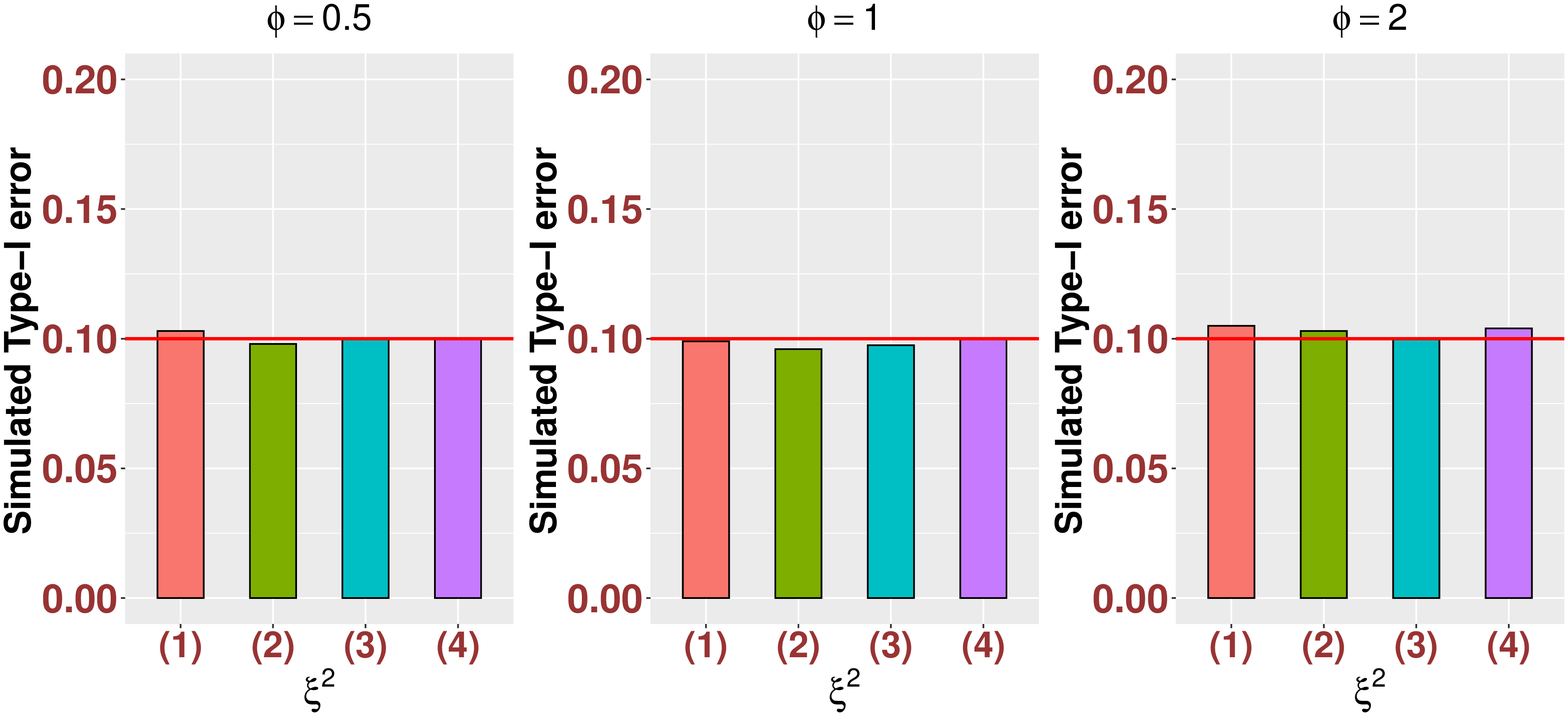}}
\caption{Simulated type I error rates under the nominal level 0.1 for $\mathbb{T}$ and $\mathbb{T}_{r_0}$. Here (1)-(4) corresponds to the four different settings of $\xi^2$. We take $n=400$ and report the results based on 2,000 Monte-Carlo simulations. }
\label{fig_test1typeis1}
\end{figure}  

\begin{figure}[!ht]
\hspace*{-1.5cm}\includegraphics[width=20cm,height=5.5cm]{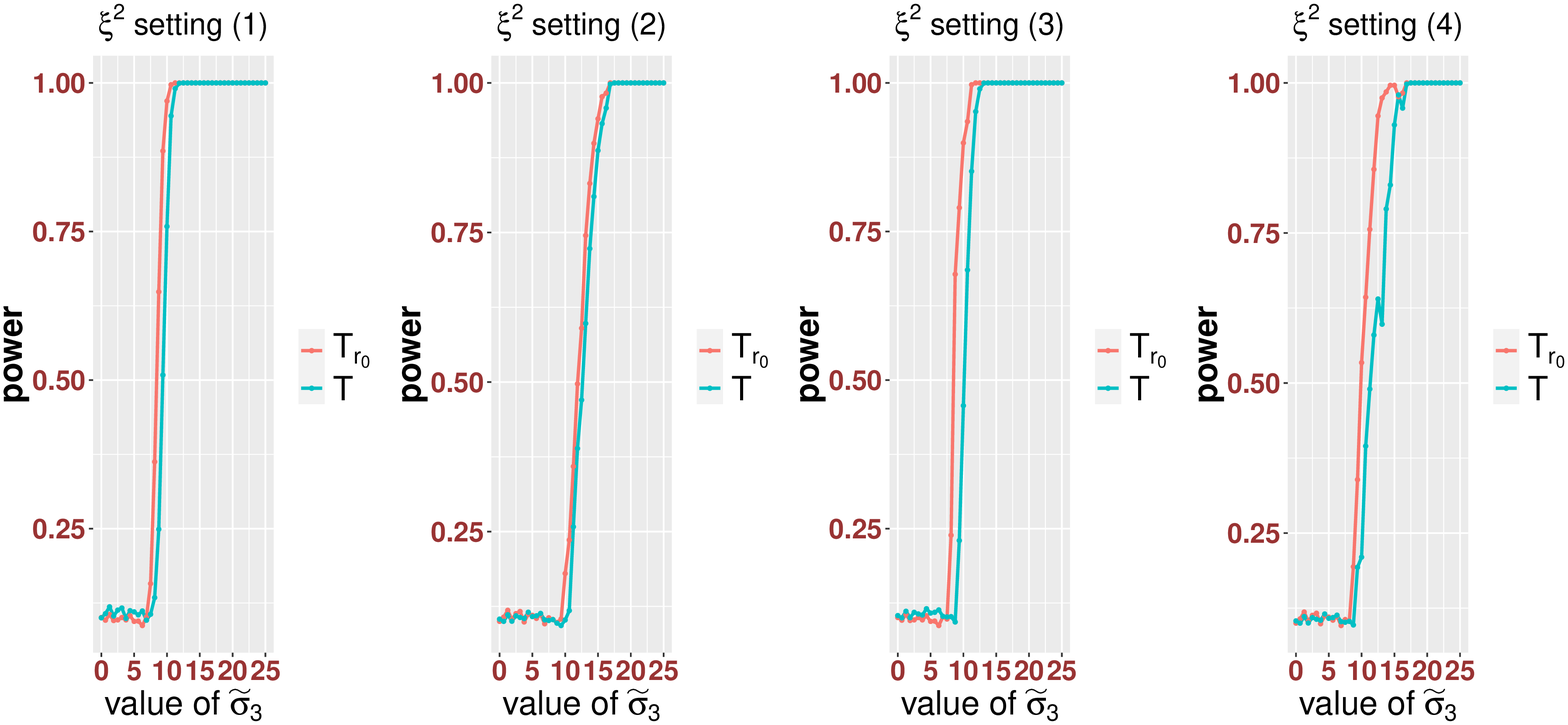}
\caption{Simulated power under the nominal level 0.1 for $\mathbb{T}$ and $\mathbb{T}_{r_0}$. We take $\phi=0.5, n=400$ and report the results based on 2,000 Monte-Carlo simulations. Similar results can be reported for the cases $\phi=1,2.$ }
\label{fig_test1power}
\end{figure}

Second, we compare the performance of our approaches with some existing ones. Since most of the existing literature focus on the estimation of the number of the spikes instead of inferring, we compare the performance of our inference based estimators $\widehat{r}_1$ and $\widehat{r}_2$ in (\ref{eq_sequencialestimator}) with the existing ones. For definiteness, we compare our estimators with the ones proposed in \cite{CHP,ke2021estimation,passemier2014estimation}. The above estimators are all developed to estimate the number of spikes when the samples are generated from $\widetilde{\Sigma}^{1/2} \mathbf{x}_i$ under various assumptions on $\widetilde{\Sigma}$ with $\mathbf{x}_i$ having mean zero i.i.d. entries. In Figure \ref{fig_comparsion1}, we compare the accuracy of our estimators when $r=1$ under setting (2) for $\xi^2$ and a wide range of $\widetilde{\sigma}_1.$ For comparison, we conduct 2,000 Monte-Carlo simulations  and report the correct detection ratio (CDR); that is, the ratio between the number of simulations that $r=1$ is estimated correctly and the total number of simulations 2,000. We can find that our estimators outperform the existing ones once the spikes are reasonably large.  

\begin{figure}[!ht]
\subfigure{\includegraphics[width=5cm,height=5.1cm]{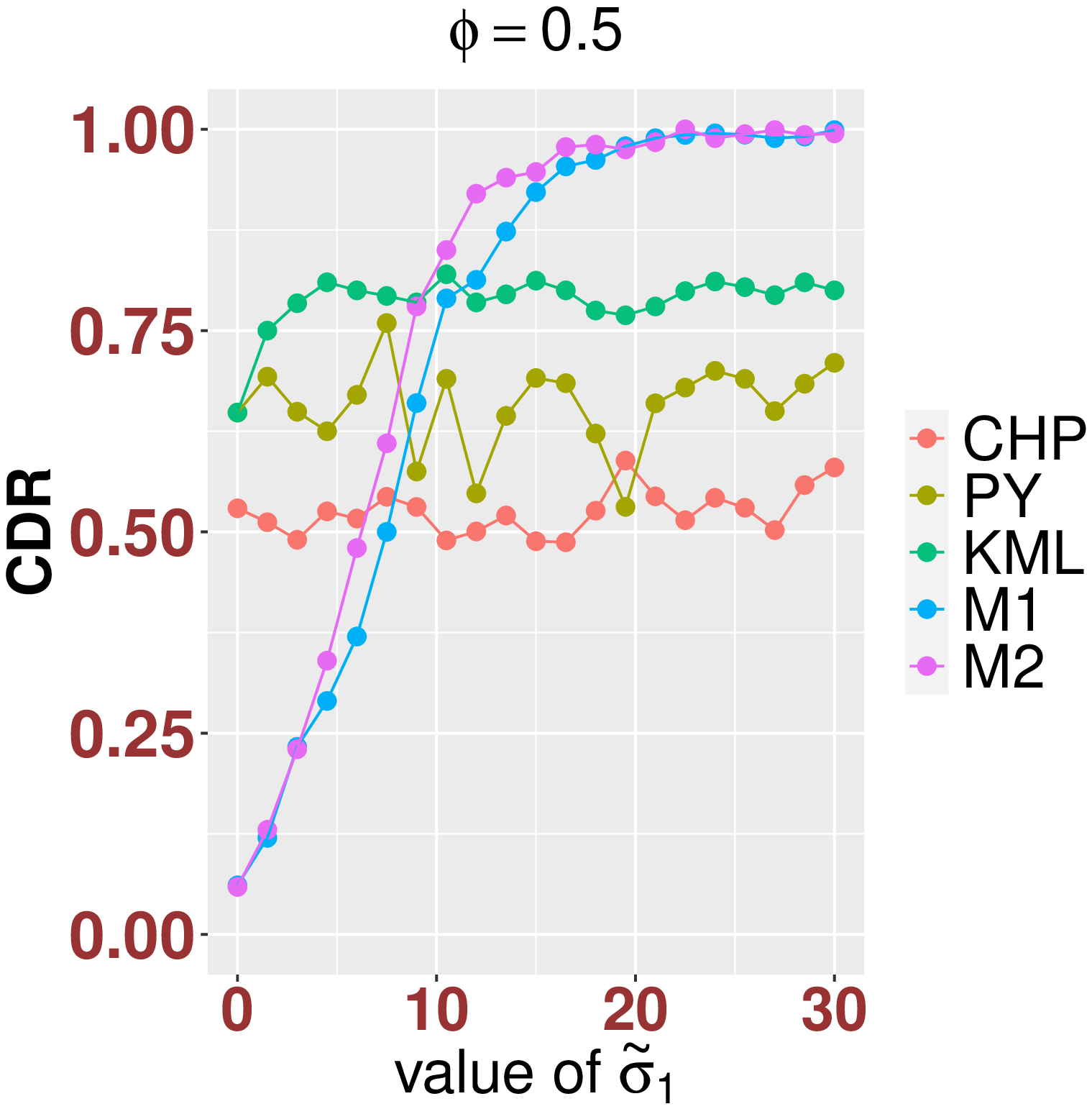}}
\hspace*{0.2cm}
\subfigure{\includegraphics[width=5cm,height=5.1cm]{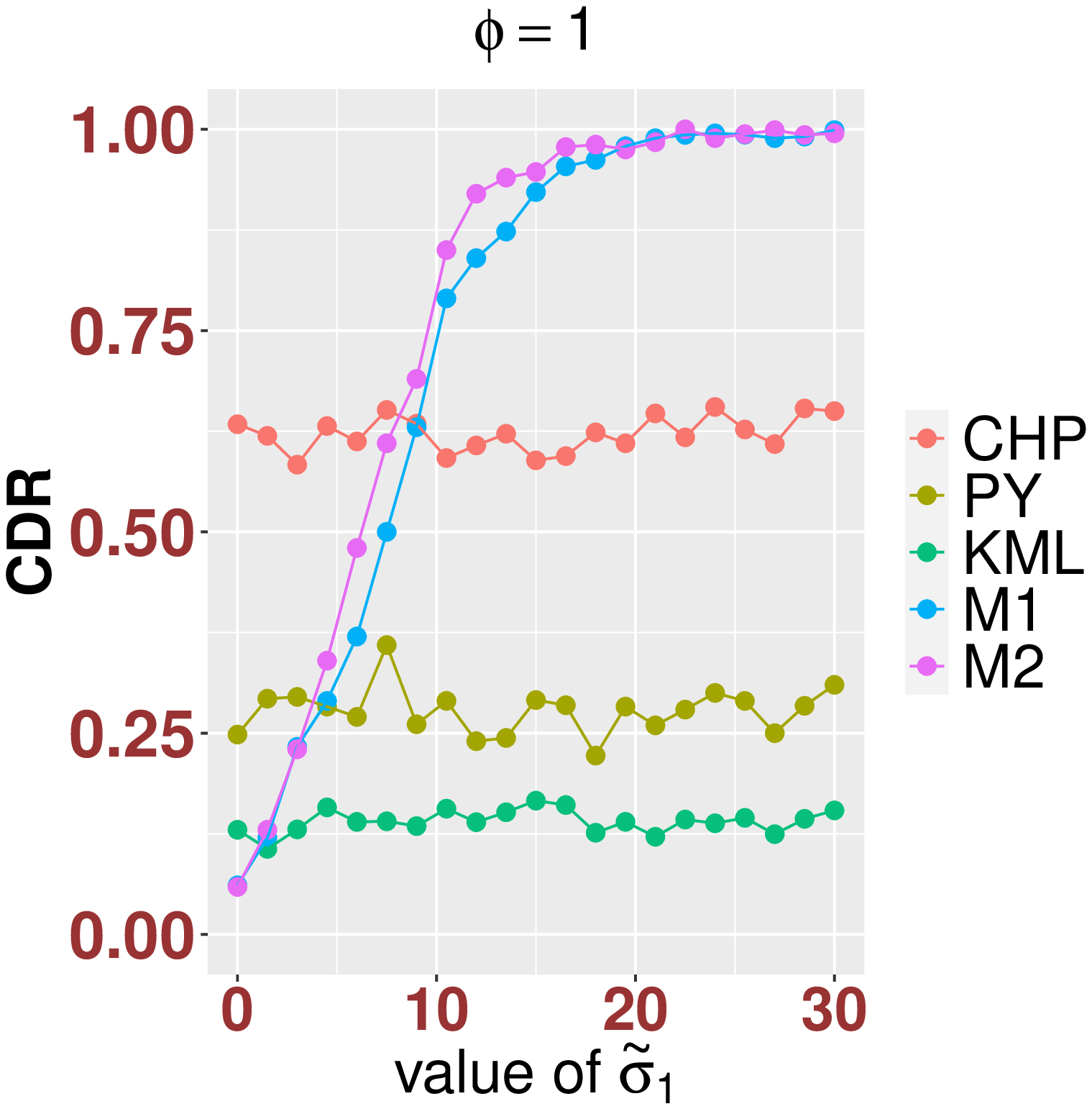}}
\hspace*{0.2cm}
\subfigure{\includegraphics[width=5cm,height=5.1cm]{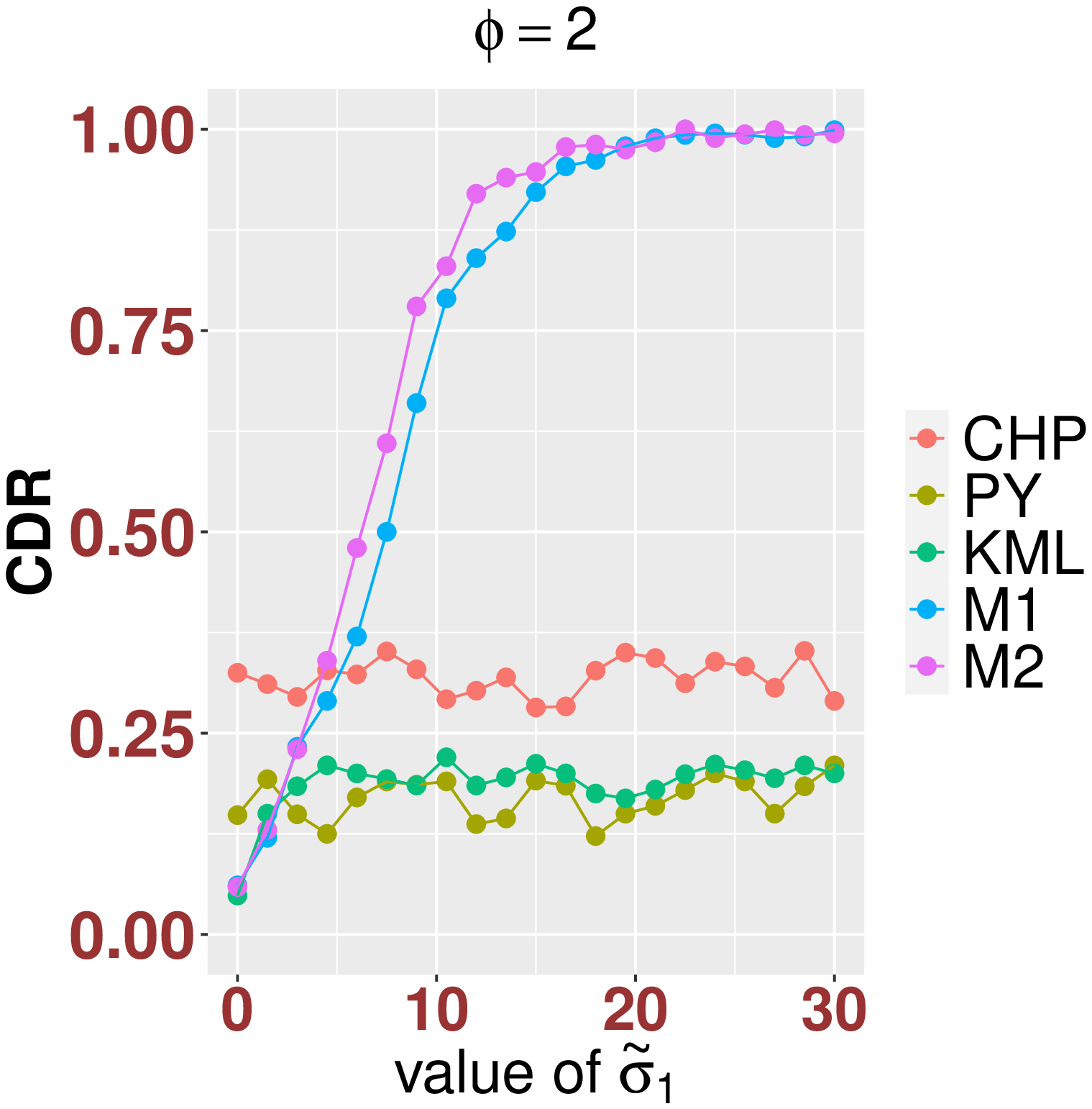}}
\caption{Comparison of estimation. In the above figures, "CHP" refers to the method from \cite{CHP}, "PY" refers to the method from \cite{passemier2014estimation}, "KML" refers to the method from \cite{ke2021estimation}, and "M1" and "M2" refer to using our method via $\widehat{r}_1$ and $\widehat{r}_2$ in (\ref{eq_sequencialestimator}), respectively. Here $n=400.$ }
\label{fig_comparsion1}
\end{figure}

Finally, we discuss the robustness of our proposed methods against the choices of $\xi^2.$ As mentioned earlier, the constructions of $\mathbb{G}_1$ and $\mathbb{G}_2$ require the knowledge of the distribution of $\{\xi_i^2\}$ which may be too restrictive. In Figure \ref{fig_test1typeis2}, we checked the robustness of statistics in terms of accuracy when applied to infer the null hypothesis in (\ref{eq_testingproblem}) with $r_0=2$. Especially,  the true distribution of $D$ follows setting (2), i.e., Pareto distribution with parameters $0.75$ and 3. However, when we construct the critical values from $\mathbb{G}_1$ and $\mathbb{G}_2,$ we choose misspecified distributions (a). Pareto distribution with parameters 1 and 4, (b). Squares of student-t distribution with 2 degrees of freedom.  We can conclude that the performance of the proposed statistics and the choices of the critical values are relatively robust. In fact, we have also conducted simulations to check the power and similar conclusions can also be made.

\begin{figure}[t]
\subfigure[Accuracy of  $\mathbb{T}$ in (\ref{eq_statisticsdefinition1}).]{\label{fig:aaaa}\includegraphics[width=8.3cm,height=5cm]{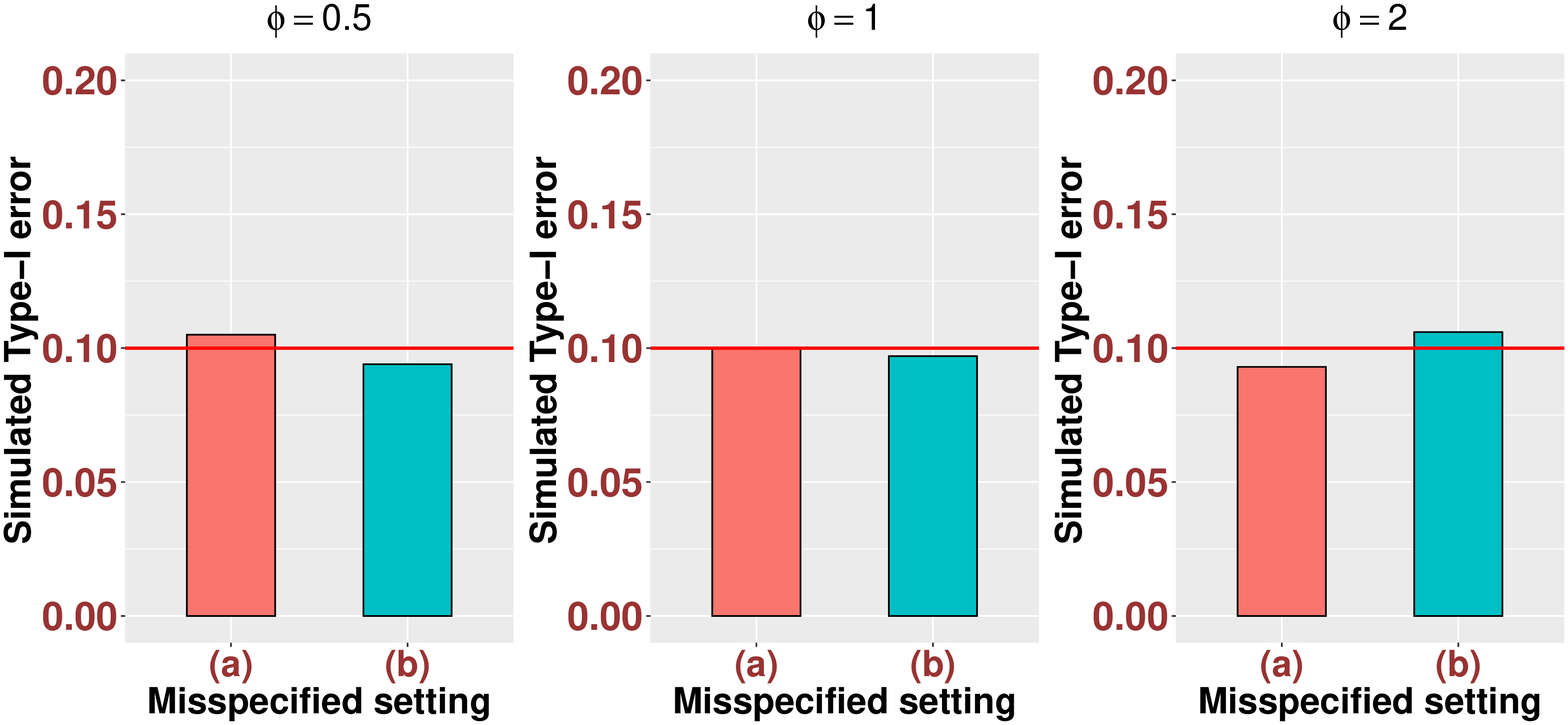}}
\hspace*{0.3cm}
\subfigure[Accuracy of  $\mathbb{T}_{r_0}$ in (\ref{eq_statisticsdefinition2}).]{\label{fig:bbbb}\includegraphics[width=8.3cm,height=5cm]{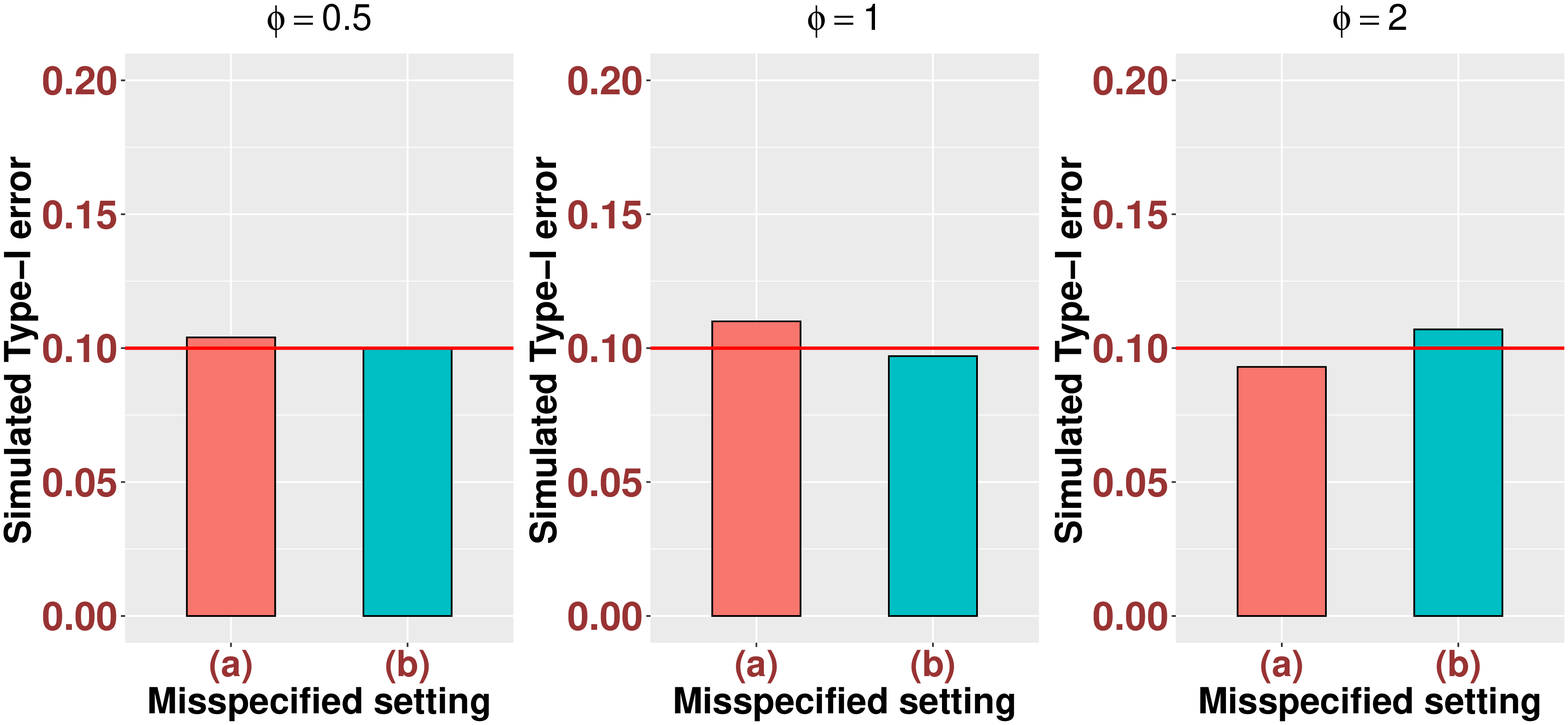}}
\caption{Simulated type I error rates under the nominal level 0.1 for $\mathbb{T}$ and $\mathbb{T}_{r_0}$ under misspecified settings. Here (a) and (b) correspond to the two misspecified settings. We take $n=400$ and report the results based on 2,000 Monte-Carlo simulations. }
\label{fig_test1typeis2}
\end{figure}

\subsection{High dimensional multiplier bootstrap with applications in common factors selection}\label{sec_boostrapping}

The bootstrap \cite{davison1997bootstrap,boostraporginialpaper} is a central tool in statistics which  resamples a single dataset to create many simulated samples to enable inference when very little is known about the properties of the data-generating distribution. In multivariate statistical analysis, bootstrap has witnessed many successful applications in real world problems related to principal component analysis (PCA), especially the ones related to solving inference problems on the population covariance matrices, to list but a few \cite{fisher2016fast,li2019local, wagner2015go} and see \cite{yao2021rates} for more references. Even though the asymptotic theory showing that bootstrap generally works in the context of PCA with low-dimensional data (i.e., when $p$ is much smaller than the size $n$) is well-established, see \cite{beran1985bootstrap} for example, less is touched in the high dimensional setting (\ref{ass1}). To our best knowledge, the existing works mostly focus on the spectral norm or the spectral projections  under various assumptions on the population covariance matrix, to list but a few, \cite{VK,lopes2022improved,xia2021normal,yao2021rates}. But generally,  much less is known about each individual eigenvalue.

In what follows, we study the performance of multiplier bootstrap when applied to investigate the first few eigenvalues of the population covariance matrix under a spiked covariance matrix model. For simplicity, we assume that the data matrix is generated as $\widehat{Y}:=\widetilde{\Sigma}^{1/2}X$ with $X$ satisfying (2) of Assumption \ref{assum_model}. Such a model finds important applications in financial economics especially in the large scale factor models where all economic variables can be explained by a few common components, see \cite{bai2008large,stock2016dynamic} for a survey of the factor models. For $\widehat{Y}=(\widehat{\mathbf y}_i)$, where $\widehat{\mathbf y}_i, 1 \leq i \leq n,$ are i.i.d. sampled from the factor model 
\begin{equation}\label{eq_factormodelmodelcase}
\widehat{\mathbf y}=L\mathbf f+\mathbf e \in \mathbb{R}^p,
\end{equation} 
where $\mathbf f$ is an $r \times 1$ low rank (unobserved) factor, $L$ is a $p \times r$ low rank loading matrix and $\mathbf e$ is the $p \times 1$ idiosyncratic error which is independent of $\mathbf{f}$. For the purpose of identifiability, following \cite{bai2002determining,bai2008large,fan2022estimating,stock2016dynamic}, we assume that  $\operatorname{Cov}(\mathbf{f}, \mathbf{f})=I_r.$ Therefore, the covariance structure of $\widehat{\mathbf{y}}$ can be written as $LL^*+\operatorname{Cov}(\mathbf{e}, \mathbf{e})$ which has a spiked structure as $\widetilde{\Sigma}.$  In financial economics, an important question is to understand how many common components are needed in order to understand the economic variables. Formally, we are interested in testing the value of $r$ via the hypothesis testing problem 
\begin{equation}\label{eq_hypothesistesttwo}
\mathbf{H}_0: r \geq r_0  \ \ \text{vs} \ \ \mathbf{H}_a: r<r_0,
\end{equation}   
where $r_0$ is some pre-given integer representing our belief of the value of $r.$ Corresponding to (\ref{eq_sequencentialtestprocedure}), we can propose the sequential testing estimator for $r$ as 
\begin{equation}\label{eq_factorestimator}
\widehat{r}:=\sup\left\{ r_0 \geq 0: \ \mathbf{H}_0 \ \text{is accepted}  \right\}. 
\end{equation}
We point out that the two testing problems (\ref{eq_testingproblem}) and (\ref{eq_hypothesistesttwo}) are closely related but different so that our proposed procedure and the associated estimators for $r$ are also different.

In the literature, many methods based on the eigenvalues of the sample covariance matrix, i.e., $\widehat{Q}= \widetilde{\Sigma}^{1/2}XX^* \widetilde{\Sigma}^{1/2},$ have been proposed for the hypothesis testing problem (\ref{eq_hypothesistesttwo}) in terms of factor models under our setting, for example, see \cite{ahn2013eigenvalue,bai2002determining,CHP,fan2022estimating,onatski2010determining}. In what follows, we propose a different approach based on multiplier bootstrap procedure. Our motivation is from \cite{2022arXiv220206188Y} where the authors used a standard bootstrap based approach. We see that   
under the above setup, $\widetilde{Q}=\widetilde{\Sigma}^{1/2}XD^2 X^* \widetilde{\Sigma}^{1/2}$ can be regarded as the bootstrapped sample covariance matrix with the entries in $D^2$ being the random weights.

We now state the main results on the relation between the eigenvalues of the sample covariance matrices with and without multiplier bootstrap. Denote the non-zero eigenvalues of $\widehat{Q}$ as  $\widehat{\lambda}_1 \geq \widehat{\lambda}_2 \geq \cdots \geq \widehat{\lambda}_{\min\{p,n\}}>0. $ Recall (\ref{eq_eigenvalueintheendspiked}). For simplicity and definiteness, we follow \cite{bai2008large,stock2016dynamic} and assume that the entries in the factor $\mathbf{f}$ and error $\mathbf{e}$ are independent so that $\widetilde{\Sigma}=\operatorname{diag}\{\widetilde{\sigma}_1, \cdots, \widetilde{\sigma}_p\}$ is diagonal. For simplicity and definiteness, we assume that for some constant $\tau>0$
\begin{equation}\label{eq_separation}
\frac{\widetilde{\sigma}_i}{\widetilde{\sigma}_{i+1}} \geq 1+\tau,  \  1 \leq i \leq r. 
\end{equation}


\begin{theorem}\label{thm_boostrappingdiscussion}
Suppose Assumption \ref{assumption_techincial}, (\ref{spiked_assumption}) and (\ref{eq_separation})  hold. Moreover, for the entries of $X$, we assume (2) of Assumption \ref{assum_model} holds. Then for $1 \leq i \leq r,$ we have that when conditional on  the data matrix $\widetilde{\Sigma}^{1/2}X$ 
\begin{equation*}
\lim_{n \rightarrow \infty} \mathbb{P} \left( \sqrt{\frac{n}{\mathsf{V}}}\left(\frac{\mu_i}{\widehat{\lambda}_i}-\mathbb{E} \xi^2\right) \leq x \right)=\Phi(x), \ \text{where} \ \mathsf{V}:= \mathfrak{m}_4 \mathbb{E} \xi^4 -(\mathbb{E} \xi^2)^2. 
\end{equation*}
Here $\mathfrak{m}_4=\mathbb{E}(\sqrt{n}x_{11})^4>1$ due to (\ref{eq_standard1n}) so that $\mathsf{V}>0,$ and $\Phi(x)$ is the CDF of a real standard Gaussian random variable. 
\end{theorem}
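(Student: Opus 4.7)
The plan is to perform a conditional analysis on $D$ given $X$, reducing the fluctuation of $\mu_i/\widehat{\lambda}_i$ to a linear statistic in $\{\xi_j^{2}\}$ that admits a central limit theorem. Denoting the columns of $\widetilde{\Sigma}^{1/2}X$ by $\mathbf{v}_j$, one has $\widehat{Q}=\sum_{j=1}^{n}\mathbf{v}_j\mathbf{v}_j^{*}$ and $\widetilde{Q}=\sum_{j=1}^{n}\xi_j^{2}\mathbf{v}_j\mathbf{v}_j^{*}=(\mathbb{E}\xi^{2})\widehat{Q}+\mathcal{E}$, where $\mathcal{E}:=\sum_{j=1}^{n}(\xi_j^{2}-\mathbb{E}\xi^{2})\mathbf{v}_j\mathbf{v}_j^{*}$ is mean zero given $X$. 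Under (\ref{spiked_assumption}) and (\ref{eq_separation}), Theorem \ref{thm_main_spike}(1) together with its $D=I$ counterpart (established classically in, e.g., \cite{bloemendal2016principal,ding2021spiked11,John2001}) yields the leading-order identity $\mu_i/\widehat{\lambda}_i=\mathbb{E}\xi^{2}+\ro_{\mathbb{P}}(1)$ for $1\leq i\leq r$, and localizes the corresponding eigenvectors $\mathbf{u}_i$ of $\widehat{Q}$ and $\tilde{\mathbf{u}}_i$ of $\widetilde{Q}$ near the canonical basis vector $\mathbf{e}_i$ (since $\widetilde{\Sigma}$ is diagonal).

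For the fluctuation I would carry out a first-order perturbation expansion of the outlier eigenvalue
\begin{equation*}
\mu_i \;=\; (\mathbb{E}\xi^{2})\widehat{\lambda}_i \;+\; \mathbf{u}_i^{*}\mathcal{E}\mathbf{u}_i \;+\; R_i,
\end{equation*}
in which $R_i$ gathers the second- and higher-order terms. Substituting $\mathbf{u}_i\approx\mathbf{e}_i$ and $\widehat{\lambda}_i\approx\widetilde{\sigma}_i\sum_j X_{ij}^{2}$ into the first two terms reduces the leading contribution to
\begin{equation*}
\frac{\mu_i}{\widehat{\lambda}_i}-\mathbb{E}\xi^{2} \;=\; \frac{\sum_{j=1}^{n}(\xi_j^{2}-\mathbb{E}\xi^{2})\,X_{ij}^{2}}{\sum_{j=1}^{n}X_{ij}^{2}}\bigl(1+\ro_{\mathbb{P}}(1)\bigr)+\frac{R_i}{\widehat{\lambda}_i}.
\end{equation*}
Conditional on $X$, the numerator is a sum of $n$ independent mean-zero random variables for which the Lyapunov condition follows from the moment bounds in Assumption \ref{assum_D}. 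On the $X$-side, $\sum_j X_{ij}^{2}\to 1$ and $n\sum_j X_{ij}^{4}\to\mathfrak{m}_4$ hold almost surely, which together with the conditional CLT produces the Gaussian limit with a determinate asymptotic variance.

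The hardest part will be controlling $R_i$ and the $\mathbf{u}_i\to\mathbf{e}_i$ replacement error at the scale $\ro(n^{-1/2})$: the eigenvector deviation $\|\mathbf{u}_i-\mathbf{e}_i\|$ is generically only of size $\widetilde{\sigma}_i^{-1/2}$, which is $\ro(1)$ under (\ref{spiked_assumption}) but not automatically $\ro(n^{-1/2})$. Its cross contribution with $\mathcal{E}$ must be tracked because it is precisely what distinguishes the stated variance $\mathsf{V}=\mathfrak{m}_4\mathbb{E}\xi^{4}-(\mathbb{E}\xi^{2})^{2}$ from the naive value $\mathfrak{m}_4\operatorname{Var}(\xi^{2})$ produced by the leading-order calculation alone. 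I plan to close this gap by a Schur-complement/resolvent expansion in the same spirit as the perturbative master-equation analysis underlying Theorem \ref{thm_main_spike} (see Section \ref{sec_sec_ofproof}): invert $(\mathbb{E}\xi^{2})\widehat{Q}-zI$ on $\mathbf{e}_i^{\perp}$, expand the resulting determinantal equation for $\mu_i$ to second order, evaluate the quadratic forms in $(\xi_j^{2},X_{ij})$ by appealing to the averaged local law of Appendix \ref{sec_asymptoticlocalaveragedlocal}, and then apply the conditional Lindeberg CLT to the surviving linear functional of $(\xi_j^{2}-\mathbb{E}\xi^{2})$ to identify the limit and conclude.
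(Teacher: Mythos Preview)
Your structural decomposition differs from the paper's. You write $\widetilde Q=(\mathbb{E}\xi^{2})\widehat Q+\mathcal E$ and expand the outlier eigenvalue perturbatively, which forces you to control the eigenvector deviation $\|\mathbf u_i-\mathbf e_i\|$ against $\|\mathcal E\|$---the difficulty you correctly flag. The paper avoids this by following the determinantal route of \cite{CHP,2022arXiv220206188Y}: it introduces auxiliary scales $\theta_i,\widehat\theta_i$ via fixed-point equations, passes to the rescaled matrix $\check{\mathcal Q}=\check DX^{*}\widetilde\Sigma X\check D$ with $\check D^{2}=(\mathbb{E}\xi^{2})^{-1}D^{2}$, and analyzes $\det\bigl(\Lambda_s^{-1}-V_1^{*}X\check D\,\mathbf B^{-1}(\check\lambda_i)\,\check DX^{*}V_1\bigr)=0$ where $\mathbf B(x)=xI-\check DX^{*}\Sigma_oX\check D$. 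The non-spiked block $\Sigma_o$ is absorbed into the resolvent $\mathbf B^{-1}$, so no eigenvector bound is needed; expanding the $r\times r$ determinant about $\widehat\theta_i$ gives directly $(\check\lambda_i-\widehat\theta_i)/\widehat\theta_i=\sum_j x_{ij}^{2}\xi_j^{2}/\mathbb{E}\xi^{2}-1+\ro_{\mathbb P}(n^{-1/2})$, i.e.\ the same leading linear statistic you obtain.

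Your diagnosis of the variance discrepancy is sharp, but your proposed cure will not deliver it. Conditional on $X$, any linear functional $\sum_j w_j(X)\xi_j^{2}$ has variance $\operatorname{Var}(\xi^{2})\sum_j w_j^{2}$, hence is always a scalar multiple of $\operatorname{Var}(\xi^{2})$; and the quadratic remainder $R_i$ in your expansion is $O_{\mathbb P}(n^{-1})$, too small to enter at scale $n^{-1/2}$. No second-order cross term can therefore convert $\mathfrak m_4\operatorname{Var}(\xi^{2})$ into $\mathsf V$. In fact $\mathsf V=\mathfrak m_4\mathbb{E}\xi^{4}-(\mathbb{E}\xi^{2})^{2}$ is exactly the \emph{unconditional} variance of $\sum_j x_{ij}^{2}\xi_j^{2}-\mathbb{E}\xi^{2}$ (with both $X$ and $D$ random), whereas the conditional variance is $\mathfrak m_4\operatorname{Var}(\xi^{2})$ almost surely. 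The paper's own proof lands on the same linear statistic and closes with a one-line ``using central limit theorem'' without separating the two readings; so do not invest effort in a resolvent expansion aimed at recovering $\mathsf V$ conditionally---the gap you noticed reflects the tension between the word ``conditional'' in the statement and the displayed variance formula, not a missing term in your (or the paper's) expansion.
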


\begin{remark}\label{rem_boostrappingremark}
Several remarks are in order. First, combining Theorem \ref{thm_boostrappingdiscussion} with Theorem \ref{thm_main_unbounded} and Remark \ref{rmk_mainresults_unbounded}, we see that we can propose an accurate and powerful procedure (see Algorithm \ref{alg1} below) to test (\ref{eq_hypothesistesttwo}) using the first few outlier eigenvalues. In fact, using the discussions in Remark \ref{rem_multiplevaluespikedremark}, we see that there exists a sharp phase transition between $i \leq r$ and $i>r,$ where the outliers are asymptotically Gaussian whereas the extremal non-outliers follow Gumbel or Fr{\'e}chet distribution.  Second, in Section \ref{sec_spikeellipticaldata}, instead of using the outlier eigenvalues, we utilize the extremal non-outlier eigenvalues in  (\ref{eq_statisticsdefinition1}) and (\ref{eq_statisticsdefinition2}).  The main reason is that in Section \ref{sec_spikeellipticaldata}, we considered a different hypothesis testing problem (\ref{eq_testingproblem}) instead of (\ref{eq_hypothesistesttwo}). But both approaches are accurate and powerful in terms of the estimation of $r.$ Third, similar to the discussions in Section 3.3 of \cite{2022arXiv220206188Y}, Theorem \ref{thm_boostrappingdiscussion} and its proof, in general,we see the multiplier bootstrap is biased in terms of replicating the distribution of $\widehat{\lambda}_i-\widetilde{\sigma}_i.$ Finally, in the current paper, we only consider the multiplier bootstrap when the entries in $D$ are independent. However, our approach can also apply to the standard bootstrap when $(\xi_1, \cdots, \xi_n)$ follows $n$-dimensional multinomial distribution with success probabilities $(n^{-1}, \cdots, n^{-1}).$ Since this is not the focus of the current paper, we will pursue this direction in the future works.   
\end{remark}

\quad Based on Theorems \ref{thm_boostrappingdiscussion} and Remark \ref{rem_boostrappingremark}, we can propose the following resampling procedure to test (\ref{eq_hypothesistesttwo}) utilizing the outlier eigenvalues. 
\begin{algorithm}[H]
	\caption{Resampling testing for (\ref{eq_hypothesistesttwo})}\label{alg1}
\normalsize
\begin{flushleft}
\noindent{\bf Inputs:}   $r_0, \mathsf{V}, \widehat{\lambda}_{r_0}$, the distribution of $\xi^2,$ number of resampling $B$ (say 1,000), type I error $\alpha$ and the standard $Z$-score $z_{1-\alpha/2}.$  

\noindent{\bf Step One:}  Generate $B$ i.i.d. copies of the matrices $D^2_k, \ k=1,2,\cdots, B.$ Compute the associated bootstrapped matrices $\widetilde{Q}_k=\widetilde{\Sigma}^{1/2}XD_k^2 X^* \widetilde{\Sigma}^{1/2}$ and the sequence of eigenvalues $\mu_{k,r_0}, 1 \leq k \leq B,$ where $\mu_{k,r_0}$ is the $r_0$-th eigenvalue of $\widetilde{Q}_{k}.$   

\noindent{\bf Step Two:} Compute $\widehat{\mathcal{T}}_k:=\sqrt{n/ \mathsf{V}}(\mu_{k,r_0}/\widehat{\lambda}_{r_0}-\mathbb{E} \xi^2)$ and let $B^*= \#\{k: |\widehat{\mathcal{T}}_k| \leq z_{1-\alpha/2}\}.$

\noindent{\bf Output:} p-value of the test can be computed as $\mathsf{p}:=1-\frac{B^*}{B}.$ Reject $\mathbf{H}_0$ in (\ref{eq_hypothesistesttwo}) if $\mathsf{p}<\alpha.$
\end{flushleft}
\end{algorithm}

\quad In what follows, we conduct Monte-Carlo simulations to  demonstrate the accuracy, power and robustness of our proposed Algorithm \ref{alg1} under the factor model setup (\ref{eq_factormodelmodelcase}). For simplicity, considering the setups in \cite{fan2022estimating,2022arXiv220206188Y}, in the simulations, for the data matrix $\widehat{Y} \in \mathbb{R}^{p \times n},$ we assume that 
\begin{equation*}
\widehat{Y}=\delta L' \mathbf{F}+\mathbf{E},
\end{equation*}
where $L' \in \mathbb{R}^{p \times 3}$ is the loading matrix whose rows are independent Gaussian random vectors in $\mathbb{R}^3$ with covariance matrix $\operatorname{diag}\{1.3, 0.8, 0.5\}$, $\mathbf{F} \in \mathbb{R}^{3 \times n}$ is the factor score matrix independent of $L'$ with i.i.d. standard Gaussian entries and $\mathbf{E} \in \mathbb{R}^{p \times n}$ is a standard Gaussian matrix independent of the factor loading and score matrices. Here $ \delta \geq 0$ is the factor strength. Under this setup, the null of (\ref{eq_hypothesistesttwo}) can be characterized as $\mathbf{H}_0: r=3$ which reduces to checking whether $\delta$ is large enough. The alternative of (\ref{eq_hypothesistesttwo})  can be expressed as $\mathbf{H}_a: r=0$ which reduces to checking whether $\delta=0$. 

First, we study our proposed statistics.  We check the accuracy  under $\alpha=0.1$ under the null that $r=3$ with $\delta=3.$ Moreover, we also examine the power of the statistics for the alternative when $r=0$ which implies $\delta=0$. We can conclude from Figure \ref{fig_test1typeis3} that our Algorithm \ref{alg1} is reasonably accurate and powerful for various choices of weights $\xi^2$ under different settings of $\phi.$

\begin{figure}[!ht]
\subfigure[Simulated type I error.]{\label{fig:a1aa}\includegraphics[width=8cm,height=5cm]{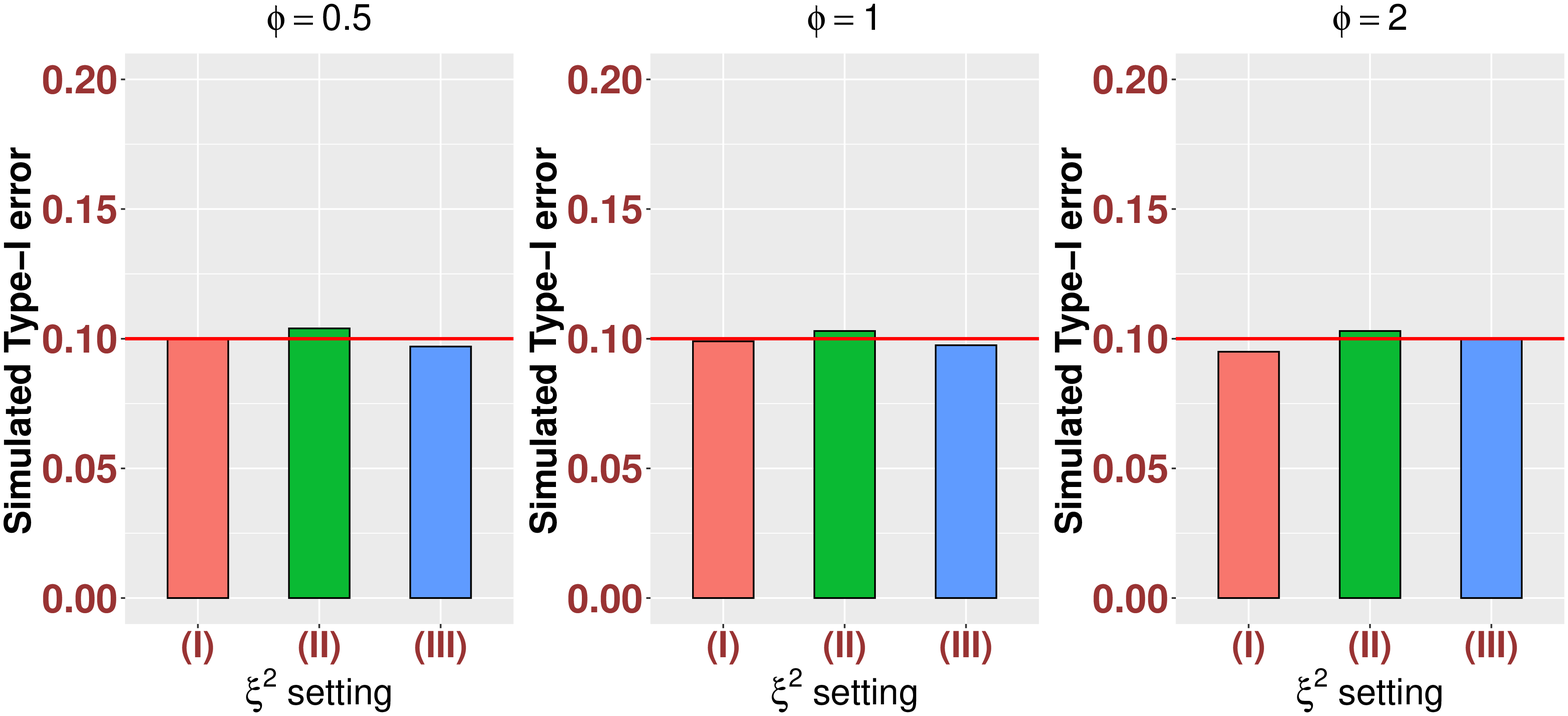}}
\hspace*{0.3cm}
\subfigure[Simulated power.]{\label{fig:b1aaa}\includegraphics[width=9cm,height=5cm]{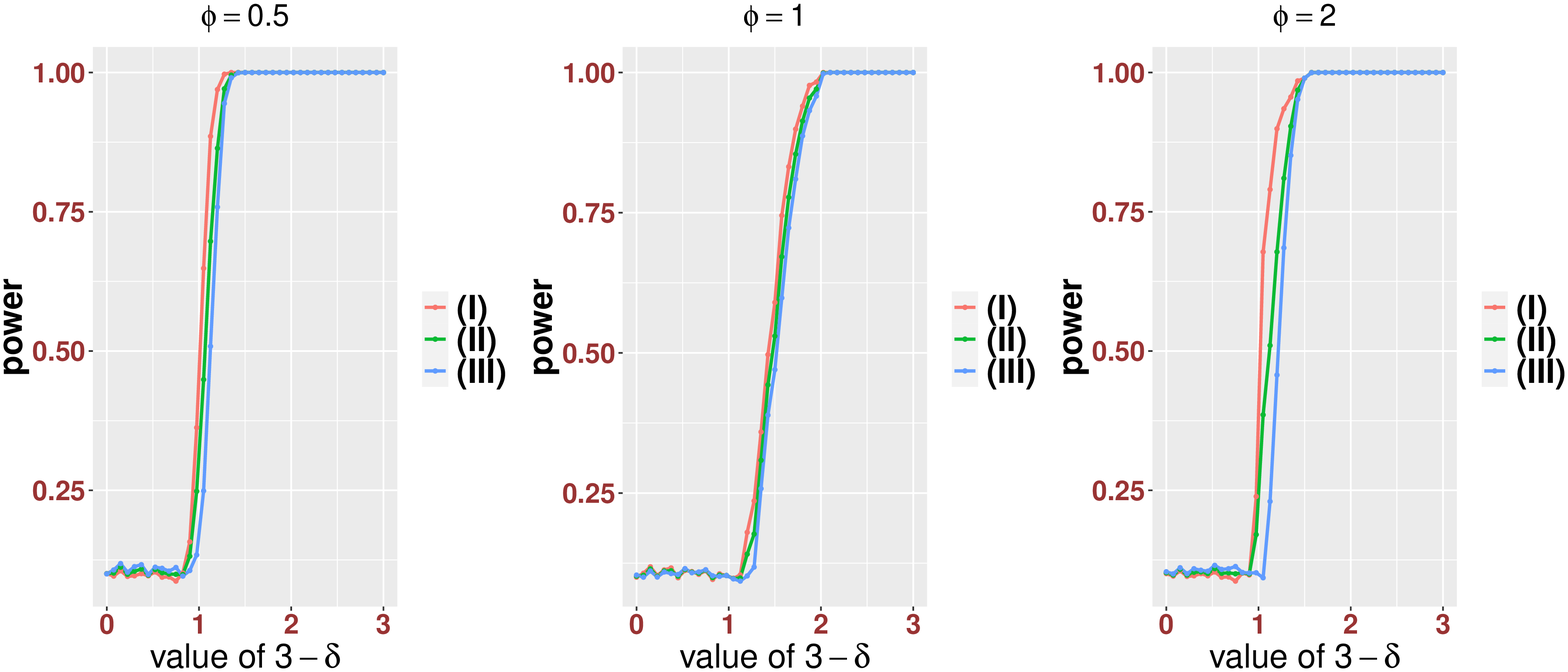}}
\caption{Simulated type I error rates and power under the nominal level 0.1 for our proposed Algorithm \ref{alg1}. Here we consider three different settings of multiplier $\xi^2:$ (I). Gamma distribution with parameters 15 and 15, (II). $\exp(1)$ distribution, and (III). $\chi^2_1$ distribution.   We take $n=400$ and report the results based on 2,000 Monte-Carlo simulations. The entries of $X$ are i.i.d. Gaussian with mean zero and variance $n^{-1}.$}
\label{fig_test1typeis3}
\end{figure}

Second, we compare the performance of our approaches with some existing ones. Again, since most of the existing literature focus on the estimation of the number of the spikes instead of inferring, we compare the performance of our inference based estimator $\widehat{r}$ in (\ref{eq_factorestimator}) with a few existing ones for estimating the number of factors in the context of factor model. For definiteness, we compare our estimators with the ones proposed in \cite{ahn2013eigenvalue,bai2002determining,CHP,fan2022estimating,onatski2010determining}. In Figure \ref{fig_comparsion2}, we compare the accuracy of our estimators when $r=3$ using CDR as in Figure \ref{fig_comparsion1}.  We can find that our estimators can outperform some of the existing ones especially when the spikes are not that large.

\begin{figure}[!ht]
\subfigure{\includegraphics[width=5cm,height=5.1cm]{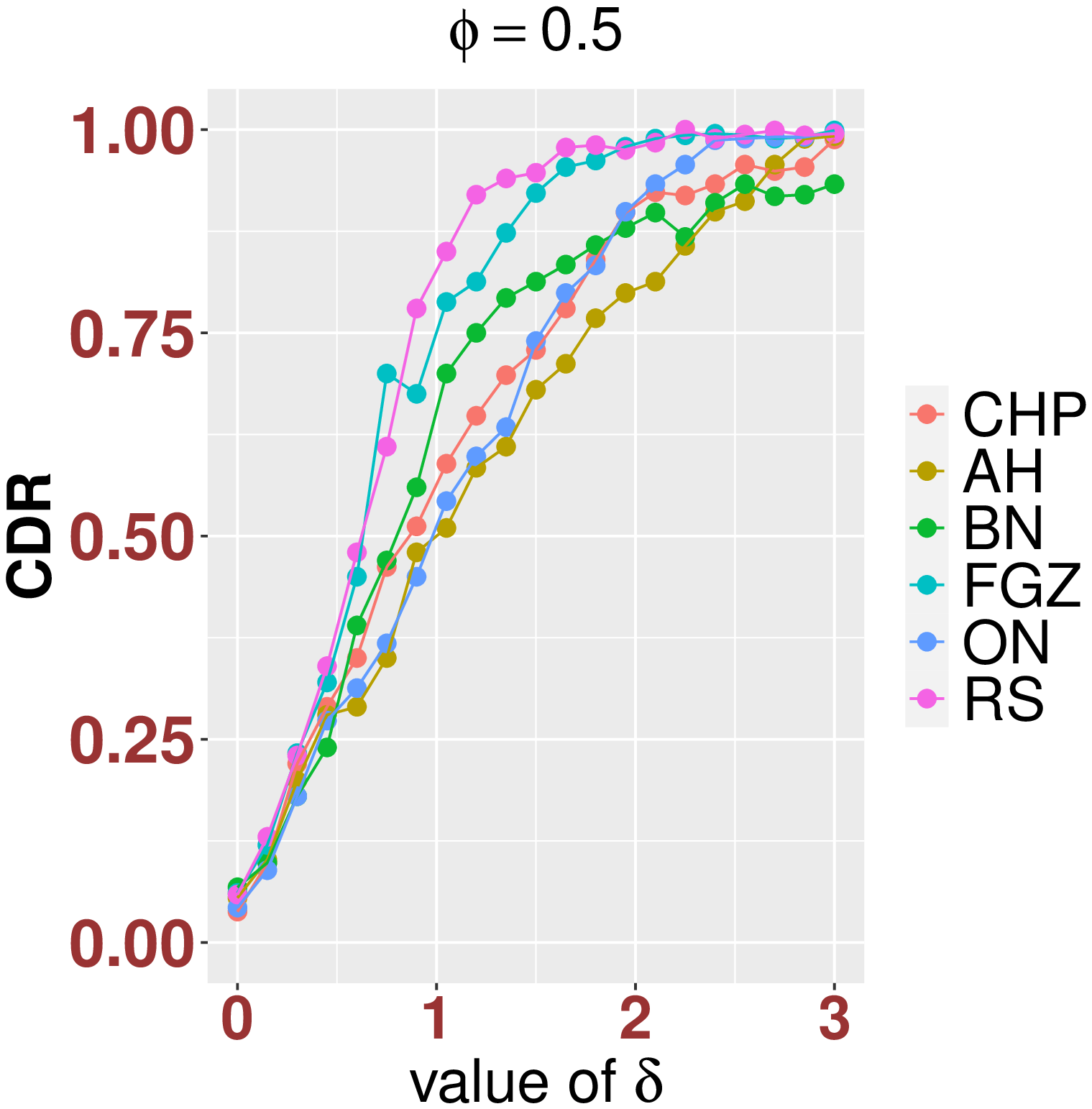}}
\hspace*{0.2cm}
\subfigure{\includegraphics[width=5cm,height=5.1cm]{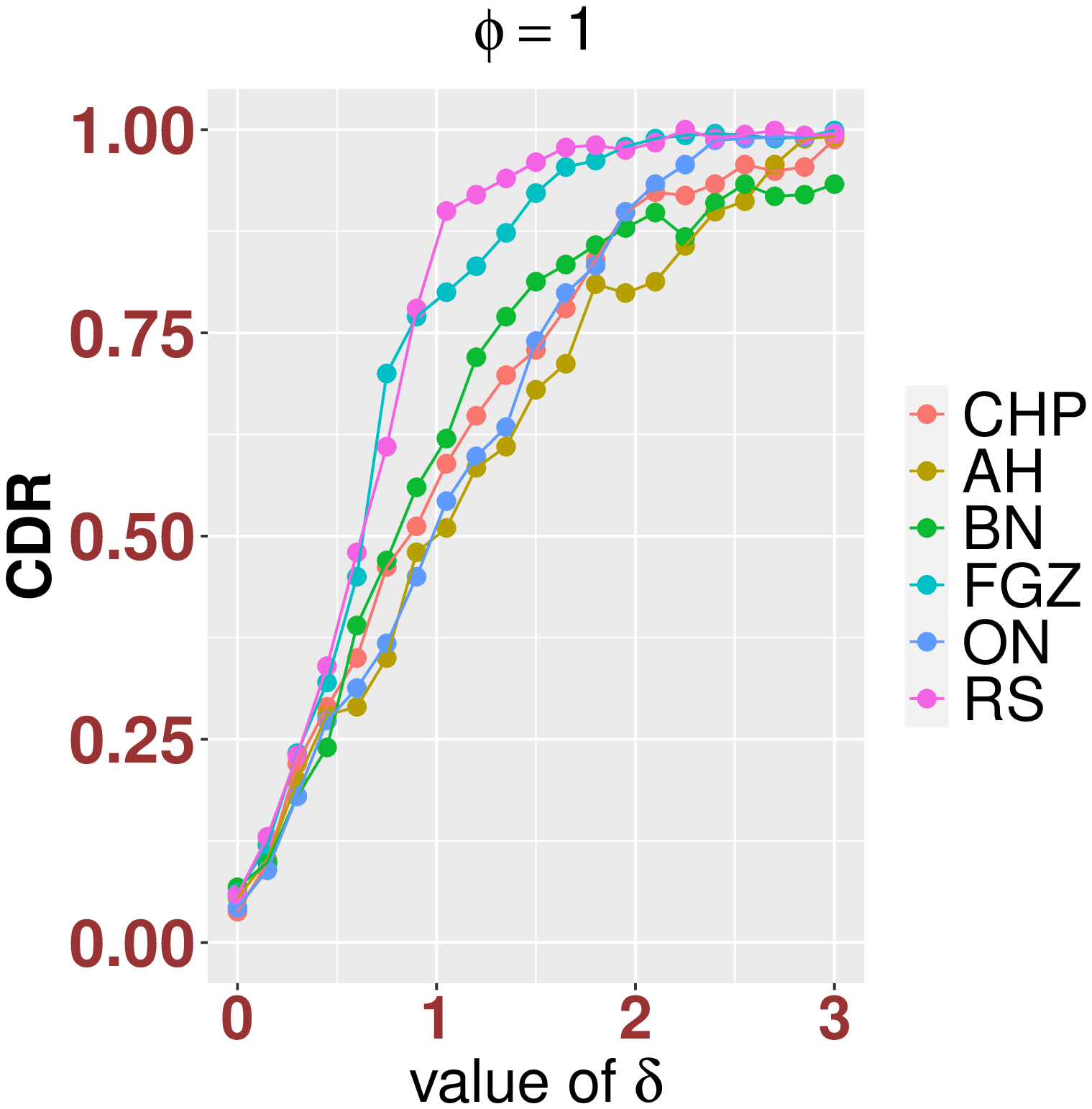}}
\hspace*{0.2cm}
\subfigure{\includegraphics[width=5cm,height=5.1cm]{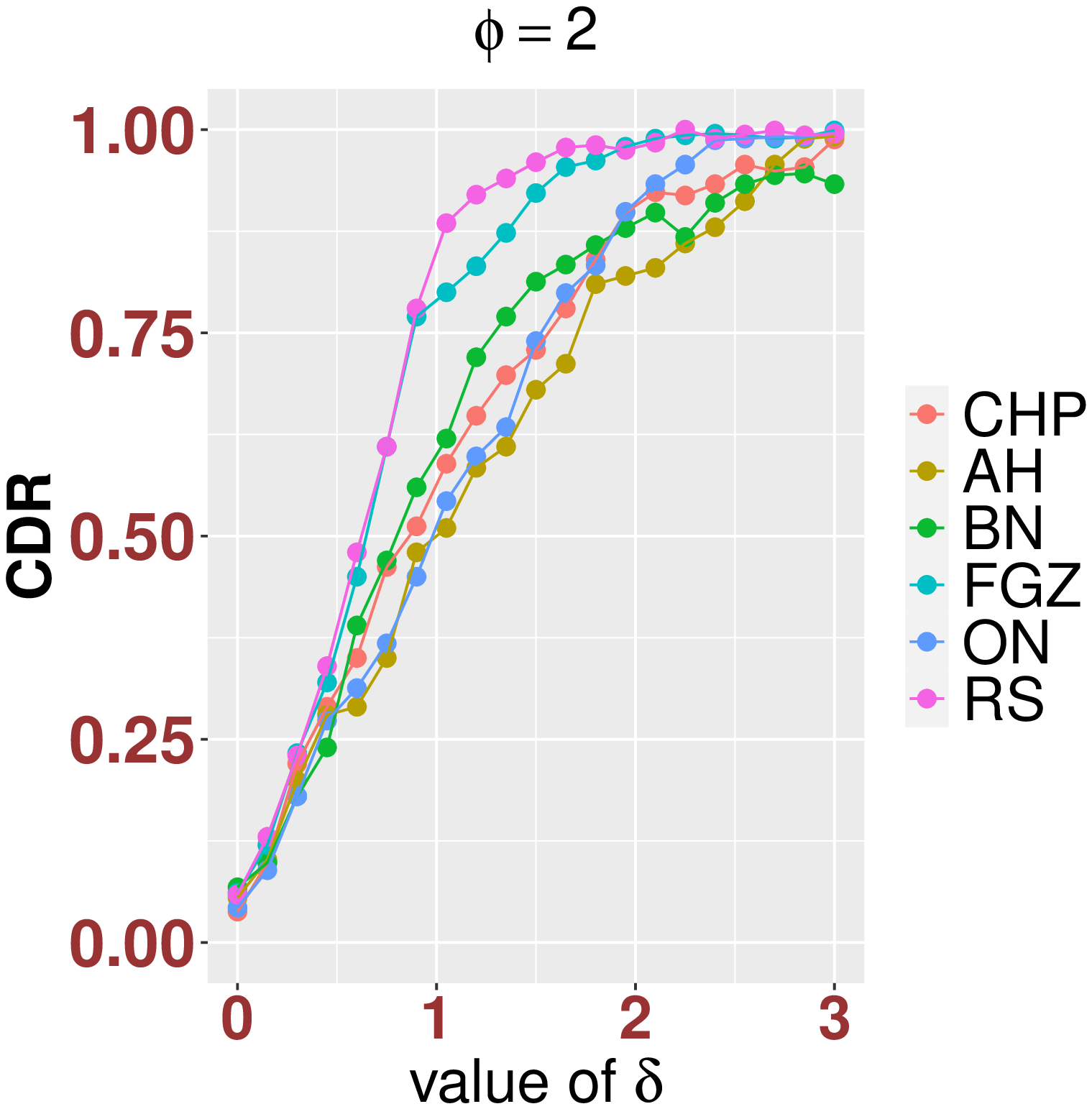}}
\caption{Comparison of estimation. In the above figures, "CHP" refers to the method from \cite{CHP}, "AH" refers to the method from \cite{ahn2013eigenvalue}, "BN" refers to the method from \cite{bai2002determining}, "FGZ" refers to the method from \cite{fan2022estimating}, "ON" refers to the method from \cite{onatski2010determining}, and "RS" refers to our proposed method in Algorithm \ref{alg1} where the entries of $D^2$ are i.i.d. $\exp(1)$ random variables. Here $n=400$ and the entries of $X$ are i.i.d. Gaussian with mean zero and variance $n^{-1}.$  The CDR is reported using 2,000 simulations.}
\label{fig_comparsion2}
\end{figure}

\appendix

\section{Some preliminary results}\label{appendix_preliminary}

In this section, we introduce some preliminary results which will be used in the proofs. First, in Section \ref{sec_asymptoticlocalaveragedlocal}, we provide the properties of the asymptotic local laws $m_{1n}, m_{2n}$ and $m_n$ from Definition \ref{defn_couplesystem} and establish the averaged local laws. Second, in Section \ref{appendix_goodconfiguration}, we examine the properties of the entries of $D^2$ and construct some probability events to which our arguments will be restricted. Finally, in Section \ref{sec_summaryofextremevalue}, we provide some useful lemmas and a short review of the extreme value theory.   

\subsection{Properties of asymptotic laws and averaged local laws}\label{sec_asymptoticlocalaveragedlocal}

We start with introducing the properties of the asymptotic local laws as in Definition \ref{defn_couplesystem}. We now define the sets of spectral parameters as follows.  For $\xi^2$ with unbounded support as in Case (i) of Assumption \ref{assum_D}, for $\mu_1$ defined in (\ref{eq: def of mu_1}) and $d_1$ defined in (\ref{eq_firstddefinition}) and some sufficiently large constant $\mathtt C>0,$ we denote
\begin{equation}\label{eq_spectraldomainone}
\mathbf{D}_{u} \equiv \mathbf{D}_{u}(\mathtt C):=\left\{z=E+\ri\eta \in \mathbb{C}_+: |E-\mu_1| \leq \mathtt C d_{1}, \ n^{-2/3}\le\eta\le \mathtt C \mu_1\right\}.
\end{equation}
For $\xi^2$ with bounded support as in Case (ii) of Assumption \ref{assum_D}, for some sufficiently small constants $\mathrm{c}, \ \epsilon_d>0,$ we denote (recall $L_+$ in (\ref{edge_edge_edge}))
\begin{equation}\label{eq_spectraldomaintwo}
\mathbf{D}_{b} \equiv \mathbf{D}_{b}(\mathtt c):=\left\{z=E+\ri\eta \in \mathbb{C}_+: L_{+}-\mathtt c \le E\le L_{+}+\mathtt c, \ n^{-1/2-\epsilon_d}\le\eta\le n^{-1/(d+1)+\epsilon_d} \right\}.
\end{equation}

Throughout the paper, we will frequently use the minors of a matrix.  For the data matrix $Y$ in (\ref{eq_datamatrix}), denote the index set ${\cal I}=\{1,\dots,n\}$. Given an index set $\mathcal{T}\subset{\cal I}$, we introduce
	the notation $Y^{(\mathcal{T})}$ to denote the $p \times(n-|\mathcal{T}|)$ minor of $Y$
	obtained from removing all the $i$th columns of $Y$ for $i\in \mathcal{T}$ and keep the original indices of $Y$. In particular, $Y^{(\emptyset)}=Y$. For convenience, we briefly write $(\{i\})$, $(\{i,j\})$ and $\{i,j\}\cup \mathcal{T}$ as $(i)$, $(i,j)$
	and $(ij\mathcal{T})$ respectively. Correspondingly, we denote their sample covariance matrices and resolvents as 
	\begin{equation}\label{eq_defnminor}
	Q^{(\mathcal{T})}=(Y^{(\mathcal{T})}) (Y^{(\mathcal{T})})^*,\ {\cal Q}^{(\mathcal{T})}=(Y^{(\mathcal{T})})^* (Y^{(\mathcal{T})}). 
	\end{equation}
	and
	\begin{equation}\label{eq_defnminorG}
	G^{(\mathcal{T})}(z)=(Q^{(\mathcal{T})}-zI)^{-1}, \quad{\cal G}^{(\mathcal{T})}(z)=({\cal Q}^{(\mathcal{T})}-zI)^{-1}.
	\end{equation}
Similar to (\ref{eq_mq}) and Definition \ref{defn_couplesystem}, we can define 	
$m_Q^{(\mathcal{T})}(z)$, $m_{\mathcal{Q}}^{(\mathcal{T})}(z),$ $m_{1n}^{(\mathcal{T})}(z)$, $m_{2n}^{(\mathcal{T})}(z)$ and  $m_n^{(\mathcal{T})}(z)$ by removing $\mathbf{y}_i, i \in \mathcal{T}$ or $\xi_i^2, i \in \mathcal{T}.$

\quad We begin with the summary of the results when $\xi^2$ has unbounded support as in Case (i) of Assumption \ref{assum_D}. The proofs will be deferred to Section \ref{sec_proofpreliminarystieltjes}. Conditional on some probability event, we provide some useful deterministic estimates for $m_{1n}, m_{2n}$ and $m_n(z)$ on the above concerned spectral domain (\ref{eq_spectraldomainone}).  Denote the control parameter $e$ as follows
\begin{equation}\label{e2_definition}
e:=
\begin{cases}
\frac{\log n}{n^{1/\alpha}}, & \ \text{if (\ref{ass3.1}) holds}; \\
\frac{1}{\log^{1/\beta} n}, & \ \text{if (\ref{ass3.2}) holds}.  
\end{cases}
\end{equation} 
\begin{lemma}\label{lem: basic bounds}
Suppose Assumptions \ref{assum_model}, \ref{assumption_techincial} and (i) of Assumption \ref{assum_D} hold. For any fixed realization $\{\xi_i^2\} \in \Omega$ where $\Omega \equiv \Omega_n$ is some  probability event that $\mathbb{P}(\Omega)=1-\ro(1)$, we have  
\begin{enumerate}
\item For $z \in \mathbf{D}_{u},$ we have that for some constants $C_1, C_2>0$ 
\begin{equation*}
\operatorname{Re} m_{1n}(z) \asymp -E^{-1}, \ C_1 \eta E^{-2} \leq \operatorname{Im} m_{1n}(z) \leq C_2 \eta E^{-1}.  
\end{equation*}
\item When $|E-\mu_1| \leq C d_1$ for  some sufficiently large constant $C>0,$ let $m_{1n}(E)=\lim_{\eta \downarrow 0} m_{1n}(E+\ri \eta),$ then we have that 
\begin{equation*}
    m_{1n}(E) \asymp -E^{-1}.
\end{equation*}
\item For $z \in \mathbf{D}_{u}$ and $e$ defined in (\ref{e2_definition}), we have that 
\begin{gather*}
    |m_{2n}(z)|=\rO(e),\quad |m_n(z)|=\rO(E^{-1}),\\
    \operatorname{Im}m_{2n}(z)=\rO(\eta E^{-1}),\quad \operatorname{Im}m_n(z)=\rO(\eta E^{-2}).
\end{gather*}
\end{enumerate} 
\end{lemma}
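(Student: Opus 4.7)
The plan is to combine a self-consistent analysis of the systems (\ref{eq_systemequationsm1m2elliptical})--(\ref{eq_systemequationsm1m2}) with the extreme-value control of $\{\xi_j^2\}$ afforded by the good event $\Omega$; I take $\Omega$ to be (a sub-event of) the configuration event constructed in Section \ref{appendix_goodconfiguration}. On $\Omega$ the inputs I need are: $\xi_{(1)}^2 \asymp b_n$; the top $\rO(1)$ order statistics are well separated; and for every threshold $T$ of order $b_n$ the truncated average $\frac{1}{n}\sum_j \xi_j^2 \mathbf{1}(\xi_j^2 \leq T)$ is bounded up to at most a logarithmic factor. Combining these with (\ref{eq: 2.1}) and the two-sided bound on $\bar{\sigma}$ gives $\mu_1 \asymp \xi_{(1)}^2$, so on $\mathbf{D}_u$ one has $|z| \asymp E \asymp \mu_1 \asymp b_n$; this fact is the single mechanism driving all the estimates.

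The core is a bootstrap that simultaneously upgrades the crude bound $|m_{2n}(z)| = \ro(1)$ to $|m_{2n}(z)| = \rO(e)$ and establishes $m_{1n}(z) = -\varphi/z + \rO(e/|z|)$ with $\varphi$ as in (\ref{eq_defnvarphi}). Once $|m_{2n}|$ is small, $|1+\sigma_i m_{2n}|\asymp 1$ and the first equation in (\ref{eq_systemequationsm1m2}) gives $z m_{1n}(z) = -\varphi + \rO(|m_{2n}|)$, which yields $\operatorname{Re} m_{1n}(z) \asymp -E^{-1}$. Closing the loop requires feeding $m_{1n}(z) = -\varphi/z + \rO(e/|z|)$ back into the second equation and splitting the resulting sum by whether $\xi_j^2 \leq c|z|$ or not. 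The typical indices contribute $\rO(1/|z|) = \rO(e)$ directly, while the $\rO(1)$ large indices $j=(k)$ are treated individually using the separation on $\Omega$ and the defining relation (\ref{eq: def of mu_1}). The borderline case is $j=(1)$: at $z=\mu_1$ one has $1+\xi_{(1)}^2 m_{1n}(\mu_1) = d_1/(\xi_{(1)}^2+d_1)$, and the specific choice of $d_1$ in (\ref{eq_firstddefinition}) together with the radius $\mathtt{C} d_1$ of $\mathbf{D}_u$ makes this single-index contribution still $\rO(e)$, so the bootstrap closes.

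For the imaginary parts I take imaginary parts of both equations. Using $\operatorname{Im}[1/(1+aw)] = -a\operatorname{Im} w/|1+aw|^2$, this produces a coupled linear system for $(\operatorname{Im} m_{1n}, \operatorname{Im} m_{2n})$ with explicit coefficients expressed as sums against $\{\sigma_i\}$ and $\{\xi_j^2\}$. The denominator bounds established in the bootstrap step and the size $|z|\asymp E$ allow me to solve the system and extract $\operatorname{Im} m_{1n}(z) \asymp \eta/E^2$ and $\operatorname{Im} m_{2n}(z) = \rO(\eta/E)$; the claimed upper bound $\operatorname{Im} m_{1n}(z) \leq C_2\eta E^{-1}$ then follows trivially since $E\asymp b_n\gg 1$. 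The third equation immediately delivers $|m_n(z)| = \rO(E^{-1})$ and $\operatorname{Im} m_n(z) = \rO(\eta/E^2)$. Part (2), the real boundary value, follows by letting $\eta \downarrow 0$ in part (1): the real-part asymptotic is preserved by continuity, and $\operatorname{Im} m_{1n}(E+\ri\eta) \to 0$ because $E$ lies outside the support of the density $\rho$ from Theorem \ref{lem_solutionsystem}, which stays at a macroscopic distance to the left of the window $|E-\mu_1|\le\mathtt{C} d_1$.

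The main obstacle will be the careful accounting of the top few indices $j=(k)$ in the sum defining $m_{2n}$, and in particular ensuring that the single term $j=(1)$ remains $\rO(e)$ uniformly on $\mathbf{D}_u$ rather than only at $z=\mu_1$. This is precisely what forces the spectral domain to have radius only $\mathtt{C} d_1$ around $\mu_1$ rather than being macroscopic, and what dictates the size of $d_1$ in (\ref{eq_firstddefinition}); keeping these parameters balanced against the separation of the top order statistics on $\Omega$ is the technical heart of the argument, and will require expanding $m_{1n}(z)-m_{1n}(\mu_1)$ via the $z$-derivative of the master equation together with an additional holomorphicity input.
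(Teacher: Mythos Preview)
Your route is genuinely different from the paper's. The paper does \emph{not} run a bootstrap: it proves each of (\ref{eq_realcontrol}) and (\ref{eq_imaginarycontrol}) by a sign-change argument on $F_n$. Concretely, for the real part it freezes $\operatorname{Im} m_{1n}(z)$ at its true value, plugs in the test points $\operatorname{Re} m_{1n}=-C_k\phi\bar\sigma E/(E^2+\eta^2)$ with $0<C_2<1<C_1$, and checks that $\operatorname{Re} F_n$ changes sign (see (\ref{eq)ccc})--(\ref{eq: 2.3})); uniqueness from Theorem~\ref{lem_solutionsystem} then traps $\operatorname{Re} m_{1n}(z)$ between the two test values. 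The key payoff of testing at these fixed constants is that for every $j$, including $j=(1)$, one has $|1+\xi_j^2\operatorname{Re} m_{1n}|\ge |1-C_k|>0$ automatically, so the denominator control (\ref{eq: 2.2 beta=2}) goes through uniformly on $\mathbf{D}_u$ without ever needing a derivative bound for $m_{1n}$. Part~2 is handled the same way with $\eta=0$ (see (\ref{eq_reducedcontrol})--(\ref{eq_similarcontrolone})), again avoiding any appeal to the location of $\operatorname{supp}\rho$.

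Your bootstrap strategy has two gaps that the sign-change method sidesteps. First, you never say where the seed estimate $|m_{2n}(z)|=\ro(1)$ comes from; the Herglotz bound only gives $|m_{1n}|\le\eta^{-1}$, which is useless for $\eta=n^{-2/3}$, so you would at minimum need a continuity-in-$\eta$ argument starting from $\eta\asymp\mu_1$. Second, and more seriously, the step you flag as the ``technical heart'' is circular as written. To keep the $j=(1)$ contribution to $m_{2n}$ of size $\rO(e)$ uniformly on $\mathbf{D}_u$ you need $|1+\xi_{(1)}^2 m_{1n}(z)|\gtrsim d_1/\xi_{(1)}^2$, which via (\ref{eq: def of mu_1}) reduces to $|m_{1n}(z)-m_{1n}(\mu_1)|\lesssim d_1/\xi_{(1)}^4$; your proposed tool for this is a derivative bound $|m_{1n}'|\lesssim E^{-2}$, but any route to that bound (Cauchy on a disk of radius $\asymp E$, or the Herglotz inequality $|m_{1n}'|\le\operatorname{Im} m_{1n}/\eta$) already presupposes the conclusion of the lemma on a domain at least as large as $\mathbf{D}_u$. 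The paper's intermediate-value approach resolves this cleanly because at the test values $C_2<1<C_1$ the quantity $1+\xi_{(1)}^2\operatorname{Re} m_{1n}$ is bounded away from zero by a fixed constant, never by the delicate scale $d_1/\xi_{(1)}^2$. Likewise, your argument for part~2 assumes $E$ is outside $\operatorname{supp}\rho$, which has not been established at this point.
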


\begin{remark}\label{rem_keyrem}
The above lemma provides some controls for the Stieltjes transforms. Three remarks are in order. First, the construction of the probability event $\Omega$ will be given in Section \ref{appendix_goodconfiguration}. Second, by a discussion similar to (\ref{eq_mu1part}), conditional on $\Omega$, we can replace $\mu_1$ with $\varphi \xi^2_{(1)}.$  Third, The above results hold when we replace $m_{1n}, m_{2n}$ and $m_n$ with $m_{1n}^{(\mathcal{T})}, m_{2n}^{\mathcal{(T)}}$ and $m_n^{\mathcal{(T)}}$ for any finite $\mathcal{T}.$
\end{remark}

\quad Then we state the results when $\xi^2$ has bounded support as in Case (ii) of Assumption \ref{assum_D}. As mentioned in Remark \ref{rmk_systemequationsremark}, for the bounded support case, it will be more convenient to use both the conditional and unconditional systems. For the conditional setting, when restricted to $\Omega,$ we denote the rightmost edge of $\rho$ as $\widehat{L}_+.$ Moreover, parallel to (\ref{eq_phasetransition}), we introduce the following quantities
\begin{align}\label{eq_finitesample123}
 &  \widehat{\varsigma}_1:=
   \begin{cases}
   \frac{1}{p}\sum_{j=1}^n\frac{l^2\xi^4_j}{(l-\xi^2_j)^2}, \\
   \frac{1}{n}\sum_{j=1}^n\frac{l^2\xi^4_j}{(l-\xi^2_j)^2}
   \end{cases}
   \quad \widehat{\varsigma}_2:=
   \begin{cases}
    \frac{1}{p}\sum_{j=1}^n\frac{l\xi^2_j}{l-\xi^2_j}, & \text{if (1) of Assumption \ref{assum_model} holds} \\
   \frac{1}{n}\sum_{j=1}^n\frac{l\xi^2_j}{l-\xi^2_j},& \text{if (2) of Assumption \ref{assum_model} holds}
   \end{cases}, \\
& \widehat{\varsigma}_3:=
   \begin{cases}
     \frac{\phi^{-1}}{p}\sum_{i=1}^p \frac{\sigma_i^2\hat{\varsigma}_1}{(\widehat{L}_+-\sigma_i\hat{\varsigma}_2)^2} \\
   \frac{1}{p}\sum_{i=1}^p \frac{\sigma_i^2\hat{\varsigma}_1}{(\widehat{L}_+-\sigma_i\hat{\varsigma}_2)^2},
   \end{cases} 
   \ \widehat{\varsigma}_4:=
   \begin{cases}
      \frac{1}{p}\sum_{i=1}^p \frac{\sigma_i}{(-\widehat{L}_++\sigma_i \widehat{\varsigma}_2)^2}, &  \text{if (1) of Assumption \ref{assum_model} holds} \\
   \frac{1}{n}\sum_{i=1}^p \frac{\sigma_i}{(-\widehat{L}_++\sigma_i \widehat{\varsigma}_2)^2}, &  \text{if (2) of Assumption \ref{assum_model} holds} \nonumber
   \end{cases}.
\end{align}
Furthermore, we need the following spectral parameter set 
\begin{equation}\label{eq_spectralparameterprime}
\mathbf{D}_b^\prime=\Big\{z\in\mathbf{D}_b:|1+\xi^2_jm_{1n,c}(z)|>\frac{1}{2}n^{-1/(d+1)-\epsilon_d},\quad \text{for all} \ 2 \leq j \leq n\Big\}.
\end{equation}

\begin{remark}
We will see from (\ref{eq_secondconnect}) and (\ref{def4}) that with probability $1-\ro_{\mathbb{P}}(1),$ $\lambda_1+\ri \eta_0 \in \mathbf{D}_b'.$ 
\end{remark}

\begin{lemma}\label{localestimate2}
Suppose Assumptions \ref{assum_model}, \ref{assumption_techincial}, \ref{assum_additional_techinical} and (ii) of Assumption \ref{assum_D} hold. Then for any fixed realization $\{\xi_i^2\} \in \Omega$ where $\Omega \equiv \Omega_n$ is some  probability event that $\mathbb{P}(\Omega)=1-\ro(1),$ for sufficiently large $n,$ we have that 
\begin{enumerate}
\item[(a).] If $d>1$ and $\phi^{-1}>\widehat{\varsigma}_3,$ $\widehat{L}_+$ can be expressed explicitly by the following equations
\begin{equation}\label{eq_conditionaledgedefinitionelliptical}
    1=\frac{1}{p}\sum_i\frac{-l\sigma_i}{(-\widehat{L}_{+}+\sigma_i\hat{\varsigma}_2)}, \ \text{when (1) of Assumption \ref{assum_model} holds},
\end{equation}
and 
\begin{equation}\label{eq_conditionaledgedefinition}
    1=\frac{1}{n}\sum_i\frac{-l\sigma_i}{(-\widehat{L}_{+}+\sigma_i\hat{\varsigma}_2)}, \ \text{when (2) of Assumption \ref{assum_model} holds}.
\end{equation}
Moreover, for any $0\le\kappa\le \widehat{L}_{+}$,
\begin{gather}\label{eq: concave decay of rho_Q}
    \rho(\widehat{L}_{+}-\kappa) \asymp \kappa^d,
\end{gather}
Moreover, for some sufficiently small constant $\epsilon>0$
  \begin{equation}\label{eq_closenessequation}
 \varsigma_k=\widehat{\varsigma}_k+\rO(n^{-1/2+\epsilon}), k=1,2,3,4;\ L_+=\widehat{L}_++\rO(n^{-1/2+\epsilon}).   
\end{equation}
 In addition, let $z=\widehat{L}_{+}-\kappa+\ri\eta\in\mathbf{D}_b$, then 
 \begin{gather}\label{eq_expansionlinear}
     m_{1n}(\widehat{L}_{+})-m_{1n}(z)=\frac{\widehat{\varsigma}_4}{(1-\phi\widehat{\varsigma}_3)}\left(\widehat{L}_{+}-z\right)+\rO((\log n)(\kappa+\eta)^{\min\{d,2\}}).
 \end{gather}
 Similarly, for any $z,z^{\prime}\in\mathbf{D}_b$, we have
\begin{gather}\label{eq_a11coro}
    m_{1n}(z)-m_{1n}(z^{\prime})=\frac{\widehat{\varsigma}_4}{(1-\phi \widehat{\varsigma}_3)}(z-z^{\prime})+\rO((\log n)(n^{-1/(d+1)})^{\min\{d-1,1\}}|z-z^{\prime}|).
\end{gather}
Finally, for $z \in \mathbf{D}_b^{\prime}$ in (\ref{eq_spectralparameterprime}), we have that 
\begin{equation}\label{eq_zopointrate11}
\operatorname{Im} m_{1n} (z) =\rO\left( \max \left\{ \eta, \frac{1}{n \eta}\right\} \right), \ \ \operatorname{Im} m_{n} (z) =\rO\left( \max \left\{ \eta, \frac{1}{n \eta}\right\} \right).
\end{equation}
Moreover, for $z_0$ defined in (\ref{eq: def of gamma}), we have that 
\begin{equation}\label{eq_zopointrate}
\operatorname{Im} m_{1n}(z_0) \asymp n^{-1/2}, \ \operatorname{Im} m_n(z_0) \asymp n^{-1/2}.
\end{equation}
and for $z=E+\ri \eta_0 \in \mathbf{D}_b^\prime$ in (\ref{eq_spectralparameterprime}), we have that 
\begin{equation}\label{eq_oneregimeedgecontrol}
\operatorname{Im} m_{1n}(z) \asymp \eta_0, \ \operatorname{Im} m_n(z) \asymp \eta_0,
\end{equation}
if $|z-z_0| \geq C n^{-1/2+3\epsilon_d}$ for some constant $C>0.$ 
\item[(b).] If $-1<d \leq 1$ or $d>1$ and $\phi^{-1}<\widehat{\varsigma}_3,$  we have that for some fixed constant $\tau>0$ and all $1 \leq i \leq n$
\begin{equation}\label{eq_boundedfrombelowimportant}
\left|1+\xi_{i}^2 m_{1n}(\widehat{L}_+) \right| \geq \tau. 
\end{equation}
Moreover, we have that  for any $0\le\kappa\le \widehat{L}_{+}$,
\begin{gather}\label{thm_main_squared_root_bounded}
    \rho(\widehat{L}_{+}-\kappa) \asymp \kappa^{1/2}.
\end{gather}
Equivalently, for some constant $\gamma>0,$ we have for $\kappa \downarrow 0$ 
\begin{equation}\label{eq_gammadefinition}
\rho(\widehat{L}_+-\kappa)=\frac{1}{\pi} \gamma^{3/2} \sqrt{\kappa}+\rO(\kappa). 
\end{equation}

\end{enumerate}

Finally,  the results of (a) an (b) still hold unconditionally when $m_{1n}$ is replaced by $m_{1n,c},$ $\rho$ is replaced by $\widetilde{\rho}$ and $\widehat{L}_+$ is replaced by $L_+$ as in Remark \ref{rmk_systemequationsremark}  where $\widehat{\varsigma}_3$ and $\widehat{\varsigma}_4$ should be replaced by $\varsigma_3$ and $\varsigma_4$ as in (\ref{eq_phasetransition}). 

\end{lemma}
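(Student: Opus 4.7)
My plan is to prove the lemma through direct analysis of the master equation $F_n(m_{1n}(z), z) = 0$ defined in (\ref{eq:F(m,z)1}) and (\ref{eq:F(m,z)}), restricted to the good event $\Omega$ and tracked along real $z$ decreasing toward the edge $\widehat{L}_+$. The key observation is that $m_{1n}(z)$ is real, negative, and increasing on $(\widehat{L}_+, \infty)$, constrained to the interval $(-1/l, 0)$ by the bounded support of $\xi^2$. The edge is therefore reached in one of two ways: (i) at an interior critical point where $\partial_m F_n = 0$ (the classical square-root edge), or (ii) at the boundary $m = -1/l$ where $(1 + \xi_j^2 m) \to 0$ for $\xi_j^2$ near $l$. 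A direct computation gives
\begin{equation*}
\partial_m F_n(-1/l, \widehat{L}_+) = \phi \widehat{\varsigma}_3 - 1, \qquad \partial_z F_n(-1/l, \widehat{L}_+) = \widehat{\varsigma}_4,
\end{equation*}
so the dichotomy between case (a) and case (b) is controlled precisely by the sign of $1 - \phi \widehat{\varsigma}_3$, provided $d > 1$ so that $\widehat{\varsigma}_1$ is finite.

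For case (a), substituting $m = -1/l$ into $F_n(\cdot, \widehat{L}_+) = 0$ and using $\frac{1}{n}\sum_j \xi_j^2/(1 + \xi_j^2 m)|_{m=-1/l} = \widehat{\varsigma}_2$ yields the edge equations (\ref{eq_conditionaledgedefinitionelliptical}) and (\ref{eq_conditionaledgedefinition}). The implicit function theorem then applies at the regular point $(-1/l, \widehat{L}_+)$, and inverting the linearization of $F_n$ produces (\ref{eq_expansionlinear}) with coefficient $\widehat{\varsigma}_4/(1 - \phi\widehat{\varsigma}_3)$. The error term $(\log n)(\kappa + \eta)^{\min\{d, 2\}}$ comes from expanding $\frac{1}{n}\sum_j \xi_j^2/(1 + \xi_j^2 m)$ in powers of $(m + 1/l)$: the coefficient of $(m+1/l)^k$ involves $\frac{1}{n}\sum_j \xi_j^{2(k+1)}/(l - \xi_j^2)^k$, which by (\ref{ass3.4}) is finite on $\Omega$ only when $k < d + 1$; the necessary truncation produces the $\min\{d, 2\}$ exponent. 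Bound (\ref{eq_a11coro}) follows by subtracting the expansion at two points. For the density decay (\ref{eq: concave decay of rho_Q}), I would use $\rho(\widehat{L}_+ - \kappa) = \pi^{-1}\operatorname{Im} m_n(\widehat{L}_+ - \kappa + \ri 0^+)$: the imaginary part originates from $\operatorname{Im} m_{2n}$, which is driven by those $\xi_j^2$'s with $l - \xi_j^2 = \rO(\kappa)$; their empirical density on $\Omega$ is $\sim x^d$, giving $\rho \sim \kappa^d$.

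For case (b), either $d \leq 1$ forces divergence of $\widehat{\varsigma}_1$ so $m_{1n}$ cannot reach $-1/l$, or $d > 1$ with $\phi \widehat{\varsigma}_3 > 1$ forces $\partial_m F_n$ to vanish strictly before $m = -1/l$. In either subcase the edge is a standard critical point; Assumption \ref{assum_additional_techinical} supplies the non-degeneracy of $\partial_m^2 F_n$, so a Puiseux expansion yields $m_{1n}(z) - m_{1n}(\widehat{L}_+) \sim (\widehat{L}_+ - z)^{1/2}$ and hence (\ref{thm_main_squared_root_bounded}) and (\ref{eq_gammadefinition}). The uniform separation (\ref{eq_boundedfrombelowimportant}) follows since $m_{1n}(\widehat{L}_+) > -1/l + c$ for a constant $c > 0$ determined by the critical point. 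For the unconditional statements and (\ref{eq_closenessequation}), Bernstein-type concentration on $\Omega$ yields $|\widehat{\varsigma}_k - \varsigma_k| = \rO(n^{-1/2 + \epsilon})$; propagating through the edge equation (whose $\widehat{L}_+$-derivative is bounded away from zero by the assumption $\phi^{-1} > \widehat{\varsigma}_3$) gives $|L_+ - \widehat{L}_+| = \rO(n^{-1/2 + \epsilon})$.

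Finally, for the imaginary-part estimates (\ref{eq_zopointrate11})-(\ref{eq_oneregimeedgecontrol}), the defining inequality of $\mathbf{D}_b'$ provides the uniform lower bound $|1 + \xi_j^2 m_{1n}(z)| \geq \tfrac{1}{2} n^{-1/(d+1) - \epsilon_d}$, allowing term-by-term computation of $\operatorname{Im} m_{1n}$ and $\operatorname{Im} m_n$. For $z_0$ satisfying (\ref{eq: def of gamma}), combining the real-part condition with (\ref{eq_a11coro}) pins down $\operatorname{Im} m_{1n}(z_0) \asymp \eta_0 = n^{-1/2 - \epsilon_d}$; for $z = E + \ri\eta_0$ in $\mathbf{D}_b'$ at distance at least $n^{-1/2 + 3\epsilon_d}$ from $z_0$, the same linear expansion gives $\operatorname{Im} m_{1n}(z) \asymp \eta_0$. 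The main obstacle is the delicate expansion in Step 2, namely controlling the error with the unusual exponent $\min\{d, 2\}$ uniformly across realizations in $\Omega$, and the stability analysis of the edge equation near the critical threshold $\phi\widehat{\varsigma}_3 = 1$ needed to obtain (\ref{eq_closenessequation}) with the optimal rate; this is genuinely subtle because the coefficient $\widehat{\varsigma}_4/(1 - \phi\widehat{\varsigma}_3)$ blows up as the threshold is approached.
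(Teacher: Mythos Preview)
Your overall strategy matches the paper's closely: both analyze the master equation $F_n(m,z)=0$, identify the dichotomy via the sign of $1-\phi\widehat{\varsigma}_3$, obtain the edge equation by substituting $m=-1/l$, and get the linear expansion (\ref{eq_expansionlinear}) by linearizing $F_n$ around $(-1/l,\widehat{L}_+)$. The paper's version of this linearization is the decomposition $m_{1n}(\widehat{L}_+)-m_{1n}(z)=R_1(\widehat{L}_+-z)+R_2(m_{1n}(\widehat{L}_+)-m_{1n}(z))$ with $R_1\approx\widehat{\varsigma}_4$, $R_2\approx\phi\widehat{\varsigma}_3$, and the error controlled by the moment bound $\frac{1}{n}\sum_j \xi_j^4/[(1-\xi_j^2/l)(1+\xi_j^2 m_{1n}(z))]=\rO((\log n)|m_{1n}(z)+1/l|^{\min\{d-2,0\}})$, which is exactly your truncation argument. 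For case (b) with $-1<d\le 1$, note that $\widehat{\varsigma}_1$ is a finite empirical sum for every realization, so ``divergence'' is only heuristic; the paper makes this rigorous by a direct lower bound on $g(-(l\pm\epsilon)^{-1},\widehat{L}_+)$ via the integral $\int_\epsilon^{D\epsilon}(t-\epsilon)/t^2\,dt$.

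There is, however, one genuine error. You claim $\operatorname{Im} m_{1n}(z_0)\asymp\eta_0=n^{-1/2-\epsilon_d}$, but the lemma asserts $\asymp n^{-1/2}$, and your linear-expansion argument cannot produce this. The point is that the definition of $\mathbf{D}_b'$ gives the lower bound $|1+\xi_j^2 m_{1n,c}(z)|\gtrsim n^{-1/(d+1)-\epsilon_d}$ only for $j\ge 2$; the index $j=1$ must be isolated. At $z_0$ one has $\operatorname{Re} m_{1n}(z_0)=-1/\xi_{(1)}^2$ by definition, so $|1+\xi_{(1)}^2 m_{1n}(z_0)|=\xi_{(1)}^2\operatorname{Im} m_{1n}(z_0)$, and the $j=1$ summand in the imaginary-part decomposition contributes a term of order $(n\operatorname{Im} m_{1n}(z_0))^{-1}$. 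The resulting self-consistent relation is $(1-c)\operatorname{Im} m_{1n}(z_0)\asymp\eta_0+C/(n\operatorname{Im} m_{1n}(z_0))$ for some $0<c<1$, whose solution is $\asymp n^{-1/2}$. The paper carries this out via the three-term split $\mathsf{L}_1+\mathsf{L}_2+\mathsf{L}_3$; you need the same isolation of $j=1$ both here and in (\ref{eq_zopointrate11}), where the $1/(n\eta)$ term in the maximum likewise comes from the $j=1$ contribution.
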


\begin{remark}\label{rmk_boundedatleastonesolution}
Using discussions similar to the paragraphs around equation (5.1) of \cite{Kwak2021}, by (\ref{eq_expansionlinear}), (\ref{def4}) and the fact $m_{1n}(\widehat{L}_+)=-l^{-1},$ we see that (\ref{eq: def of gamma}) has at least one solution. 
\end{remark}

\quad Then we provide the results of the averaged local laws. 
Throughout the paper, we will consistently use the notion of \emph{stochastic domination} to systematize the statements of the form ``$\xi$ is bounded by $\zeta$ with high probability up to a small power of $n$."
\begin{definition}[Stochastic domination]

(i) Let
\[\xi=\left(\xi^{(n)}(u):n\in\mathbb{N}, u\in U^{(n)}\right),\hskip 10pt \zeta=\left(\zeta^{(n)}(u):n\in\mathbb{N}, u\in U^{(n)}\right),\]
be two families of nonnegative random variables, where $U^{(n)}$ is a possibly $n$-dependent parameter set. We say $\xi$ is stochastically dominated by $\zeta$, uniformly in $u$, if for any fixed (small) $\epsilon>0$ and (large) $D>0$, 
\[\sup_{u\in U^{(n)}}\mathbb{P}\left(\xi^{(n)}(u)>n^\epsilon\zeta^{(n)}(u)\right)\le n^{-D}\]
for large enough $n \ge n_0(\epsilon, D)$, and we shall use the notation $\xi\prec\zeta$. Throughout this paper, the stochastic domination will always be uniform in all parameters that are not explicitly fixed, such as the matrix indices and the spectral parameter $z$.  If for some complex family $\xi$ we have $|\xi|\prec\zeta$, then we will also write $\xi \prec \zeta$ or $\xi=\mathrm{O}_\prec(\zeta)$.

\vspace{5pt}

\noindent (ii) We say an event $\Xi$ holds with high probability if for any constant $D>0$, $\mathbb P(\Xi)\ge 1- n^{-D}$ for large enough $n$.

\end{definition}

\quad Similar to  \cite{couillet2014analysis,ding2021spiked, paul2009no, yang2019edge}, instead of working directly with $m_Q$ and $m_{\mathcal{Q}}$ in (\ref{eq_mq}), it is more convenient to study the following quantities
\begin{equation}\label{eq_m1m2}
m_1(z)=
\begin{cases}
\frac{1}{p}{\rm tr}\left(G(z)\Sigma\right), \\
\frac{1}{n}{\rm tr}\left(G(z)\Sigma\right) ,\\
\end{cases}
\quad  m_2(z)=
\begin{cases}
\frac{1}{p}\sum_{i=1}^n\xi^2_i\mathcal G_{ii}(z), & \ \text{if {\normalfont Case (1)} of Assumption \ref{assum_model} holds},\\
\frac{1}{n}\sum_{i=1}^n\xi^2_i\mathcal G_{ii}(z) & \ \text{if {\normalfont Case (2)} of Assumption \ref{assum_model} holds},\\
\end{cases}
\end{equation}
Analogously, using the minors in (\ref{eq_defnminor}), we can define $m_1^{(\mathcal{T})}(z)$ and $m_2^{(\mathcal{T})}(z).$  The following Theorem \ref{thm_unboundedcaselocallaw} summarizes the averaged local laws for unbounded $\xi^2$ which will be used in our proof for the main results. Its proof can be found in Section \ref{appendix_sec_prooflocallawunbounded}.

%
%

\begin{theorem}[Averaged local laws for unbounded support $\xi^2$] \label{thm_unboundedcaselocallaw} Suppose Assumptions \ref{assum_model}, \ref{assumption_techincial} and (i) of Assumption \ref{assum_D} hold. For any fixed realization $\{\xi_i^2\} \in \Omega$ where $\Omega \equiv \Omega_n$ is introduced in Lemma \ref{lem: basic bounds}, let $m_1^{(1)}(z)$ and $m_{1n}^{(1)} (z)$ be defined by removing the column or entries associated with $\xi_{(1)}^2.$ We have that the followings hold true uniformly for $z \in \mathbf{D}_{u}$ in (\ref{eq_spectraldomainone})
\begin{enumerate}
\item If {\normalfont Case (a)} of (i) of  Assumption \ref{assum_D} holds, we have that 
\begin{equation*}
m_{1}^{(1)}(z)=m_{1n}^{(1)}(z)+\rO_{\prec}\left(n^{-1/2-2/\alpha}\right). 
\end{equation*}

\item If {\normalfont Case (b)} of (i) of Assumption \ref{assum_D} holds, we have that 
\begin{equation*}
m_{1}^{(1)}(z)=m_{1n}^{(1)}(z)+\rO_{\prec}\left(n^{-1/2}\right).
\end{equation*}
\end{enumerate}  

\end{theorem}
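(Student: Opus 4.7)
The plan is to follow the standard resolvent-based route for averaged local laws — Schur complement, large deviation for quadratic forms in $\mathbf{x}_i$, and stability of the deterministic self-consistent equations — but carried out on the minor that removes the column of $Y$ associated with $\xi_{(1)}^2$. The reason for passing to this minor is that on $\mathbf{D}_u$ the spectral parameter sits at $|E|\asymp\mu_1\asymp\xi_{(1)}^2$, which is divergent; after removing the $(1)$-column the remaining weights $\{\xi_i^2:i\neq 1\}$ are (on $\Omega$) bounded away from $\mu_1/\varphi$, so that $z$ sits uniformly above the spectrum of $Q^{(1)}$ by a gap comparable to $\mu_1$. This makes all the deterministic estimates of Lemma \ref{lem: basic bounds} available for $G^{(1)}$ and $G^{(1i)}$ as well (see Remark \ref{rem_keyrem}).

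The main technical identities are as follows. For each $i\neq 1$, the Schur complement formula applied to $\mathcal{Q}^{(1)}-zI$ gives
\[
\mathcal G_{ii}^{(1)}(z)\;=\;\frac{-1}{z\bigl(1+\xi_i^{2}\,\mathbf{x}_i^{*}T^{*}G^{(1i)}(z)T\mathbf{x}_i\bigr)}.
\]
Using the assumed $\mathbf{x}_i$-distribution (unit sphere or i.i.d.\ entries), the Hanson--Wright-type large deviation bound in Lemma \ref{lem:large deviation} yields
\[
\mathbf{x}_i^{*}T^{*}G^{(1i)}T\mathbf{x}_i\;=\;m_1^{(1i)}(z)+\delta_i,\qquad |\delta_i|\prec\frac{\|\Sigma^{1/2}G^{(1i)}\Sigma^{1/2}\|_F}{\sqrt n},
\]
and the Frobenius norm is controlled via $\|G^{(1i)}\|_F^2\lesssim n/|E|^2$, coming from the $\mu_1$-gap between $z$ and the spectrum of $Q^{(1i)}$. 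Parallel Schur identities on the $G$-side express $G_{kk}^{(1)}$ in terms of a quadratic form involving a row of $XD^{(1)}$, and a standard rank-one resolvent expansion shows $m_j^{(1i)}-m_j^{(1)}=\mathrm{O}_\prec(|z|^{-1}n^{-1})$ for $j=1,2$, so the minor-to-full reduction is harmless.

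Averaging these identities produces approximate versions of the deterministic equations (\ref{eq_systemequationsm1m2elliptical})/(\ref{eq_systemequationsm1m2}) for the pair $(m_1^{(1)},m_2^{(1)})$:
\[
m_2^{(1)}(z)=\frac{1}{n}\sum_{i\neq 1}\frac{-\xi_i^{2}}{z\bigl(1+\xi_i^{2}m_1^{(1)}(z)\bigr)}+\mathcal E_2,\qquad m_1^{(1)}(z)=\frac{1}{n}\sum_{k}\frac{\sigma_k}{-z\bigl(1+\sigma_k m_2^{(1)}(z)\bigr)}+\mathcal E_1,
\]
where $\mathcal E_2$ is the fluctuation-average $\frac{1}{nz}\sum_i\xi_i^{4}\delta_i/(1+\xi_i^{2}m_1^{(1)})^{2}$ plus quadratic-in-$\delta_i$ residuals, and $\mathcal E_1$ is the analogous quantity on the $\sigma_k$ side. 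Since the pair $(m_{1n}^{(1)},m_{2n}^{(1)})$ solves the exact system on $\Omega$, I would then apply the stability of that system — which amounts to invertibility of the Jacobian of $F_n$ in (\ref{eq:F(m,z)1})--(\ref{eq:F(m,z)}) near its root, and is a direct consequence of the non-degeneracy estimates $|m_{1n}|\asymp |E|^{-1}$, $|m_{2n}|\lesssim e$ from Lemma \ref{lem: basic bounds} — to convert $(\mathcal E_1,\mathcal E_2)\prec\Psi$ into $|m_1^{(1)}-m_{1n}^{(1)}|\prec\Psi$.

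The main obstacle, and the place where the polynomial and exponential regimes diverge, is the sharp control of $\mathcal E_2$. A naive estimate $|\mathcal E_2|\lesssim\max_i|\delta_i|\lesssim(\sqrt n|z|)^{-1}\asymp n^{-1/2-1/\alpha}$ only gives the claim in the exponential case (where $\mu_1$ is poly-logarithmic and we just report $n^{-1/2}$). In the polynomial case, to reach $n^{-1/2-2/\alpha}$, I would carry out a fluctuation-averaging argument exploiting the approximate conditional independence of the $\delta_i$ across $i$, combined with the control of the weight sum $\frac{1}{n^2}\sum_i\xi_i^{4}/|z(1+\xi_i^{2}m_1^{(1)})|^{2}$ on $\Omega$: the tail estimates on the order statistics $\xi_{(k)}^2$ encoded in (\ref{def1})--(\ref{def3}) keep this sum of order $e/|z|^{2}\asymp n^{-3/\alpha}\log n$, and the extra factor $e\sim n^{-1/\alpha}$ that $m_{2n}$ contributes through the denominators is precisely what turns the naive $n^{-1/2-1/\alpha}$ into the advertised $n^{-1/2-2/\alpha}$. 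The high-probability event $\Omega$ introduced in Section \ref{appendix_goodconfiguration} is essential here: it rules out anomalously large $\xi_{(k)}^2$ for $k\geq 2$ and simultaneously rules out resonances between $\xi_i^2$ and $-m_1^{(1)}(z)^{-1}$, so that $(1+\xi_i^{2}m_1^{(1)})^{-1}$ stays bounded uniformly in $i\neq 1$.
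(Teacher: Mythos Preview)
Your overall route---Schur complement for $\mathcal G_{ii}^{(1)}$, large deviation for the quadratic forms, and stability of the self-consistent system---matches what the paper does in its \emph{final} step. But you have skipped the step that makes that argument work: the claim that ``$z$ sits uniformly above the spectrum of $Q^{(1)}$ by a gap comparable to $\mu_1$'' is \emph{not} a consequence of $\xi_{(2)}^2$ being bounded away from $\xi_{(1)}^2$ on $\Omega$. A crude operator-norm bound only gives $\lambda_1^{(1)}\le \|\Sigma\|\,\|X^{(1)}\|^2\,\xi_{(2)}^2$, and the constant $\|\Sigma\|\,\|X^{(1)}\|^2$ can easily exceed $\varphi$, so nothing prevents $\lambda_1^{(1)}$ from landing above $E$. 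The paper closes this by a two-step bootstrap: it \emph{first} proves a local law for the full matrix $Q$ on the exterior domain $\widetilde{\mathbf D}_u=\{E>\mu_1\}$ (Proposition~\ref{eq_propooursidebulk}, via a large-$\eta$ initial estimate in Lemma~\ref{lem: average local law for large eta}, a self-improvement step in Lemma~\ref{lem: self improvement}, and a continuity argument in $\eta$), and then uses that local law to show $Q$ has no eigenvalue in $(\mu_1,Cn^{1/\alpha}\log n)$ (Lemma~\ref{lem: upper bound for eigenvalues}). The analogous statement for $Q^{(1)}$, combined with Weyl, gives $\lambda_1^{(1)}<\mu_2$, and since $\mu_1-\mu_2\gtrsim n^{1/\alpha}/\log n$ one obtains $\|G^{(1\mathcal T)}\|\lesssim n^{-1/\alpha}\log n$ uniformly on $\mathbf D_u$ (equations~\eqref{eq: eigen gap}--\eqref{eq: est of G^(1)}). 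Only then is the Frobenius bound $\|G^{(1i)}\|_F^2\lesssim n/|E|^2$ you invoke actually available, and your Schur/large-deviation/stability computation can proceed. Remark~\ref{rem_keyrem}, which you cite, covers the \emph{deterministic} quantities $m_{kn}^{(\mathcal T)}$ only; it says nothing about the random resolvents $G^{(1)},G^{(1i)}$.

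A smaller point: the improvement from $n^{-1/2-1/\alpha}$ (the rate for $m_2^{(1)}-m_{2n}^{(1)}$) to $n^{-1/2-2/\alpha}$ (the rate for $m_1^{(1)}-m_{1n}^{(1)}$) is not a fluctuation-averaging gain. It comes directly from the explicit $z^{-1}$ prefactor in the $m_1$-equation: substituting the approximate $m_2^{(1)}$ into $m_1^{(1)}=\frac{1}{n}\sum_i\frac{\sigma_i}{-z(1+\sigma_i m_2^{(1)})}$ picks up an extra $|z|^{-1}\asymp n^{-1/\alpha}$ on the error. The same factor makes the linearization coefficient in the stability step $o(1)$, so no separate cancellation mechanism is needed here.
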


\quad Then we provide the averaged local laws for bounded $\xi^2.$ Recall $m_Q$ defined in (\ref{eq_mq}). 


\begin{theorem}[Averaged local laws for bounded support $\xi^2$] \label{thm_boundedcaselocallaw} Suppose Assumptions \ref{assum_model}, \ref{assumption_techincial}, \ref{assum_additional_techinical} and (ii) of Assumption \ref{assum_D} hold. When $d>1$ and $\phi^{-1}>\varsigma_3,$ for any fixed realization $\{\xi_i^2\} \in \Omega$ where $\Omega \equiv \Omega_n$ is introduced in Lemma \ref{localestimate2}, for $\eta_0$ defined in (\ref{eq_eta0definition}),  we have that the followings hold true uniformly for $z \in \mathbf{D}_b^\prime$ in (\ref{eq_spectralparameterprime}) 
\begin{equation*}
\left| m_{1n}(z)-m_{1n,c}(z) \right| \leq n^{-1/2+\epsilon_d}, \ \left| m_1(z)-m_{1n}(z) \right|=\rO_{\prec} \left((n \eta_0)^{-1} \right),
\end{equation*}
and 
\begin{equation*}
\left| m_{n}(z)-m_{n,c}(z) \right| \leq n^{-1/2+\epsilon_d} , \ \left| m_Q(z)-m_{n}(z) \right|=\rO_{\prec} \left((n \eta_0)^{-1} \right).
\end{equation*}
\end{theorem}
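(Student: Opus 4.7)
The two claimed estimates are of quite different nature: the first pair, $|m_{1n}(z)-m_{1n,c}(z)|\le n^{-1/2+\epsilon_d}$ and $|m_n(z)-m_{n,c}(z)|\le n^{-1/2+\epsilon_d}$, is a purely deterministic--versus--empirical comparison of the solutions of (\ref{eq_systemequationsm1m2}) and (\ref{eq_systemequationsm1m2intergrate}), while the second pair, $|m_1(z)-m_{1n}(z)|\prec(n\eta_0)^{-1}$ and $|m_Q(z)-m_n(z)|\prec(n\eta_0)^{-1}$, is a genuine averaged local law comparing the random resolvent average to its conditional deterministic equivalent. The plan is to handle these two in succession and in that order, since the first is used as an input in several places of the second.

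For the first pair I would view $m_{1n,c}(z)$ as a trial point for the fixed-point equation (\ref{eq:F(m,z)}) satisfied by $m_{1n}(z)$. Substituting $m_{1n,c}$ into $F_n(\,\cdot\,,z)$ produces a residual whose only source of randomness is the difference between the empirical average $n^{-1}\sum_j \xi_j^2/(1+\xi_j^2 m_{1n,c}(z))$ and its population version $\int s/(1+sm_{1n,c}(z))\,dF(s)$. On $\mathbf{D}_b'$ the denominator constraint $|1+\xi_j^2 m_{1n,c}(z)|\ge \tfrac{1}{2}n^{-1/(d+1)-\epsilon_d}$ keeps each summand bounded, so a Bernstein/Hoeffding bound conditional on $\{\xi_j^2\}\in\Omega$ yields a residual of size $n^{-1/2+\epsilon_d/2}$ with overwhelming probability. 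The linear expansion (\ref{eq_expansionlinear}) from Lemma \ref{localestimate2}, together with Assumption \ref{assum_additional_techinical}, supplies the stability of the self-consistent equation at $L_+$: small residuals translate linearly into small deviations of $m_{1n}$ from $m_{1n,c}$, which gives the advertised $n^{-1/2+\epsilon_d}$ bound. The same bound for $m_n(z)-m_{n,c}(z)$ follows by substituting the first estimate into the third equation of (\ref{eq_systemequationsm1m2}), (\ref{eq_systemequationsm1m2intergrate}).

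For the second pair I would use the standard Schur-complement expansion of $\mathcal{G}_{ii}$ for $1\le i\le n$. Writing
\[
\mathcal{G}_{ii}^{-1} = -z\bigl(1+\xi_i^2 m_1^{(i)}(z)\bigr) - z\xi_i^2 Z_i(z),
\]
where $Z_i(z):=\mathbf{x}_i^* T^* G^{(i)}(z) T\mathbf{x}_i - m_1^{(i)}(z)$ is the centered quadratic form, concentration (Lemma \ref{lem:large deviation} combined with Lemma \ref{lem: basic bounds}/Lemma \ref{localestimate2}) gives $Z_i(z)\prec (\operatorname{Im} m_1^{(i)}(z)/(n\eta))^{1/2}\prec (n\eta_0)^{-1/2}$ uniformly on $\mathbf{D}_b'$. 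Averaging $\xi_i^2\mathcal{G}_{ii}$ over $i$ after inverting produces a perturbed version of the second line of (\ref{eq_systemequationsm1m2}) for $m_2(z)$ with error $\prec(n\eta_0)^{-1}$, and analogously the identity $m_1(z)=p^{-1}\operatorname{tr}(G(z)\Sigma)$ combined with the block inversion formula yields a perturbed first line with the same error order. Feeding these perturbed equations into the stability bound derived via (\ref{eq_expansionlinear}), a standard continuity/bootstrapping argument (starting from $\eta$ of constant order where the a priori bound is trivial, and decreasing $\eta$ down to $\eta_0$) upgrades the a priori bound to the sharp $(n\eta_0)^{-1}$ rate. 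The bound for $m_Q-m_n$ then follows from the relation $m_Q(z) = -z^{-1}(1+m_1(z)m_{2n}(z)/\text{something})$ derived from the third equation, and the already-established bound on $m_1-m_{1n}$.

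The main obstacle will be the stability analysis on $\mathbf{D}_b'$. Unlike the square-root edge case treated in \cite{ding2021spiked,yang2019edge}, when $d>1$ and $\phi^{-1}>\varsigma_3$ the limiting density has polynomial edge behavior (\ref{eq: concave decay of rho_Q}) and $\widehat{L}_+$ is characterized by (\ref{eq_conditionaledgedefinition}) rather than by a vanishing derivative; consequently the classical stability bound via $\sqrt{\kappa+\eta}$ is unavailable. The correct replacement is the linear-in-$z$ expansion (\ref{eq_expansionlinear}) together with the denominator control built into $\mathbf{D}_b'$, and verifying that the non-square-root stability indeed suffices to close the bootstrap — especially for those indices $j$ for which $\xi_j^2$ is close to $l$ and $|1+\xi_j^2 m_{1n,c}(z)|$ is only of order $n^{-1/(d+1)-\epsilon_d}$ — is the delicate point. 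This is in the spirit of the analysis in \cite{lee2016extremal,Kwak2021} but must be adapted to the separable covariance structure $\Sigma^{1/2}XD^2X^*\Sigma^{1/2}$ and to the constraint set $\mathbf{D}_b'$ chosen here.
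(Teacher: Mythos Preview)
Your two-part structure matches the paper's exactly, and you correctly flag the non-square-root edge as the crux. There are, however, two concrete points where your execution diverges from (and would not quite reproduce) the paper's argument.

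\textbf{First pair.} The stability you invoke is the wrong kind. Equation (\ref{eq_expansionlinear}) is an expansion of $z\mapsto m_{1n}(z)$ near $\widehat{L}_+$; it says nothing directly about stability of the fixed-point map $m\mapsto F_n(m,z)$ in the $m$-variable, which is what you need to pass from a small residual $F_n(m_{1n,c},z)$ to a small $|m_{1n,c}-m_{1n}|$. The paper does not use (\ref{eq_expansionlinear}) here at all. Instead it writes
\[
|m_{1n,c}-m_{1n}|\le \mathsf{P}_1+|\mathsf{T}|\cdot|m_{1n,c}-m_{1n}|,
\]
bounds $\mathsf{P}_1\le Cn^{-1/2+\epsilon}$ using the fifth line of (\ref{def4}) (so no fresh Bernstein bound is needed; this is built into $\Omega$), and then shows $|\mathsf{T}|<1$ by Cauchy--Schwarz, $|\mathsf{T}|\le\mathsf{E}_1\mathsf{E}_2$ with $\mathsf{E}_1\le 1$ from the identity (\ref{eq_expansionusefullessorequaltoone}) and $\mathsf{E}_2^2=\mathsf{W}(L_+)+\ro(1)<1$. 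The last inequality is exactly where the hypothesis $\phi^{-1}>\varsigma_3$ enters, via $\mathsf{W}(L_+)=\phi\varsigma_3<1$. This is a genuine contraction argument, not a linearization in $z$.

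\textbf{Second pair.} Your chain $Z_i\prec(\operatorname{Im}m_1^{(i)}/(n\eta))^{1/2}\prec(n\eta_0)^{-1/2}$ cannot close the bootstrap at the claimed rate $(n\eta_0)^{-1}$: plugging an error of order $(n\eta_0)^{-1/2}$ into the self-consistent equations only reproduces the weak local law. The paper therefore inserts two intermediate steps that your sketch omits. First it proves a weak averaged local law on $\mathbf{D}_b'$ with error $(n\eta)^{-1/4}$ (Proposition \ref{proposition_boundedweaklocallaw}). Second, and crucially, it proves \emph{by contradiction} that $\operatorname{Im}m_1(z)\prec(n\eta_0)^{-1}$ (Lemma \ref{lem: est for Im m_1}), again using the contraction bound $\mathsf{C}_1\le\mathfrak{c}<1$ that encodes $\phi^{-1}>\varsigma_3$. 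Only with this imaginary-part bound in hand does one get $Z_i\prec(n\eta_0)^{-1}$ (Remark \ref{remk_zibound}), and then the strong bound $|m_1-m_{1n}|\prec(n\eta_0)^{-1}$ is obtained by two more contradiction arguments (Lemmas \ref{lem: first bound for m_1n-m_1} and \ref{lem: second bound for m_1n-m_1}) combined with a dyadic descent in $\eta$ from $n^{-1/2+\epsilon_d}$ down to $\eta_0$. Your final paragraph correctly senses that ``closing the bootstrap'' is the delicate point; the missing ingredient is precisely this a-priori control of $\operatorname{Im}m_1$.
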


\subsection{Characterization of "good configurations"}\label{appendix_goodconfiguration}


In this subsection, independent of Section \ref{sec_asymptoticlocalaveragedlocal}, we define some probability events which are some "good configurations" for the first few largest eigenvalues of $D^2.$ Our proofs will be restricted on these probability events. In fact, as will be seen in Lemma \ref{lem_probabilitycontrol}, under Assumption \ref{assum_D},  these probability events hold with high probability when $n$ is sufficiently large. 

Recall Assumption \ref{assum_D} and
\begin{equation*}
D^2=\operatorname{diag}\left\{ \xi_1^2, \cdots, \xi_n^2 \right\}.
\end{equation*}
Moreover, we define the order statistics of $\{\xi^2_i\}$ as 
\begin{equation*}
\xi_{(1)}^2 \geq \xi_{(2)}^2 \geq \cdots \geq \xi_{(n)}^2. 
\end{equation*}
In what follows, we define these probability events according to the various assumptions of $\{\xi_i^2\}$ in (\ref{ass3.1})--(\ref{ass3.4}).

\begin{definition}\label{defn_probset}
Denote $\Omega \equiv \Omega_n$ be the event on $\{\xi_i^2\}$ so that the following conditions hold:
\begin{itemize}
\item[] {\bf (a). Unbounded support with polynomial decay.}  When $\{\xi^2_i\}$ has unbounded support with polynomial decay tail as in (\ref{ass3.1}), we assume that for all $\epsilon\in(0,1/\alpha)$, $b\in(1/2,1]$  and some constants $C, c>1,$ the following holds on $\Omega$
\begin{equation}\label{def1}
\begin{aligned}
& \xi^2_{(1)}-\xi^2_{(2)}\ge C^{-1}n^{1/\alpha}\log^{-1}n, \\
& C^{-1}n^{1/\alpha}\log^{-1}n\le\xi^2_{(1)}\le Cn^{1/\alpha}\log n,  \\
& \xi^2_{(i)}-\xi^2_{(i+1)}\ge C^{-1}n^{\epsilon}\log^{-1}n, \ 1 \leq i<\sqrt{n},\\
&  \xi^2_{(1)}-\xi^2_{(\lceil n^b \rceil)}\ge c^{-1}n^{1/\alpha}\log^{-1} n, \\
&  \frac{1}{n}\sum_{i=1}^n\xi^2_i< \infty. 
\end{aligned}
\end{equation}
\item[] {\bf (b). Unbounded support with exponential decay.} When $\{\xi^2_i\}$ has unbounded support with polynomial decay tail as in (\ref{ass3.2}), we assume that for some constant $C>1,$ 
the following holds on $\Omega$
\begin{equation}\label{def3}
\begin{aligned}
& \xi^2_{(1)}-\xi^2_{(2)}\ge C^{-1}\log^{1/\beta} n, \\
&  C^{-1}\log^{1/\beta} n\le\xi^2_{(1)}\le C\log^{1/\beta} n, \\
&  \frac{1}{n}\sum_{i=1}^n\xi^2_i<\infty. 
\end{aligned}
\end{equation}
\item[] {\bf (c). Bounded support with $d>1.$}  When $\{\xi^2_i\}$ has bounded support satisfying (\ref{ass3.4}) with $d>1,$ we assume that for some sufficiently small constant $\epsilon>0,$ $\epsilon_d<1/8(1/2-1/(d+1))$, $0<b \leq 1$ and $0<C_l<l,$ the following holds on $\Omega$
\begin{equation}\label{def4}
\begin{aligned}
& n^{-1/(d+1)-\epsilon_d}<l-\xi^2_{(1)}<n^{-1/(d+1)}\log n, \\
& \xi^2_{(1)}-\xi^2_{(2)}> n^{-1/(d+1)-\epsilon_d}, \\
& l-\xi^2_{(\lfloor bn \rfloor)}>C_l, \\
& \frac{1}{n}\sum_{i=1}^n\xi^2_{i}\le l, \\
&  \left|\frac{1}{n}\sum_{i=1}^n\frac{\xi^2_i}{1+\xi^2_{i}m_{1 n,c}(z)}-\int\frac{t}{1+tm_{1 n,c}(z)}\mathrm{d} F(t)\right|\le\frac{Cn^{\epsilon}}{\sqrt{n}}, \ \text{for} \  z \in \mathbf{D}_b,
\end{aligned}
\end{equation}
where we recall that  $F(t)$ is the distribution of $\xi^2$ and $C>0$ is some generic constant. 
\end{itemize}
\end{definition}

\begin{remark}
Two remarks are in order. First, on the event $\Omega,$ for the unbounded support case, according to (a) and (b), we see that the first few largest $\xi_i^2$ are divergent and well separated from each other. Second, for the bounded support case, we only provide the results for $d>1$ in (c). Nevertheless, it is easy to see that similar results can be obtained for $-1<d \leq 1.$   
\end{remark}

\quad The following lemma shows that under Assumption \ref{assum_D}, the probability event $\Omega$ happens with high probability in all the four settings. The proof will be given in Section \ref{sec_appendxi_goodevent}. 

\begin{lemma}\label{lem_probabilitycontrol}
Let $\Omega$ be the events defined in Definition \ref{defn_probset}, suppose Assumption \ref{assum_D} holds, we then have that when $n$ is sufficiently large  
\begin{equation*}
\mathbb{P}(\Omega)=1-\rO(\log^{-D}n),
\end{equation*}
for some constant $D>0$.
\end{lemma}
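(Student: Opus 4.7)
The plan is to split the proof into the three sub-cases of Definition~\ref{defn_probset}, bound the failure probability of each listed inequality by $\rO(\log^{-D} n)$, and then apply a union bound over the finitely many items. Throughout, the $\xi_i^2$ being i.i.d.\ reduces every claim to either an extreme-value estimate or a Bernstein/Hoeffding-type concentration bound.

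For sub-case (a) with regularly varying tail $\mathbb{P}(\xi^2>x)=L(x)/x^{\alpha}$, I would use Potter's bound (for every $\delta>0$, $L(y)/L(x)\leq C_\delta\max\{(y/x)^{\delta},(y/x)^{-\delta}\}$ for large $x,y$) together with $\mathbb{P}(\xi^{2}_{(1)}\leq x)=(1-L(x)/x^{\alpha})^n$ to obtain $C^{-1}n^{1/\alpha}/\log n\leq\xi^{2}_{(1)}\leq Cn^{1/\alpha}\log n$ with failure probability $\rO(\log^{-D}n)$: the upper tail of $\xi^{2}_{(1)}$ is precisely where the $\log^{-D}n$ rate originates (via Potter with small $\delta$), while the lower tail gives a much sharper $\exp(-c\log^{\alpha-\alpha\delta}n)$ bound. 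The inner spacings and the separation from $\xi^{2}_{(\lceil n^{b}\rceil)}$ follow by conditioning on the upper order statistics being of order $n^{1/\alpha}$ and using $\mathbb{P}(x-\delta<\xi^2\leq x)=\rO(\delta L(x)/x^{\alpha+1})$ on windows of size $\delta\asymp n^{\epsilon}/\log n$ and $\delta\asymp n^{1/\alpha}/\log n$ respectively, followed by a union bound over at most $\sqrt n$ gaps and $n$ remaining indices. The mean bound $\tfrac{1}{n}\sum\xi_i^2<\infty$ follows from $\mathbb{E}\xi^2<\infty$ (ensured by $\alpha\geq 2$) via Markov. Sub-case (b) with $\mathbb{P}(\xi^2>x)=e^{-\mathsf g(x)}$ is handled by the same template, using the scaling $\mathsf g(x)\asymp x^{\beta}$ from (\ref{assum_gg}) to obtain $\xi^{2}_{(1)}\asymp\log^{1/\beta}n$ together with a spacing bound inherited from the smoothness of $\mathsf g$.

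For sub-case (c), the polynomial tail (\ref{ass3.4}) near $l$ gives $\mathbb{P}(l-\xi^{2}_{(1)}>n^{-1/(d+1)}\log n)\leq\exp(-c\log^{d+1}n)$ and $\mathbb{P}(l-\xi^{2}_{(1)}<n^{-1/(d+1)-\epsilon_d})=\rO(n^{-\epsilon_d(d+1)})$; conditioning on $\xi^{2}_{(1)}$ and bounding the probability that some other $\xi^{2}_j$ lies in an $n^{-1/(d+1)-\epsilon_d}$ window below it handles the spacing. The bulk bound $l-\xi^{2}_{(\lfloor bn\rfloor)}>C_l$ follows from the Dvoretzky--Kiefer--Wolfowitz inequality at level $l-C_l$ (where $F(l-C_l)<1$), and $\tfrac{1}{n}\sum\xi_i^{2}\leq l$ is immediate from $\xi_i^2\leq l$ a.s. The most delicate item is the last condition of (\ref{def4}), the uniform concentration of the empirical Stieltjes transform on $\mathbf{D}_b$. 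The basic inequality $|\xi^2/(1+\xi^2 m_{1n,c}(z))|\leq 1/\operatorname{Im} m_{1n,c}(z)\leq C/\eta$ yields $|f_z(\xi^{2})|\leq Cn^{1/2+\epsilon_d}$ on $\mathbf{D}_b$, so Hoeffding's inequality at fixed $z$ gives deviation $\rO(n^{\epsilon_d}\log n/\sqrt n)$ with exponentially small failure probability; a polynomial-cardinality $n^{-K}$ net in $z$, combined with the polynomial-in-$n$ Lipschitz constant of $z\mapsto f_z(s)$ on $\mathbf{D}_b$, extends this uniformly with negligible loss.

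The hard part will be the exponent bookkeeping in this last step: the pointwise upper bound $|f_z(s)|\leq Cn^{1/2+\epsilon_d}$ forces the constant $\epsilon$ in (\ref{def4}) to satisfy $\epsilon>\epsilon_d$, and one must confirm that this ordering is compatible with the other small-constant choices in Definition~\ref{defn_probset} and with the constraint on $\epsilon_d$ used elsewhere in the paper. Once this ordering of small constants is pinned down, every remaining item reduces to the standard extreme-value or Bernstein estimates sketched above, and the union bound over $\rO(1)$ items preserves the stated $\rO(\log^{-D} n)$ failure probability.
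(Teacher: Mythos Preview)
Your plan for Cases (a), (b), and the first four items of Case (c) is essentially the same as the paper's: direct extreme-value computations for the maximum, an interval-partition argument for the spacings, a binomial/order-statistic count for the bulk separation, and the SLLN for the empirical mean. The language differs (you invoke Potter's bound and DKW where the paper works directly with the slowly varying function and binomial sums), but the arguments are equivalent.

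There is, however, a genuine gap in your treatment of the last condition in (\ref{def4}). Your Hoeffding step is miscounted by a factor of $\sqrt{n}$: with the $L^\infty$ bound $|f_z(\xi^2)|\leq M:=Cn^{1/2+\epsilon_d}$, Hoeffding gives
\[
\mathbb{P}\Bigl(\Bigl|\tfrac{1}{n}\sum_i f_z(\xi_i^2)-\mathbb{E}f_z(\xi^2)\Bigr|>t\Bigr)\leq 2\exp\bigl(-nt^2/(2M^2)\bigr),
\]
so the deviation one can afford is $t\asymp M\sqrt{\log n}/\sqrt{n}=Cn^{\epsilon_d}\sqrt{\log n}$, \emph{not} $Cn^{\epsilon_d}\log n/\sqrt{n}$. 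This is far too large to match the required $Cn^{\epsilon}/\sqrt{n}$ for any small $\epsilon$, and no amount of net refinement fixes it: the pointwise bound itself is off by $\sqrt{n}$. Your final paragraph worries about whether $\epsilon>\epsilon_d$ suffices, but the actual constraint from your argument would be $\epsilon>1/2+\epsilon_d$, which is useless.

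The paper's route is different and avoids the $L^\infty$ bound entirely. It observes that the \emph{second moment}
\[
\mathbb{E}\Bigl|\frac{\xi^2}{1+\xi^2 m_{1n,c}(z)}\Bigr|^2=\int_0^l\frac{t^2}{|1+t\,m_{1n,c}(z)|^2}\,\mathrm{d}F(t)
\]
is bounded by a constant uniformly in $z\in\mathbf{D}_b$. This is not obvious from the crude bound $|1+tm_{1n,c}(z)|\geq t\operatorname{Im}m_{1n,c}(z)$; it comes instead from the structural identity (\ref{eq_defnmathsfW}) (the quantity $\mathsf{W}(z)$ is bounded by $1$) together with Assumption~\ref{assum_additional_techinical}, which forces the denominator $|-z+\sigma_i\int\frac{s}{1+sm_{1n,c}(z)}\mathrm{d}F(s)|$ to be bounded below. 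With $\operatorname{Var}(\tau_{\xi^2})=\rO(1)$, Chebyshev immediately gives the $Cn^{\epsilon}/\sqrt{n}$ deviation with polynomial failure probability, and a standard net plus Lipschitz argument (or higher-moment Markov) handles the uniformity in $z$. If you want to salvage a concentration-inequality approach, you would need Bernstein rather than Hoeffding, and Bernstein still requires exactly this variance bound as input.
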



\subsection{Some useful lemmas and a summary of extreme value theory}\label{sec_summaryofextremevalue}
In this subsection, we first provide some technical lemmas which will be used in our proof. The following resolvent identities play an important role in our proof. Recall the resolvents defined in (\ref{eq_resolvents}) and the minors defined in (\ref{eq_defnminor}). 
\begin{lemma}[Resolvent identities]\label{lem: Resolvent}
Let $\{\mathbf{y}_i\} \subset \mathbb{R}^p$ be the columns of $Y$ as in (\ref{eq_datamatrix}), then we have that 
\begin{eqnarray*}
			\mathcal G_{ii}(z) & = & -\frac{1}{z+z\mathbf{y}_{i}^{*}{ G}^{(i)}(z)\mathbf{y}_{i}},\\
			\mathcal G_{ij}(z) & = & z \mathcal{G}_{ii}(z)\mathcal{G}_{jj}^{(i)}(z)\mathbf{y}_{i}^{*}{G}^{(ij)}(z)\mathbf{y}_{j}\quad i\ne j,\\
			\mathcal G_{ij}(z) & = & \mathcal G_{ij}^{(k)}(z)+\frac{\mathcal G_{ik}(z)\mathcal G_{kj}(z)}{\mathcal G_{kk}(z)}\quad i,j\ne k.
		\end{eqnarray*}
\end{lemma}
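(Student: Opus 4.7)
The three identities all follow from Schur complement formulas applied to $\mathcal{Q}-zI$ with different choices of singled-out index set, combined with the universal translation identity
\begin{equation*}
Y^{(\mathcal T)}\,\mathcal G^{(\mathcal T)}(z)\,Y^{(\mathcal T)*}\;=\;I \;+\; z\,G^{(\mathcal T)}(z),
\end{equation*}
which I would establish first. This identity in turn is a consequence of the commutation relation $(YY^*-zI)Y=Y(Y^*Y-zI)$, which gives $Y(Y^*Y-zI)^{-1}=(YY^*-zI)^{-1}Y$ and hence $Y^{(\mathcal T)}\mathcal G^{(\mathcal T)}Y^{(\mathcal T)*}=G^{(\mathcal T)}Y^{(\mathcal T)}Y^{(\mathcal T)*}=G^{(\mathcal T)}(Q^{(\mathcal T)}-zI+zI)=I+zG^{(\mathcal T)}$.

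For the first identity, I would apply the $1\times(n-1)$ block Schur complement to $\mathcal{Q}-zI$ with the $i$-th row and column singled out. The $(i,i)$ entry of $\mathcal{Q}-zI$ is $\|\mathbf y_i\|^2-z$, the off-diagonal block has entries $\mathbf y_i^*\mathbf y_k$ for $k\neq i$, and the complementary block inverts to $\mathcal G^{(i)}(z)$. The Schur formula yields
\begin{equation*}
\mathcal G_{ii}(z)^{-1}=\|\mathbf y_i\|^2-z-\mathbf y_i^*\,Y^{(i)}\mathcal G^{(i)}(z)Y^{(i)*}\,\mathbf y_i.
\end{equation*}
Applying the translation identity to the quadratic form gives $\mathbf y_i^*Y^{(i)}\mathcal G^{(i)}Y^{(i)*}\mathbf y_i=\|\mathbf y_i\|^2+z\mathbf y_i^*G^{(i)}\mathbf y_i$, and the $\|\mathbf y_i\|^2$ terms cancel, producing exactly $-\bigl(z+z\mathbf y_i^*G^{(i)}\mathbf y_i\bigr)$.

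For the second identity, I would apply the $2\times(n-2)$ block Schur complement with $\{i,j\}$ singled out. The same translation step converts the $(i,j)$ entry of the Schur-complemented $2\times 2$ block to $-z\mathbf y_i^*G^{(ij)}(z)\mathbf y_j$, and the two diagonal entries to $-z(1+\mathbf y_i^*G^{(ij)}\mathbf y_i)$ and $-z(1+\mathbf y_j^*G^{(ij)}\mathbf y_j)$. Inverting this $2\times 2$ matrix and noting that the $(j,j)$ entry of its diagonal part equals $1/\mathcal G^{(i)}_{jj}$ by the already-proven first identity applied to $\mathcal{Q}^{(i)}$, one arrives after bookkeeping at $\mathcal G_{ij}=z\mathcal G_{ii}\mathcal G^{(i)}_{jj}\mathbf y_i^*G^{(ij)}\mathbf y_j$. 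The third identity is purely algebraic: apply the $1\times(n-1)$ block inversion formula with $\{k\}$ singled out; the off-diagonal terms of $(\mathcal Q-zI)^{-1}$ produce $\mathcal G^{(k)}_{ij}$ plus a rank-one correction of the form $s\,(M^{-1}\beta)_i(\beta^*M^{-1})_j$ with $s=1/\mathcal G_{kk}$, and expressing the two vector factors via $\mathcal G_{ik}=-(M^{-1}\beta)_i/s$ and $\mathcal G_{kj}=-(\beta^*M^{-1})_j/s$ delivers the desired form $\mathcal G^{(k)}_{ij}+\mathcal G_{ik}\mathcal G_{kj}/\mathcal G_{kk}$.

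There is no real obstacle here; the lemma is a bookkeeping exercise once the translation identity is in hand. The only mild subtlety is to be consistent with the convention in (\ref{eq_defnminor}) that minors keep the original index labels, so that quantities like $\mathcal G^{(i)}_{jj}$ and $\mathbf y_i^*G^{(ij)}\mathbf y_j$ refer to the correct rows/columns when doing the Schur algebra. I would simply carry this out in three short paragraphs mirroring the three identities.
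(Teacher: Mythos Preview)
Your proposal is correct and follows exactly the approach the paper has in mind: the paper's own proof is the single sentence ``The proof is straightforward using Schur's complement formula; for example see \cite[Lemma 2.3]{PillaiandYin2014},'' and your outline simply unpacks this. The translation identity $Y^{(\mathcal T)}\mathcal G^{(\mathcal T)}Y^{(\mathcal T)*}=I+zG^{(\mathcal T)}$ together with the $1\times(n-1)$, $2\times(n-2)$, and $1\times(n-1)$ block Schur complements is precisely how such identities are derived in the cited reference, so there is nothing to add.
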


\begin{proof}
The proof is straightforward using Schur's complement formula; for example see \cite[Lemma 2.3]{PillaiandYin2014}. 
\end{proof}

\begin{lemma}[Some useful matrix identities] For any finite subset $\mathcal{T}\subset\{1,\dots,n\}$, we have that 
\begin{equation}\label{lem:Wald}
\left\|G^{(\mathcal{T})}\Sigma^{1/2} \right \|^2_F=\eta^{-1}\operatorname{Im}\operatorname{Tr} \left(G^{(\mathcal{T})} \Sigma \right). 
\end{equation}
Moreover, we have that 
	\begin{gather}
			\left|{\rm Tr}(\mathcal{G}^{(i)}-\mathcal{G}) \right|  \le  \eta^{-1}, \nonumber \\
			\left|{\rm Tr}({ G}^{(i)}-{ G}) \right| \le  |z|^{-1}+\eta^{-1}, \label{lem:trace_difference} \\
			\left|\operatorname{Im}{\rm Tr}({G}^{(i)}-{\cal G})\right|  \le  \eta|z|^{-2}+\eta^{-1}. \nonumber
		\end{gather}
\end{lemma}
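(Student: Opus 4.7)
The plan is to dispatch the four identities one at a time, each by a short direct calculation; none of them relies on the deeper machinery developed later in the paper.

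First, I would treat the Ward-type identity $\|G^{(\mathcal{T})}\Sigma^{1/2}\|_F^2=\eta^{-1}\operatorname{Im}\operatorname{Tr}(G^{(\mathcal{T})}\Sigma)$ by spectral decomposition. Since $Q^{(\mathcal{T})}$ is real symmetric with decomposition $Q^{(\mathcal{T})}=\sum_k\lambda_k\mathbf{u}_k\mathbf{u}_k^*$ and $G^{(\mathcal{T})}(z)^*=G^{(\mathcal{T})}(\bar z)$, partial fractions give
\[
G^{(\mathcal{T})}(z)\,G^{(\mathcal{T})}(z)^*=\sum_k\frac{1}{|\lambda_k-z|^2}\mathbf{u}_k\mathbf{u}_k^* =\eta^{-1}\operatorname{Im} G^{(\mathcal{T})}(z).
\]
Then using $\|A\|_F^2=\operatorname{Tr}(A^*A)$, cyclicity of the trace, and the fact that $\Sigma$ is real symmetric,
\[
\|G^{(\mathcal{T})}\Sigma^{1/2}\|_F^2 =\operatorname{Tr}\!\left(\Sigma\, G^{(\mathcal{T})}(G^{(\mathcal{T})})^*\right) =\eta^{-1}\operatorname{Im}\operatorname{Tr}(G^{(\mathcal{T})}\Sigma).
\]

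For the three trace-difference bounds, the key observation is that $\mathcal{Q}^{(i)}$ is the principal submatrix of $\mathcal{Q}$ obtained by deleting row and column $i$, while $Q^{(i)}=Q-\mathbf{y}_i\mathbf{y}_i^*$ is a rank-one downward update of $Q$. In either case, Cauchy interlacing forces the cumulative counting functions $N(\lambda)$ and $N^{(i)}(\lambda)$ of the respective spectra to satisfy $|N(\lambda)-N^{(i)}(\lambda)|\le 1$ pointwise, and both counting functions become constant outside the compact spectrum. Representing the trace as a Stieltjes integral and integrating by parts,
\[
\operatorname{Tr}(\mathcal{G})-\operatorname{Tr}(\mathcal{G}^{(i)})=\int_{\mathbb{R}}\frac{N(\lambda)-N^{(i)}(\lambda)}{(\lambda-z)^2}\,d\lambda,
\]
with the identical formula on the $G$ side. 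Crudely bounding by $\int d\lambda/|\lambda-z|^2=\pi/\eta$ yields the $\eta^{-1}$ contribution in all three estimates; the boundary terms vanish because $(\lambda-z)^{-1}\to 0$ at infinity and the counting functions stabilize there. This immediately handles $|\operatorname{Tr}(\mathcal{G}^{(i)}-\mathcal{G})|\le\eta^{-1}$.

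The remaining $|z|^{-1}$ term (respectively $\eta|z|^{-2}$ after taking imaginary parts) comes from the fact that $G$ is $p\times p$ while $\mathcal{G}$ is $n\times n$. Since $Q$ and $\mathcal{Q}$ share exactly their non-zero eigenvalues, a direct count gives $\operatorname{Tr}(G)-\operatorname{Tr}(\mathcal{G})=(p-n)\cdot(-z)^{-1}$, whose modulus contributes the $|z|^{-1}$ summand and whose imaginary part contributes the $\eta|z|^{-2}$ summand. Combining this with the interlacing bound applied to $\operatorname{Tr}(G)-\operatorname{Tr}(G^{(i)})$ via the triangle inequality produces the last two claimed estimates. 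The only point requiring care—and the main bookkeeping obstacle—is keeping track of which resolvent lives in which dimension and handling the signs in the zero-eigenvalue correction $(p-n)/(-z)$; no genuinely new idea is needed, as this is standard resolvent calculus.
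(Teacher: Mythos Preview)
Your approach is correct and, for the Ward identity, cleaner than the paper's: the paper inserts the spectral decomposition $\Sigma=U\Lambda U^*$ and conjugates everything by $U$ before invoking the scalar Ward identity, whereas your direct use of $G^{(\mathcal T)}(G^{(\mathcal T)})^*=\eta^{-1}\operatorname{Im}G^{(\mathcal T)}$ (valid because $Q^{(\mathcal T)}$ is Hermitian, so $G^{(\mathcal T)}$ is normal) reaches the same conclusion in one step. For the three trace-difference bounds the paper merely cites external references, so your interlacing-plus-integration argument is an explicit substitute rather than a genuinely different method.

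Two bookkeeping points deserve attention. First, your integral bound produces $\pi/\eta$, not the stated $\eta^{-1}$; the constant is irrelevant to every application in the paper, but the sharp constant follows instead from the Sherman--Morrison identity
\[
\operatorname{Tr}(G-G^{(i)})=-\frac{\mathbf y_i^*(G^{(i)})^2\mathbf y_i}{1+\mathbf y_i^*G^{(i)}\mathbf y_i}
\]
combined with $|\mathbf y_i^*(G^{(i)})^2\mathbf y_i|\le\eta^{-1}\operatorname{Im}(\mathbf y_i^*G^{(i)}\mathbf y_i)$ and $|1+\mathbf y_i^*G^{(i)}\mathbf y_i|\ge\operatorname{Im}(\mathbf y_i^*G^{(i)}\mathbf y_i)$. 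Second, your own computation gives $\operatorname{Tr}(G)-\operatorname{Tr}(\mathcal G)=(p-n)(-z)^{-1}$, whose modulus is $|p-n|\,|z|^{-1}$, not $|z|^{-1}$; the paper's stated bound appears to suppress this factor (harmless after the $n^{-1}$ normalization used everywhere the lemma is invoked), but you should not drop it silently. Note also that for the middle inequality $|\operatorname{Tr}(G^{(i)}-G)|$, rank-one interlacing between $Q^{(i)}$ and $Q$ already gives the $\eta^{-1}$-type bound directly---no detour through $\mathcal G$ is needed there, and the $|z|^{-1}$ term in the statement is simply slack.
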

\begin{proof}
Due to similarity, we focus our discussion on the separable covariance i.i.d. data, i.e., Case (2) of Assumption \ref{assum_model}. In fact, it is easier to handle Case (1) since $\Sigma$ can  be always assumed to be  diagonal. 

We start with the proof of (\ref{lem:Wald}). Recall (\ref{eq_defnminorG}). We can write
\begin{equation*}
G^{(\mathcal{T})}=\left( \Sigma^{1/2} X^{(\mathcal{T})} D^2 X^{(\mathcal{T})} \Sigma^{1/2}-z   \right)^{-1}.
\end{equation*}
Let the spectral decomposition $\Sigma=U\Lambda U^*.$ Observe that 
\begin{align*}
\| G^{(\mathcal{T})} \Sigma^{1/2} \|_F^2 & =\operatorname{Tr} \left( \left( \Sigma^{1/2} X^{(\mathcal{T})} D^2 X^{(\mathcal{T})} \Sigma^{1/2}-z   \right)^{-1} U\Lambda U^* \left( \Sigma^{1/2} X^{(\mathcal{T})} D^2 X^{(\mathcal{T})} \Sigma^{1/2}-\bar{z}   \right)^{-1}\right) \\
&=\operatorname{Tr} \left( U \left( \Lambda^{1/2} U^* X^{(\mathcal{T})} D^2 X^{(\mathcal{T})} U \Lambda^{1/2}-z   \right)^{-1} U^* U\Lambda U^* U \left( \Lambda^{1/2} U^* X^{(\mathcal{T})} D^2 X^{(\mathcal{T})} U \Lambda^{1/2}-\bar{z}   \right)^{-1} U^*\right) \\
&= \left\| \left( \Lambda^{1/2} U^* X^{(\mathcal{T})} D^2 X^{(\mathcal{T})} U \Lambda^{1/2}-z   \right)^{-1} \Lambda^{1/2} \right\|_F^2 \\
&=\eta^{-1} \operatorname{Im} \operatorname{Tr} \left[ \left( \Lambda^{1/2} U^* X^{(\mathcal{T})} D^2 X^{(\mathcal{T})} U \Lambda^{1/2}-z   \right)^{-1} \Lambda \right] , \\
&=\eta^{-1} \operatorname{Im} \operatorname{Tr} \left[ U^* U\left( \Lambda^{1/2} U^* X^{(\mathcal{T})} D^2 X^{(\mathcal{T})} U \Lambda^{1/2}-z   \right)^{-1} U^* U \Lambda \right] , \\
& =\eta^{-1} \operatorname{Im} \operatorname{Tr} \left( G^{(\mathcal{T})} \Sigma \right),
\end{align*}
where in the fourth step we used Ward's identity (see the equation below (4.42) of \cite{ding2020local}).

Second, the proof of (\ref{lem:trace_difference}) follows from the definitions of the resolvents; see \cite[Lemma A.4]{Ding&Yang2018} and the proof of \cite[Lemma 4.6]{Bao2015} for more detail.   
 
\end{proof}

\begin{lemma}[Large deviation bounds]\label{lem:large deviation}
Let $\mathbf{u}=(u_1, u_2, \cdots, u_p)^*, \widetilde{\mathbf{u}}=(\widetilde{u}_1, \widetilde{u}_2, \cdots, \widetilde{u}_p)^* \in \mathbb{R}^p$  be two real independent random vectors. Moreover, let $A$ be a $p \times p$ matrix independent of the above vectors. Then the following holds. 
\begin{enumerate}
\item[(1).] When the entries of the random vectors are centered i.i.d. random variables with variance $n^{-1}$ and  $\mathbb{E}|\sqrt{n} v_{i}|^k \leq C_k,$ where $v_i=u_i, \widetilde{u}_i, 1 \leq i \leq p,$  then we have that 
		\begin{align*}
		|\widetilde{\mathbf{u}}^{*}\mathbf{u}|  \prec  \sqrt{\frac{\|\mathbf{u}\|^{2}}{n},} \ \ \left|\mathbf{u}^{*}A\tilde{\mathbf{u}}\right| \prec  \frac{1}{n}\|A\|_{F}, \ \
		|\mathbf{u}^{*}A\mathbf{u}-\frac{1}{n}{\rm Tr}A|  \prec  \frac{1}{n}\|A\|_{F}. 		
		\end{align*}
		\item[(2).] When the random vectors	are sampled from  $\mathsf{U}(\mathbb{S}^{p-1}),$ then we have that 
				\begin{align*}
		|\widetilde{\mathbf{u}}^{*}\mathbf{u}|  \prec  \sqrt{\frac{\|\mathbf{u}\|^{2}}{p},} \ \ \left|\mathbf{u}^{*}A\tilde{\mathbf{u}}\right| \prec  \frac{1}{p}\|A\|_{F}, \ \
		|\mathbf{u}^{*}A\mathbf{u}-\frac{1}{p}{\rm Tr}A|  \prec  \frac{1}{p}\|A\|_{F}. 		
		\end{align*}
\end{enumerate}

\end{lemma}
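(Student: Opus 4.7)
Both parts are classical large-deviation estimates for linear and quadratic forms in high-dimensional random vectors, and the plan is to obtain all-moment control on the quantities in question and then convert this into the stochastic-domination statements. The guiding principle is that if $\mathbb{E}|Z|^{2k}\le (C_k b)^{2k}$ for every fixed $k$, with $b$ a deterministic (or already $\prec$-controlled) bound, then $Z\prec b$ by Markov's inequality; the parameter uniformity of $\prec$ follows automatically because the $C_k$ will depend only on the moment constants of the entries, not on $A$, on the ambient indices, or on $n$.

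For part (1) I would first condition on $\mathbf{u}$. For $\tilde{\mathbf{u}}^{*}\mathbf{u}=\sum_i \tilde u_i u_i$, the independence of the $\tilde u_i$'s, together with the bound $\mathbb{E}|\sqrt n\,\tilde u_i|^{k}\le C_k$, reduces the $2k$-th conditional moment to a sum over pairings (and higher-multiplicity block partitions) of the indices, each contributing a factor $n^{-1}$ per matched pair; a standard combinatorial expansion yields $\mathbb{E}[|\tilde{\mathbf{u}}^{*}\mathbf{u}|^{2k}\mid \mathbf{u}]\le C'_k(\|\mathbf{u}\|^{2}/n)^{k}$, giving the first bound. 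For the bilinear form $\mathbf{u}^{*}A\tilde{\mathbf{u}}=\sum_j (A^{*}\mathbf{u})_j\tilde u_j$ the same argument applies with $\|\mathbf{u}\|^{2}$ replaced by $\|A^{*}\mathbf{u}\|^{2}=\mathbf{u}^{*}AA^{*}\mathbf{u}$, and then outer-averaging via the third bound (below) converts $\mathbf{u}^{*}AA^{*}\mathbf{u}$ to $n^{-1}\mathrm{Tr}(AA^{*})=n^{-1}\|A\|_F^{2}$, yielding $\prec n^{-1}\|A\|_F$. Finally, $\mathbf{u}^{*}A\mathbf{u}-n^{-1}\mathrm{Tr}A$ splits into an off-diagonal part $\sum_{i\ne j}A_{ij}u_i u_j$ with conditional variance $n^{-2}\sum_{i\ne j}|A_{ij}|^{2}$ and a diagonal part $\sum_i A_{ii}(u_i^{2}-n^{-1})$ with variance $\lesssim n^{-2}\sum_i A_{ii}^{2}$; the higher moments are handled by a standard Hanson–Wright-type recursive moment bound (cf.\ Lemma B.1 in Erdős–Yau or Lemma 3.7 of Benaych-Georges–Knowles), which controls each piece by $(C_k n^{-1}\|A\|_F)^{k}$.

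For part (2) I would use the canonical representation $\mathbf{u}=\mathbf{g}/\|\mathbf{g}\|$, $\tilde{\mathbf{u}}=\tilde{\mathbf{g}}/\|\tilde{\mathbf{g}}\|$ with $\mathbf{g},\tilde{\mathbf{g}}$ independent $N(0,I_p)$ vectors. The $\chi^{2}_{p}$ concentration gives $\|\mathbf{g}\|^{2}/p=1+\rO_{\prec}(p^{-1/2})$, so $\sqrt{p}/\|\mathbf{g}\|=1+\rO_{\prec}(p^{-1/2})$, and the same for $\tilde{\mathbf{g}}$. The rescaled Gaussians $\mathbf{g}/\sqrt{p}$, $\tilde{\mathbf{g}}/\sqrt{p}$ satisfy the moment hypotheses of part (1) with $n$ replaced by $p$, so the three bilinear/quadratic estimates of part (1) apply verbatim to them (with $n\mapsto p$). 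Dividing by the normalizing factors, which are $1+o_{\prec}(1)$ and independent of the three bounds only up to this negligible error, transports each estimate to the corresponding statement for $\mathbf{u},\tilde{\mathbf{u}}$, while the mean-subtraction in the quadratic form is compatible because $\mathbb{E}[\mathbf{u}^{*}A\mathbf{u}]=p^{-1}\mathrm{Tr}A$.

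I do not anticipate a substantive obstacle: the computations are entirely routine in the high-dimensional probability literature. The only point requiring care is ensuring that the moment constants remain uniform in the spectral parameter $z$ and all auxiliary indices that enter the definition of $\prec$; since the $C_k$ depend only on the distribution of the entries of $\mathbf{u},\tilde{\mathbf{u}}$ (part (1)) or are absolute constants arising from Gaussian tail bounds (part (2)), this uniformity is automatic, and the bounds transfer to $\prec$ as stated.
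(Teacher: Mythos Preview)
Your proposal is correct and follows the standard arguments underlying the cited references. The paper itself does not give a proof but simply refers to Lemma~3.4 of Pillai--Yin (or Lemma~5.6 of Yang) for part~(1) and Lemma~I.3 of Wen for part~(2); your moment-expansion/Hanson--Wright argument for (1) and Gaussian-representation reduction for (2) are exactly the approaches used in those sources.
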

\begin{proof}
The proof of (1) can be found in Lemma 3.4 of \cite{PillaiandYin2014} or Lemma 5.6 of \cite{yang2019edge}; the proof of (2) can be found in Lemma I.3 of \cite{Wen2021}. 
\end{proof}

\quad In what follows, we provide a mini-review of the extreme value theory for a sequence of i.i.d. random variables following \cite{hansen2020three}. For more systematic treatments, we refer to the monographs \cite{beirlant2004statistics,coles2001introduction,fang1990,resnick2008extreme}.  

\begin{lemma}[Fisher-Tippett-Gnedenko Theorem]\label{lem_summaryevt} Let $\{x_i^2\}$ be a sequence of i.i.d. random variables and denote $M_n:=x_{(1)}^2$ as the largest order statistic. 
\begin{enumerate}
\item If there exist some constants $\alpha_n>0$ and $\beta_n \in \mathbb{R}$ and some non-degenerate {\normalfont cdf} G such that $\alpha_n^{-1}(M_n-\beta_n)$ converges in distribution to G, then $G$ belongs to the type of one of the following three {\normalfont cdfs:}
\begin{equation*}
\begin{split}
&\text{Gumbel}: \ G_0(x)=\exp(-e^{-x}), \ x \in \mathbb{R}, \\
&\text{Fr{\'e}chet}: \ G_{1, \alpha}(x)=\exp(-x^{-\alpha}), \ x \geq 0, \ \alpha>0, \\
& \text{Weibull}: \ G_{2, \alpha}(x)=\exp(-|x|^{\alpha}), \ x \leq 0, \ \alpha>0.
\end{split}
\end{equation*}
\item Recall (\ref{eq_defnbn}). First, if $\{x_i\}$ satisfies (\ref{ass3.1}), we have that 
\begin{equation*}
\frac{M_n}{b_n} \overset{d}{\Rightarrow}  G_{1, \alpha},
\end{equation*}
Moreover, if we futher assume $\lim_{x \uparrow \infty} L(x)=\mathsf{C}$ for some constant $\mathsf{C}>0,$ then $b_n=(\mathsf{C} n)^{1/\alpha}.$ Second, if $\{x_i\}$ satisfies (\ref{ass3.2}) and (\ref{assum_gg}), we have that
\begin{equation*}
\mathsf{g}'(b_n)(M_n-b_n) \overset{d}{\Rightarrow}  G_{0}.
\end{equation*} 
Finally, if (\ref{ass3.4}) holds, recall $\mathfrak{b}$ in (\ref{eq_defnbfrak}), we have that 
\begin{equation*}
(\mathfrak{b}n)^{1/(d+1)}(M_n-l) \overset{d}{\Rightarrow} G_{2, d+1}. 
\end{equation*}
 \end{enumerate}
\end{lemma}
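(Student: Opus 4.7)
The plan is to separate the two parts of the lemma: part~1 is the classification of limit types, which is the classical Fisher--Tippett--Gnedenko theorem, and part~2 consists of three direct computations of the limit CDF in each of the regimes (\ref{ass3.1}), (\ref{ass3.2}), (\ref{ass3.4}). Throughout I use the elementary identity $\mathbb{P}(M_n \leq x) = F(x)^n$, and the fact that for any sequence $a_n > 0$, $b_n \in \mathbb{R}$,
\begin{equation*}
\mathbb{P}\!\left(a_n^{-1}(M_n - b_n) \leq x\right) = F(a_n x + b_n)^n \;\longrightarrow\; e^{-\tau(x)} \ \text{ iff } \ n\,[1 - F(a_n x + b_n)] \to \tau(x).
\end{equation*}

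For part~1, the standard argument proceeds via max-stability: if $a_n^{-1}(M_n - b_n) \Rightarrow G$ with $G$ non-degenerate, then comparing the limit of $M_{nk}$ computed two ways ($M_{nk}$ treated as the maximum of $n$ i.i.d.\ copies of $M_k$, or directly) shows that $G^k(\alpha_k x + \beta_k) = G(x)$ for some $\alpha_k > 0$, $\beta_k \in \mathbb{R}$. Applying the convergence-to-types theorem to rule out degeneracy and analysing the three sign cases for $\alpha_k - 1$ yields the Gumbel/Fr\'echet/Weibull trichotomy. I would not redo this classical argument in full detail but cite \cite{resnick2008extreme} or \cite{coles2001introduction}.

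For part~2, I carry out the three regime computations. In the regularly varying case (\ref{ass3.1}), the definition of $b_n$ in (\ref{eq_defnbn}) gives $n L(b_n) b_n^{-\alpha} \to 1$, and the slow variation of $L$ yields $L(b_n x)/L(b_n) \to 1$ for every $x > 0$. Hence
\begin{equation*}
n\,[1 - F(b_n x)] = n \cdot \frac{L(b_n x)}{(b_n x)^\alpha} \longrightarrow x^{-\alpha},
\end{equation*}
so $F(b_n x)^n \to e^{-x^{-\alpha}} = G_{1,\alpha}(x)$. When $L \to \mathsf{C}$, the definition of $b_n$ gives $b_n \sim (\mathsf{C}n)^{1/\alpha}$ directly. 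In the exponential decay case (\ref{ass3.2})--(\ref{assum_gg}), write $1 - F(x) = \exp(-\mathsf{g}(x))$ so $\mathsf{g}(b_n) = \log n + o(1)$. For the linearization $x_n := b_n + y/\mathsf{g}'(b_n)$, a Taylor expansion with the regularity assumption $(1/\mathsf{g}'(x))' \to 0$ gives $\mathsf{g}(x_n) = \mathsf{g}(b_n) + y + o(1)$, and therefore $n[1 - F(x_n)] \to e^{-y}$, producing $G_0$. In the bounded support case (\ref{ass3.4}), the constant $\mathfrak{b}$ from (\ref{eq_defnbfrak}) gives, for $x \leq 0$ and $x_n := l + x/(\mathfrak{b}n)^{1/(d+1)}$,
\begin{equation*}
n\,[1 - F(x_n)] = n \cdot \mathfrak{b}\,(l - x_n)^{d+1}(1 + o(1)) = |x|^{d+1}(1 + o(1)),
\end{equation*}
so $F(x_n)^n \to e^{-|x|^{d+1}} = G_{2,d+1}(x)$.

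The main obstacle is not conceptual but technical: in the Gumbel case one must justify that the pointwise Taylor expansion of $\mathsf{g}$ combined with (\ref{assum_gg}) indeed yields $\mathsf{g}(x_n) - \mathsf{g}(b_n) \to y$, because both $b_n$ and $1/\mathsf{g}'(b_n)$ may have subtle asymptotic behaviour. The cleanest route is to use the mean value theorem on the interval $[b_n, x_n]$ together with $(\mathsf{g}')^{-1}$ being slowly varying at infinity, which is exactly the content of (\ref{assum_gg}); this is the one place where a slightly careful argument is needed rather than bare computation. Parts (a) and (c) of part~2 reduce cleanly to applying the defining tail asymptotics to the scaled quantiles and are essentially one-line computations given the definitions.
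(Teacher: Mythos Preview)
Your proposal is correct and in fact goes well beyond what the paper does: the paper's own ``proof'' of this lemma is simply a citation to standard references (\cite{hansen2020three} and \cite{beirlant2004statistics}) with no argument given. Your sketch of the max-stability argument for part~1 and the three tail computations for part~2 are the standard derivations one finds in those references, so there is nothing to correct or compare.
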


\begin{proof}
The proof can be found in the standard textbook or review article regarding extreme value theory. For example,  see \cite{hansen2020three} and \cite{beirlant2004statistics}. 
\end{proof}

\section{Proof of averaged local laws}\label{appendxi_locallawproof}

In this section, we prove the local laws Theorems \ref{thm_unboundedcaselocallaw} and \ref{thm_boundedcaselocallaw}. Throughout the proof, due to similarity, we focus on the discussion of the separable covariance i.i.d. model as in Case (2) of Assumption \ref{assum_model} and only briefly explain the main differences from the elliptically distributed model as in  Case (1) of Assumption \ref{assum_model}.


\subsection{Unbounded support setting: proof of Theorem \ref{thm_unboundedcaselocallaw}}\label{appendix_sec_prooflocallawunbounded}

In this section, we will prove Theorem \ref{thm_unboundedcaselocallaw}. Due to similarity, we focus on the proof of part 1 and briefly discuss that of part 2. The proof contains two steps. In the first step, we will establish the results for the results of $Q$ outside the bulk of the spectrum on the domain $\widetilde{\mathbf{D}}_{u}$ denoted as follows   
\begin{equation}\label{eq_spectraldomainonereduced}
\widetilde{\mathbf{D}}_{u  } \equiv \widetilde{\mathbf{D}}_{u }(\mathtt C):=\left\{z=E+\ri\eta: 0<E-\mu_1 \leq \mathtt C d_{1}, \ n^{-2/3}\le\eta\le \mathtt C \mu_1 \right\}.
\end{equation}
That is, we will establish the following proposition. 
\begin{proposition} \label{eq_propooursidebulk} Under the assumptions of Theorem \ref{thm_unboundedcaselocallaw}, the following results hold uniformly on the spectral domain $\widetilde{\mathbf{D}}_{u}$ in (\ref{eq_spectraldomainonereduced}) when conditional on the event $\Omega$ in Lemma \ref{lem_probabilitycontrol}. 
\begin{enumerate}
\item[(1).] If {\normalfont Case (a)} of (i) of  Assumption \ref{assum_D} holds, we have that 
\begin{align}\label{eq_entrywiselocallaw}
\mathcal{G}_{ij}(z)=-\frac{\delta_{ij}}{z(1+m_{1n}(z)\xi_i^2)}+\rO_{\prec} \left( n^{-1/2-1/\alpha} \right),
\end{align}
where $\delta_{ij}$ is the Dirac delta function so that $\delta_{ij}=1$ when $i=j$ and $\delta_{ij}=0$ when $i \neq j.$ Moreover, we have that 
\begin{equation*}
m_1(z)=m_{1n}(z)+\rO_{\prec}\left( n^{-1/2-2/\alpha}\right), \  m_2(z)=m_{2n}(z)+\rO_{\prec}\left( n^{-1/2-1/\alpha}\right), 
\end{equation*}
and 
\begin{equation}\label{eq_averagedlawused}
m_{Q}(z)=m_{n}(z)+\rO_{\prec} \left( n^{-1/2-2/\alpha} \right).
\end{equation}
\item[(2).] If {\normalfont Case (b)} of (i) of Assumption \ref{assum_D} holds, we have that the results in part (1) hold by setting $\alpha=\infty.$
\end{enumerate}  
\end{proposition}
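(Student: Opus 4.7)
The plan is to derive the entrywise control (\ref{eq_entrywiselocallaw}) first and deduce the averaged laws afterwards. On $\widetilde{\mathbf{D}}_u$ we have $E>\mu_1$, so $Q-zI$ is invertible with high probability and every resolvent entry is of order $E^{-1}\asymp\mu_1^{-1}$; this is what makes the region outside the bulk accessible without the delicate square-root stability analysis required inside the spectrum.

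For the diagonal entrywise bound I would apply Schur's complement (Lemma \ref{lem: Resolvent}),
\[
  \mathcal{G}_{ii}(z)\;=\;-\frac{1}{z\bigl(1+\mathbf{y}_i^{*}G^{(i)}(z)\mathbf{y}_i\bigr)}.
\]
With $\mathbf{y}_i=\xi_i\Sigma^{1/2}\mathbf{x}_i$, the quadratic form factors as $\xi_i^{2}\mathbf{x}_i^{*}\Sigma^{1/2}G^{(i)}(z)\Sigma^{1/2}\mathbf{x}_i$; Lemma \ref{lem:large deviation} concentrates it around $\xi_i^{2}m_1^{(i)}(z)$ with fluctuation $\rO_{\prec}\!\bigl(\xi_i^{2}n^{-1}\|\Sigma^{1/2}G^{(i)}\|_F\bigr)$. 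Combining the Ward identity (\ref{lem:Wald}) with $\operatorname{Im}m_{1n}^{(i)}(z)\asymp\eta E^{-2}$ from Lemma \ref{lem: basic bounds} reduces this fluctuation to $\rO_{\prec}(\xi_i^{2}E^{-1}/\sqrt{n})$, which is of size $n^{-1/2-1/\alpha}$ for the typical $\xi_i^{2}=\rO(1)$. Linearizing the denominator in the Schur identity then produces the claimed pointwise approximation. Averaging the resulting identity weighted by $\xi_i^{2}$, and separately expanding $G_{\ell\ell}$ by the companion Schur formula on the $p$-dimensional side, yields approximate versions of both equations in (\ref{eq_systemequationsm1m2}); inverting the Jacobian of that deterministic system at $(m_{1n},m_{2n})$ produces $m_1(z)-m_{1n}(z)=\rO_{\prec}(n^{-1/2-2/\alpha})$ together with companion bounds for $m_2$ and $m_Q$, the extra $n^{-1/\alpha}$ gain in $m_1$ originating from the additional $1/z$ factor in the trace $\frac{1}{n}\operatorname{Tr}(G\Sigma)$ relative to the entrywise $\mathcal{G}_{ii}$.

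The main obstacle will be the resonant index corresponding to $\xi_{(1)}^{2}$. By the defining equation (\ref{eq: def of mu_1}), $1+\xi_{(1)}^{2}m_{1n}(z)$ vanishes exactly at $z=\mu_1$, so the Schur bound for $\mathcal{G}_{11}$ lacks the deterministic stability otherwise available on $\widetilde{\mathbf{D}}_u$. To bypass this I would work instead with the minor obtained by removing this one column: the separation $\xi_{(1)}^{2}-\xi_{(2)}^{2}\gtrsim\mu_1/\log n$ supplied on $\Omega$ by (\ref{def1}) or (\ref{def3}) yields a uniform order-one lower bound on $|1+\xi_i^{2}m_{1n}^{(1)}(z)|$ for $i\ge 2$, which is precisely what closes the self-consistent argument; the entry indexed by $1$ can then be reinstated via the rank-one perturbation identity in Lemma \ref{lem: Resolvent}. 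A secondary point is the bootstrap in $\eta$: a continuity descent from an initial $\eta=\rO(1)$ height down to $\eta\ge n^{-2/3}$ must track the divergent factor $E\asymp\mu_1$ at every intermediate scale, although this should remain comfortable since the target $\eta$ is well above the optimal $n^{-1}$ regime and the divergence of $E$ provides slack rather than tightness in the concentration estimates. Case (b) of (i) of Assumption \ref{assum_D} is essentially the same argument with the scale $\mu_1\asymp\log^{1/\beta}n$ absorbing the $n^{-1/\alpha}$ gain into the uniform $n^{-1/2}$ error.
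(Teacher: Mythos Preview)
Your proposal contains a conceptual misreading of the regularization built into $\mu_1$. You write that $1+\xi_{(1)}^{2}m_{1n}(z)$ ``vanishes exactly at $z=\mu_1$'', but by (\ref{eq: def of mu_1}) it is $1+(\xi_{(1)}^{2}+d_1)m_{1n}(\mu_1)$ that vanishes, so in fact $1+\xi_{(1)}^{2}m_{1n}(\mu_1)=d_1/(\xi_{(1)}^{2}+d_1)\asymp d_1/\mu_1$. The $d_1$ shift is precisely the device that keeps the resonant denominator away from zero, and the paper exploits this directly: in Lemma~\ref{lem: self improvement} (see (\ref{eq_g11control})--(\ref{eq_g11})) the entry $\mathcal G_{11}$ is controlled for the \emph{full} matrix by expanding around $m_{1n}(\mu_1)$ and using $|z(1+\xi_{(1)}^{2}m_{1n}(\mu_1))|\asymp d_1$, without ever removing the first column. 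Your minor-removal strategy is instead how the paper proves Theorem~\ref{thm_unboundedcaselocallaw}, which in the paper's logical order comes \emph{after} Proposition~\ref{eq_propooursidebulk} (via Lemma~\ref{lem: upper bound for eigenvalues}); inverting that order is possible in principle but requires its own bootstrap for $G^{(1)}$ that you do not supply.

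Relatedly, two of your shortcuts are circular. Your opening assertion that on $\widetilde{\mathbf D}_u$ ``every resolvent entry is of order $E^{-1}$'' is the conclusion of the proposition, not an a priori fact: nothing yet excludes an eigenvalue of $Q$ in $(\mu_1,\mu_1+\mathtt Cd_1)$. The paper obtains this only through the two-lemma bootstrap---Lemma~\ref{lem: average local law for large eta} proves the statement at $\eta=\mathtt C\mu_1$ using the trivial bound $\|G\|\le\eta^{-1}$, Lemma~\ref{lem: self improvement} shows the entrywise hypothesis (\ref{eq_entrywiselocallaw}) self-improves to the averaged laws, and a continuity descent in $\eta$ closes the loop---so the bootstrap is the backbone of the argument, not a secondary point. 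Likewise, your Ward-identity step feeds $\operatorname{Im}m_{1n}^{(i)}(z)\asymp\eta E^{-2}$ from Lemma~\ref{lem: basic bounds} into $\|G^{(i)}\Sigma^{1/2}\|_F^{2}=\eta^{-1}n\operatorname{Im}m_1^{(i)}$, but replacing $\operatorname{Im}m_1^{(i)}$ by $\operatorname{Im}m_{1n}^{(i)}$ is exactly the local law you are trying to prove. The paper avoids this by bounding $\|G^{(i)}\|_F$ at large $\eta$ via $\|G^{(i)}\|\le\eta^{-1}$ and at smaller $\eta$ via the entrywise hypothesis itself (see (\ref{eq_standarddiscussion})).
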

\quad  Once Proposition \ref{eq_propooursidebulk} is proved, we can quantify the rough locations of the eigenvalues of $Q$ as summarized in the following lemma. 
 
\begin{lemma}\label{lem: upper bound for eigenvalues}
Suppose Assumptions \ref{assum_model}, \ref{assumption_techincial} and (i) of Assumption \ref{assum_D} hold. For some sufficiently large constant $C>0,$ with high probability, for any fixed realization $\{\xi_i^2\} \in \Omega$ where $\Omega \equiv \Omega_n$ is introduced in Lemma \ref{lem_probabilitycontrol}, for all $1 \leq i \leq \min\{p,n\},$ we have that 
\begin{equation}\label{eq_boundestimateone}
\lambda_i(Q) \notin (\mu_1, C n^{1/\alpha} \log n), \ \text{if \normalfont{Case (i)-a} of Assumption \ref{assum_D} holds},  
\end{equation}
and 
\begin{equation}\label{eq_boundestimatetwo}
\lambda_i(Q) \notin (\mu_1, C  \log^{1/\beta} n), \ \text{if {\normalfont Case (i)-b of Assumption \ref{assum_D} holds}}.  
\end{equation}
%
\end{lemma}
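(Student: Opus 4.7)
The plan is to combine the averaged local law from Proposition \ref{eq_propooursidebulk} with the bound $\operatorname{Im} m_n(z) = \rO(\eta/E^2)$ from Lemma \ref{lem: basic bounds} in a proof by contradiction, focusing on Case (i)-a; Case (i)-b is parallel after formally setting $\alpha=\infty$ and replacing the rates accordingly. A soft a priori step first disposes of the upper endpoint: on $\Omega$ one has $\xi^2_{(1)} \leq Cn^{1/\alpha}\log n$ by \eqref{def1}, and the crude estimate $\lambda_1(Q) \leq \|T\|^2\|X\|^2\xi^2_{(1)}$ together with the standard high-probability bound $\|X\| = \rO_{\prec}(1)$ (valid both for columns sampled from $\mathbb{S}^{p-1}$ and for i.i.d.\ entries of variance $n^{-1}$) shows that, after enlarging $C$, every eigenvalue of $Q$ lies below $Cn^{1/\alpha}\log n$. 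It therefore suffices to exclude eigenvalues from $(\mu_1, Cn^{1/\alpha}\log n)$.

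For the near-edge sub-range $\lambda := \lambda_i(Q) \in (\mu_1, \mu_1+\mathtt{C}d_1)$, I choose $z = \lambda + \ri\eta$ with $\eta = n^{-2/3}$, so $z\in\widetilde{\mathbf D}_u$. The Cauchy-kernel representation forces $\operatorname{Im} m_Q(z) \geq (p\eta)^{-1} \asymp n^{-1/3}$. On the other hand, Proposition \ref{eq_propooursidebulk} yields $m_Q(z) = m_n(z) + \rO_{\prec}(n^{-1/2-2/\alpha})$, and Lemma \ref{lem: basic bounds} bounds $\operatorname{Im} m_n(z) \lesssim \eta/\mu_1^2 \lesssim n^{-2/3-2/\alpha+o(1)}$ via $\mu_1 \gtrsim n^{1/\alpha}/\log n$. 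Combining, $\operatorname{Im} m_Q(z) = \rO_{\prec}(n^{-1/2-2/\alpha})$, and since $-1/2-2/\alpha < -1/3$ for every $\alpha \geq 2$, the upper bound contradicts the lower bound for large $n$.

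The complementary far sub-range $\lambda \in (\mu_1+\mathtt{C}d_1, Cn^{1/\alpha}\log n)$ is not reached by $\widetilde{\mathbf D}_u$ in the real part, so I use a rank-one perturbation. Isolating the column of $Y$ associated with $\xi^2_{(1)}$ and writing $Q = Q^{(1)} + \xi^2_{(1)} T\mathbf x_{(1)}\mathbf x_{(1)}^* T^*$, any eigenvalue $\lambda > \|Q^{(1)}\|$ satisfies the secular equation
\[
\xi^2_{(1)}\mathbf x_{(1)}^* T^*(\lambda I - Q^{(1)})^{-1} T\mathbf x_{(1)} = 1.
\]
Combining Lemma \ref{lem:large deviation} with $\|(\lambda I - Q^{(1)})^{-1}\| \lesssim \log n/\mu_1$, the averaged local law for the minor (Remark \ref{rem_keyrem}), and the elementary $\rO(n^{-1})$ comparison $m_{1n}^{(1)}\approx m_{1n}$, the LHS reduces to $-\xi^2_{(1)} m_{1n}(\lambda) + \rO_{\prec}(n^{-1/2+o(1)})$. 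A Taylor expansion around $\mu_1$ using $m_{1n}'(E)\asymp \mu_1^{-2}$ from Lemma \ref{lem: basic bounds} together with the defining identity $1 + (\xi^2_{(1)} + d_1)m_{1n}(\mu_1) = 0$ in \eqref{eq: def of mu_1} yields
\[
1 + \xi^2_{(1)} m_{1n}(\mu_1 + \mathtt{C}d_1) \asymp \mathtt{C}\, d_1/\mu_1 \asymp \mathtt{C}\, n^{-\epsilon}\log n,
\]
where $\epsilon$ is the constant from \eqref{eq_firstddefinition} with $d_1 = n^{1/\alpha-\epsilon}$. For $\epsilon < 1/2$ this dominates the $\rO_{\prec}(n^{-1/2+o(1)})$ remainder, and monotonicity of $m_{1n}(E)$ above the spectrum propagates the contradiction to all larger $\lambda$.

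The main obstacle is precisely the far sub-range: Proposition \ref{eq_propooursidebulk} and Lemma \ref{lem: basic bounds} are stated only on $\widetilde{\mathbf D}_u$, so one must (i) verify that $\lambda - \|Q^{(1)}\| \gtrsim \mu_1\log^{-1} n$ uniformly in $\lambda$, in order to control $\|(\lambda I - Q^{(1)})^{-1}\|$ and the concentration error in Lemma \ref{lem:large deviation}; and (ii) show that the averaged local law for $Q^{(1)}$ and the derivative estimate for $m_{1n}$ remain effective at real parts throughout $(\mu_1+\mathtt{C}d_1, Cn^{1/\alpha}\log n)$. The gap in (i) requires a tighter bound on $\|Q^{(1)}\|$ than the crude $C\xi^2_{(2)}$ when $\varphi$ is small; a natural route is to bootstrap from a rough version of the present lemma applied inductively to the minor, yielding $\|Q^{(1)}\| \approx \varphi\xi^2_{(2)}$, and to combine this with the separation $\xi^2_{(1)} - \xi^2_{(2)} \gtrsim n^{1/\alpha}/\log n$ from \eqref{def1}.
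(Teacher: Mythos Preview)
Your near-edge argument—take $z=\hat\lambda+\ri n^{-2/3}$, lower-bound $\operatorname{Im} m_Q(z)\geq(p\eta)^{-1}\asymp n^{-1/3}$, and upper-bound it by $\rO_{\prec}(n^{-1/2-2/\alpha})$ via Proposition~\ref{eq_propooursidebulk} together with $\operatorname{Im} m_n(z)=\rO(\eta E^{-2})$ from Lemma~\ref{lem: basic bounds}—is exactly the paper's proof, and your soft a priori step ruling out $\lambda_1(Q)>Cn^{1/\alpha}\log n$ is the implicit reason the upper endpoint of the interval appears at all.

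Where you diverge is the ``far sub-range'' $\hat\lambda-\mu_1>\mathtt{C}d_1$. The paper does not split this off: it simply asserts $z\in\widetilde{\mathbf D}_u$ for every hypothetical $\hat\lambda$ in the interval and runs the same contradiction throughout. You are right that $\widetilde{\mathbf D}_u$ as literally written only allows $E-\mu_1\leq\mathtt{C}d_1$, so the concern is formally legitimate. But the rank-one perturbation detour is unnecessary: the proofs of Lemma~\ref{lem: basic bounds} and Proposition~\ref{eq_propooursidebulk} never use the sharp constraint $|E-\mu_1|\leq\mathtt{C}d_1$—they only use $E\asymp\xi_{(1)}^2$ (this is \eqref{eq_Econtrol}), which holds up to $\log^{\pm 2}n$ for every $E\in(\mu_1,Cn^{1/\alpha}\log n)$ on $\Omega$. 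The rates $\eta E^{-2}$ and $n^{-1/2-2/\alpha}$ absorb these logarithms, so the same $\operatorname{Im} m_Q$ contradiction covers the whole interval once you enlarge the domain in those two inputs. That is precisely what your obstacle~(ii) is pointing toward, and it resolves the issue in one line.

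Your perturbative alternative instead requires $\lambda>\|Q^{(1)}\|$ a priori, and in the paper the bound $\lambda_1^{(1)}<\mu_2<\mu_1$ is a \emph{consequence} of the present lemma applied to the minor (see \eqref{eq_keyused} and \eqref{eq: eigen gap}). The inductive bootstrap you sketch for obstacle~(i) can be made to work, but it inverts the paper's logical order and duplicates effort that the simple domain extension already avoids.
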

\begin{proof}
Due to similarity, we focus our arguments on (\ref{eq_boundestimateone}). We prove the results by contradiction. 
Assume there is an eigenvalue of $Q$ lies in the interval as in (\ref{eq_boundestimateone}), denote as $\widehat{\lambda}$. Let $z=\hat{\lambda}+\ri n^{-2/3}.$ Since $z\in\widetilde{\mathbf{D}}_u \subset \mathbf{D}_u$ as in (\ref{eq_spectraldomainone}), by Lemma \ref{lem: basic bounds}, we obtain $\operatorname{Im}m_n(z)=\eta \widehat{\lambda}^{-2}$. According to \eqref{eq_mu1part} and (\ref{def1}), we have that on the event $\Omega$
\begin{equation}\label{eq_lowerboundmu1}
\mu_1 \gtrsim n^{1/\alpha} \log^{-1} n.
\end{equation} 
\quad Together with (\ref{eq_averagedlawused}), we readily see that 
\begin{equation}\label{eq: value of m_W outside spectrum}
    \begin{split}
        \operatorname{Im}m_{Q}(z)&=\operatorname{Im}m_n(z)+\operatorname{Im}(m_{Q}(z)-m_n(z))\\
        &\prec n^{-2/\alpha-2/3}+n^{-1/2-2/\alpha} \prec n^{-1/2-2/\alpha}.
    \end{split}
\end{equation}
On the other hand, we have
\begin{equation*}
\operatorname{Im}m_{Q}(z)=\frac{1}{n}\sum_i\frac{\eta}{(\lambda_i-\widehat{\lambda})^2+\eta^2}\ge\frac{1}{n\eta}=n^{-1/3},
\end{equation*}
which contradicts \eqref{eq: value of m_W outside spectrum}. Therefore, there is no eigenvalue in this interval. Similarly, we can prove (\ref{eq_boundestimatetwo}). The only difference is that (\ref{eq_lowerboundmu1}) should be replaced by $\mu_1 \gtrsim  \log^{1/\beta} n$ according to (\ref{def3}) so that the error rate in (\ref{eq: value of m_W outside spectrum}) should be updated to $n^{-1/2}.$ This completes the proof. 
\end{proof}

\quad Armed with the above lemma, we can proceed to the second step to conclude the proof of Theorem \ref{thm_unboundedcaselocallaw}.  In what follows, we first provide the proof of Proposition \ref{eq_propooursidebulk} in Section \ref{prop_subsubsubsubsubsub}. After that, we prove Theorem \ref{thm_unboundedcaselocallaw} in Section \ref{sec_proofa2}.

\subsubsection{Proof of Proposition \ref{eq_propooursidebulk}}\label{prop_subsubsubsubsubsub}


We first prepare two lemmas. The first one is to establish Proposition \ref{eq_propooursidebulk} for large scale of $\eta.$
%
%
%

\begin{lemma}[Average local law for large $\eta$]\label{lem: average local law for large eta} Proposition \ref{eq_propooursidebulk} holds when $\eta=\mathtt{C} \mu_1.$
\end{lemma}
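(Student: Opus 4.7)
At $\eta=\mathtt{C}\mu_1$ the spectral parameter is macroscopically far from the spectrum of $Q$, so $\|G(z)\|\le\eta^{-1}\asymp\mu_1^{-1}$ trivially, and Lemma \ref{lem: basic bounds} yields $|m_{1n}(z)|\asymp E^{-1}\asymp\mu_1^{-1}$ together with $\operatorname{Im} m_{1n}(z)\lesssim \eta E^{-1}\asymp \mathtt{C}$. On the event $\Omega$ of Lemma \ref{lem_probabilitycontrol} one has $\xi_i^2\le\xi_{(1)}^2\lesssim\mu_1$, so $|1+\xi_i^2 m_{1n}(z)|$ is uniformly bounded above and away from $0$ for all $i$ once $\mathtt{C}$ is taken sufficiently large (the real part is $1+\mathrm{O}(\mathtt{C}^{-2})$ and the imaginary part is $\mathrm{O}(\mathtt{C}^{-1})$). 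This uniform denominator stability is what makes the top of the domain the natural anchor where Proposition \ref{eq_propooursidebulk} can be established by a single pass of large-deviation bounds, with no bootstrap over $\eta$.

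For the entry-wise law I would apply the Schur complement identity in Lemma \ref{lem: Resolvent} to write $\mathcal G_{ii}(z)=-[z(1+\mathbf{y}_i^* G^{(i)}(z)\mathbf{y}_i)]^{-1}$. Independence of $\mathbf{y}_i=\xi_i\Sigma^{1/2}\mathbf{x}_i$ from $G^{(i)}$ and Lemma \ref{lem:large deviation} give
\[
\mathbf{y}_i^* G^{(i)}(z)\mathbf{y}_i=\xi_i^2 m_1^{(i)}(z)+\mathrm{O}_\prec\!\bigl(\xi_i^2 n^{-1}\|\Sigma^{1/2}G^{(i)}\Sigma^{1/2}\|_F\bigr),
\]
and Ward's identity \eqref{lem:Wald} combined with Lemma \ref{lem: basic bounds} bounds the Frobenius norm by $\|\Sigma^{1/2}G^{(i)}\Sigma^{1/2}\|_F^2=n\eta^{-1}\operatorname{Im} m_1^{(i)}(z)\lesssim n\mu_1^{-2}$. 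The minor-removal estimate \eqref{lem:trace_difference} replaces $m_1^{(i)}$ by $m_1$ at cost $\mathrm{O}(n^{-1}\mu_1^{-1})$, and a single continuity step on the self-consistent equation \eqref{eq_functionFequal} (legitimate by the denominator stability) replaces $m_1$ by $m_{1n}$. Since $\mu_1\asymp n^{1/\alpha}$ in the polynomial case (resp.\ $\mu_1\asymp\log^{1/\beta}n$ in the exponential case, where one sets $1/\alpha=0$), one arrives at
\[
\mathcal G_{ii}(z)=-\frac{1}{z(1+\xi_i^2 m_{1n}(z))}+\mathrm{O}_\prec(n^{-1/2-1/\alpha}).
\]
Off-diagonal entries are treated in parallel via the formula $\mathcal G_{ij}=z\mathcal G_{ii}\mathcal G_{jj}^{(i)}\mathbf{y}_i^* G^{(ij)}\mathbf{y}_j$ and the bilinear case of Lemma \ref{lem:large deviation}.

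The averaged statements then follow by summing the entry-wise estimates against the weights $\xi_i^2/n$, $\sigma_i/n$, and $1/p$ and matching the outcomes with the equations defining $m_{2n}, m_{1n}, m_n$ in Definition \ref{defn_couplesystem}. The link between $m_1,m_2,m_Q$ is furnished by the resolvent identity $Y^* G Y = I+z\mathcal G$; tracing this produces a closed pair of equations structurally identical to \eqref{eq_systemequationsm1m2}, and an independent fluctuation-averaging argument applied to the sum $n^{-1}\sum_i\xi_i^2[\mathcal G_{ii}-\text{deterministic}]$ (and analogously for $n^{-1}\operatorname{tr}(G\Sigma)$) upgrades the naive averaged bound inherited from the entry-wise law to the sharper $\mathrm{O}_\prec(n^{-1/2-2/\alpha})$ required for $m_1$ and $m_Q$. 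The extra factor $n^{-1/\alpha}\asymp\mu_1^{-1}$ reflects the resolvent norm at this scale and the cancellation built into the averaged equation rather than the pointwise one.

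The main technical difficulty is scale-tracking: because $\xi_{(1)}^2\asymp\mu_1$ is divergent, the natural unit for the error is $\mu_1^{-1}$ rather than $1$, and factors of $\xi_i^2$ must be carried through every application of the large-deviation bounds. Once one verifies on $\Omega$ that $|1+\xi_i^2 m_{1n}(z)|$ is uniformly bounded above and below at $\eta=\mathtt{C}\mu_1$ by choosing $\mathtt{C}$ sufficiently large, the remaining calculations reduce to familiar anchor-estimate manipulations in the spirit of \cite{Ding&Yang2018, knowles2017anisotropic, yang2019edge}.
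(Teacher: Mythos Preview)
Your outline captures the correct anchor philosophy---at $\eta=\mathtt{C}\mu_1$ the resolvent is trivially bounded and a single pass of Schur plus large-deviation estimates suffices---but two steps are misplaced.

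First, the ``single continuity step on the self-consistent equation'' you invoke to replace $m_1$ by $m_{1n}$ in the entry-wise display is circular as stated: you need $|m_1-m_{1n}|$ small to justify that substitution, which is precisely the averaged law you are trying to prove. The paper reverses the order. It first derives approximate \emph{coupled} equations expressing $m_2$ in terms of $m_1$ (Schur, as in your first display) and $m_1$ in terms of $m_2$ (via the resolvent expansion $G=-z^{-1}(I+m_2\Sigma)^{-1}+R_1+R_2$, not the identity $Y^*GY=I+z\mathcal G$ you cite), with errors $\mathrm{O}_\prec(n^{-1/2-1/\alpha})$ and $\mathrm{O}_\prec(n^{-1/2-2/\alpha})$ respectively. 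Comparing with the deterministic system \eqref{eq_systemequationsm1m2} and using the denominator stability $|1+\xi_i^2 m_1|,\,|1+\sigma_i m_2|\asymp 1$ then yields both averaged laws simultaneously by a stability/fixed-point argument; only afterwards is the entry-wise law established, with the substitution $m_1\to m_{1n}$ now legitimate. (Incidentally, your Ward bound on $\|\Sigma^{1/2}G^{(i)}\Sigma^{1/2}\|_F$ requires $\operatorname{Im} m_1^{(i)}\lesssim\mu_1^{-1}$, which is a statement about the random $m_1$, not about $m_{1n}$ as in Lemma~\ref{lem: basic bounds}; at this $\eta$ it follows from the trivial bound $\operatorname{Im} m_1\le\eta^{-1}$, and the paper instead uses the cruder $\|G^{(i)}\Sigma\|_F\le\|G^{(i)}\|\|\Sigma\|_F$.)

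Second, the fluctuation-averaging step you propose to upgrade $m_1-m_{1n}$ from $n^{-1/2-1/\alpha}$ to $n^{-1/2-2/\alpha}$ is unnecessary here. No cancellation argument is used in the paper: the extra factor $n^{-1/\alpha}$ comes for free from the algebraic structure. The error term in the $m_1$-equation already carries a $z^{-1}$ (since $G\approx -z^{-1}(I+m_2\Sigma)^{-1}$), and when one writes $m_1-m_{1n}$ in terms of $m_2-m_{2n}$ through \eqref{eq_systemequationsm1m2} the coupling again brings in $|z|^{-1}\asymp n^{-1/\alpha}$. The same mechanism delivers the sharper rate for $m_Q$.
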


\begin{proof}
As mentioned before, due to similarity, we focus on the study of the separable covariance i.i.d. model as in Case (2) of Assumption \ref{assum_model} when $\xi^2$ satisfies (\ref{ass3.1}). The main differences from the other cases will be explained in the end of the proof. In what follows, without loss of generality, we assume that $\xi_1^2 \geq \xi_2^2 \geq \cdots \geq \xi_n^2.$

Note that according to (\ref{eq_mu1part}), (\ref{def1}) and the definition of $d_1$ in (\ref{eq_firstddefinition}), on the event $\Omega,$ we have that 
\begin{equation}\label{eq_Econtrol}
E \asymp \xi_1^2. 
\end{equation}
When $\eta=\mathtt{C} E$, we have $\max\left\{\|G^{(\mathcal{T})}\|,\|\mathcal{G}^{(\mathcal{T})}\|\right\}\le\eta^{-1}=\mathtt{C}^{-1} E^{-1}$ for any finite $\mathcal{T}\subset\{1,\dots, n\}$ by definition.  The main idea is to explore the relation of $m_1$ and $m_2$ using Lemma \ref{lem: Resolvent}. We start with $m_2.$ By Lemma \ref{lem: Resolvent} and definition of $m_2$ in (\ref{eq_m1m2}), we have
\begin{gather}\label{eq: decomp m_2}
    {
    m_2=\frac{1}{n}\sum_{i=1}^n\frac{\xi^2_i}{-z-z\mathbf{y}^{*}_i G^{(i)}\mathbf{y}_i}=\frac{1}{n}\sum_{i=1}^n\frac{\xi^2_i}{-z(1+\xi^2_in^{-1}{\rm tr}G^{(i)}\Sigma+Z_i)},}\\
    Z_i=\mathbf{y}^{*}_i G^{(i)}\mathbf{y}_i-\xi^2_in^{-1}{\rm tr}G^{(i)}\Sigma. \nonumber
\end{gather}
As $\mathbf{y}_i$ is independent of $G^{(i)}$, by (1) of  Lemma \ref{lem:large deviation}, we see that 
\begin{equation}\label{eq: bound for Z}
    Z_i\prec\frac{\xi^2_i}{n}\|G^{(i)}\Sigma\|_F\le\frac{\xi^2_i}{n}\|G^{(i)}\|\|\Sigma\|_F\prec\frac{\xi^2_i}{\sqrt{n}\eta}.
\end{equation}
Moreover, using the definition of $m_1$ in (\ref{eq_m1m2}) and the second resolvent identity, we readily obtain that for some constant $C>0$ 
\begin{equation}\label{eq_m1approximate}
    \frac{1}{n}{\rm tr}(G^{(i)}\Sigma)-m_1(z)=\frac{1}{n}\mathbf{y}_i^{*}G \Sigma G^{(i)}\mathbf{y}_i\le C\frac{\xi^2_i}{n \eta^2}.
\end{equation}

Moreover, by (\ref{eq_Econtrol}) and the form of $\eta,$ we find that for some constant $C>0$
\begin{equation*}
|1+\xi^2_i m_1|\ge 1-C \mathtt{C}^{-1}>0, 
\end{equation*} 
when $\mathtt{C}>0$ is chosen to be sufficiently large. 
Together with (\ref{eq: decomp m_2}), we obtain
\begin{equation}\label{eq: m_2 by m_1}
    m_2=\frac{1}{n}\sum_{i=1}^n\frac{\xi^2_i}{-z(1+\xi^2_im_1+\rO_{\prec}(\frac{\xi^2_i}{\sqrt{n}\eta}))}=\frac{1}{n}\sum_{i=1}^n\frac{\xi^2_i}{-z(1+\xi^2_im_1)}+\rO_{\prec}(n^{-1/2-1/\alpha}),
\end{equation}
where we used (\ref{def1}). 

Then we work with $m_1.$ Decompose that 
\[
Q-zI=\sum_{i=1}^n\mathbf{y}_i\mathbf{y}_i^{*}+zm_2(z)\Sigma-z(I+m_2(z)\Sigma).
\]
Applying resolvent expansion to the order one, we obtain that
\[
G=-z^{-1}(I+m_2(z)\Sigma)^{-1}+z^{-1}G\left(\sum_{i=1}^n\mathbf{y}_i\mathbf{y}_i^{*}+zm_2(z)\Sigma\right)(I+m_2(z)\Sigma)^{-1}.
\]
Furthermore, using Shernman-Morrison formula, we have that
\begin{equation}\label{eq_usefulformula}
G\mathbf{y}_i=\frac{G^{(i)}\mathbf{y}_i}{1+\mathbf{y}_i^{*} G^{(i)}\mathbf{y}_i}.
\end{equation}
Combining the above two identities and Lemma \ref{lem: Resolvent}, we can further write
\begin{equation}\label{eq: decomp of mathcal_G}
\begin{split}
    G&=-z^{-1}(I+m_2(z)\Sigma)^{-1}+\left[z^{-1}\sum_{i=1}^n\frac{G^{(i)}(\mathbf{y}_i\mathbf{y}_i^{*}-n^{-1}\xi^2_i\Sigma)}{1+\mathbf{y}_i^{*}G^{(i)}\mathbf{y}_i}(I+m_2(z)\Sigma)^{-1} \right]\\
    &+\left[z^{-1}\frac{1}{n}\sum_{i=1}^n\frac{(G^{(i)}-G)\xi^2_i\Sigma}{1+\mathbf{y}_i^{*}G^{(i)}\mathbf{y}_i}(I+m_2(z)\Sigma)^{-1} \right]\\
    &:=-z^{-1}(I+m_2(z)\Sigma)^{-1}+R_1+R_2.
\end{split}
\end{equation}
In what follows, we control the two error terms $R_1,R_2.$ For $R_1$, we notice that 
\begin{equation}\label{eq: decomp R_1}
    \begin{split}
    &\frac{z}{n}{\rm tr}(R_1\Sigma)=\frac{1}{n}\sum_i{\rm tr}\left(\frac{G^{(i)}(\mathbf{y}_i\mathbf{y}_i^{*}-n^{-1}\xi^2_i\Sigma)}{1+\mathbf{y}_i^{*} G^{(i)}\mathbf{y}_i}(I+m_2^{(i)}\Sigma)^{-1}\Sigma\right)\\
    &+\frac{1}{n}\sum_i{\rm tr}\left(\frac{ G^{(i)}(\mathbf{y}_i\mathbf{y}_i^{*}-n^{-1}\xi^2_i\Sigma)}{1+\mathbf{y}_i^{*} G^{(i)}\mathbf{y}_i}(I+m_2\Sigma)^{-1}(m_2^{(i)}-m_2)\Sigma(I+m^{(i)}_2\Sigma)^{-1}\Sigma\right)\\
    &:=\mathtt R_{11}+\mathtt R_{12}.
\end{split}
\end{equation}
Since $\|\mathcal G^{(i)}\| \le\eta^{-1}$, using (\ref{eq_Econtrol}), (\ref{def1}), with high probability, we have that for some constant $C>0,$ 
\begin{equation}\label{eq_eqcontrolcontrolcontrolcontrol}
|m_2^{(i)}(z)|\le\frac{1}{n}\sum_{j\neq i}\xi^2_j| \mathcal G^{(i)}_{jj}|\le C \frac{\log^2 n}{n^{1/\alpha}}. 
\end{equation}
Moreover, according to (\ref{eq: bound for Z}), with high probability, when $n$ is sufficiently large, we have that for some constant $C>0$
\begin{equation}\label{eq_controlower}
|1+\mathbf{y}_i^{*} G^{(i)}\mathbf{y}_i| \asymp  |1+\xi_i^2 n^{-1} {\rm tr} G^{(i)} \Sigma | \geq 1- C \mathtt{C} ^{-1}>0,
\end{equation}
whenever $\mathtt{C}$ is chosen sufficiently large. Consequently, for all $i$, we have that 
\begin{equation}\label{eq_controncontrolcontrol}
\begin{split}
&   {\rm tr}\left(\frac{ G^{(i)}(\mathbf{y}_i\mathbf{y}_i^{*}-n^{-1}\xi^2_i\Sigma)}{1+\mathbf{y}_i^{*} G^{(i)}\mathbf{y}_i}(I+m_2^{(i)}\Sigma)^{-1}\Sigma\right)  \asymp {\rm tr} \left(\xi^2_i G^{(i)}(\mathbf{u}_i\mathbf{u}_i^{*}-n^{-1}I) (I+m_2^{(i)}\Sigma)^{-1} \Sigma^2 \right)\\
   &= \xi_i^2 \left( \mathbf{u}_i^* G^{(i)} (I+m_2^{(i)}\Sigma)^{-1} \Sigma^2 \mathbf{u}_i -n^{-1} {\rm tr} \left( G^{(i)} (I+m_2^{(i)}\Sigma)^{-1} \Sigma^2 \right) \right)\\
 & \prec \xi_i^2 \frac{1}{\eta\sqrt{n}},  
\end{split}
\end{equation}
where in the third step we used (1) of Lemma \ref{lem:large deviation}.  Together with (\ref{def1}) and (\ref{eq_Econtrol}), we find that 
\begin{equation*}
\mathtt{R}_{11} \prec n^{-1/2-1/\alpha}. 
\end{equation*}
For $\mathtt R_{12}$, using the definition in (\ref{eq_m1m2}) and the identity in Lemma \ref{lem: Resolvent} and the definition of $\mathcal{G}^{(i)}$ (see (\ref{eq_trivialcontrol}) below), we see that  
\begin{equation}\label{eq_decompositionleavoneout}
m_2(z)-m^{(i)}_2(z)=\frac{1}{n}\sum_{j=1}^n \xi^2_j(\mathcal G_{jj}-\mathcal G^{(i)}_{jj})=\frac{1}{n}\sum_{j\neq i}\xi^2_j\frac{ \mathcal G_{ji} \mathcal G_{ij}}{\mathcal G_{ii}}+\frac{\xi_i^2(\mathcal{G}_{ii}-|z|^{-1})}{n}.
\end{equation}
In addition, using Lemma \ref{lem: Resolvent} and a discussion similar to (\ref{eq_controlower}), we conclude that  
\begin{equation*}
  \frac{1}{\mathcal G_{ii}(z)}=-z-z\mathbf{y}_i^{*} G^{(i)}\mathbf{y}_i\prec |z|.
\end{equation*}
Moreover, by Lemmas \ref{lem: Resolvent} and \ref{lem:large deviation}, we have that 
\begin{equation*}  
  \mathcal G_{ij}(z)=z\mathcal G_{ii}(z)\mathcal G^{(i)}_{jj}(z)\mathbf{y}_i^{*} G^{(ij)}\mathbf{y}_j\prec |z|\eta^{-2} |\xi_i\xi_j| n^{-1}\|\mathcal{G}^{(ij)}\|_F\prec n^{-1/2}|z|\eta^{-3}|\xi_i\xi_j| ,\quad i\neq j.  
\end{equation*}
Combining the above bounds with (\ref{def1}), we see that
\begin{equation*}
 m_2(z)-m^{(i)}_2(z) \prec n^{-1-1/\alpha}.
\end{equation*}
Together with (\ref{eq_controlower}) and (\ref{eq_controncontrolcontrol}), we arrive at 
\begin{equation*}
\mathtt R_{12} \prec  \frac{1}{\eta^2 n^{3/2}}.  
\end{equation*}
Using the above bounds, we see that 
\[
\frac{z}{n}{\rm tr}(R_1\Sigma)\prec n^{-1/2-1/\alpha}.
\]

For $R_2$, applying the Sherman–Morrison formula to $((G^{(i)})^{-1}+\mathbf{y}_i \mathbf{y}_i^*)^{-1},$ we obtain that 
\begin{gather}\label{eq_simimimimimi}
\begin{split}
    \frac{1}{n}\left|{\rm tr}\left(\frac{(G^{(i)}-G)\Sigma(I+m_2\Sigma)^{-1}\Sigma}{1+\mathbf{y}_i^{*} G^{(i)}\mathbf{y}_i}\right) \right|&=\frac{1}{n}\left|\frac{\mathbf{y}_i^{*} G^{(i)}\Sigma(I+m_2\Sigma)^{-1}\Sigma G \mathbf{y}_i}{1+\mathbf{y}_i^{*}G^{(i)}\mathbf{y}_i}\right|\\
    &\prec \frac{\xi_i^2}{n \eta^2},
\end{split}
\end{gather}
where in the second step we used Lemma \ref{lem:large deviation} and (\ref{eq_controlower}) and a discussion similar to (\ref{eq_eqcontrolcontrolcontrolcontrol}). Together with the definition of $R_2$ in (\ref{eq: decomp of mathcal_G}), by (\ref{def1}), we find that
\[
\begin{split}
  \frac{z}{n} \left|{\rm tr}(R_2\Sigma) \right|
  \le\frac{1}{n^2}\sum_i\frac{\xi^2_i}{\eta^2 }\prec n^{-1-2/\alpha}.
\end{split}
\]
As a result, in light of the definition $m_1$ in (\ref{eq_m1m2}),
we have
\begin{equation}\label{eq: m_1 by m_2}
\begin{split}
    m_1&=\frac{1}{n}{\rm tr}(G(z)\Sigma)=-z^{-1}\frac{1}{n}{\rm tr}((I+m_2(z)\Sigma)^{-1}\Sigma)+\rO_{\prec}(n^{-1/2-2/\alpha})\\
    &=-\frac{1}{n}\sum_{i=1}^p\frac{\sigma_i}{z(1+m_2\sigma_i)}+\rO_{\prec}(n^{-1/2-2/\alpha}).
\end{split}
\end{equation}

We first control $m_2(z)-m_{2n}(z).$ Recall $m_{2n}(z)$ in (\ref{eq_systemequationsm1m2}). Combing \eqref{eq: m_2 by m_1} and \eqref{eq: m_1 by m_2},  we have that 
\begin{align}\label{eq_closenessm2m2n}
    m_2(z)-m_{2n}(z)&=\frac{1}{n}\sum_{i=1}^n \left(\frac{\xi^2_i}{-z(1+\xi^2_im_1)}+\frac{\xi^2_i}{z(1+\xi^2_im_{1n})} \right)+\rO_{\prec}(n^{-1/2-1/\alpha})\\
    &=\frac{1}{n}\sum_{i=1}^n\frac{\xi^4_i(m_1-m_{1n})}{z(1+\xi^2_im_1)(1+\xi^2_im_{1n})}+\rO_{\prec}(n^{-1/2-1/\alpha})\nonumber \\
    &=\left(\frac{1}{n}\sum_{i=1}^n\frac{\xi^4_i}{z(1+\xi^2_im_1)(1+\xi^2_im_{1n})}\right)\left(\frac{1}{n}\sum_{i=1}^n\frac{\sigma_i^2(m_2-m_{2n})}{z(1+\sigma_im_2)(1+\sigma_im_{2n})}\right)+\rO_{\prec}(n^{-1/2-1/\alpha}). \nonumber
\end{align}
By a discussion similar to (\ref{eq_eqcontrolcontrolcontrolcontrol}) and (\ref{eq_controlower}) with high probability, when $n$ is sufficiently large, we have that  
%
\[
\begin{split}
  |m_2-m_{2n}|&=\rO\left(\frac{1}{n}\sum_{i=1}^n\frac{\xi^4_i}{z^2}|m_2-m_{2n}| \right)+\rO_{\prec}(n^{-1/2-1/\alpha})\\
  &=\rO(n^{-2/\alpha}|m_2-m_{2n}|)+\rO_{\prec}(n^{-1/2-1/\alpha}),
\end{split}
\]
where in the second step we used (\ref{def1}). Then we can conclude that $m_2-m_{2n}\prec n^{-1/2-1/\alpha}$. By a similar procedure, we also have $m_1-m_{1n}\prec n^{-1/2-2/\alpha}$.

Armed with the above two results, we proceed to finish the rest of the proof. Recall $m_Q$ in (\ref{eq_mq}). Using (\ref{eq: decomp of mathcal_G}) and a discussion similar to \eqref{eq: m_1 by m_2}, one can see that 
\begin{equation}\label{eq_verbalttt}
\begin{split}
    m_{\mathcal{Q}}&=\frac{1}{p}{\rm tr}(G(z))=-\frac{1}{p}\sum_{i=1}^p\frac{1}{z(1+m_2\sigma_i)}+\rO_{\prec}(n^{-1/2-2/\alpha})\\
    &=-\frac{1}{p}\sum_{i=1}^p\frac{1}{z(1+m_{2n}\sigma_i)}+\frac{1}{p}\sum_{i=1}^p\frac{(m_2-m_{2n})\sigma_i}{z(1+m_2\sigma_i)(1+m_{2n}\sigma_i)}+\rO_{\prec}(n^{-1/2-2/\alpha})\\
    &=m_{n}+\rO_{\prec}(n^{-1/2-2/\alpha}),
\end{split}
\end{equation}
where in the last step we recall $m_n(z)$ in (\ref{eq_systemequationsm1m2}). Finally, for the control of the matrix $\mathcal G,$ for the diagonal entries, by Lemma \ref{lem: Resolvent} and a discussion similar to  \eqref{eq: bound for Z} and \ref{eq_m1approximate},  we have
\[
\begin{split}
    \mathcal G_{ii}&=-\frac{1}{z(1+\mathbf{y}^{*}_i G^{(i)}\mathbf{y}_i)}=-\frac{1}{z(1+\xi^2_in^{-1}{\rm tr} G^{(i)}\Sigma+\rO_{\prec}(\frac{\xi^2_i}{\sqrt{n}\eta}))}\\
    &=-\frac{1}{z(1+\xi^2_im_1+\rO_{\prec}(\frac{\xi^2_i}{\sqrt{n}\eta}))}=-\frac{1}{z(1+\xi^2_im_{1n})}+\rO_{\prec}(n^{-1/2-1/\alpha}).
\end{split}
\]
For off-diagonal entries, together with Lemmas \ref{lem: Resolvent} and \ref{lem:large deviation}, we have
\[
\begin{split}
    |\mathcal G_{ij}|&\le|z||\mathcal G_{ii}||\mathcal G^{(i)}_{ii}||\mathbf{y}^{*}_i G^{(ij)}\mathbf{y}_j| \prec n^{-1/2-2/\alpha}. 
\end{split}
\]
This completes the proof when (\ref{ass3.1}) holds. For the case (\ref{ass3.2}), the main difference is to use the estimates of (\ref{def3}) instead of (\ref{def1}) whenever it is needed, for example, (\ref{eq: decomp m_2}). We omit further details.

For the elliptical model as in Case (1) of Assumption \ref{assum_model}, the discussions are similar by using the definitions in (\ref{eq_systemequationsm1m2elliptical}) and (\ref{eq_m1m2}) with (2) of Lemma \ref{lem:large deviation}. The main difference is that (\ref{eq: decomp of mathcal_G}) should be replaced by  
\begin{equation*}
\begin{split}
    G&=-z^{-1}(I+m_2(z)\Sigma)^{-1}+\left[z^{-1}\sum_{i=1}^n\frac{G^{(i)}(\mathbf{y}_i\mathbf{y}_i^{*}-p^{-1}\xi^2_i\Sigma)}{1+\mathbf{y}_i^{*}G^{(i)}\mathbf{y}_i}(I+m_2(z)\Sigma)^{-1} \right]\\
    &+\left[z^{-1}\frac{1}{p}\sum_{i=1}^n\frac{(G^{(i)}-G)\xi^2_i\Sigma}{1+\mathbf{y}_i^{*}G^{(i)}\mathbf{y}_i}(I+m_2(z)\Sigma)^{-1} \right].
\end{split}
\end{equation*}
The rest can then be proved verbatim with some minor changes. We omit further details due to similarity. This completes our proof.

\end{proof}
%

\quad  The second component is to prove Proposition \ref{eq_propooursidebulk} under a priori control of the resolvent which is summarized in the following lemma.

\begin{lemma} \label{lem: self improvement}
Proposition \ref{eq_propooursidebulk} holds if (\ref{eq_entrywiselocallaw}) holds uniformly for $z \in \widetilde{\mathbf{D}}_{u}.$ 
\end{lemma}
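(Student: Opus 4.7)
The plan is to deduce each averaged estimate in Proposition \ref{eq_propooursidebulk} from the entrywise law (\ref{eq_entrywiselocallaw}) by direct summation, followed by a stability argument in the defining system (\ref{eq_systemequationsm1m2}). The gain of a factor $n^{-1/\alpha}$ that upgrades the error for $m_1$ and $m_Q$ relative to $m_2$ is of purely \emph{algebraic} origin: it comes from the prefactor $1/z$ in the defining equations combined with the lower bound $|z| \asymp \mu_1 \gtrsim n^{1/\alpha}\log^{-1} n$ enforced on $\widetilde{\mathbf{D}}_u$ by (\ref{eq_mu1part}) and (\ref{def1}).

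First I would handle $m_2$. Substituting the diagonal entrywise estimate into $m_2(z)=(1/n)\sum_i\xi_i^2\mathcal{G}_{ii}(z)$ and recognizing the resulting sum as $m_{2n}(z)$, the remainder is $(1/n)\sum_i \xi_i^2 \cdot \mathrm{O}_\prec(n^{-1/2-1/\alpha})$, which on the event $\Omega$ is $\mathrm{O}_\prec(n^{-1/2-1/\alpha})$ because $(1/n)\sum_i \xi_i^2=\mathrm{O}(1)$ by (\ref{def1}). Next, for $m_1$ and $m_Q$ I would reuse the resolvent decomposition (\ref{eq: decomp of mathcal_G}) established in the proof of Lemma \ref{lem: average local law for large eta}, but control its error terms $R_1,R_2$ on the full domain $\widetilde{\mathbf{D}}_u$ using the assumed entrywise bounds on $\mathcal{G}$ in place of the crude operator bound $\|G^{(i)}\|\le \eta^{-1}$. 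Taking traces against $\Sigma$ and $I$ respectively then yields
\begin{equation*}
m_1(z) = -\frac{1}{n}\sum_{i=1}^p\frac{\sigma_i}{z(1+m_2(z)\sigma_i)}+\mathrm{O}_\prec(n^{-1/2-2/\alpha}),
\end{equation*}
and the analogous identity for $m_Q$ matching (\ref{eq_verbalttt}).

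The stability step then compares these to the definitions of $m_{1n}$ and $m_n$. Subtracting and linearizing around $m_{2n}$ yields
\begin{equation*}
m_1-m_{1n}=\frac{1}{n}\sum_{i=1}^p\frac{\sigma_i^2(m_2-m_{2n})}{z(1+m_2\sigma_i)(1+m_{2n}\sigma_i)}+\mathrm{O}_\prec(n^{-1/2-2/\alpha}),
\end{equation*}
and an identical expression for $m_Q-m_n$ with $\sigma_i^2$ replaced by $\sigma_i$. On $\widetilde{\mathbf{D}}_u$, Lemma \ref{lem: basic bounds} gives $|1+\sigma_i m_2|\asymp 1$ and $|z|\asymp \mu_1$; together with the just-proved bound $|m_2-m_{2n}|\prec n^{-1/2-1/\alpha}$ this contributes an extra factor $|z|^{-1}\asymp n^{-1/\alpha}$, producing the targeted error $\mathrm{O}_\prec(n^{-1/2-2/\alpha})$.

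The main obstacle will be justifying the decomposition (\ref{eq: decomp of mathcal_G}) and controlling $R_1,R_2$ uniformly down to $\eta=n^{-2/3}$, where $\|G^{(i)}\|\le \eta^{-1}=n^{2/3}$ is far too weak to close the estimates. The remedy is to trade this operator bound for bilinear controls: for each $i$, the assumed entrywise bound together with the Schur identity $\mathbf{y}_i^*G^{(i)}\mathbf{y}_i=-1-z^{-1}\mathcal{G}_{ii}^{-1}$ and the Ward identity (\ref{lem:Wald}) give $\|G^{(i)}\Sigma^{1/2}\|_F^2=\eta^{-1}\operatorname{Im}\operatorname{Tr}(G^{(i)}\Sigma) = n\eta^{-1}\operatorname{Im} m_1^{(i)}$, which is controllable via the a priori entrywise bound and Lemma \ref{lem: basic bounds}. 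Substituting these sharper inputs into the large-deviation step (\ref{eq_controncontrolcontrol}) and the telescoping estimate (\ref{eq_decompositionleavoneout}) yields $R_1,R_2=\mathrm{O}_\prec(n^{-1/2-2/\alpha})$ on $\widetilde{\mathbf{D}}_u$, after which the remainder of the argument follows the template of Lemma \ref{lem: average local law for large eta} verbatim. For the elliptically distributed case, only the normalization in the large deviation estimates changes, via part (2) of Lemma \ref{lem:large deviation}; the structural steps above are unchanged. The case (b) of Assumption \ref{assum_D} is recovered by formally taking $\alpha=\infty$ in the above, using (\ref{def3}) in place of (\ref{def1}).
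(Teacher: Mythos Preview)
Your overall plan---direct summation for $m_2$, the decomposition (\ref{eq: decomp of mathcal_G}) for $m_1$ and $m_Q$, then a stability step in the spirit of (\ref{eq_closenessm2m2n})---is exactly the paper's strategy, and you have correctly identified the obstacle: controlling $R_1,R_2$ down to $\eta=n^{-2/3}$, where the crude operator bound $\|G^{(i)}\|\le\eta^{-1}$ is useless.

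The gap is in your proposed remedy. You want to bound $\|G^{(i)}\Sigma^{1/2}\|_F^2 = n\eta^{-1}\operatorname{Im} m_1^{(i)}$ via the Ward identity, and then claim $\operatorname{Im} m_1^{(i)}$ is ``controllable via the a priori entrywise bound and Lemma \ref{lem: basic bounds}.'' But the assumed law (\ref{eq_entrywiselocallaw}) is for the entries of the $n\times n$ resolvent $\mathcal{G}$, not for the $p\times p$ trace $m_1^{(i)}=n^{-1}\operatorname{tr}(G^{(i)}\Sigma)$, and Lemma \ref{lem: basic bounds} controls $\operatorname{Im} m_{1n}$, not $\operatorname{Im} m_1$. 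The Schur identity you invoke gives only $\mathbf{y}_i^*G^{(i)}\mathbf{y}_i \approx \xi_i^2 m_{1n}$, i.e.\ control of a single random quadratic form; passing from this to the trace $\operatorname{Im} m_1^{(i)}$ by concentration would itself require a bound on $\|G^{(i)}\Sigma^{1/2}\|_F$, which is precisely the quantity you are trying to estimate. The argument is circular as written.

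The paper breaks this circularity by avoiding Ward altogether: since $Q^{(i)}$ and $\mathcal{Q}^{(i)}$ share the same non-zero eigenvalues, $\|G^{(i)}\|_F^2$ and $\|\mathcal{G}^{(i)}\|_F^2$ differ only by $|n-p|/|z|^2$, which is harmless on $\widetilde{\mathbf{D}}_u$ (see (\ref{eq_standarddiscussion})). Then $\|\mathcal{G}^{(i)}\|_F^2=\sum_{j,k}|\mathcal{G}^{(i)}_{jk}|^2$ is bounded \emph{entry by entry} using the assumed law (\ref{eq_entrywiselocallaw}) together with the resolvent identities of Lemma \ref{lem: Resolvent} (yielding (\ref{eq_trivialcontrol})--(\ref{eq_nontrivialcontrol})). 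This gives $Z_i\prec\xi_i^2 n^{-1/2-1/\alpha}$ directly, and the same mechanism handles $T_i$ in (\ref{eq_cccccc11111}). Once you substitute this input, the rest of your outline---the bounds on $R_1,R_2$ and the stability closure for $m_1-m_{1n}$ and $m_Q-m_n$---goes through as you described.
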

\begin{proof}
As  before, due to similarity, we focus on the study of the separable covariance i.i.d. model as in Case (2) of Assumption \ref{assum_model} when $\xi^2$ satisfies (\ref{ass3.1}).

Note that according to the priori control (\ref{eq_entrywiselocallaw}), we have that for $1 \leq i \neq j \leq n$ 
\begin{gather}\label{eq_expansionone}
\mathcal G_{ii}=\frac{1}{z(1+\xi^2_im_{1n}(z)}+\rO_{\prec}(n^{-1/2-1/\alpha}), \quad \mathcal G_{ij} =\rO_{\prec}(n^{-1/2-1/\alpha}).
\end{gather}
For the diagonal entries, when $i=1,$ using (\ref{eq: def of mu_1}), we observe that
\begin{gather}\label{eq_g11control}
    \begin{split}
        \mathcal G_{11}&=-\frac{1}{z(1+\xi^2_{1}m_{1n}(z))}+\rO_{\prec}(n^{-1/2-1/\alpha})\\
        &=-\frac{1}{z(1+\xi^2_{1}m_{1n}(\mu_1))}+\frac{z\xi_{1}^2(m_{1n}(z)-m_{1n}(\mu_1))}{(z(1+\xi^2_{1}m_{1n}(\mu_1)))(z(1+\xi^2_{1}m_{1n}(z)))}+\rO_{\prec}(n^{-1/2-1/\alpha})\\
        &=\frac{1}{zd_1m_{1n}(\mu_1)}-\frac{z\xi_{1}^2(m_{1n}(z)-m_{1n}(\mu_1))}{zd_1m_{1n}(\mu_1)}\left(\mathcal G_{11}+\rO_{\prec}(n^{-1/2-2/\alpha})\right)+\rO_{\prec}(n^{-1/2-1/\alpha})\\
        &=\frac{1}{zd_1m_{1n}(\mu_1)}-\frac{z\xi_{1}^2}{zd_1m_{1n}(\mu_1)}(\mathcal G_{11}+\rO_{\prec}(n^{-1/2-1/\alpha}))\times \rO_{\prec}(n^{-1/\alpha})+\rO_{\prec}(n^{-1/2-1/\alpha}),
    \end{split}
\end{gather}
where in the fourth step we used Lemma \ref{lem: basic bounds}. By Lemma \ref{lem: basic bounds}, (\ref{eq_mu1part}) and (\ref{def1}),  this yields that  for some constant $C>0$
\begin{gather}\label{eq_g11}
\begin{split}
    |\mathcal G_{11}|&=\frac{1}{|zd_1m_{1n}(\mu_1)|}+\rO_{\prec}(n^{-1/2-1/\alpha})\\
    &=\frac{C}{d_1}+\rO_{\prec}(n^{-1/2-1/\alpha}).
\end{split}
\end{gather}

Similarly, when $2 \leq i \leq n,$ by (\ref{def1}) and the definition of $d_1,$ using Lemma \ref{lem: basic bounds}, we see that
\begin{equation}\label{eq_expansiontwo}
\mathcal{G}_{ii}=\rO_{\prec}(n^{-1/\alpha}). 
\end{equation}

We also provide some basic controls for the matrix $\mathcal G^{(i)}$ for all $1 \leq i \leq n.$ By definition and an elementary calculation, it is not hard to see that
\begin{equation}\label{eq_trivialcontrol}
\mathcal G_{ii}^{(i)}=-z^{-1}; \ \mathcal{G}_{i k}^{(i)}=0, \  1 \leq k \neq i \leq n. 
\end{equation}  
Moreover, using (\ref{eq_expansionone}), (\ref{eq_expansiontwo}) and the third identity of Lemma \ref{lem: Resolvent}, we find that for $1 \leq i \leq n,$ 
\begin{equation}\label{eq_nontrivialcontrol}
\mathcal G_{kk}^{(i)}=\rO_{\prec}(n^{-1/\alpha}),  \ k \neq i; \ \mathcal G_{kl}^{(i)}=\rO_{\prec}(n^{-1/2-1/\alpha}), \ k,l \neq i;
\end{equation}

With the above preparation, we now proceed to the control of $Z_i$ in (\ref{eq: decomp m_2}), unlike in (\ref{eq: bound for Z}), since $Q$ and $\mathcal{Q}$ have the same non-zero eigenvalues, we control it as follows using the above bounds 
\begin{equation}\label{eq_standarddiscussion}
\begin{split}
     Z_i&\prec\frac{\xi^2_i}{n}\| G^{(i)}\Sigma\|_F\prec\frac{\xi^2_i}{n}\|G^{(i)}\|_F=\frac{\xi^2_i}{n}(\operatorname{tr}((G^{(i)})^2))^{1/2}\le\frac{\xi^2_i}{n}\operatorname{tr}((\mathcal G^{(i)})^2)^{1/2}+\frac{\xi_i^2}{n}\frac{\sqrt{|n-p|}}{|z|}\\
    & \asymp \frac{\xi^2_i}{n}\|\mathcal G^{(i)}\|_F+\frac{\xi^2_i}{n}\frac{n^{1/2}}{n^{1/\alpha}}\\
    &=\frac{\xi^2_i}{n} \left((\mathcal G_{ii}^{(i)})^2+\sum_{j\neq i}(\mathcal G^{(i)}_{jj})^2+\sum_{j\neq k\neq i}(\mathcal G^{(i)}_{jk})^2 \right)^{1/2}+\frac{\xi^2_i}{n}\frac{n^{1/2}}{n^{1/\alpha}}  \\
& \prec        \frac{\xi_1^2}{n}\left( |z|^{-2}+n n^{-2/\alpha}+n^2 n^{-1-2/\alpha}\right)^{1/2}+\frac{\xi^2_i}{n}\frac{n^{1/2}}{n^{1/\alpha}}\\
      &\prec\frac{\xi_i^2}{n^{1/2+1/\alpha}},
\end{split}
\end{equation}
where in the third and fourth steps we used (\ref{eq_trivialcontrol}) and (\ref{eq_nontrivialcontrol}). 

Besides, unlike in (\ref{eq_m1approximate}), by a discussion similar to (\ref{eq_standarddiscussion}),  we now have from Lemma \ref{lem:large deviation} that 
\begin{equation}\label{eq_cccccc11111}
\begin{split}
   T_i:=\frac{1}{n}{\rm tr} G^{(i)}\Sigma-m_1(z)&=\frac{1}{n}\mathbf{y}_i^{*}G \Sigma G^{(i)}\mathbf{y}_i=\frac{1}{n}\frac{\mathbf{y}^{*}_i G^{(i)}\Sigma G^{(i)}\mathbf{y}_i}{1+\mathbf{y}^{*}_i G^{(i)}\mathbf{y}_i}\\
    &\prec\frac{\xi^2_i}{n}\frac{n^{-1}\| G^{(i)}\|^2_F}{|1+\mathbf{y}^{*}_i G^{(i)}\mathbf{y}_i|}\\
    &\prec\frac{\xi^2_i}{n^2}|z||\mathcal G_{ii}|\left(\|\mathcal G^{(i)}\|^2_F+\frac{n}{n^{2/\alpha}} \right), \\
    & \prec \frac{\xi_i^2}{n^{1+2/\alpha}}. 
\end{split}
\end{equation}
where in the second step we used the relation (\ref{eq_usefulformula}), in the third step we used Lemma \ref{lem: Resolvent} and in the last two steps we used a discussion similar to (\ref{eq_standarddiscussion}). 

With the above control, we now use an idea similar to the proof of  Lemma \ref{lem: average local law for large eta} to conclude the proof. The key ingredient is to explore the relation of $m_1$ and $m_2.$ We start with $m_2.$ Using the above notations and Lemma \ref{lem: Resolvent}, we find that  
\begin{equation}\label{eq_eeeone}
\frac{1}{-z(1+\xi_i^2 m_1(z))}=\frac{1}{\mathcal{G}_{ii}^{-1}+z(Z_i+T_i)}.
\end{equation}
Consequently, by (\ref{eq_g11}), (\ref{eq_expansiontwo}), (\ref{eq_standarddiscussion}) and (\ref{eq_cccccc11111}), we see that 
\begin{equation}\label{eq_eeetwo}
\frac{1}{-z(1+\xi_i^2 m_1(z))} \prec n^{-1/\alpha}. 
\end{equation}
Then using the decomposition as in (\ref{eq: decomp m_2}), we have that 
\begin{gather}\label{eq_modeltemp}
    \begin{split}
        m_2&=\frac{1}{n}\frac{\xi^2_1}{-z(1+\xi^2_1n^{-1}\operatorname{tr} G^{(1)}\Sigma+Z_1)}+\frac{1}{n}\sum_{i=2}^n\frac{\xi^2_i}{-z(1+\xi^2_in^{-1}\operatorname{tr} G^{(i)}\Sigma+Z_i)}\\
        &=\frac{1}{n}\frac{\xi^2_1}{-z(1+\xi^2_1m_1(z)+\xi_1^2 n^{-1-2/\alpha}+Z_1)}+\frac{1}{n}\sum_{i=2}^n\frac{\xi^2_i}{-z(1+\xi^2_im_1(z)+\xi_i^2 n^{-1-2/\alpha}+Z_i)}\\
        &=\frac{1}{n}\sum_{i=1}^n\frac{\xi^2_i}{-z(1+\xi^2_im_1(z))}+\rO_{\prec}\left(n^{-3/2}+\frac{1}{n}\sum_{i=2}^n\frac{\xi^4_i}{|z|n^{1/2+1/\alpha}}\right)\\
        &=\frac{1}{n}\sum_{i=1}^n\frac{\xi^2_i}{-z(1+\xi^2_im_1(z))}+\rO_{\prec}(n^{-1/2-1/\alpha}),
    \end{split}
\end{gather}
where in the second step we used (\ref{eq_cccccc11111}), in the third step we used a discussion similar to (\ref{eq_g11control}) and in the last step we used (\ref{def1}). Then we study $m_1$ using the arguments as between (\ref{eq: decomp of mathcal_G}) and (\ref{eq: m_1 by m_2}). We provide the key ingredients as follows. First, for $\mathtt{R}_{11}$ in (\ref{eq: decomp R_1}), note that by definition of $m_2^{(i)}$ and (\ref{eq_trivialcontrol}), we have that   
\begin{equation*}
\begin{split}
   \left |m^{(i)}_2 \right |&\le\frac{1}{n} \left(\sum_{j\neq i}\xi^2_j| \mathcal G^{(i)}_{jj}|+|z|^{-1} \right)\\
    &\le\frac{1}{n} \left[\sum_{j\neq 1}\xi^2_j\left( | \mathcal G_{jj}|+\frac{|\mathcal G_{j1}||\mathcal G_{1j}|}{|\mathcal G_{11}|} \right)+|z|^{-1} \right]\\
    &\prec n^{-1/\alpha},
\end{split}
\end{equation*}
where in the second step we used Lemma \ref{lem: Resolvent} and in the last step we used (\ref{eq_expansionone}), (\ref{eq_expansiontwo}) and (\ref{def1}). Moreover, by Lemma \ref{lem: Resolvent} and (\ref{eq_expansiontwo}), we find that $(1+\mathbf{y}_i^* G^{(i)} \mathbf{y}_i)^{-1} \prec 1.$ Therefore, we conclude that for all $1 \leq i \leq n,$
\begin{equation*}
\begin{split}
  &   {\rm tr}\left(\frac{ G^{(i)}(\mathbf{y}_i\mathbf{y}_i^{*}-n^{-1}\xi^2_i\Sigma)}{1+\mathbf{y}_i^{*} G^{(i)}\mathbf{y}_i}(I+m_2^{(i)}\Sigma)^{-1}\Sigma\right)  \asymp {\rm tr} \left(\xi^2_i G^{(i)}(\mathbf{u}_i\mathbf{u}_i^{*}-n^{-1}I) (I+m_2^{(i)}\Sigma)^{-1} \Sigma^2 \right)\\
   &= \xi_i^2 \left( \mathbf{u}_i^* G^{(i)} (I+m_2^{(i)}\Sigma)^{-1} \Sigma^2 \mathbf{u}_i -n^{-1} {\rm tr} \left( G^{(i)} (I+m_2^{(i)}\Sigma)^{-1} \Sigma^2 \right) \right)\\
 & \prec \frac{\xi_i^2}{n}\|G^{(i)}\|_F\prec\frac{\xi_i^2}{n^{1/2+1/\alpha}},
\end{split}
\end{equation*}
where in the last step we used a discussion similar to (\ref{eq_standarddiscussion}). Together with (\ref{def1}), we can conclude that $\mathtt R_{11} \prec n^{-1/2-1/\alpha}.$ For $\mathtt R_{12},$ using (\ref{eq_decompositionleavoneout}), (\ref{eq_expansionone}) and (\ref{eq_expansiontwo}) 
\begin{equation*}
m_2-m^{(i)}_2 \prec\frac{1}{n}\sum_{j\neq i}\frac{\xi^2_in^{-1-2/\alpha}}{n^{-1/\alpha}} + \frac{1}{n}\prec n^{-1}.
\end{equation*}
Then by an argument similar to (\ref{eq_simimimimimi}), we can conclude that $\mathtt R_{12} \prec n^{-1-1/\alpha}.$ Similarly, for $R_2,$ we have that
\[
\begin{split}
    &\frac{1}{n} \left|{\rm tr}\left(\frac{(G^{(i)}-G)\Sigma(I+m_2\Sigma)^{-1}\Sigma}{1+\mathbf{y}_i^{*} G^{(i)}\mathbf{y}_i}\right) \right|=\frac{1}{n}\left|\frac{\mathbf{y}_i^{*} G^{(i)}\Sigma(I+m_2\Sigma)^{-1}\Sigma G\mathbf{y}_i}{1+\mathbf{y}_i^{*} G^{(i)}\mathbf{y}_i}\right|\\
    &\asymp\frac{1}{n}\left|\mathbf{y}_i^{*} G^{(i)}\Sigma(I+m_2\Sigma)^{-1}\Sigma G^{(i)}\mathbf{y}_i \right | \prec\frac{\xi_i^2}{n^2}\| G^{(i)}\|_F \| G \|_F \prec\frac{\xi_i^2}{n^{1+2/\alpha}}.
\end{split}
\]
Consequently, we have that 
\begin{equation*}
\frac{z}{n} \left|{\rm tr}(R_2\Sigma) \right|\prec\frac{1}{n}\sum_i\frac{\xi_i^2}{n^{1+1/\alpha}}\prec n^{-1-1/\alpha}.
\end{equation*}

Combining all the above arguments, we find that \eqref{eq: m_1 by m_2} still holds true. Armed with all the above controls, using an argument similar to the discussions between (\ref{eq_closenessm2m2n}) and (\ref{eq_verbalttt}), we can conclude the proof. 
\end{proof}


\quad Combining the above two lemmas, we now proceed to the proof of Proposition \ref{eq_propooursidebulk}.  We will use a continuity argument as in \cite[Lemma A.12]{Ding&Yang2018} or \cite[Section 4.1]{Alex2014}. In fact, our discussion is easier since the real part in the spectral domain $\widetilde{\mathbf{D}}_{u}$ is divergent so that the rate is independent of $\eta$. Due to similarity, we focus on explaining the key ingredients. 

\begin{proof}[\bf Proof of Proposition \ref{eq_propooursidebulk}]
As before, due to similarity, we focus on the study of the separable covariance i.i.d. model as in Case (2) of Assumption \ref{assum_model} when $\xi^2$ satisfies (\ref{ass3.1}).

 For each $z=E+\ri\eta\in\widetilde{\mathbf{D}}_{u}$, we fix the real part and construct a sequence $\{\eta_j\}$ by setting $\eta_j=\mathtt C \mu_1-j n^{-3}$. Then it is clear that $\eta$ falls in an interval $[\eta_{j-1},\eta_j]$ for some $0\le j\le Cn^{1/\alpha+3}, C>0$ is some constant.

In Lemma \ref{lem: average local law for large eta}, we have proved that the results hold  for $\eta_0$. Now we assume (\ref{eq_entrywiselocallaw}) holds for some $\eta_j$. Then according to Lemma \ref{lem: self improvement}, we have that
\begin{equation*}
|m_1(z_j)-m_{1n}(z_j)|+|m_Q(z_j)-m_n(z_j)| \prec n^{-1/2-2/\alpha}, \ |m_2(z_j)-m_{2n}(z_j)| \prec n^{-1/2-1/\alpha}. 
\end{equation*}
For any $\eta^{\prime}$ lying in the interval $[\eta_{k-1},\eta_k]$, denote $z^{\prime}=E+\ri\eta^{\prime}$ and $z_j=E+\ri\eta_j$. According to the first resolvent identity, we have that 
 \begin{equation} \label{eq: Lip for mathcal G}
 \| \mathcal G(z')-\mathcal{G}(z_j) \| \leq n^{-3} \| \mathcal{G}(z') \| \| \mathcal{G}(z_j) \| \prec n^{-11/6-1/\alpha},
\end{equation}  
where in the second step we used the basic bound $\|\mathcal G(z^{\prime})\|\leq n^{2/3},$ (\ref{eq_expansionone}), (\ref{eq_expansiontwo}) and Gershgorin circle theorem.

On the one hand, according to the definitions in (\ref{eq_m1m2}), using the first resolvent identity, we have that  
\begin{equation*}
\begin{split}
    m_1(z^{\prime})-m_1(z_j)&=\frac{1}{n}{\rm tr}[(G(z^{\prime})-G(z_j))\Sigma]=\frac{1}{n^4}{\rm tr}(G(z^{\prime})G(z_j)\Sigma) \prec\frac{1}{n^4 \eta_j}\|G(z_j)\|_F \prec n^{-17/6-1/\alpha},
\end{split}
\end{equation*}
where in the last step we used a discussion similar to (\ref{eq_standarddiscussion}). Similarly, combining (\ref{eq: Lip for mathcal G}) and (\ref{def1}),  we have that  
\begin{equation*}
    m_2(z^{\prime})-m_2(z_j)=\frac{1}{n}\sum_{i=1}^n\xi^2_i(\mathcal G_{ii}(z^{\prime})-\mathcal G_{ii}(z_j))
\prec n^{-11/6-1/\alpha},
\end{equation*}
and by a discussion similar to (\ref{eq_standarddiscussion})
\begin{align*}
|m_Q(z')-m_Q(z_j)|& =\frac{1}{p} \left| \operatorname{tr} \left( G(z')-G(z_j) \right) \right|  \leq n^{-4} \|G(z') \|_F \| G(z_j) \|_F \\
& \prec n^{-4} (n^{1-2/\alpha})^{1/2} n^{1/2+2/3}=n^{-7/3-1/\alpha}. 
\end{align*}

 On the other hand, using the definitions in (\ref{eq_systemequationsm1m2}), we decompose that 
\[
\begin{split}
    m_{1n}(z^{\prime})-m_{1n}(z_j)
    &=\frac{1}{n}\sum_i\left(\frac{\sigma_i}{-z^{\prime}(1+\sigma_im_{2n}(z^{\prime}))}-\frac{\sigma_i}{-z^{\prime}(1+\sigma_im_{2n}(z_j))}\right)\\
    &+\frac{1}{n}\sum_i\left(\frac{\sigma_i}{-z^{\prime}(1+\sigma_im_{2n}(z_j))}-\frac{\sigma_i}{-z_j(1+\sigma_im_{2n}(z_j))}\right)\\
    &:=\mathcal{M}_{11}+\mathcal{M}_{12}.
\end{split}
\]
For $\mathcal{M}_{11}$, according to Lemma \ref{lem: basic bounds} and (\ref{eq_systemequationsm1m2}), we readily obtain that
\[
\begin{split}
    \mathcal{M}_{11}&=\frac{1}{n}\sum_{i}\frac{\sigma^2_i}{-z^{\prime}(1+\sigma_im_{2n}(z^{\prime}))(1+\sigma_im_{2n}(z_j))}\big(m_{2n}(z_j)-m_{2n}(z^{\prime})\big)\\
    &=\rO(|z'|^{-1})\times\frac{1}{n}\sum_i\Big(\frac{\xi^2_i}{-z_j(1+\xi^2_im_{1n}(z_j))}-\frac{\xi^2_i}{-z_j(1+\xi^2_im_{1n}(z^{\prime}))}\Big)\\
    &+\rO(|z'|^{-1})\times\frac{1}{n}\sum_i\left(\frac{\xi^2_i}{-z_j(1+\xi^2_im_{1n}(z^{\prime}))}-\frac{\xi^2_i}{-z^{\prime}(1+\xi^2_im_{1n}(z^{\prime}))}\right)\\
    &=\rO(|z'|^{-1}) \times \left( \mathtt{M}_{11,1}+ \mathtt{M}_{11,2} \right). 
\end{split}
\]

For $\mathtt{M}_{11,1},$ by a discussion similar to (\ref{eq_g11}) and (\ref{eq_expansiontwo}), we find that 
\[
\begin{split}
    \mathtt{M}_{11,1}&=\frac{1}{n}\sum_{i=1}^n\frac{\xi_i^4}{-z_j (1+\xi^2_im_{1n}(z^{\prime}))(1+\xi^2_im_{1n}(z_j))}(m_{1n}(z')-m_{1n}(z_j))\\
    &\prec\left(\frac{\xi_1^4}{-nz_j d^2_1 |m_{1n}(z_j)||m_{1n}(z')|}+\frac{1}{n}\sum_{i\ge2}\frac{\xi_i^4}{|z^{\prime}(1+\xi^2_1m_{1n}(z^{\prime}))(1+\xi^2_im_{1n}(z_j))|}\right)\times|m_{1n}(z_j)-m_{1n}(z^{\prime})|\\
    &\prec \ro(1) \times |m_{1n}(z_j)-m_{1n}(z^{\prime})|.
\end{split}
\]
Similarly, for $\mathtt{M}_{11,2},$ we have that 
\[
\begin{split}
    \mathtt{M}_{11,2}=\frac{1}{n}\sum_{i}\frac{\xi_i^2 (z_j-z^{\prime})}{z^{\prime}z_j(1+\xi_i^2 m_{1n}(z'))} \prec \frac{z^{\prime}-z_j}{z^{\prime}z_j} \prec n^{-3-2/\alpha}.
\end{split}
\]
Analogously, we can prove that $\mathcal{M}_{12} \prec n^{-3-4/\alpha}$. 
Therefore, combining the above bounds with (\ref{def1}), we see that 
\begin{equation*}
    |m_{1n}(z^{\prime})-m_{1n}(z_j)|\prec  n^{-3-2/\alpha}.
\end{equation*}

\quad By similar procedures and arguments, we can also prove that 
%
%
%
\begin{equation*}
    |m_{2n}(z^{\prime})-m_{2n}(z_j)|\prec n^{-3-2/\alpha}, \  |m_{n}(z^{\prime})-m_{n}(z_j)|\prec n^{-3-2/\alpha}. 
\end{equation*}
and
\begin{gather*}
    \left\|(z^{\prime})^{-1}(I+m_{1n}(z^{\prime})D^2)^{-1}-(z_j)^{-1}(I+m_{1n}(z_j)D^2)^{-1}\right\|\prec n^{-3-2/\alpha}.
\end{gather*}

Therefore, combining all the above bounds with triangle inequality, we see that the results of part 1 of Theorem \ref{thm_unboundedcaselocallaw} hold for $z'.$ Using an induction procedure and a standard lattice argument (for example, see \cite{Alex2014,Ding&Yang2018}), we find that the results hold for all $z \in \widetilde{\mathbf{D}}_{u}$ and conclude the proof of Proposition \ref{eq_propooursidebulk}. 

%
%
\end{proof}

\subsubsection{Proof of Theorem \ref{thm_unboundedcaselocallaw}} \label{sec_proofa2}

Once Proposition \ref{eq_propooursidebulk} is proved, we can roughly locate the edge eigenvalues of $Q$ as in Lemma \ref{lem: upper bound for eigenvalues} so that we can expand the spectral domain from $\widetilde{\mathbf{D}}_{u}$ to $\mathbf{D}_{u}$ for $Q^{(1)}$ and conclude the proof of Theorem \ref{thm_unboundedcaselocallaw}.

\quad As before, we focus on the study of the separable covariance i.i.d. model as in Case (2) of Assumption \ref{assum_model} when $\xi^2$ satisfies (\ref{ass3.1}). Recall  the definitions of $\mu_2$ and $\lambda_1^{(1)}$ around (\ref{eq_keylocationdefinition}) and in Figure \ref{fig_locationtizubounded}. By Lemma \ref{lem: upper bound for eigenvalues} and an analogous argument, as well as  
Weyl's inequality, we find that conditional on the event $\Omega,$ with high probability, 
\begin{equation}\label{eq_keyused}
    \mu_1>\lambda_1>\mu_2>\lambda_1^{(1)}.
\end{equation}
By (\ref{eq_mu1part}) and a similar argument, we see that $\mu_k \asymp \xi_k^2, k=1,2.$ Together with (\ref{def1}), we have that $\mu_1-\mu_2\ge C_1 n^{1/\alpha}\log^{-1}n$ for some constant $C_1>0$ on the event $\Omega.$ This implies for some constant $C>0,$ for $z \in \mathbf{D}_{u},$ 
\begin{gather}\label{eq: eigen gap}
    |\lambda_1^{(1)}-z|\ge Cn^{1/\alpha}\log^{-1}n,
\end{gather}
Now we proceed to the proof of Theorem \ref{thm_unboundedcaselocallaw}. Recall (\ref{eq_defnminor}) and (\ref{eq_defnminorG}).

%
%
%

\begin{proof}[\bf Proof of Theorem \ref{thm_unboundedcaselocallaw}]  Again, we focus on the study of the separable covariance i.i.d. model as in Case (2) of Assumption \ref{assum_model} when $\xi^2$ satisfies (\ref{ass3.1}).

Observe  by \eqref{eq: eigen gap} that it holds uniformly for $z\in\mathbf{D}_{u}$ and $\mathcal{T}\subset\{2,\dots,n\}$, for some constant $C_1>0$
\begin{gather}\label{eq: est of G^(1)}
    \|G^{(1\mathcal{T})}\|\leq C_1 n^{-1/\alpha} \log n.
\end{gather}
By the definition of $m_2^{(1)}$ and a decomposition similar to (\ref{eq: decomp m_2}),  we have that 
\begin{gather*}
        m_2^{(1)}=\frac{1}{n}\sum_{i=2}^n\frac{\xi^2_i}{-z-z\mathbf{y}_i^{*}G^{(1i)}\mathbf{y}_i}=\frac{1}{n}\sum_{i=2}^n\frac{\xi^2_i}{-z(1+\xi^2_in^{-1}\operatorname{tr}G^{(1i)}\Sigma+Z_i^{(1)})},\\
        Z_i^{(1)}=\mathbf{y}_i^{*}G^{(1i)}\mathbf{y}_i-\xi^2_in^{-1}\operatorname{tr}G^{(1i)}\Sigma.
\end{gather*}
By arguments similar to (\ref{eq: bound for Z}) and (\ref{eq_m1approximate}) but with \eqref{eq: est of G^(1)}, we obtain that 
\begin{gather*}
    Z_i^{(1)}\prec\frac{\xi^2_i}{n}\|G^{(1i)}\Sigma\|_F\le\frac{\xi^2_i}{n}\|G^{(1i)}\|\|\Sigma\|_F\prec\frac{\xi^2_i}{n^{1/2}}n^{-1/\alpha},\\
    \frac{1}{n}\operatorname{tr}(G^{(1i)}\Sigma)-m_1^{(1)}(z)=\frac{1}{n}\mathbf{y}^{*}G^{(1)}\Sigma G^{(1i)}\mathbf{y}_i\prec \frac{\xi^2_i}{n}n^{-2/\alpha}.
\end{gather*}
In addition, using (\ref{eq: est of G^(1)}) and a discussion similar to (\ref{eq_eeeone})--(\ref{eq_modeltemp}), we readily see that 
\begin{gather*}
    m^{(1)}_2=\frac{1}{n}\sum_{i=2}^n\frac{\xi^2_i}{-z(1+\xi^2_im^{(1)}_1)}+\rO_{\prec}\left(n^{-1/2-1/\alpha}\right).
\end{gather*}
Using the decomposition 
\begin{gather*}
    Q^{(1)}-zI=\sum_{i=2}^n\mathbf{y}_i\mathbf{y}_i^{*}+zm^{(1)}_2(z)\Sigma-z(I+m_2^{(1)}(z)\Sigma),
\end{gather*}
by arguments similar to (\ref{eq: decomp of mathcal_G})--(\ref{eq: m_1 by m_2}) with $\|\mathcal{G}^{(1\mathcal{T})}\|=\|G^{(1 \mathcal{T})}\|\prec n^{-1/\alpha}$, we conclude that 
\begin{gather*}
\begin{split}
    m_1^{(1)}&=-z^{-1}\frac{1}{n}\operatorname{tr}((I+m_2^{(1)}(z)\Sigma)^{-1}\Sigma)+\rO_{\prec}\left(n^{-1/2-2/\alpha}\right)\\
    &=-\frac{1}{n}\sum_{i=1}^p\frac{\sigma_i}{z(1+m_2^{(1)}\sigma_i)}+\rO_{\prec}\left(n^{-1/2-2/\alpha}\right).
\end{split}
\end{gather*}
Combining with the definitions in (\ref{eq_systemequationsm1m2}), we see that
\begin{gather*}
    \begin{split}
        m_1^{(1)}(z)-m_{1n}^{(1)}(z)&=-\frac{1}{n}\sum_{i=1}^p\frac{\sigma_i}{z(1+\sigma_im_2^{(1)}(z))}+\frac{1}{n}\sum_{i=1}^p\frac{\sigma_i}{z(1+\sigma_im_{2n}^{(1)}(z))}+\rO_{\prec}(n^{-1/2-2/\alpha})\\
        &=\frac{1}{n}\sum_{i=1}^p\frac{\sigma_i^2(m_2^{(1)}(z)-m_{2n}^{(1)}(z))}{z(1+\sigma_im_{2n}^{(1)}(z))(1+\sigma_im_2^{(1)}(z))}+\rO_{\prec}(n^{-1/2-2/\alpha})\\
        &=\left(\frac{1}{n}\sum_{i=1}^p\frac{\sigma_i^2}{z(1+\sigma_im_{2n}^{(1)}(z))(1+\sigma_im_2^{(1)}(z))}\right)\left(\frac{1}{n}\sum_{i=2}^n\frac{\xi^4_i(m_1^{(1)}(z)-m_{1n}^{(1)}(z))}{z(1+\xi^2_im_1^{(1)}(z))(1+\xi^2_im_{1n}^{(1)}(z))} \right)+\rO_{\prec}(n^{-1/2-2/\alpha})\\
        &=\ro(1)(m_1^{(1)}(z)-m_{1n}^{(1)}(z))+\rO_{\prec}(n^{-1/2-2/\alpha}),
    \end{split}
\end{gather*} 
where in the third step we used a discussion similar to (\ref{eq_eeetwo}) and (\ref{def1}). This completes our proof. 


%
%
%
%
%
%
\end{proof}

\subsection{Bounded support setting: proof of Theorem \ref{thm_boundedcaselocallaw}}
In this section, we will prove Theorem \ref{thm_boundedcaselocallaw}. In Section \ref{sec_proofpartiboundedlocallaw}, we study $m_{1n,c}(z)-m_{1n}(z)$ and $m_{n,c}-m_n,$ which is a counterpart of Lemma 4.4 of \cite{Kwak2021}. Then in Section \ref{sec_proofpartiiboundedlocallaw}, we study $m_{1n}(z)-m_{1}(z)$ and $m_Q-m_n,$ which is a counterpart of Proposition 5.1 of \cite{Kwak2021}. Due to similarity, we only provide the details for the separable covariance i.i.d data as in (2) of Assumption \ref{assum_model}. The elliptical data can be handled similarly.  

\subsubsection{Control of $m_{1n,c}(z)-m_{1n}(z)$ and $m_{n,c}(z)-m_n(z)$}\label{sec_proofpartiboundedlocallaw}

Due to similarity, we focus on $|m_{1n,c}-m_{1n}|$ and briefly discuss $|m_{n,c}-m_n|$ in the end. The proof ideas follow  Lemma 4.5 of \cite{lee2016extremal} or Lemma 4.4 of \cite{Kwak2021}. We focus on explaining the parts deviates the most. 

\begin{proof}
According to the definitions of $m_{1n,c}$ and $m_{1n}$ in (\ref{eq_systemequationsm1m2intergrate}) and (\ref{eq_systemequationsm1m2}), we observe that 
\begin{align}\label{eq_p1plusp2}
    |m_{1n,c}(z)-m_{1n}(z)|
    & \leq \left|\frac{1}{n}\sum_{i=1}^p\frac{\sigma_i}{-z+\sigma_i\int\frac{s}{1+sm_{1n,c}(z)}\mathrm{d}F(s)}-\frac{1}{n}\sum_{i=1}^p\frac{\sigma_i}{-z+\frac{\sigma_i}{n}\sum_{j=1}^n\frac{\xi^2_j}{1+\xi^2_jm_{1n,c}(z)}}\right| \nonumber \\
    &+|m_{1n,c}(z)-m_{1n}(z)|\left|\frac{1}{n}\sum_{i=1}^p\frac{\frac{\sigma_i^2}{n}\sum_{j=1}^n\frac{\xi^4_j}{(1+\xi^2_jm_{1n}(z))(1+\xi^2_jm_{1n,c}(z))}}{(-z+\frac{\sigma_i}{n}\sum_{j=1}^n\frac{\xi^2_j}{1+\xi^2_jm_{1n}(z)})(-z+\frac{\sigma_i}{n}\sum_{j=1}^n\frac{\xi^2_j}{1+\xi^2_jm_{1n,c}(z)})}\right| \\ \nonumber
    &:=  \mathsf{P}_1+\mathsf{P}_2. 
\end{align}

On the one hand, for $\mathsf{P}_1, $ we have that 
\begin{equation*}
\mathsf{P}_1=\frac{1}{n}\sum_{i=1}^p\frac{\sigma_i^2|n^{-1}\sum_j\frac{\xi^2_j}{1+\xi^2_jm_{1n,c}(z)}-\int\frac{s}{1+sm_{1n,c}(z)}\mathrm{d}F(s)|}{|(-z+\frac{\sigma_i}{n}\sum_{j=1}^n\frac{\xi^2_j}{1+\xi^2_jm_{1n,c}(z)})(-z+\sigma_i\int\frac{s}{1+sm_{1n,c}(z)}\mathrm{d}F(s))|}.
\end{equation*}
Since $z \in \mathbf{D}_b^\prime \subset \mathbf{D}_b,$ according to Assumption \ref{assum_additional_techinical} and the continuity of $m_{1n,c}$, 
we conclude that $|-z+\sigma_i\int\frac{s}{1+sm_{1n,c}(z)}\mathrm{d}F(s))| \geq c$ for some constant $c>0.$ Together with (\ref{def4}), we show that $|-z+\frac{\sigma_i}{n}\sum_{j=1}^n\frac{\xi^2_j}{1+\xi^2_jm_{1n,c}(z)}| \geq c'$ for some $c'>0$ when $n$ is sufficiently large. Using (\ref{def4}) again, we conclude that on $\Omega,$ for some small constant $\epsilon>0$ and some constant $C>0$ 
\begin{equation}\label{eq_defnp1}
\mathsf{P}_1 \leq  C n^{-1/2+\epsilon}. 
\end{equation} 

\quad On the other hand, for $\mathsf{P}_2,$ for notional convenience, we further write it as $\mathsf{P}_2=|m_{1n,c}(z)-m_{1n}(z)| \times |\mathsf{T}|.$ For $\mathsf{T},$ by Cauchy-Schwarz inequality, we have that
\begin{equation}\label{eq_tintot1t2}
|\mathsf{T}| \leq \mathsf{E}_1 \mathsf{E}_2,
\end{equation}
where $\mathsf{E}_k, k=1,2,$ are defined as 
\begin{gather}
    \begin{split}
        &\mathsf{E}_1:= \left(\frac{1}{n}\sum_{i=1}^p\frac{\frac{\sigma^2_i}{n}\sum_{j=1}^n\frac{\xi^4_j}{(1+\xi^2_jm_{1n}(z))^2}}{|-z+\frac{\sigma_i}{n}\sum_{j=1}^n\frac{\xi^2_j}{1+\xi^2_jm_{1n}(z)}|^2}\right)^{1/2},\\
        &\mathsf{E}_2:=\left(\frac{1}{n}\sum_{i=1}^p\frac{\frac{\sigma^2_i}{n}\sum_{j=1}^n\frac{\xi^4_j}{(1+\xi^2_jm_{1n,c}(z))^2}}{|-z+\frac{\sigma_i}{n}\sum_{j=1}^n\frac{\xi^2_j}{1+\xi^2_jm_{1n,c}(z)}|^2}\right)^{1/2}.
    \end{split}
\end{gather}
Together with the identity (\ref{eq_expansionusefullessorequaltoone}) below and the fact $m_{1n}(z) \asymp 1$ for $z \in \mathbf{D}_b^\prime$, we find that $\mathsf{E}_1  \leq 1.$ For the term $\mathsf{E}_2,$ we first consider a closely related quantity $\mathsf{W}(z)$ defined as 
\begin{equation*}
\mathsf{W}(z):=\frac{1}{n} \sum_{i=1}^p \frac{\sigma_i^2 \int \frac{s^2}{|1+s m_{1n,c}(z)|^2} \mathrm{d} F(s) }{|-z+\sigma_i \int \frac{s^2}{(1+s m_{1n,c}(z))^2} \mathrm{d} F(s)|^2}=1-\eta \frac{|m_{1n,c}(z)|^2}{\operatorname{Im} m_{1n,c}(z)}.
\end{equation*}
By assumption that $\phi^{-1}>\varsigma_3,$ (\ref{eq_onlyoneequationdecide}) that $m_{1n,c}(L_+)=-l^{-1}$ and recall the notations in (\ref{eq_phasetransition}), we see that
\begin{equation}\label{eq_edgeresults}
\mathsf{W}(L_+)<1.
\end{equation}
Armed with (\ref{eq_edgeresults}), using  (\ref{def4}) and Assumption \ref{assum_additional_techinical}, we can apply  an argument similar to Lemma A.6 of \cite{lee2016extremal} or Lemma A.7 of \cite{Kwak2021}  to conclude that  when $n$ is sufficiently large, for $z \in \mathbf{D}_b,$
\begin{equation}\label{eq_defnmathsfW}
\mathsf{E}^2_2=\mathsf{W}(L_+)+\ro(1)<\mathfrak{c}',
\end{equation}
for some constant $0<\mathfrak{c}'<1.$
Consequently, we find that when $n$ is sufficiently large, $\mathsf{E}_2<1.$ Together with (\ref{eq_tintot1t2}), we can conclude that $|\mathsf{T}|<1$. This yields that 
\begin{equation*}
\mathsf{P}_2 = \mathfrak{c} |m_{1n,c}-m_{1n}|,
\end{equation*} 
for some constant $0<\mathfrak{c}<1.$ 

\quad Inserting the above control back into (\ref{eq_p1plusp2}), using (\ref{eq_defnp1}), we can conclude our proof that
\begin{equation}\label{eq_deterministiclose}
m_{1n,c}=m_{1n}(z)+\rO(n^{-1/2+\epsilon}). 
\end{equation} 
The proof of $m_{n,c}-m_n$ follows from an argument similar to (\ref{eq_p1plusp2}) using (\ref{eq_systemequationsm1m2}) and (\ref{eq_systemequationsm1m2intergrate}) that
\begin{equation*}
m_n(z)=\frac{1}{p} \sum_{i=1}^p \frac{1}{-z+\sigma_i n^{-1} \sum_{j=1}^n \frac{\xi_j^2}{1+\xi_j^2 m_{1n}}}, \ m_{n,c}(z)=\frac{1}{p} \sum_{i=1}^p \frac{1}{-z+\sigma_i \int_0^l \frac{s}{1+s m_{1n,c}(z)} \mathrm{d} F(s)},
\end{equation*}
the results of $m_{1n,c}-m_{1n}$ and (\ref{def4}). We omit the details. 
\end{proof}

\subsubsection{Control of $m_{1n}(z)-m_1(z)$ and $m_n(z)-m_Q(z)$}\label{sec_proofpartiiboundedlocallaw}
Due to similarity, we focus on $|m_{1n,c}-m_{1n}|$ and will briefly discuss how to study $|m_Q-m_n|$ from line to line. 
The proof ideas follow  Proposition 5.1 of \cite{lee2016extremal} or Proposition 5.1 of \cite{Kwak2021}. We focus on explaining the parts deviates the most.  The proof relies on the following two lemmas. 

\begin{lemma}\label{lem: first bound for m_1n-m_1}
Conditional on the event $\Omega$ in Theorem \ref{thm_boundedcaselocallaw}, for all $z=E+\ri\eta\in\mathbf{D}_b^{\prime}$ with $n^{-1/2+\epsilon_d}\le\eta\le n^{-1/(d+1)+\epsilon_d}$, we have 
\begin{gather*}
    |m_{1n}(z)-m_1(z)|\prec\frac{1}{n\eta_0}, \   |m_{n}(z)-m_Q(z)|\prec\frac{1}{n\eta_0}.
\end{gather*}
\end{lemma}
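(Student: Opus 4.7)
The strategy is the standard Schur complement / fluctuation averaging argument for averaged local laws, adapted to the non-square-root edge regime at hand. By the first identity of Lemma~\ref{lem: Resolvent},
\begin{equation*}
\mathcal{G}_{ii}(z) \;=\; -\frac{1}{z\bigl(1 + \xi_i^2 m_1(z) + \xi_i^2\, n^{-1}\operatorname{tr}(G^{(i)}-G)\Sigma + Z_i\bigr)}, \qquad Z_i := \mathbf{y}_i^{*}G^{(i)}\mathbf{y}_i - \xi_i^2 n^{-1}\operatorname{tr}(G^{(i)}\Sigma).
\end{equation*}
Taking the weighted trace in $i$ produces an approximate self-consistent equation for $m_2$ in the spirit of \eqref{eq: m_2 by m_1}, while a parallel expansion of $G$ around $-z^{-1}(I+m_2\Sigma)^{-1}$, as in \eqref{eq: decomp of mathcal_G}--\eqref{eq: m_1 by m_2}, yields a companion approximate equation for $m_1$. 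The coupled approximate equations match the deterministic system \eqref{eq_systemequationsm1m2} up to a random error that must be estimated uniformly on $\mathbf{D}_b'$.

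For the fluctuation control I would combine the large deviation bounds of Lemma~\ref{lem:large deviation} with the Ward-type identity \eqref{lem:Wald} to obtain
\begin{equation*}
|Z_i| \;\prec\; \frac{\xi_i^2}{n}\|G^{(i)}\Sigma\|_F \;\prec\; \xi_i^2\sqrt{\frac{\operatorname{Im} m_1^{(i)}(z)}{n\eta}},
\end{equation*}
and to control the trace difference $n^{-1}\operatorname{tr}(G^{(i)}-G)\Sigma$ via \eqref{lem:trace_difference}. On $\mathbf{D}_b'$, the closeness $m_{1n}(z) = m_{1n,c}(z) + \rO(n^{-1/2+\epsilon_d})$ already obtained in Section~\ref{sec_proofpartiboundedlocallaw}, together with \eqref{eq_zopointrate11}, supplies the a priori bound $\operatorname{Im} m_{1n}(z)\prec\max\{\eta,(n\eta)^{-1}\}$, which in a short bootstrap is inherited by $\operatorname{Im} m_1(z)$. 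Combining these fluctuation estimates with the linearization \eqref{eq_a11coro} of the self-consistent map about $(m_{1n,c},m_{2n,c})$, and with the strict contraction property $\mathsf{W}(L_+)<1$ recorded in \eqref{eq_defnmathsfW} (which holds under Assumption~\ref{assum_additional_techinical} precisely in the regime $d>1$, $\phi^{-1}>\varsigma_3$), a stability argument in the spirit of \cite[Proposition 5.1]{Kwak2021} and \cite{lee2016extremal} yields $|m_1(z)-m_{1n}(z)|\prec (n\eta_0)^{-1}$. The bound on $|m_Q(z) - m_n(z)|$ then follows by inserting the $m_2$-estimate into the representation of $m_n$ via the third equation of \eqref{eq_systemequationsm1m2} and invoking \eqref{def4}.

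The main obstacle is the near-degeneracy of the denominators $1+\xi_j^2 m_{1n,c}(z)$ entering the linearized map, which on $\mathbf{D}_b'$ are only guaranteed to be bounded below by $\tfrac12 n^{-1/(d+1)-\epsilon_d}$; this is precisely why the spectral domain \eqref{eq_spectralparameterprime} excludes the index $j=1$ corresponding to $\xi_{(1)}^2$ close to $l$. The delicate part of the argument is to keep track of how this edge degeneracy interacts with the contraction constant in \eqref{eq_defnmathsfW} and to absorb the resulting quasi-singular contributions using the spacing estimate \eqref{def4} on the event $\Omega$, so that the effective contraction constant stays uniformly below one while $\eta$ ranges down to $n^{-1/2+\epsilon_d}$. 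This is the technical heart of the proof and is what justifies the uniform error $(n\eta_0)^{-1}$.
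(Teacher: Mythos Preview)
Your overall strategy---Schur complement expansions \eqref{eq: decomp m_2} and \eqref{eq: decomp of mathcal_G}, large deviation control of $Z_i$ via \eqref{lem:Wald}, and a stability/contraction argument driven by $\mathsf W(L_+)<1$---is the right one and matches the paper's route. But several structural pieces that the paper treats as separate lemmas are either missing or understated in your plan.

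First, you need an \emph{a priori} weak local law: the paper establishes Proposition~\ref{proposition_boundedweaklocallaw} ($|m_1-m_{1n}|+|m_2-m_{2n}|\prec(n\eta)^{-1/4}$) before anything else, and uses it repeatedly to guarantee that the denominators $1+\sigma_i m_2$ and $1+\xi_j^2 m_1$ are bounded away from zero so that the linearization is even valid. Without this input, your ``linearization about $(m_{1n,c},m_{2n,c})$'' has no a priori justification.

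Second, the step you call ``in a short bootstrap is inherited by $\operatorname{Im}m_1(z)$'' is in fact a standalone result, Lemma~\ref{lem: est for Im m_1}, proved by contradiction: assuming $\operatorname{Im}m_1>n^{\epsilon}(n\eta_0)^{-1}$ forces $\Psi=\ro(\operatorname{Im}m_1)$, and then the contraction $|\mathsf{C}_1|<1$ in \eqref{eq_mathsfc1control} yields $|m_1-m_{1n}|=\rO(\Psi)=\ro(\operatorname{Im}m_1)$, contradicting $\operatorname{Im}m_{1n}\ll\operatorname{Im}m_1$. The proof of Lemma~\ref{lem: first bound for m_1n-m_1} itself is a second contradiction argument of the same flavour: assume $|m_1-m_{1n}|>n^{\epsilon}(n\eta_0)^{-1}$, then the key inequality \eqref{eq_modificationkeykey} gives $n^{c_2\epsilon}\Psi=\ro(|m_1-m_{1n}|)$ (this is where the hypothesis $\eta\ge n^{-1/2+\epsilon_d}$ enters, ensuring $(n\eta)^{-1}\ll(n\eta_0)^{-1}$), and the contraction yields $|m_1-m_{1n}|=\ro(|m_1-m_{1n}|)$.

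Third, a small correction: no fluctuation averaging is used here---only the individual bounds $Z_i\prec\Psi$. The averaged estimates appear later (Lemma~\ref{lem_fa}) for the finer analysis near $z_0$. Also, your final paragraph overstates the difficulty: the near-degenerate denominators $1+\xi_j^2 m_{1n}$ do \emph{not} need to be tracked through the contraction constant. The factor $\mathsf E_1$ in \eqref{eq_tintot1t2} is bounded by $1$ \emph{algebraically} via the identity \eqref{eq_expansionusefullessorequaltoone}, and $\mathsf E_2<1$ follows from $\mathsf W(L_+)<1$ in \eqref{eq_defnmathsfW} plus continuity; the spacing condition in \eqref{def4} and the restriction $j\ge 2$ in the definition of $\mathbf D_b'$ are used only to control individual terms, not the contraction constant itself.
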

\begin{lemma}\label{lem: second bound for m_1n-m_1}
 Assuming that $|m_{1n}(z)-m_1(z)|\prec n^{\epsilon_d}(n\eta_0)^{-1}$, then conditional on the event $\Omega,$ we have that for all $z \in \mathbf{D}_b^\prime$ 
\begin{gather*}
    |m_{1n}(z)-m_1(z)|\prec\frac{1}{n\eta_0}, \   |m_{n}(z)-m_Q(z)|\prec\frac{1}{n\eta_0}.
\end{gather*}
\end{lemma}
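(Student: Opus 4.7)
My plan is to run the standard bootstrap/self-improvement scheme for edge local laws, using the strict contractivity $\mathsf{W}(L_+)<1$ established in (\ref{eq_edgeresults})–(\ref{eq_defnmathsfW}) as the key stability input, so that the $n^{\epsilon_d}$ loss in the a priori assumption can be shaved off.

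First, I would expand the diagonal resolvent entries by Schur's complement (Lemma \ref{lem: Resolvent}) as $\mathcal{G}_{ii}(z) = -[z(1+\xi_i^2 n^{-1}\operatorname{tr}(G^{(i)}\Sigma) + Z_i)]^{-1}$, with fluctuation terms $Z_i := \mathbf{y}_i^*G^{(i)}\mathbf{y}_i - \xi_i^2 n^{-1}\operatorname{tr}(G^{(i)}\Sigma)$ and leave-one-out errors $T_i := n^{-1}\operatorname{tr}(G^{(i)}\Sigma) - m_1(z)$. The large deviation bound (1) of Lemma \ref{lem:large deviation} together with the Ward-type identity (\ref{lem:Wald}) gives $|Z_i|\prec \xi_i^2\sqrt{\operatorname{Im} m_1(z)/(n\eta)}$, and the a priori assumption combined with (\ref{eq_oneregimeedgecontrol}) yields $\operatorname{Im} m_1(z) = \operatorname{Im} m_{1n}(z) + O_\prec(n^{\epsilon_d}/(n\eta_0))\asymp \eta$, so $Z_i$ and $T_i$ are small on all of $\mathbf{D}_b'$. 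In particular, $(1+\xi_j^2 m_{1n}(z))^{-1}$ admits a bounded expansion thanks to the definition of $\mathbf{D}_b'$ in (\ref{eq_spectralparameterprime}).

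Second, I would repeat the algebraic derivation from Section \ref{sec_proofpartiboundedlocallaw} but with $m_{1n,c}$ replaced by $m_1$, so as to obtain a self-consistent identity of the shape
\begin{equation*}
m_1(z)-m_{1n}(z) \;=\; \mathsf{T}(z)\bigl(m_1(z)-m_{1n}(z)\bigr) + \mathcal{E}(z),
\end{equation*}
where $\mathsf{T}(z)$ is the same Cauchy-Schwarz-type coefficient that was analyzed around (\ref{eq_defnp1})–(\ref{eq_defnmathsfW}). The crucial point is that by (\ref{eq_deterministiclose}) and continuity, $\mathsf{T}(z)$ equals $\mathsf{W}(L_+) + o(1)$, which is strictly less than $1$ uniformly on $\mathbf{D}_b'$; this is what makes the scheme non-perturbative at this non-square-root edge. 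The remainder $\mathcal{E}(z)$ is a weighted average of quantities built from the $Z_i$'s and $T_i$'s, plus a contribution controlled by (\ref{def4}) bounded by $n^{-1/2+\epsilon}$, all divided by the resolvent-type denominators which are $\Omega(1)$ on $\mathbf{D}_b'$.

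The main obstacle, and the reason the bootstrap is needed, is the control of $\mathcal{E}(z)$ by $(n\eta_0)^{-1}$ rather than $\sqrt{\operatorname{Im} m_{1n}/(n\eta_0)}$. The trivial termwise bound only yields $n^{\epsilon_d}/(n\eta_0)$, which merely reproduces the a priori hypothesis. To gain the extra $n^{-\epsilon_d}$, I would invoke the fluctuation averaging lemma for the $Z_i$'s: conditioning on $D$ and treating the sums as martingales in the filtration generated by the columns of $X$, an extra $n^{-1/2}$ cancellation is obtained in the averaged quantity $n^{-1}\sum_i Z_i(\cdots)$, exactly as in \cite{Ding&Yang2018,Kwak2021,yang2019edge}. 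Combined with $\operatorname{Im} m_{1n}(z)\asymp\eta$ from (\ref{eq_oneregimeedgecontrol}) and $\eta\geq\eta_0$, this gives $|\mathcal{E}(z)|\prec(n\eta_0)^{-1}$. Inverting $1-\mathsf{T}(z)$ by the strict contractivity then produces $|m_1(z)-m_{1n}(z)|\prec(n\eta_0)^{-1}$. The estimate for $m_n(z)-m_Q(z)$ follows by applying the same scheme to the representation (\ref{eq_verbalttt}) rather than (\ref{eq: m_1 by m_2}), using the already-obtained bound on $m_1-m_{1n}$ to control the perturbation in the denominator.
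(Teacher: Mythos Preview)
Your approach diverges from the paper's in an essential way, and the divergence introduces a gap. The paper does \emph{not} use fluctuation averaging to prove this lemma at all. Instead, it runs a contradiction argument exactly parallel to the proof of Lemma~\ref{lem: first bound for m_1n-m_1}: assume $n^{\epsilon}(n\eta_0)^{-1}<|m_1-m_{1n}|\le n^{\epsilon+\epsilon_d}(n\eta_0)^{-1}$ (the upper bound is the a~priori hypothesis), and show that under this two-sided assumption the control parameter satisfies $n^{c_2\epsilon}\Psi=\ro(|m_1-m_{1n}|)$, cf.\ (\ref{eq_modificationkeykey}). Plugging this into the self-consistent identity (\ref{eq_decompositiondeomdeomdeodemdem}) together with the strict contractivity $\mathsf{C}_1\le\mathfrak{c}<1$ from (\ref{eq_mathsfc1control}) immediately yields $|m_1-m_{1n}|=\ro(|m_1-m_{1n}|)$, a contradiction. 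The point is that the error term $\mathcal{E}$ need not be bounded by $(n\eta_0)^{-1}$ directly; it only needs to be $\ro(|m_1-m_{1n}|)$ under the contradiction hypothesis, and for this the termwise bound $\mathcal{E}=\rO_\prec(\Psi)$ already suffices.

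Your proposed route through fluctuation averaging is problematic because the fluctuation-averaging machinery of this paper (Lemma~\ref{lem_fa}) explicitly consumes Theorem~\ref{thm_boundedcaselocallaw} as an input, see (\ref{eq_giiboundinverse})--(\ref{eq_keybounds}); and Theorem~\ref{thm_boundedcaselocallaw} is precisely what Lemmas~\ref{lem: first bound for m_1n-m_1} and \ref{lem: second bound for m_1n-m_1} are building towards. So invoking it here would be circular. Nor can you fall back on the ``standard'' fluctuation averaging from \cite{Ding&Yang2018,yang2019edge}: at this non-square-root edge the usual entrywise resolvent bounds are unavailable (diagonal entries can be as large as $n^{1/(d+1)+\epsilon_d}$, cf.\ (\ref{eq_giiboundinverse})), which is exactly why the paper, following \cite{lee2016extremal,Kwak2021}, has to use the resolvent-fraction $F_{ij}$ formalism and postpone fluctuation averaging until after the local law is in hand. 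The contradiction argument sidesteps all of this and is both shorter and logically prior.
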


\quad Armed with the above two lemmas, we now proceed to the control of $m_{1n}(z)-m_1(z).$ 

\begin{proof}[\bf Proof: control of $m_{1n}(z)-m_1(z)$ and $m_n(z)-m_Q(z)$] Due to similarity, we only prove $m_{1n}(z)-m_1(z).$  We prove this by mathematical induction as that of Proposition 5.1 of \cite{lee2016extremal}. Fix $E$ such that $z=E+\ri\eta_0\in\mathbf{D}_b^{\prime}$, we consider a sequence $(\eta_j)$ defined by $\eta_j=\eta_0+jn^{-2}$. Let $K$ be the smallest positive integer such that $\eta_K\ge n^{-1/2+\epsilon_d}$. Note that for $j=K$, by Lemma \ref{lem: first bound for m_1n-m_1}, we have that  
\begin{gather*}
    |m_{1n}(z_j)-m_1(z_j)|\prec\frac{1}{n\eta_0}.
\end{gather*}
Then for any $z=E+\ri\eta$ with $\eta_{j-1}\le\eta\le\eta_j$, we have that for some constant $C>0$
\begin{align*}
 |m_1(z_j)-m(z)|=\frac{1}{n} \operatorname{tr}\left[(G(z_j)-G(z)) \Sigma\right]=\frac{|z_j-z|}{n} \operatorname{tr}(G(z_j)G(z) \Sigma) \le C \frac{|z_j-z|}{\eta_{j-1}^2}\le C\frac{n^{2\epsilon_d}}{n},
\end{align*}
where we used the first resolvent identity and the trivial bound $|G(z)| \leq \eta^{-1}$, and similarly 
\begin{align*}
|m_{1n}(z_j)-m_{1n}(z)|=\left| \int \left[\frac{1}{x-z_j}-\frac{1}{x-z} \right] \rho(x) \mathrm{d} x   \right|\le\frac{|z_j-z|}{\eta_{j-1}^2}\le\frac{n^{2\epsilon_d}}{n}. 
\end{align*}
Thus we find that if $|m_{1n}(z_j)-m_1(z_j)|\prec\frac{1}{n\eta_0}$, then by Lemma \ref{lem: second bound for m_1n-m_1}, for some constant $C'>0$
\begin{gather}\label{eq_latticeargumentbelow}
    |m_{1n}(z)-m_1(z)|\le|m_{1n}(z_j)-m_1(z_j)|+\frac{C' n^{2\epsilon_d}}{n}\prec\frac{n^{\epsilon_d}}{n\eta_0}.
\end{gather}
This gives the result that $|m_{1n}(z)-m_1(z)|\prec(n\eta_0)^{-1}$ for $z=E+\ri\eta$ with $\eta_{j-1}\le\eta\le\eta_j$. The proof for each $z$ can be completed by an induction on  $j.$ Finally, using an induction procedure and a standard lattice argument (for example, see \cite{Alex2014,Ding&Yang2018}), we find that the results hold for all $z \in \mathbf{D}_{b}^\prime$. More specifically, we construct a lattice $\mathcal{L}$ from $z^{\prime}=E^{\prime}+\ri\eta_0\in\mathbf{D}_b^{\prime}$ with $|z-z^{\prime}|\le n^{-3}$. It is obvious that the bound holds uniformly on $\mathcal{L}$. For any $z=E+\ri\eta_0\notin\mathcal{L}$, we find a $z^{\prime}\in\mathcal{L}$ and then $|z-z^{\prime}|\le n^{-3}$. Moreover, using resolvent identity, we can conclude that  $|m_1(z)-m_1(z^{\prime})|\le\eta^{-2}_0|z-z^{\prime}| \ll (n \eta_0)^{-1}$. Therefore, we conclude the proof.

\end{proof}

\quad In what follows, we prove lemmas \ref{lem: first bound for m_1n-m_1} and \ref{lem: second bound for m_1n-m_1}. The proofs are similar to those of Lemmas 5.6 and 5.7 of \cite{lee2016extremal}, except that we will need a weak local law as follows.

\begin{proposition}[Weak averaged local law]  \label{proposition_boundedweaklocallaw}
Suppose the assumptions of Theorem \ref{thm_boundedcaselocallaw} hold. We have that for $z \in \mathbf{D}_b^\prime$ 
\begin{equation*}
|m_Q(z)-m_n(z)|+|m_1(z)-m_{1n}(z)|+|m_2(z)-m_{2n}(z)|=\rO_{\prec}\left( (n \eta)^{-1/4} \right).
\end{equation*}
\end{proposition}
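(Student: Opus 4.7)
I would prove this weak averaged local law by following the template of \cite{lee2016extremal,Kwak2021} but adapted to the separable i.i.d./spherical setting, reusing the stability input that was already set up in Section \ref{sec_proofpartiboundedlocallaw}. The plan breaks into three phases.

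\textbf{Phase 1 (self-consistent expansion).} I would apply the first identity of Lemma \ref{lem: Resolvent} to write
\[
\mathcal{G}_{ii}(z) \;=\; -\frac{1}{z\bigl(1+\xi_i^2\,m_1^{(i)}(z)+Z_i\bigr)},\qquad
Z_i := \mathbf{y}_i^*G^{(i)}(z)\mathbf{y}_i-\frac{\xi_i^2}{n}\operatorname{tr}\!\bigl(G^{(i)}(z)\Sigma\bigr),
\]
then use the minor-removal bound $|m_k(z)-m_k^{(i)}(z)|\prec n^{-1}$ (from the third identity of Lemma \ref{lem: Resolvent} together with a calculation like (\ref{eq_decompositionleavoneout})) and the full resolvent decomposition (\ref{eq: decomp of mathcal_G}) adapted to the bounded setting to produce
\[
m_2(z) \;=\; \frac{1}{n}\sum_{i=1}^n \frac{\xi_i^2}{-z\bigl(1+\xi_i^2 m_1(z)\bigr)}+\mathcal{E}_2(z),\qquad
m_1(z) \;=\; -\frac{1}{n}\sum_{i=1}^p \frac{\sigma_i}{z\bigl(1+\sigma_i m_2(z)\bigr)}+\mathcal{E}_1(z),
\]
and an analogous identity for $m_Q$ with error $\mathcal{E}_Q$. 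The error terms $\mathcal{E}_{\bullet}$ are weighted averages of the $Z_i$'s and of the minor corrections $m_k-m_k^{(i)}$.

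\textbf{Phase 2 (fluctuation bound).} By Lemma \ref{lem:large deviation} combined with Ward's identity (\ref{lem:Wald}),
\[
|Z_i| \prec \frac{\xi_i^2}{n}\,\|G^{(i)}\Sigma\|_F \;\leq\; C\,\xi_i^2\sqrt{\frac{\operatorname{Im} m_1^{(i)}(z)}{n\eta}}.
\]
Using $\operatorname{Im} m_1^{(i)}(z)\le \operatorname{Im} m_{1n}(z)+|m_1(z)-m_{1n}(z)|+\rO(n^{-1})$ and the input (\ref{eq_zopointrate11}) which gives $\operatorname{Im} m_{1n}(z)=\rO(\max\{\eta,(n\eta)^{-1}\})$ on $\mathbf{D}_b'$, this converts into a bound on $|\mathcal{E}_1|+|\mathcal{E}_2|+|\mathcal{E}_Q|$ depending only on $\Lambda(z):=|m_1-m_{1n}|+|m_2-m_{2n}|+|m_Q-m_n|$ and on the scale parameters.

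\textbf{Phase 3 (stability and closing the estimate).} Subtracting the deterministic system (\ref{eq_systemequationsm1m2}) from the identities of Phase 1 and running the very same contraction argument that appeared in (\ref{eq_p1plusp2})--(\ref{eq_defnmathsfW}) produces
\[
\Lambda(z)\;\leq\;\mathfrak{c}\,\Lambda(z)\,+\,C\bigl(|\mathcal{E}_1(z)|+|\mathcal{E}_2(z)|+|\mathcal{E}_Q(z)|\bigr),
\]
with a contraction constant $\mathfrak{c}<1$ coming from the edge inequality $\mathsf{W}(L_+)<1$, which is automatic in the regime $\phi^{-1}>\varsigma_3$ and under Assumption \ref{assum_additional_techinical}. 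Absorbing $\mathfrak{c}\Lambda$ on the left and combining with the Phase 2 bound produces a quadratic inequality of the schematic form
\[
\Lambda(z)^2 \;\lesssim\; \frac{\Lambda(z)+\max\{\eta,(n\eta)^{-1}\}}{n\eta},
\]
which, together with the trivial a priori bound $\Lambda(z)\prec 1$ on $\mathbf{D}_b'$ and a standard continuity/lattice argument in $\eta$, yields $\Lambda(z)\prec (n\eta)^{-1/4}$ uniformly in $z\in\mathbf{D}_b'$.

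\textbf{Main obstacle.} The delicate point is Phase 3: in contrast with the square-root-edge case treated in \cite{ding2021spiked,yang2019edge}, the spectral edge is now of power type (\ref{eq: concave decay of rho_Q}), the denominators $1+\xi_j^2 m_{1n}(z)$ on $\mathbf{D}_b'$ are only bounded away from zero at scale $n^{-1/(d+1)-\epsilon_d}$, and the standard square-root improvement of the stability bound is unavailable. Verifying that the contraction constant $\mathfrak{c}$ nevertheless stays strictly below $1$ uniformly on $\mathbf{D}_b'$---so that the quadratic inequality closes to give the weak rate $(n\eta)^{-1/4}$ rather than a trivial estimate---is exactly where the regime $\phi^{-1}>\varsigma_3$ and Assumption \ref{assum_additional_techinical} are indispensable and will constitute the bulk of the technical work. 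A secondary but routine difficulty is propagating the $\xi_i^2$-weighted errors uniformly in $i$ while staying inside the event $\Omega$.
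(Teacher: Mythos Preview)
Your three-phase outline (self-consistent expansion, Ward-type fluctuation bound, stability plus continuity) is exactly the skeleton the paper uses; the paper's proof is in fact only a sketch that points to Lemma~5.12 of \cite{yang2019edge} after checking the counterparts of Lemmas~5.9--5.10 there. So at the level of strategy you are aligned.

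The difficulty is with your Phase~3. You propose to obtain a linear contraction $\Lambda\le\mathfrak c\,\Lambda+C|\mathcal E|$ with $\mathfrak c<1$ by ``running the very same argument'' as (\ref{eq_p1plusp2})--(\ref{eq_defnmathsfW}). That argument, however, compares two \emph{deterministic} Stieltjes transforms $m_{1n}$ and $m_{1n,c}$, which are already known to agree at scale $n^{-1/2+\epsilon}$; this is what keeps the small denominators $|1+\xi_j^2 m|$ under control on $\mathbf D_b'$. Here one side is the random $m_1$, and under the only a priori information available before the weak law---namely $\Lambda\le(\log n)^{-1}$---the perturbation $|m_1-m_{1n}|$ is much larger than the lower bound $n^{-1/(d+1)-\epsilon_d}$ on $|1+\xi_j^2 m_{1n,c}|$, so the Cauchy--Schwarz factorization giving $\mathfrak c<1$ does not transfer. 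In fact the paper proves the contraction $\mathsf C_1\le\mathfrak c<1$ only later, in the proof of Lemma~\ref{lem: est for Im m_1}, and it \emph{uses} Proposition~\ref{proposition_boundedweaklocallaw} as an input there (see (\ref{eq_mathsfc1control})). Invoking it to prove Proposition~\ref{proposition_boundedweaklocallaw} would be circular. A secondary symptom: if a uniform linear contraction really held at this stage, your own quadratic inequality would self-improve to $\Lambda\prec(n\eta)^{-1}$, not merely $(n\eta)^{-1/4}$.

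What the paper actually does instead is to work on the \emph{entrywise} event $\Xi=\{|\mathcal G_{ij}+\delta_{ij}(z(1+\xi_i^2 m_{1n}))^{-1}|+|m_2-m_{2n}|\le(\log n)^{-1}\}$. On $\Xi$ one reads off $|\mathcal G_{jj}|\lesssim |1+\xi_j^2 m_{1n}|^{-1}$ directly, hence $|1+\xi_j^2 m_1^{(j)}+Z_j|\gtrsim|1+\xi_j^2 m_{1n}|$, which together with Assumption~\ref{assum_additional_techinical} and the identity (\ref{eq_expansionusefullessorequaltoone}) keeps the averaged coefficients bounded. This is precisely the mechanism behind Lemmas~5.9--5.10 of \cite{yang2019edge}; the continuity/bootstrap in $\eta$ then gives $\Lambda\prec(n\eta)^{-1/4}$ via the crude bound $\Psi\le C(n\eta)^{-1/2}$, without ever needing $\mathfrak c<1$. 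Your plan becomes correct once you replace the averaged contraction in Phase~3 by this entrywise-event argument.
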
 
\begin{proof}
The proof of Proposition \ref{proposition_boundedweaklocallaw} is relatively standard in the random matrix literature, for example, see Section 4.1 of \cite{Alex2014} or Section 3.6 of \cite{erdHos2013spectral} or Appendix A.2 of \cite{Ding&Yang2018} or Section 5.2 of \cite{yang2019edge}. Due to similarity, as in Lemma 5.12 of \cite{yang2019edge},  we only provide the key ingredients. Define the $z$-dependent parameter
\begin{gather}\label{eq_defnpsi}
    \Psi(z):=\sqrt{\frac{\operatorname{Im}m_{1}(z)}{n\eta}}+\frac{1}{n \eta}.
\end{gather}      
Recall (\ref{eq: decomp m_2}). By Lemma \ref{lem:large deviation} and (\ref{lem:Wald}), we find that 
\begin{gather}\label{eq_zibound}
    Z_i\prec\frac{\xi^2_i}{n}\|G^{(i)}\Sigma^{1/2}\|_F \leq l\sqrt{\frac{\operatorname{Im}m^{(i)}_{1}(z)}{n\eta}}\asymp \Psi,
\end{gather}
where in the last step we used (\ref{lem:trace_difference}).  Together with (\ref{lem:trace_difference}) and the first equation of (\ref{eq: decomp m_2}), we conclude that 
\begin{gather}\label{eq_m2uboundeddecomposition}
    m_2=\frac{1}{n}\sum_{i=1}^n\frac{\xi^2_i}{-z(1+\xi^2_im_1(z)+\rO_{\prec}(\Psi))}.
\end{gather} 

\quad For $m_1(z),$ recall (\ref{eq: decomp of mathcal_G}). According to the definition of $m_1(z)$ in (\ref{eq_m1m2}), we have that 
\begin{gather}\label{eq_m1decompositionfinafinalfinal}
    m_1(z)=\frac{1}{n}\operatorname{tr}(G(z)\Sigma)=-\frac{1}{n}\sum_{i=1}^p\frac{\sigma_i}{z(1+m_2(z)\sigma_i)}+\frac{1}{n}\operatorname{tr}(R_1\Sigma)+\frac{1}{n}\operatorname{tr}(R_2\Sigma).
\end{gather}
Similarly, for $m_Q(z)$ in (\ref{eq_mq}), we have that
\begin{equation}\label{eq_mqdecomposition}
  m_{Q}(z)=\frac{1}{p}\operatorname{tr}(G(z))=-\frac{1}{p}\sum_{i=1}^p\frac{1}{z(1+m_2(z)\sigma_i)}+\frac{1}{p}\operatorname{tr}(R_1)+\frac{1}{p}\operatorname{tr}(R_2).
\end{equation}

On the one hand, when $\eta \asymp 1,$ by a discussion similar to (5.45) of \cite{yang2019edge} , we find that $\| (I+m_2^{(i)} \Sigma)^{-1} \|<\infty.$ Then using (\ref{eq_zibound}), by a discussion similar to the equations between (\ref{eq: decomp R_1}) and (\ref{eq: m_1 by m_2}), we find that 
\begin{gather}\label{eq: est of m_1 and m_2}
m_1(z)=-\frac{1}{n}\sum_{i=1}^p\frac{\sigma_i}{z(1+m_2(z)\sigma_i)}+\rO_{\prec}\Big(\frac{1}{n}\sum_i\frac{\xi^2_i\Psi}{z(1+\xi^2_im_1(z)+\rO_{\prec}(\Psi))}\Big).
\end{gather}
Similarly, we have 
\begin{equation*}
  m_{Q}(z)=-\frac{1}{p}\sum_{i=1}^p\frac{1}{z(1+m_2(z)\sigma_i)}+\rO_{\prec}\Big(\frac{1}{p}\sum_i\frac{\xi^2_i\Psi}{z(1+\xi^2_im_1(z)+\rO_{\prec}(\Psi))}\Big).
\end{equation*}

On the other hand, denote $\Xi:=\{|\mathcal{G}_{ij}(z)+\delta_{ij}(z(1+m_{1n}(z) \xi_i^2))^{-1}|+|m_2(z)-m_{2n}(z)| \leq (\log n)^{-1} \}.$ When restricted on $\Xi,$ by Assumption \ref{assum_additional_techinical}, we also have that $\| (I+m_2^{(i)} \Sigma)^{-1} \|<\infty.$ By an analogous argument, we find that (\ref{eq: est of m_1 and m_2}) also holds true. By an argument similar to (\ref{eq_nontrivialcontrol}) using Lemma \ref{lem: Resolvent}, we have that for $i \neq j,$ $\mathbf{1}(\eta \geq 1) \mathcal{G}_{ij} \prec \Psi, \ \mathbf{1}(\Xi) \mathcal{G}_{ij} \prec \Psi. $

The above arguments show that the counterparts of Lemmas 5.9 and 5.10 of \cite{yang2019edge} hold. Therefore, by the same arguments as in Lemma 5.12 of \cite{yang2019edge}, we can conclude the proof.  

\end{proof} 

\quad Next we provide some useful controls whose proofs and results will be used in the proof of Lemmas \ref{lem: first bound for m_1n-m_1} and \ref{lem: second bound for m_1n-m_1}. The results and arguments are analogous to Lemma 5.4 of \cite{lee2016extremal}. We only point out the key ingredients in our proof and refer the readers to \cite{lee2016extremal} for more details. 

\begin{lemma}\label{lem: est for Im m_1}
Suppose the assumptions of Theorem \ref{thm_boundedcaselocallaw} hold. Then we have that on $\Omega$ and for all $z=E+\ri\eta_0\in\mathbf{D}_b^{\prime}$
\begin{gather*}
    \operatorname{Im}m_1(z)\prec\frac{1}{n\eta_0}, \    \  \operatorname{Im}m_Q(z)\prec\frac{1}{n\eta_0}.
\end{gather*}
\end{lemma}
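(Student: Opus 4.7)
The plan is to combine the weak averaged local law of Proposition \ref{proposition_boundedweaklocallaw} with the deterministic edge control (\ref{eq_zopointrate11}) from Lemma \ref{localestimate2}. First, I would revisit the proof of Proposition \ref{proposition_boundedweaklocallaw}: its large-deviation core, exemplified by the bound $Z_i \prec \sqrt{\operatorname{Im} m_1^{(i)}(z)/(n\eta)}$ in (\ref{eq_zibound}), actually produces the sharper $\Psi$-type fluctuation estimate
\begin{equation*}
|m_1(z)-m_{1n}(z)|+|m_Q(z)-m_n(z)| \prec \sqrt{\operatorname{Im} m_1(z)/(n\eta)}+\frac{1}{n\eta},
\end{equation*}
with the recorded $(n\eta)^{-1/4}$ bound being just the consequence of inserting the a priori estimate $\operatorname{Im} m_1\leq 1$ into $\Psi$. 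Writing $\operatorname{Im} m_1 \leq \operatorname{Im} m_{1n}+|m_1-m_{1n}|$ on the right-hand side and absorbing $|m_1-m_{1n}|$ into the left via AM-GM deterministically upgrades this to
\begin{equation*}
|m_1(z)-m_{1n}(z)|+|m_Q(z)-m_n(z)| \prec \sqrt{\operatorname{Im} m_{1n}(z)/(n\eta)}+\frac{1}{n\eta}.
\end{equation*}

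Second, I would specialise to $z=E+\ri\eta_0\in\mathbf{D}_b^{\prime}$ with $\eta_0=n^{-1/2-\epsilon_d}$. Since $n\eta_0^2=n^{-2\epsilon_d}\ll 1$, we have $\max\{\eta_0,1/(n\eta_0)\}=1/(n\eta_0)$, so (\ref{eq_zopointrate11}) yields the deterministic bounds $\operatorname{Im} m_{1n}(z)=\rO(1/(n\eta_0))$ and $\operatorname{Im} m_n(z)=\rO(1/(n\eta_0))$. Plugging these into the upgraded fluctuation estimate gives $|m_1(z)-m_{1n}(z)|+|m_Q(z)-m_n(z)|\prec 1/(n\eta_0)$, and the triangle inequalities
\begin{equation*}
\operatorname{Im} m_1(z)\leq \operatorname{Im} m_{1n}(z)+|m_1(z)-m_{1n}(z)|,\qquad \operatorname{Im} m_Q(z)\leq \operatorname{Im} m_n(z)+|m_Q(z)-m_n(z)|
\end{equation*}
deliver the desired $\operatorname{Im} m_1(z)\prec 1/(n\eta_0)$ and $\operatorname{Im} m_Q(z)\prec 1/(n\eta_0)$.

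The main obstacle is the first step—carefully re-running the weak local law on $\mathbf{D}_b^{\prime}$ to obtain the $\Psi$-form rather than just the $(n\eta)^{-1/4}$ shortcut. The usual fluctuation averaging relies on a lower bound for $|1+\xi_j^2 m_{1n}(z)|$, which is precisely why the definition (\ref{eq_spectralparameterprime}) of $\mathbf{D}_b^{\prime}$ imposes $|1+\xi_j^2 m_{1n,c}(z)|>\frac{1}{2}n^{-1/(d+1)-\epsilon_d}$; one must verify, using the closeness (\ref{eq_deterministiclose}) between $m_{1n}$ and $m_{1n,c}$, that this control transfers to $m_{1n}$ and that the resulting instability scale remains harmless for the self-consistent fixed-point iteration. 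Modulo this, the scheme closely mirrors the separable-covariance arguments of \cite{yang2019edge} and \cite{lee2016extremal}, and the remaining algebra is routine.
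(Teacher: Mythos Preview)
Your proposal is correct and tracks the paper's own proof closely in substance, though the paper organises the same ingredients as a contradiction argument rather than a direct one. The paper assumes $\operatorname{Im} m_1(z)>n^{\epsilon}/(n\eta_0)$, restricts to the high-probability event where the weak law of Proposition~\ref{proposition_boundedweaklocallaw} and the bound $Z_i\prec n^{c_2\epsilon}\Psi$ hold, and then writes the self-consistent equation (\ref{eq_decompositiondeomdeomdeodemdem}) as $m_{1n}-m_1=\mathsf{C}_1(m_{1n}-m_1)+\rO(n^{c_2\epsilon}\Psi)$ with the stability bound $|\mathsf{C}_1|\leq\mathfrak{c}<1$ from (\ref{eq_defnmathsfW}); this yields $|m_{1n}-m_1|=\rO(n^{c_2\epsilon}\Psi)$, contradicting the lower bound coming from (\ref{eq_zopointrate}) and (\ref{eq_oneregimeedgecontrol}). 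Your direct route---extract $|m_1-m_{1n}|\prec\Psi$, replace $\operatorname{Im} m_1$ by $\operatorname{Im} m_{1n}$ via AM--GM, then invoke (\ref{eq_zopointrate11})---is exactly the unwinding of that contradiction into a forward argument, and relies on the same stability input.

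One small correction to your last paragraph: the step you call ``fluctuation averaging'' is not what is needed here. The crucial point is the \emph{stability} bound $|\mathsf{C}_1|<1$, which the paper obtains via Cauchy--Schwarz and the inequality (\ref{eq_defnmathsfW}); it does not require the pointwise lower bound on $|1+\xi_j^2 m_{1n}(z)|$ from the definition of $\mathbf{D}_b'$ (the paper controls the averaged terms in $\mathsf{C}_2$ using only $m_{2n}\asymp 1$, cf.\ (\ref{eq_mathsfc2control})). So the obstacle you flag is milder than you suggest, and once the weak law supplies the a~priori control needed to justify (\ref{eq_mathsfc1control}), the rest is indeed routine.
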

\begin{proof}
Due to similarity, we focus our proof on $\operatorname{Im} m_Q(z).$ We prove by contradiction. Given some $\epsilon>0,$ conditional on $\Omega,$ for some sufficiently small constants $0<c_1, c_2<1,$ we first introduce a probability event $\Xi_1 \equiv \Xi_1(\epsilon)$ so that the followings holds:
\begin{enumerate}
\item For $z \in \mathbf{D}_b^\prime,$
\begin{equation*}
|m_Q(z)-m_n(z)|+|m_1(z)-m_{1n}(z)|+|m_2(z)-m_{2n}(z)| \leq  (n \eta)^{-1/4+c_1\epsilon}.
\end{equation*}
\item For $z \in \mathbf{D}_b^\prime$ and $Z_i$ in (\ref{eq: decomp m_2}),
\begin{equation*}
\max_i Z_i \leq n^{c_2 \epsilon}\Psi.
\end{equation*}
\end{enumerate}
According to Proposition \ref{proposition_boundedweaklocallaw} and (\ref{eq_zibound}), we find that there exists some large constant $D>0$ so that $\mathbb{P}(\Xi_1)=1-n^{-D}.$ In what follows, we restrict ourselves on $\Xi_1$ so that the discussions are purely deterministic.  

\quad  Assuming that 
\begin{equation*}
 \operatorname{Im}m_1(z)>n^{\epsilon}\frac{1}{n\eta_0}.
\end{equation*}
We then conclude from the definition of $\Psi$ in (\ref{eq_defnpsi}) that
\begin{equation}\label{eq_Psicontrol}  
\Psi=\ro(\operatorname{Im}m_{1}(z)).
\end{equation}
 Moreover, by (\ref{eq_zopointrate}) and (\ref{eq_oneregimeedgecontrol}), we readily see that    $\operatorname{Im}m_{1n}(z)\ll \operatorname{Im}m_1(z) $. This implies that for some constant $C>0$  
 \begin{gather}\label{eq_tobeusedasacontradiction}
    |m_{1n}(z)-m_1(z)|\ge|\operatorname{Im}m_{1n}-\operatorname{Im}m_1|>Cn^{\epsilon}\frac{1}{n\eta_0}.
\end{gather}

\quad On the other hand, by Proposition \ref{proposition_boundedweaklocallaw} and Assumption \ref{assum_additional_techinical}, we see that (\ref{eq: est of m_1 and m_2}) still holds. Together with $m_{1n}$ in (\ref{eq_systemequationsm1m2}), using (\ref{eq: decomp m_2}), we find that 
\begin{align}
m_{1n}-m_1 & =\frac{1}{n}\sum_{i=1}^p\frac{\frac{\sigma_i^2}{n}\sum_j\frac{\xi^4_j(m_{1n}-m_1)+\xi^2_j\rO(n^{c_2 \epsilon}\Psi)}{(1+\xi^2_jm_{1n})(1+\xi^2_jm_1+\rO(n^{c_2 \epsilon}\Psi))}}{(z-\frac{\sigma_i}{n}\sum_j\frac{\xi^2_j}{1+\xi^2_jm_{1n}})(z-\frac{\sigma_i}{n}\sum_j\frac{\xi^2_j}{1+\xi^2_jm_{1}+\rO(n^{c_2\epsilon}\Psi)})}+\rO(n^{c_2 \epsilon}\Psi) \nonumber \\
&=\mathsf{C}_1(m_{1n}-m_1)+\mathsf{C}_2+\rO(n^{c_2 \epsilon} \Psi), \label{eq_decompositiondeomdeomdeodemdem}
\end{align}
where $\mathsf{C}_1, \mathsf{C}_2$ are defined as 
\begin{align*}
& \mathsf{C}_1:=\frac{1}{n}\sum_{i=1}^p\frac{\frac{\sigma_i^2}{n}\sum_j\frac{\xi^4_j}{(1+\xi^2_jm_{1n})(1+\xi^2_jm_1+\rO(n^{c_2 \epsilon}\Psi))}}{(z-\frac{\sigma_i}{n}\sum_j\frac{\xi^2_j}{1+\xi^2_jm_{1n}})(z-\frac{\sigma_i}{n}\sum_j\frac{\xi^2_j}{1+\xi^2_jm_{1}+\rO(n^{c_2\epsilon}\Psi)})}, \\
& \mathsf{C}_2:=\frac{1}{n}\sum_{i=1}^p\frac{\frac{\sigma_i^2}{n}\sum_j\frac{\xi^2_j\rO(n^{c_2 \epsilon}\Psi)}{(1+\xi^2_jm_{1n})(1+\xi^2_jm_1+\rO(n^{c_2 \epsilon}\Psi))}}{(z-\frac{\sigma_i}{n}\sum_j\frac{\xi^2_j}{1+\xi^2_jm_{1n}})(z-\frac{\sigma_i}{n}\sum_j\frac{\xi^2_j}{1+\xi^2_jm_{1}+\rO(n^{c_2\epsilon}\Psi)})}.
\end{align*}
We first control $\mathsf{C}_2.$ It is easy to see that $m_1\sim 1$ by contradiction. If $m_1\ll 1$, one can observe from (\ref{eq_m2uboundeddecomposition}) that $m_2\sim 1$ which yields $m_1 \asymp 1$ by (\ref{eq: est of m_1 and m_2}). If $m_1\gg 1$, we have $m_2\ll 1$ from (\ref{eq_m2uboundeddecomposition}), and then it gives that $m_1 \asymp 1$ by (\ref{eq: est of m_1 and m_2}). Similarly, we can show that $m_2 \asymp 1.$ Together with Proposition \ref{proposition_boundedweaklocallaw}, we find that $m_{1n}, m_{2n} \asymp 1.$ Since $z \asymp 1,$ using the definition of $m_{2n}$ in (\ref{eq_systemequationsm1m2}) and Proposition \ref{proposition_boundedweaklocallaw}, we find that 
\begin{equation*}
\frac{1}{n}\sum_j\frac{\xi^2_j\rO(n^{c_2 \epsilon}\Psi)}{(1+\xi^2_jm_{1n})(1+\xi^2_jm_1+\rO(n^{c_2 \epsilon}\Psi))}=\rO(n^{c_2 \epsilon} \Psi ).  
\end{equation*}
Moreover, by Proposition \ref{proposition_boundedweaklocallaw}, (\ref{eq_deterministiclose}) and Assumption \ref{assum_additional_techinical}, we find that 
\begin{equation*}
\frac{1}{n}\sum_i \frac{1}{(z-\frac{\sigma_i}{n}\sum_j\frac{\xi^2_j}{1+\xi^2_jm_{1n}})(z-\frac{\sigma_i}{n}\sum_j\frac{\xi^2_j}{1+\xi^2_jm_{1}+\rO(n^{c_2\epsilon}\Psi)})} \asymp 1. 
\end{equation*}  
This yields that 
\begin{equation}\label{eq_mathsfc2control}
\mathsf{C}_2=\rO(n^{c_2 \epsilon} \Psi). 
\end{equation}

\quad For $\mathsf{C}_1,$ using Proposition \ref{proposition_boundedweaklocallaw} and Assumption \ref{assum_additional_techinical}, by an argument similar to (\ref{eq_tintot1t2}), we can conclude that when $n$ is sufficiently large, for some constant $0<\mathfrak{c}<1,$ 
\begin{equation}\label{eq_mathsfc1control}
\mathsf{C}_1  \leq \mathfrak{c}.
\end{equation}
Combining (\ref{eq_decompositiondeomdeomdeodemdem}), (\ref{eq_mathsfc2control}) and (\ref{eq_mathsfc1control}), we conclude that  
\begin{gather*}
    |m_{1n}-m_1|=\rO(n^{c_2 \epsilon}\Psi),
\end{gather*}
which contradicts (\ref{eq_tobeusedasacontradiction}) since $c_2<1$ is sufficiently small. This completes our proof for each fixed $z.$ For uniformity in $z,$ we can follow a standard lattice argument as discussed below (\ref{eq_latticeargumentbelow}). This finishes the proof. The discussion for $m_Q$ follows from an analogous discussion with the help of (\ref{eq_m2uboundeddecomposition}) and (\ref{eq: est of m_1 and m_2}). 
\end{proof}

\begin{remark}\label{remk_zibound}
Two remarks are in order. First, it is easy to see that repeating the proof of Lemma \ref{lem: est for Im m_1}, we can prove the results for all $\eta$ as specified in (\ref{eq_spectraldomaintwo}). Second, we remark that combining (\ref{eq_zibound}) and Lemma \ref{lem: est for Im m_1}, when $z=E+\ri \eta_0 \in \mathbf{D}_b^\prime,$ we have that conditional on $\Omega$ 
\begin{equation*}
Z_i \prec  \frac{1}{n \eta_0}. 
\end{equation*}
\end{remark}

\quad Armed with the above discussions and results, following the strategies of Lemmas 5.6 and 5.7 of \cite{lee2016extremal} or \cite{Kwak2021},  we prove Lemmas \ref{lem: first bound for m_1n-m_1} and \ref{lem: second bound for m_1n-m_1} using similar arguments as in Lemma \ref{lem: est for Im m_1}. Due to similarity, we only provide the key ingredients. 

\begin{proof}[\bf Proof of Lemmas \ref{lem: first bound for m_1n-m_1} and \ref{lem: second bound for m_1n-m_1}] Due to similarity, we focus our proof on Lemma \ref{lem: first bound for m_1n-m_1} and briefly mention that of Lemma \ref{lem: second bound for m_1n-m_1} in the end. Due to similarity, we only explain $|m_{1n}(z)-m_1(z)|.$

\quad The proof is similar to that of Lemma \ref{lem: est for Im m_1} and we prove by contradiction. We also restrict ourselves on the event $\Xi_1$ in Lemma \ref{lem: est for Im m_1}. We assume that $|m_{1n}(z)-m_1(z)|>n^{\epsilon}(n\eta_0)^{-1}$. To see a contraction,  in addition to the the arguments of Lemma \ref{lem: est for Im m_1}, we need to provide a finer control for $\Psi$ since in the current proof it depends on $\eta$ instead of $\eta_0.$ Note that
\begin{gather}\label{eq_modificationkeykey}
    \begin{split}
       n^{c_2 \epsilon} \Psi&=n^{c_2 \epsilon}\sqrt{\frac{|\operatorname{Im}m_1-\operatorname{Im}m_{1n}+\operatorname{Im}m_{1n}|}{n\eta}}+n^{c_2 \epsilon} \frac{1}{n\eta}\\
        &\le   n^{c_2 \epsilon} \sqrt{\frac{|\operatorname{Im}m_1-\operatorname{Im}m_{1n}|}{n\eta}} +  n^{c_2 \epsilon} \sqrt{\frac{\operatorname{Im}m_{1n}}{n\eta}}+n^{c_2 \epsilon} \frac{1}{n\eta}\\
        &=\ro(|m_1-m_{1n}|),
    \end{split}
\end{gather}
where in the last step we used (\ref{eq_zopointrate11}) and the assumption  $|m_{1n}(z)-m_1(z)|>n^{\epsilon}(n\eta_0)^{-1} \gg (n \eta)^{-1}$ when $n^{-1/2+\epsilon_d} \leq \eta \leq n^{-1/(d+1)+\epsilon_d}.$ Replacing $  n^{c_2 \epsilon} \Psi$ with $\ro(|m_1-m_{1n}|)$ in the arguments between (\ref{eq_decompositiondeomdeomdeodemdem}) and (\ref{eq_mathsfc1control}), we find that $|m_{1n}-m_1|=\ro(|m_{1n}-m_1|)$ which is a contraction. This proves the result for each fixed $z.$ The uniformity follows from the same lattice argument as mentioned in the end of the proof of Lemma \ref{lem: est for Im m_1}. 

The proof of Lemma \ref{lem: second bound for m_1n-m_1} is similar. We also prove by contradiction and assume that $n^{\epsilon}(n\eta_0)^{-1}<|m_1(z)-m_{1n}(z)|\le n^{\epsilon+\epsilon_d}(n\eta_0)^{-1}.$  Under this assumption, together with (\ref{eq_zopointrate}) and (\ref{eq_oneregimeedgecontrol}), we find that (\ref{eq_modificationkeykey}) still holds true. The rest of the arguments are similar and we omit the details.

\end{proof}


\section{Locations for extreme eigenvalues and proof of the main results}\label{sec_locationofeigenvalues}
In this section, we study the first order convergent limits of the largest eigenvalues of $Q,$ i.e., $\lambda_1(Q).$ In Section \ref{sec_sub_unbounded1st}, we investigate the case when $\{\xi_i^2\}$ have unbounded support as in (i) of Assumption \ref{assum_D}. In Section \ref{sec_sub_bounded1st}, we study the bounded support case as in (ii) of Assumption \ref{assum_D}.

\subsection{The unbounded support case}\label{sec_sub_unbounded1st}

In order to quantify the location of $\lambda_1 \equiv \lambda_1(Q),$ we need to introduce several auxiliary quantities. Recall $\mu_1$ defined in (\ref{eq: def of mu_1}). Similarly, we denote $\mu_2$ by replacing $\xi^2_{(1)}$ with $\xi^2_{(2)}$ in (\ref{eq: def of mu_1}).  Moreover,  for $d_1$ and the sufficiently small constant $\epsilon>0$ in (\ref{eq_firstddefinition}), we denote 
\begin{equation}\label{eq_keylocationdefinition}
\mu_1^{\pm}:=\mu_1 \pm n^{-1/2+2 \epsilon} d_1 ,
\end{equation}
and recall that 
\begin{equation}\label{eq_qremoveonecolumn}
Q^{(1)}:=Q-\mathbf{y}_{(1)}\mathbf{y}^{*}_{(1)},
\end{equation}
where $\mathbf{y}_{(1)}$ is the column of $Y$ in (\ref{eq_datamatrix}) associated with $\xi^2_{(1)}$. Accordingly, we denote the largest eigenvalue of $Q^{(1)}$ as $\lambda_1^{(1)} \equiv \lambda_1(Q^{(1)}).$ Throughout this section, we shall prove Figure \ref{fig_locationtizubounded} so that the location of $\lambda_1$ can be quantified with high probability on the event $\Omega$. 

\begin{figure}[ht]
\centering
\begin{tikzpicture}
\foreach \x/\y in {0/\lambda_2, 0.8/\lambda_1^{(1)}, 1.5/\mu_2, 3/\mu_1^-, 4/\mu_1, 5/\mu_1^+}
 \draw[ultra thick] (\x,0.1) -- (\x,-0.1) node[below]{$\y$};
\draw[ultra thick] (-0.5,0)--(6,0);
\foreach \mycoord in {(3,0)}
    \draw [mybrace] \mycoord -- node[above, yshift=2mm]{$\lambda_1$ is here} ++(2,0); 
\end{tikzpicture}
\caption{Location of the largest eigenvalue of $Q$.} \label{fig_locationtizubounded} 
\end{figure}
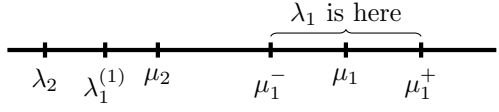

More formally, the main result is summarized in Proposition \ref{lem: eigenvalue rigidity} below. 
\begin{proposition}\label{lem: eigenvalue rigidity}
Suppose Assumptions \ref{assum_model}, \ref{assumption_techincial} and (i) of Assumption \ref{assum_D} hold. For some sufficiently small constant $\epsilon>0$ and $\mu_1^{\pm}$ defined in (\ref{eq_keylocationdefinition}), condition on the probability event $\Omega$ in Lemma \ref{lem_probabilitycontrol}, with high probability, we have that 
\begin{equation*}
\lambda_1 \in [\mu_1^-, \mu_1^+]. 
\end{equation*}
\end{proposition}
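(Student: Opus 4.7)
My plan is perturbative, treating the column $\mathbf{y}_{(1)}$ corresponding to $\xi^2_{(1)}$ as a rank-one spike against $Q^{(1)}$. Writing $Q=Q^{(1)}+\mathbf{y}_{(1)}\mathbf{y}_{(1)}^{*}$ and invoking the Sherman--Morrison identity, any eigenvalue of $Q$ that lies strictly above the spectrum of $Q^{(1)}$ must be a zero of the master equation
\[
M(x) := 1+\mathbf{y}_{(1)}^{*}G^{(1)}(x)\mathbf{y}_{(1)}.
\]
First I will apply Lemma \ref{lem: upper bound for eigenvalues} to $Q^{(1)}$ (which has the same structure as $Q$ but with $\xi^2_{(1)}$ removed), combined with the well-separation of $\xi^2_{(1)}$ from $\xi^2_{(2)}$ on $\Omega$ given by (\ref{def1})/(\ref{def3}), to conclude $\lambda_1^{(1)}\le\mu_2\ll\mu_1^{-}$. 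Thus on $(\mu_1^{-},\infty)$ the function $M$ is smooth and strictly decreasing, so locating $\lambda_1$ in $[\mu_1^{-},\mu_1^{+}]$ reduces to showing $M(\mu_1^{-})$ and $M(\mu_1^{+})$ have opposite signs with high probability.

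Next I will approximate $M(x)$ by a deterministic function on a $d_1$-scale neighborhood of $\mu_1$. Using the quadratic-form large deviation bound of Lemma \ref{lem:large deviation}, the Ward identity (\ref{lem:Wald}), and the estimates of Lemma \ref{lem: basic bounds} (which give $\|G^{(1)}(x)\Sigma^{1/2}\|_{F}$ of order $n^{1/2}/\mu_1$ for $x$ near $\mu_1$), I will establish
\[
\mathbf{y}_{(1)}^{*}G^{(1)}(x)\mathbf{y}_{(1)} = \xi^2_{(1)}m_1^{(1)}(x)+\rO_{\prec}(n^{-1/2-1/\alpha}),
\]
with $1/\alpha$ replaced by $0$ in the exponential-tail case of (\ref{ass3.2}). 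I will then invoke the averaged local law of Theorem \ref{thm_unboundedcaselocallaw} to substitute $m_{1n}^{(1)}(x)$ for $m_1^{(1)}(x)$ at the same order of error. Lemma \ref{lem: basic bounds} moreover yields $|\partial_{x}m_{1n}^{(1)}(\mu_1)|\asymp\mu_1^{-2}$, so the deterministic part of $M$ is affine to leading order on a $d_1$-window of $\mu_1$ with slope of magnitude $\xi^{-2}_{(1)}$.

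Combining these ingredients with the defining relation (\ref{eq: def of mu_1}) and a first-order Taylor expansion of $m_{1n}^{(1)}$ about $\mu_1$, I will show that at the endpoints $\mu_1^{\pm}$ the deterministic part of $M$ picks up a displacement of magnitude $n^{-1/2+2\epsilon}d_1\cdot\xi^{-2}_{(1)}\asymp n^{-1/2+\epsilon}$, which dominates both the random error $\rO_{\prec}(n^{-1/2-1/\alpha})$ and the residual $1+\xi^2_{(1)}m_{1n}^{(1)}(\mu_1)$ produced by replacing $\xi^2_{(1)}+d_1$ by $\xi^2_{(1)}$ in (\ref{eq: def of mu_1})--the role of $d_1$ being precisely to absorb, to the required precision, the one-column discrepancy between $m_{1n}$ and $m_{1n}^{(1)}$ at $\mu_1$. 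The signs of $M(\mu_1^{-})$ and $M(\mu_1^{+})$ are then opposite, and the intermediate value theorem together with the monotonicity from the first step produces a zero of $M$, which must be $\lambda_1$.

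The principal obstacle will be the detailed bookkeeping in the last step: verifying that the deterministic residual $1+\xi^2_{(1)}m_{1n}^{(1)}(\mu_1)$ is of strictly smaller order than $n^{-1/2+\epsilon}$, so that the prescribed $\pm n^{-1/2+2\epsilon}d_1$ perturbation truly drives the sign change of $M$. This amounts to a delicate comparison of the systems (\ref{eq_systemequationsm1m2}) defining $m_{1n}$ and $m_{1n}^{(1)}$, together with careful use of the stability/derivative estimates in Lemma \ref{lem: basic bounds}, and is the most intricate part of the argument.
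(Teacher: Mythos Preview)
Your overall strategy matches the paper's almost step for step: isolate $\mathbf{y}_{(1)}$, reduce to a sign change of $M(x)=1+\mathbf{y}_{(1)}^{*}G^{(1)}(x)\mathbf{y}_{(1)}$ on $[\mu_1^{-},\mu_1^{+}]$, concentrate the quadratic form via Lemma~\ref{lem:large deviation}, replace $m_1^{(1)}$ by $m_{1n}^{(1)}$ through Theorem~\ref{thm_unboundedcaselocallaw}, and Taylor-expand around $\mu_1$. Two minor corrections: (i) on $(\lambda_1^{(1)},\infty)$ the function $M$ is strictly \emph{increasing}, not decreasing; (ii) the Ward identity (\ref{lem:Wald}) and Theorem~\ref{thm_unboundedcaselocallaw} are only available for $z\in\mathbb{C}_+$, so you cannot apply them directly at the real points $\mu_1^{\pm}$. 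The paper bounds $\|G^{(1)}(\mu_1^{-})\Sigma^{1/2}\|_F$ via the spectral gap $\mu_1^{-}-\lambda_1^{(1)}\gtrsim\mu_1-\mu_2$, and passes through an auxiliary complex point $z_0=\mu_1^{-}+\ri n^{-1/2-\epsilon_0}$ to invoke the local law before transferring back to the real line; you should plan for the same detour.

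The substantive gap is in your treatment of the residual. You assert that $1+\xi_{(1)}^{2}m_{1n}^{(1)}(\mu_1)$ is of strictly smaller order than $n^{-1/2+\epsilon}$, and you attribute this to $d_1$ ``absorbing'' the one-column discrepancy $m_{1n}^{(1)}-m_{1n}$. That is not what $d_1$ does. From the defining relation $m_{1n}(\mu_1)=-1/(\xi_{(1)}^{2}+d_1)$ one computes
\[
1+\xi_{(1)}^{2}m_{1n}(\mu_1)=\frac{d_1}{\xi_{(1)}^{2}+d_1}\asymp\frac{d_1}{\xi_{(1)}^{2}},
\]
which is of order $n^{-\epsilon}$ under (\ref{ass3.1}) and of order $\log^{-1/\beta}n$ under (\ref{ass3.2}); meanwhile the paper shows $(\xi_{(1)}^{2}+d_1)|m_{1n}^{(1)}(\mu_1)-m_{1n}(\mu_1)|=\rO(n^{-1})$, so the one-column discrepancy contributes only $\rO(n^{-1})$. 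Hence your residual $1+\xi_{(1)}^{2}m_{1n}^{(1)}(\mu_1)\asymp d_1/\xi_{(1)}^{2}$, which \emph{dominates} the $n^{-1/2+\epsilon}$ displacement coming from $\mu_1\to\mu_1^{\pm}$ and therefore cannot drive a sign change. The paper's remedy is to carry the factor $(\xi_{(1)}^{2}+d_1)$ rather than $\xi_{(1)}^{2}$: it establishes (cf.\ (\ref{eq: m^s_1_1n-m_1n})) that $1+(\xi_{(1)}^{2}+d_1)m_{1n}^{(1)}(\mu_1)=\rO(n^{-1})$, and then shows the signed increment $-(\xi_{(1)}^{2}+d_1)\big(m_{1n}^{(1)}(\mu_1)-m_{1n}^{(1)}(\mu_1^{\pm})\big)\asymp\mp\, n^{-1/2+\epsilon}$ produces the sign change. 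The actual role of $d_1$ is regularization of the sum $n^{-1}\sum_j\xi_j^{2}/(\xi_{(1)}^{2}+d_1-\xi_j^{2})$ in (\ref{eq: 2.2 beta=2}), not cancellation of the leave-one-out error; you should rework your final step with $(\xi_{(1)}^{2}+d_1)$ in place of $\xi_{(1)}^{2}$ to align with (\ref{eq_M1reducedcase})--(\ref{eq: lambda-lambda_(1)}).
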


%



We now proceed to the proof of Proposition \ref{lem: eigenvalue rigidity} following the structure described in Figure \ref{fig_locationtizubounded}. 
\begin{proof}[\bf Proof of Proposition \ref{lem: eigenvalue rigidity}]
Due to similarity, we focus on the study of the separable covariance i.i.d. model as in Case (2) of Assumption \ref{assum_model} when $\xi^2$ satisfies (\ref{ass3.1}). The main differences from the other cases will be explained in the end of the proof. 

In what follows, we restrict the discussion on the probability event $\Omega$ in Lemma \ref{lem_probabilitycontrol}.
By Weyl's inequality, we have that $\lambda_2 \leq \lambda_1^{(1)}.$ Moreover, by (\ref{eq_keyused}), we see that with high probability $\lambda_1^{(1)}<\mu_2.$ The rest of the proof leaves to prove that the following two claims: 
\begin{equation}\label{eq_mu1mu2bound}
\mu_1-\mu_2 \geq n^{1/\alpha} \log^{-1}n,
\end{equation}
and for $Q^{(1)}$ in (\ref{eq_qremoveonecolumn}) and 
\begin{equation}\label{eq_defnmlambda}
M(\lambda)=1+\mathbf{y}^{*}_{(1)}G_1^{(1)}(\lambda)\mathbf{y}_{(1)}, \ G^{(1)}_1(\lambda):=(Q^{(1)}-\lambda I)^{-1},
\end{equation}
$M(\lambda)$ changes sign with high probability at $\mu_1^-$ and $\mu_1^+.$  In fact, for $\lambda_1,$ it should satisfy the following equation with high probability
\begin{equation}\label{eq_masterequation}
{\rm det}(\lambda_1I-\mathbf{y}_{(1)}\mathbf{y}^{*}_{(1)}-Q^{(1)})=0\Rightarrow M(\lambda_1)=0,
\end{equation}
as long as $\lambda_1>\lambda_1^{(1)}.$ On the other hand, if $M(\lambda)$ changes sign at $\mu_1^{\pm},$ by continuity,  there must at least be an eigenvalue of $Q$ in the interval $[\mu_1^-, \mu_1^+].$ If (\ref{eq_mu1mu2bound}) holds, combining the above arguments, we see that the only possibility is $\lambda_1$ and it is also true that $\lambda_1>\lambda_1^{(1)}.$

We first justify (\ref{eq_mu1mu2bound}). Recall that $\mu_1$ is defined in (\ref{eq: def of mu_1}) according to 
\begin{equation*}
1+(\xi^2_{(1)}+d_{1})m_{1n}(\mu_1)=0.
\end{equation*}
Together with (\ref{eq_functionFequal}) and (\ref{eq:F(m,z)}), we readily obtain that 
\begin{equation*}
1=\frac{1}{n}\sum_{i=1}^p\frac{\sigma_i}{\frac{\mu_1}{\xi^2_{(1)}+d_{1}}-\frac{\sigma_i}{n}\sum_{j=1}^n\frac{\xi_{(j)}^2}{\xi^2_{(1)}+d_{1}-\xi^2_{(j)}}}.
\end{equation*}
Recall (\ref{e2_definition}). Using the definition of $d_1$ and \eqref{def1}, we see that on $\Omega,$ for some constant $C>0$
%
%
\begin{equation}\label{eq: 2.2 beta=2}
\begin{split}
\frac{1}{n}\sum_{j=1}^n\frac{\xi_{(j)}^2}{\xi^2_{(1)}+d_{1}-\xi^2_{(j)}}&=\frac{1}{n}\frac{\xi_{(1)}^2}{d_{1}}+\frac{1}{n}\sum_{j=2}^{n}\frac{\xi_{(j)}^2}{\xi^2_{(1)}+d_{1}-\xi^2_{(j)}}\\
   &\le \frac{Cn^{\epsilon}\log n}{n}+\frac{1}{n}\sum_{j=2}^{n}\frac{\xi_{(j)}^2}{\xi^2_{(1)}+d_{1}-\xi^2_{(2)}}\\
   &\le\frac{Cn^{\epsilon}\log n}{n}+\frac{C\log n}{n^{1/\alpha}}=C e.
\end{split}
\end{equation}

Using the definition of $\varphi$ in (\ref{eq_defnvarphi}), 
the above arguments imply that on $\Omega$
\begin{equation}\label{eq_mu1part}
\frac{\mu_1}{\xi_{(1)}^2+d_1}=\varphi+\rO(e).  
\end{equation}

By an analogous argument, we have that for some constants $C_k>0, k=1,2,3,$ 
%
%
\[
\begin{split}
&\frac{1}{n}\left(\sum_{j=1}^n \frac{\xi^2_{(j)}}{\xi^2_{(1)}+d_{1}-\xi^2_{(j)}}-\sum_{j=2}^n\frac{\xi^2_{(j)}}{\xi^2_{(2)}+d_{1}-\xi^2_{(j)}}\right)\\
&=\frac{1}{n}\left(\frac{\xi^2_{(1)}}{d_{1}}+\sum_{j=2}^n\frac{\xi^2_{(j)}}{\xi^2_{(1)}+d_{1}-\xi^2_{(j)}}-\sum_{j=2}^n \frac{\xi^2_{(j)}}{\xi^2_{(2)}+d_{1}-\xi^2_{(j)}}\right)\\
&=C_1e-\frac{1}{n}\sum_{j=2}^n\frac{\xi^2_{(j)}(\xi^2_{(1)}-\xi^2_{(2)})}{(\xi^2_{(1)}+d_{1}-\xi^2_{(j)})(\xi^2_{(2)}+d_{1}-\xi^2_{(j)})}\\
&\ge C_1 e-\frac{1}{n}\sum_{j=2}^n \frac{\xi^2_{(j)}}{\xi^2_{(2)}+d_{1}-\xi^2_{(j)}}\\
&\geq C_2 e,
\end{split}
\]
and on the other hand
\[
\begin{split}
    &\frac{1}{n}\left(\sum_{j=1}^n \frac{\xi^2_{(j)}}{\xi^2_{(1)}+d_{1}-\xi^2_{(j)}}-\sum_{j=2}^n \frac{\xi^2_{(j)}}{\xi^2_{(2)}+d_{1}-\xi^2_{(j)}}\right)\\
    &\le \frac{1}{n}\left(\frac{\xi^2_{(1)}}{d_{1}}+\sum_{j=2}\frac{\xi^2_{(j)}}{\xi^2_{(1)}+d_{1}-\xi^2_{(j)}}+\sum_{j=2}\frac{\xi^2_{(j)}}{\xi^2_{(2)}+d_{1}-\xi^2_{(j)}}\right)\\
    & \leq C_3e.
\end{split}
\]
Using the above control, the definition of $\mu_2$ and a discussion similar to (\ref{eq_mu1part}), we can prove that 
\begin{equation}\label{eq_mu2part}
\frac{\mu_2}{\xi_{(2)}^2+d_1}=\phi \bar{\sigma}+\rO(e).  
\end{equation}

Combining (\ref{eq_mu1part}) and (\ref{eq_mu2part}), we immediately see that  
\begin{equation}\label{eq_bbbbb}
\frac{\mu_1}{\xi^2_{(1)}+d_{1}}-\frac{\mu_2}{\xi^2_{(2)}+d_{1}}=\rO(e).
\end{equation}
This implies that 
\[
\begin{split}
    \mu_1-\mu_2&=(\xi^2_{(1)}+d_{1})\left(\frac{\mu_1}{\xi^2_{(1)}+d_{1}}-\frac{\mu_2}{\xi^2_{(2)}+d_{1}}\right)+\mu_2\left(\frac{\xi^2_{(1)}+d_{1}}{\xi^2_{(2)}+d_{1}}-1\right)\\
    &=(\xi^2_{(1)}+d_{1})\left(\frac{\mu_1}{\xi^2_{(1)}+d_{1}}-\frac{\mu_2}{\xi^2_{(2)}+d_{1}}\right)+\frac{\mu_2}{\xi^2_{(2)}+d_1}\left(\xi^2_{(1)}-\xi^2_{(2)}\right)\\
    &\ge n^{2/\alpha}\log^{-1}n,
\end{split}
\]
where in the third step we used (\ref{eq_bbbbb}), (\ref{def1}) and the definition of $e_2$ in (\ref{e2_definition}). This completes the proof of (\ref{eq_mu1mu2bound}).

Next, we will show that 
\begin{equation*}
M(\mu_1^-)<0,  \ M(\mu_1^+)>0. 
\end{equation*}
Due to similarity, in what follows, we focus on the first inequality. Note that 
\begin{equation*}
\mathbf{y}_1^* G_1^{(1)}(\mu_1^-) \mathbf{y}_1= \xi_{(1)}^2  \ub_1^*\Sigma^{1/2} G_1^{(1)}(\mu_1^-) \Sigma^{1/2} \ub_1. 
\end{equation*}
Moreover, recall that $m_1^{(1)}(\mu_1^-)=n^{-1} \operatorname{tr}(\Sigma^{1/2}G_1^{(1)}(\mu_1^-)\Sigma^{1/2}).$ Then according to Lemma \ref{lem:large deviation}, we have that 
\begin{align}\label{eq_conconconone}
\mathbf{y}_1^* G_1^{(1)}(\mu_1^-) \mathbf{y}_1& =\xi^2_{(1)} m_1^{(1)}(\mu_1^-)+\rO_{\prec} \left( \frac{\xi^2_{(1)}}{n}\| G_1^{(1)}(\mu_1^-)\|_F \right)  \nonumber \\
& =\xi^2_{(1)} m_1^{(1)}(\mu_1^-)+\rO_{\prec} \left( \frac{\xi^2_{(1)}}{n^{1/2+1/\alpha}}\right), 
\end{align}
where in the second step we used (\ref{eq_mu1mu2bound}) and the fact $\mu_2>\lambda_1^{(1)}$.  Moreover, for some sufficiently small constant $\epsilon_0>0$ and $z_0=\mu_1^-+\mathrm{i} n^{-1/2-\epsilon_0},$ we can decompose that  
%
%
%
\begin{align}\label{eq_bigexpansion}
m_1^{(1)}(\mu_1^-)& =\left[m_1^{(1)}(\mu_1^-)-m_1^{(1)}(z_0) \right]+\left[m_1^{(1)}(z_0)-m_{1n}^{(1)}(z_0)\right]+\left[m_{1n}^{(1)}(z_0)-m_{1n}^{(1)}(\mu_1^-)\right]+m_{1n}^{(1)}(\mu_1^-) \nonumber \\
&=\mathsf{P}_1+\mathsf{P}_2+\mathsf{P}_3+m_{1n}^{(1)}(\mu_1^-).
\end{align}
First, by Theorem \ref{thm_unboundedcaselocallaw}, we have that $\mathsf{P}_2 \prec n^{-1/2-2/\alpha}$. Second, let $\{\mathbf{v}^{(1)}_i\}$ be the eigenvectors of $Q^{(1)}$ associated with the eigenvalues $\{\lambda_i^{(1)}\},$ then we have that
%
%
%
\[
\begin{split}
|\mathsf{P}_1|&\le\frac{1}{n}\sum_{i=1}^p|T\mathbf{v}^{(1)}_i|^2\left|\frac{1}{\lambda^{(1)}_i-\mu_1^-}-\frac{1}{\lambda^{(1)}_i-z_0}\right|\\
  &=\frac{1}{n}\sum_{i=1}^p|T\mathbf{v}^{(1)}_i|^2\left|\frac{\mathrm{i}n^{-1/2-\epsilon_0}}{(\lambda^{(1)}_i-\mu_1^-)(\lambda^{(1)}_i-z_0)}\right|\\
  &\le\frac{1}{n}\sum_{i=1}^p|T\mathbf{v}^{(1)}_i|^2\Big|\frac{\mathrm{i}n^{-1/2-\epsilon_0}+\rO_{\prec}(n^{-2/\alpha-1/2}\log^2n)}{|\lambda^{(1)}_i-z_0|^2}\Big|\\
  &\prec\operatorname{Im}(m_1^{(1)}(z_0))\times \rO_{\prec}(n^{-1/2-\epsilon_0})
\prec n^{-1-1/\alpha},
\end{split}
\]
where in the third step we used (\ref{eq_mu1mu2bound}) and the fact $\mu_2>\lambda_1^{(1)}$ and in the last step we used Lemma \ref{lem: basic bounds}, (\ref{eq_mu1part}) and (\ref{def1}). Third, according to Lemma \ref{lem_solutionsystem},  we can decompose that 
\[
\begin{split}
   \mathsf{P}_3 &=\frac{1}{n}\sum_{i=1}^p\frac{\sigma_i}{-z_0(1+\frac{\sigma_i}{n}\sum_{j=2}^n\frac{\xi^2_{(j)}}{-z_0(1+m^{(1)}_{1n}(z_0)\xi^2_{(j)})})}-\frac{1}{n}\sum_{i=1}^p\frac{\sigma_i}{-\mu_1^-(1+\frac{\sigma_i}{n}\sum_{j=2}^n \frac{\xi^2_{(j)}}{-\mu_1^-(1+m^{(1)}_{1n}(\mu_1^-)\xi^2_{(j)})})}\\
    &=\left[\frac{1}{n}\sum_{i=1}^p\frac{\sigma_i}{-z_0(1+\frac{\sigma_i}{n}\sum_{j=2}^n \frac{\xi^2_{(j)}}{-z_0(1+m^{(1)}_{1n}(z_0)\xi^2_{(j)})})}-\frac{1}{n}\sum_{i=1}^p\frac{\sigma_i}{-z_0(1+\frac{\sigma_i}{n}\sum_{j=2}^n\frac{\xi^2_{(j)}}{-\mu_1^-(1+m^{(1)}_{1n}(\mu_1^-)\xi^2_{(j)})})} \right]\\
    &+\left[\frac{1}{n}\sum_{i=1}^p\frac{\sigma_i}{-z_0(1+\frac{\sigma_i}{n}\sum_{j=2}^n \frac{\xi^2_{(j)}}{-\mu_1^-(1+m^{(1)}_{1n}(\mu_1^-)\xi^2_{(j)})})}-\frac{1}{n}\sum_{i=1}^p\frac{\sigma_i}{-\mu_1^-(1+\frac{\sigma_i}{n}\sum_{j=2}^n\frac{\xi^2_{(j)}}{-\mu_1^-(1+m^{(1)}_{1n}(\mu_1^-)\xi^2_{(j)})})} \right]\\
    &:=\mathcal{M}^{(1)}_{31}+\mathcal{M}^{(1)}_{32}.
\end{split}
\]
Note that according to (\ref{eq_mu1part}), (\ref{eq_mu1mu2bound}), (\ref{def1}) and Lemma \ref{lem: basic bounds}, we conclude that with high probability $|1+\sigma_i m_{2n}^{(1)}(z_0)|, |1+\sigma_i m_{2n}^{(1)}(\mu_1^-)|  \sim 1.$ For $\mathcal{M}^{(1)}_{31}$, using the definitions in  (\ref{eq_systemequationsm1m2}) and the above bounds,  we have that with high probability
\begin{align} \label{eq_m11control}
    \mathcal{M}^{(1)}_{31}&=\frac{1}{n}\sum_{i=1}^p \frac{\sigma^2_i}{-z_0(1+\sigma_im^{(1)}_{2n}(z))(1+\sigma_im^{(1)}_{2n}(\mu_1^-))}\big(m^{(1)}_{2n}(\mu_1^-)-m^{(1)}_{2n}(z)\big)\\
    &=\rO(|z_0|^{-1})\times\frac{1}{n}\sum_{j=2}^n \left(\frac{\xi^2_{(j)}}{-\mu_1^-(1+\xi^2_{(j)}m_{1n}^{(1)}(\mu_1^-))}-\frac{\xi^2_{(j)}}{-z(1+\xi^2_{(j)}m_{1n}^{(1)}(z_0))}\right) \nonumber \\
    &=\rO(|z_0|^{-1})\times\frac{1}{n}\sum_{j=2}^n \left(\frac{\xi^2_{(j)}}{-\mu_1^-(1+\xi^2_{(j)}m^{(1)}_{1n}(\mu_1^-))}-\frac{\xi^2_{(j)}}{-\mu_1^-(1+\xi^2_{(j)}m^{(1)}_{1n}(z_0))}\right) \nonumber \\
    &+\rO(|z_0|^{-1})\times\frac{1}{n}\sum_{j=2}^n \left(\frac{\xi^2_{(j)}}{-\mu_1^-(1+\xi^2_{(j)}m^{(1)}_{1n}(z))}-\frac{\xi^2_{(j)}}{-z(1+\xi^2_{(j)}m_{1n}^{(1)}(z_0))}\right) \nonumber \\ 
    &=\rO(|z_0|^{-1})\times\frac{1}{n}\sum_{j=2}^n \frac{\xi^4_{(j)}(m^{(1)}_{1n}(z_0)-m_{1n}^{(1)}(\mu_1^-))}{-\mu_1^-(1+\xi^2_{(j)}m^{(1)}_{1n}(\mu_1^-))(1+\xi^2_{(j)}m^{(1)}_{1n}(z_0))}+\rO(|z_0|^{-1})\times\frac{1}{n}\sum_{j=2}^n \frac{\xi^2_{(j)}(\mu_1^--z_0)}{z\mu_1^-(1+\xi^2_{(j)}m^{(1)}_{1n}(z_0))} \nonumber \\
    &=\ro(1)\times(m^{(1)}_{1n}(z_0)-m^{(1)}_{1n}(\mu_1^-))+\rO(n^{-2/\alpha-1/2-\epsilon_0}). \nonumber 
\end{align}
where in the second to last step we used Lemma \ref{lem: basic bounds} and in the last step we used Lemma \ref{lem: basic bounds}  and (\ref{def1}). This implies with high probability 
\begin{equation*}\label{control_m11}
\mathcal{M}_{31}^{(1)}=\rO \left( n^{-2/\alpha-1/2-\epsilon_0} \right).  
\end{equation*}

Similarly, for $\mathcal{M}^{(1)}_{32}$, we have that 
\begin{equation}\label{eq_controlm12}
\begin{split}
    \mathcal{M}^{(1)}_{32}&=\frac{1}{n}\sum_{i=1}^p \frac{\sigma_i(z_0-\mu_1^-)}{z_0\mu_1^-(1+\sigma_im^{(1)}_{2n}(\mu_1^-))}\\
    &=\rO(n^{-2/\alpha-1/2-\epsilon_0}).
\end{split}
\end{equation}
Combining the above arguments, we have that $\mathsf{P}_3=\rO\left( n^{-2/\alpha-1/2-\epsilon_0} \right). $

Inserting the bounds of $\mathsf{P}_k, 1 \leq k \leq 3$ into (\ref{eq_bigexpansion}), we conclude that 
\[
|m^{(1)}_1(\mu_1^-)-m^{(1)}_{1n}(\mu_1^-)|\prec n^{-1/\alpha-1/2-\epsilon_0}.
\]
Together with (\ref{eq_conconconone}) and (\ref{def1}), it yields that  
\begin{equation}\label{eq_M1reducedcase}
  M(\mu_1^-)=1+(\xi^2_{(1)}+d_1)m^{(1)}_{1n}(\mu_1^-)+\rO_{\prec}(n^{-1/2-\epsilon_0}).
\end{equation}

In what follows, we study $1+(\xi^2_{(1)}+d_1)m^{(1)}_{1n}(\mu_1^-)$. We rewrite that
\begin{equation}\label{eq_reducedcontrol}
\begin{split}
    1+(\xi^2_{(1)}+d_1)m^{(1)}_{1n}(\mu_1^-)&=1+(\xi^2_{(1)}+d_1)m^{(1)}_{1n}(\mu_1)-(\xi^2_{(1)}+d_1)\big(m^{(1)}_{1n}(\mu_1)-m^{(1)}_{1n}(\mu_1^-)\big).
\end{split}
\end{equation}
Using the definition for $\mu_1$ that $1+(\xi^2_{(1)}+d_1)m_{1n}(\mu_1)=0$ and the definitions in (\ref{eq_systemequationsm1m2}), by a discussion similar to (\ref{eq_m11control}),  we have that for some constant $C>0$
\begin{equation}\label{eq: m^s_1_1n-m_1n}
\begin{split}
    &1+(\xi^2_{(1)}+d_1)m^{(1)}_{1n}(\mu_1)\\
    &=1+(\xi^2_{(1)}+d_1)m_{1n}(\mu_1)+(\xi^2_{(1)}+d_1)\big(m_{1n}^{(1)}(\mu_1)-m_{1n}(\mu_1)\big)\\
    &=(\xi^2_{(1)}+d_1)\frac{1}{n}\sum_{i=1}^p\left(\frac{\sigma_i}{-\mu_1(1+\sigma_i m_{2n}^{(1)}(\mu_1))}-\frac{\sigma_i}{-\mu_1(1+\sigma_i m_{2n}(\mu_1))}\right)\\
    &=(\xi^2_{(1)}+d_1) \left[\frac{1}{n}\sum_{i=1}^p\frac{\sigma^2_i}{-\mu_1(1+\sigma_i m_{2n}^{(1)}(\mu_1))(1+\sigma_i m_{2n}(\mu_1))}\right]\big(m_{2n}(\mu_1)-m_{2n}^{(1)}(\mu_1)\big)\\
    & \leq C \frac{(\xi^2_{(1)}+d_1)}{\mu_1}\times \left|\frac{1}{n}\sum_{j=1}^n\frac{\xi^2_{(j)}}{-\mu_1(1+\xi^2_{(j)}m_{1n}(\mu_1))}-\frac{1}{n}\sum_{j=2}^n\frac{\xi^2_{(j)}}{-\mu_1(1+\xi^2_{(j)}m^{(1)}_{1n}(\mu_1))}\right|\\
    &\leq C \frac{(\xi^2_{(1)}+d_1)}{\mu_1}\times\xi^2_{(2)}|m_{1n}^{(1)}(\mu_1)-m_{1n}(\mu_1)|+n^{-1}.
\end{split}
\end{equation}
where in the last step we again used (\ref{def1}). This yields that 
\begin{equation*}
|(\xi^2_{(1)}+d_1)\big(m_{1n}^{(1)}(\mu_1)-m_{1n}(\mu_1)\big)| \leq C \frac{(\xi^2_{(1)}+d_1)}{\mu_1}\times\xi^2_{(2)}|m_{1n}^{(1)}(\mu_1)-m_{1n}(\mu_1)|+n^{-1}, 
\end{equation*}
which implies that $(\xi^2_{(1)}+d_1)\big(m^{(1)}_{1n}(\mu_1)-m_{1n}(\mu_1)\big)=\rO(n^{-1})$. Together with (\ref{eq_reducedcontrol}), we have 
\begin{equation}\label{eq_finalpartone}
1+(\xi^2_{(1)}+d_1)m^{(1)}_{1n}(\mu_1^-)=-(\xi^2_{(1)}+d_1)\big(m^{(1)}_{1n}(\mu_1)-m^{(1)}_{1n}(\mu_1^-)\big)+\rO(n^{-1}).
\end{equation}

Recall that we have proved the facts that $\mu_1, \mu_1^->\lambda_1^{(1)},$ by Theorem \ref{thm_boundedcaselocallaw} and the monotonicity of $m_1^{(1)}$ outside the bulk,  the first term on the right-hand side of (\ref{eq_finalpartone}) is negative. In order to show $M(\mu_1^-)<0,$ in light of (\ref{eq_M1reducedcase}), it suffices to show that its magnitude is much larger than $\rO(n^{-1/2-\epsilon_0}).$  To see this, we decompose that
\[
\begin{split}
    &m^{(1)}_{1n}(\mu_1)-m^{(1)}_{1n}(\mu_1^-)\\
    &=\frac{1}{n}\sum_{i=1}^p\frac{\sigma_i}{-\mu_1(1+\sigma_im^{(1)}_{2n}(\mu_1))}-\frac{1}{n}\sum_{i=1}^p\frac{\sigma_i}{-\mu_1^-(1+\sigma_im^{(1)}_{2n}(\mu_1^-))}\\
    &=\left[\frac{1}{n}\sum_{i=1}^p\frac{\sigma_i}{-\mu_1(1+\frac{\sigma_i}{n}\sum_{j=2}^n\frac{\xi^2_{(j)}}{-\mu_1(1+\xi^2_{(j)}m_{1n}^{(1)}(\mu_1)})}-\frac{1}{n}\sum_{i=1}^p\frac{\sigma_i}{-\mu_1(1+\frac{\sigma_i}{n}\sum_{j=2}^n\frac{\xi^2_{(j)}}{-\mu_1(1+\xi^2_{(j)}m_{1n}^{(1)}(\mu_1^-))})} \right]\\
    &+\left[\frac{1}{n}\sum_{i=1}^p\frac{\sigma_i}{-\mu_1(1+\frac{\sigma_i}{n}\sum_{j=2}^n\frac{\xi^2_{(j)}}{-\mu_1(1+\xi^2_{(j)}m_{1n}^{(1)}(\mu_1^-))})}-\frac{1}{n}\sum_{i=1}^p\frac{\sigma_i}{-\mu_1^-(1+\frac{\sigma_i}{n}\sum_{j=2}^n\frac{\xi^2_{(j)}}{-\mu_1^-(1+\xi^2_{(j)}m_{1n}^{(1)}(\mu_1^-))})} \right]\\
    &:=\tilde{\mathcal{M}}_{11}^{(1)}+\tilde{\mathcal{M}}_{12}^{(1)}.
\end{split}
\]
Similar to the discussion of (\ref{eq_m11control}), we have that $\tilde{\mathcal{M}}_{11}^{(1)}$,
\[
\begin{split}
    \tilde{\mathcal{M}}_{11}^{(1)}&=\frac{1}{n}\sum_{i=1}^p\frac{\sigma^2_i\Big(\frac{1}{n}\sum_{j=2}^n\frac{\xi^2_{(j)}}{-\mu_1(1+\xi^2_{(j)}m_{1n}^{(1)}(\lambda))}-\frac{1}{n}\sum_{j=2}^n\frac{\xi^2_{(j)}}{-\mu_1(1+\xi^2_{(j)}m_{1n}^{(1)}(\mu_1))}\Big)}{-\mu_1(1+\frac{\sigma_i}{n}\sum_{j=2}^n\frac{\xi^2_{(j)}}{-\mu_1(1+\xi^2_{(j)}m_{1n}^{(1)}(\mu_1))})(1+\frac{\sigma_i}{n}\sum_{j=2}^n\frac{\xi^2_{(j)}}{-\mu_1(1+\xi^2_{(j)}m_{1n}^{(1)}(\mu_1^-))})}\\
    &=\rO(\frac{1}{\mu_1})\times\frac{1}{n}\sum_{j=2}^n\frac{\xi^4_{(j)}(m^{(1)}_{1n}(\mu_1)-m^{(1)}_{1n}(\mu_1^-))}{-\mu_1(1+\xi^2_{(j)}m_{1n}^{(1)}(\mu_1^-))(1+\xi^2_{(j)}m_{1n}^{(1)}(\mu_1))}\\
    &=\ro(1)\times(m^{(1)}_{1n}(\mu_1)-m^{(1)}_{1n}(\mu_1^-)).
\end{split}
\]
Moreover, similar to (\ref{eq_controlm12}), for $\tilde{\mathcal{M}}_{12}^{(1)}$ we have that with high probability
\[
\begin{split}
    \tilde{\mathcal{M}}_{12}^{(1)}&=\frac{1}{n}\sum_{i=1}^p\frac{\sigma_i(\mu_1-\mu_1^-)}{\mu_1\mu_1^-(1+\frac{\sigma_i}{n}\sum_{j=2}^n\frac{\xi^2_{(j)}}{-\mu_1(1+\xi^2_{(j)}m_{1n}^{(1)}(\mu_1^-))})(1+\frac{\sigma_i}{n}\sum_{j=2}^n\frac{\xi^2_{(j)}}{-\mu_1^-(1+\xi^2_{(j)}m_{1n}^{(1)}(\mu_1^-))})}\\
    & \asymp \frac{\mu_1-\mu_1^-}{\mu_1^-\mu_1}  \asymp n^{-1/2-1/\alpha+\epsilon}.
\end{split}
\]
This implies that 
\[
m^{(1)}_{1n}(\mu_1)-m^{(1)}_{1n}(\lambda) \asymp n^{-1/2-1/\alpha+\epsilon}. 
\]
Together with (\ref{eq_finalpartone}), the definition of $d_2$ and (\ref{def1}), we readily see that 
\begin{equation}\label{eq: lambda-lambda_(1)}
   1+(\xi^2_{(1)}+d_2)m^{(1)}_{1n}(\mu_1^-)) \asymp -n^{-1/2+\epsilon},
\end{equation}
which concludes the proof of $M(\mu_1^-)<0$ when $n$ is sufficiently large. Similarly, we can prove that $M(\mu_1^+)>0.$

Before concluding the proof, we briefly discuss the proof of the other cases. For the case (\ref{ass3.2}), the main difference is  
to utilize the second part of Theorem \ref{thm_unboundedcaselocallaw}. For elliptical model, we  repeat the proof verbatim with some minor modification. For example, in (\ref{eq: 2.2 beta=2}), we need to apply (\ref{def3}) and in (\ref{eq_mu1part}) we need to use the definition in (\ref{eq:F(m,z)1}) and for (\ref{eq_conconconone}) we need to use (2) of Lemma \ref{lem:large deviation}. We omit further details. 

\end{proof}
%

\subsection{The bounded support case}\label{sec_sub_bounded1st} 
In this section, we study the first order convergence of $\lambda_1$ under the assumptions of part (1) of Theorem \ref{thm_main_bounded}. The other parts will be discussed in Section \ref{sec_proofofmainresults}. The main result of this section can be summarized in the following proposition. As before, due to similarity, we only prove for the separable covariance i.i.d. data as in (2) of Assumption \ref{assum_model}. 

\begin{proposition}\label{prop_boundedsetting} Suppose the assumptions of part (1) of Theorem \ref{thm_main_bounded} hold, then conditional on the probability event as in Lemma \ref{localestimate2}, we have that 
   \begin{gather*}
       \left |\lambda_{1}-\left(\widehat{L}_{+}-\frac{  1-\phi \widehat{\varsigma}_3}{\widehat{\varsigma}_4}\frac{l-\xi^2_{(1)}}{l\xi^2_{(1)}}\right) \right |=\rO_{\mathbb{P}} \left[\frac{1}{n^{1/(d+1)}} \left(\frac{n^{3 \epsilon_d}}{n^{-1/(d+1)+1/2}}+\frac{\log n}{n^{1/(d+1)}} \right) \right].
    \end{gather*}
\end{proposition}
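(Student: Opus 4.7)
The plan is to reduce the claim to two separate estimates by the triangle inequality
\begin{equation*}
\Bigl|\lambda_1 - \Bigl(\widehat{L}_{+}-\tfrac{1-\phi\widehat{\varsigma}_3}{\widehat{\varsigma}_4}\tfrac{l-\xi^2_{(1)}}{l\xi^2_{(1)}}\Bigr)\Bigr| \le |\lambda_1 - E_0| + \Bigl|\widehat{L}_{+} - E_0 -\tfrac{1-\phi\widehat{\varsigma}_3}{\widehat{\varsigma}_4}\tfrac{l-\xi^2_{(1)}}{l\xi^2_{(1)}}\Bigr|,
\end{equation*}
where $E_0$ is the real quantity defined by \eqref{eq: def of gamma} and $z_0 = E_0 + \mathrm{i}\eta_0$. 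The second summand will be controlled deterministically via the linear expansion \eqref{eq_expansionlinear}, while the first will be controlled using the Schur-complement form of the eigenvalue equation for $\lambda_1$ together with the averaged local law in Theorem \ref{thm_boundedcaselocallaw}.

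For the deterministic summand, I apply \eqref{eq_expansionlinear} at $z = z_0$ and take real parts. Since $m_{1n}(\widehat{L}_+) = -l^{-1}$ by the master equation \eqref{eq_masterequationtwo}, $\operatorname{Re} m_{1n}(z_0) = -\xi^{-2}_{(1)}$ by construction of $E_0$, and $\operatorname{Re}(\widehat{L}_+ - z_0) = \widehat{L}_+ - E_0 =: \kappa$, the expansion rearranges to
\begin{equation*}
\kappa = \frac{1-\phi\widehat{\varsigma}_3}{\widehat{\varsigma}_4}\cdot\frac{l-\xi^2_{(1)}}{l\xi^2_{(1)}} + \rO\bigl((\log n)(\kappa + \eta_0)^{\min\{d,2\}}\bigr).
\end{equation*}
The a priori bound $l - \xi^2_{(1)} \le n^{-1/(d+1)}\log n$ from \eqref{def4} makes the leading term $\rO(n^{-1/(d+1)}\log n)$, and a self-consistency argument gives $\kappa \lesssim n^{-1/(d+1)}\log n$. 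For $d > 1$ this forces the error term to be $\rO\bigl((\log n)^{1+\min\{d,2\}} n^{-\min\{d,2\}/(d+1)}\bigr)$, which is smaller than $n^{-1/2+3\epsilon_d}$ once $\epsilon_d$ is sufficiently small.

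For the random summand, I exploit the identity $\det(Q - \lambda_1 I) = 0$. Isolating the column $\mathbf{y}_{(1)} = \xi_{(1)}\Sigma^{1/2}\mathbf{x}_{(1)}$ in $Y$ and using Schur's complement, which is valid because $\lambda_1 > \lambda_1^{(1)}$ with high probability by a rigidity argument analogous to Lemma \ref{lem: upper bound for eigenvalues}, yields
\begin{equation*}
M(\lambda_1) := 1 + \xi^2_{(1)}\mathbf{x}_{(1)}^{*}\Sigma^{1/2} G^{(1)}(\lambda_1)\Sigma^{1/2}\mathbf{x}_{(1)} = 0.
\end{equation*}
Working at the complex shift $\lambda_1 + \mathrm{i}\eta_0$, Lemma \ref{lem:large deviation} together with the Hilbert-Schmidt bound \eqref{lem:Wald} concentrates the quadratic form around $\xi^2_{(1)} m_1^{(1)}(\lambda_1 + \mathrm{i}\eta_0)$, and the minor version of Theorem \ref{thm_boundedcaselocallaw} (obtained by a standard $\rO(1/n)$ perturbation using \eqref{lem:trace_difference}) then gives
\begin{equation*}
\operatorname{Re} m_{1n}(\lambda_1 + \mathrm{i}\eta_0) = -\xi_{(1)}^{-2} + \rO_{\prec}\bigl(n^{-1/2+\epsilon_d}\bigr).
\end{equation*}
Comparing this to $\operatorname{Re} m_{1n}(z_0) = -\xi_{(1)}^{-2}$ via \eqref{eq_a11coro} applied along the horizontal line $\operatorname{Im} z = \eta_0$, and taking real parts, produces
\begin{equation*}
\frac{\widehat{\varsigma}_4}{1-\phi\widehat{\varsigma}_3}(\lambda_1 - E_0) = \rO_{\prec}(n^{-1/2+\epsilon_d}) + \rO\bigl((\log n) n^{-(d-1)/(d+1)} |\lambda_1 - E_0|\bigr).
\end{equation*}
Since the second error term is of smaller order than its left-hand side for $d > 1$, solving yields $|\lambda_1 - E_0| = \rO_{\prec}(n^{-1/2+\epsilon_d})$, which closes the argument.

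The main obstacle will be verifying that $\lambda_1 + \mathrm{i}\eta_0 \in \mathbf{D}_b^{\prime}$ with high probability, since the averaged local law and the linear expansion estimates of Lemma \ref{localestimate2} are only available there. This requires a preliminary rough location bound $|\lambda_1 - \widehat{L}_+| \le n^{-1/(d+1)+\epsilon_d}$ analogous to Lemma \ref{lem: upper bound for eigenvalues}, after which the separation condition $|1 + \xi_j^2 m_{1n,c}(\lambda_1 + \mathrm{i}\eta_0)| \gtrsim n^{-1/(d+1)-\epsilon_d}$ for $j \ge 2$ follows from \eqref{def4} combined with a Lipschitz estimate for $m_{1n,c}$. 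A secondary delicate point is the transfer between the exact real-axis identity $M(\lambda_1) = 0$ and its complex counterpart at $\lambda_1 + \mathrm{i}\eta_0$; this is handled by noting that $\eta_0 \ll n^{-1/(d+1)}$ for $d > 1$, so the resolvent derivative contributes at a scale well below the stated tolerance.
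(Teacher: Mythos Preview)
Your decomposition and the deterministic control of the second summand are essentially what the paper does; the paper's first paragraph derives $\widehat{L}_+ = E_0 + \rO(\log n \cdot n^{-1/(d+1)})$ by the same combination of $m_{1n}(\widehat{L}_+)=-l^{-1}$, \eqref{eq: def of gamma}, and the linear expansion \eqref{eq_expansionlinear}. Where you diverge from the paper is in bounding the random summand $|\lambda_1 - E_0|$, and here there is a genuine gap.

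You propose a perturbative Schur-complement argument: starting from $M(\lambda_1)=0$, concentrate the quadratic form, invoke the minor local law, and compare to \eqref{eq: def of gamma} via \eqref{eq_a11coro}. For this to yield an error $\rO_\prec(n^{-1/2+\epsilon_d})$ you need $\operatorname{Im} m_1^{(1)}(\lambda_1+\ri\eta_0)\asymp\eta_0$ and the local law on $\mathbf{D}_b'$, both of which require $\lambda_1+\ri\eta_0\in\mathbf{D}_b'$. You propose to secure this by a rough bound $|\lambda_1-\widehat{L}_+|\le n^{-1/(d+1)+\epsilon_d}$ ``analogous to Lemma~\ref{lem: upper bound for eigenvalues}.'' But the exclusion argument behind Lemma~\ref{lem: upper bound for eigenvalues} needs $\operatorname{Im} m_n(z)+|m_Q-m_n|\ll (n\eta)^{-1}$, and at $\eta=\eta_0$ the local law error is already $\asymp(n\eta_0)^{-1}$, so no exclusion follows. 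Worse, even granting $|\lambda_1-\widehat{L}_+|\le n^{-1/(d+1)+\epsilon_d}$, the Lipschitz shift in $m_{1n,c}$ is of order $n^{-1/(d+1)+\epsilon_d}$, which can swamp the $\frac12 n^{-1/(d+1)-\epsilon_d}$ threshold in the definition \eqref{eq_spectralparameterprime} of $\mathbf{D}_b'$ when $j=2$; so membership in $\mathbf{D}_b'$ is not verified either.

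The paper avoids this circularity by a non-perturbative route (its Lemma~\ref{lem_connection}, built on Lemma~\ref{lem_keycomponentend} and the fluctuation averaging Lemma~\ref{lem_fa}). The key observation is that $\operatorname{Im} m_Q(\lambda_1+\ri\eta_0)\ge (n\eta_0)^{-1}$ always holds by spectral decomposition, and the fluctuation-averaged control on $Z_i$ sharpens the local law to $\operatorname{Im} m_Q(z)\asymp\eta_0$ (not merely $\prec(n\eta_0)^{-1}$) for all $z\in\mathbf{D}_b'$ with $|z-z_0|\ge n^{-1/2+3\epsilon_d}$. This immediately pins $\lambda_1$ to the $n^{-1/2+3\epsilon_d}$-neighbourhood of $E_0$ without any preliminary input, after which \eqref{eq_secondconnect} follows by \eqref{eq_a11coro}. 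Your Schur-complement approach could perhaps be salvaged, but only by first reproducing the content of Lemma~\ref{lem_keycomponentend}; as written, the ``preliminary rough location bound'' is the missing idea rather than a routine step.
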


The proof of Proposition \ref{prop_boundedsetting} relies crucially on the following lemma whose justification will be provided in the end of this section.  

\begin{lemma}\label{lem_connection} Suppose the assumptions of Proposition \ref{prop_boundedsetting} hold. Recall $E_0$ defined in (\ref{eq: def of gamma}). Conditional on the probability event $\Omega$ as in Lemma \ref{localestimate2},  we have that
\begin{equation}\label{eq_holdspartoneoneone}
\lambda_1=E_0+\rO_{\mathbb{P}} \left( n^{-1/2+3\epsilon_d} \right), 
\end{equation}
and
\begin{gather}\label{eq_secondconnect}
    \operatorname{Re}m_{1n}(\lambda_{1}+\ri\eta_0)=-\frac{1}{\xi^2_{(1)}}+\rO_{\mathbb{P}}(n^{-1/2+3\epsilon_d}).
\end{gather}

\end{lemma}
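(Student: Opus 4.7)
The plan is to handle the two claims sequentially: first establish the location bound (\ref{eq_holdspartoneoneone}) for $\lambda_1$ around the deterministic anchor $E_0$, and then deduce (\ref{eq_secondconnect}) as a direct consequence via the linear expansion of $m_{1n}$ given in Lemma \ref{localestimate2}. The guiding philosophy is the non-perturbative adaptation of the Lee--Schnelli \cite{lee2016extremal} and Kwak--Schnelli \cite{Kwak2021} strategies outlined in Section \ref{sec_sec_ofproof}, tailored to the irregular Weibull-type edge ($\rho(\widehat{L}_+-\kappa)\asymp\kappa^d$ with $d>1$) forced by the standing hypothesis $\phi^{-1}>\widehat{\varsigma}_3$.

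For (\ref{eq_holdspartoneoneone}), I would first verify that $\lambda_1+\ri\eta_0\in\mathbf{D}_b^\prime$ on $\Omega$: the spacing bounds in (\ref{def4}), combined with a coarse a priori location $|\lambda_1-\widehat{L}_+|\le n^{-1/(d+1)}\log n$ obtained from a preliminary use of the local law, guarantee $|1+\xi_j^2 m_{1n,c}(\lambda_1+\ri\eta_0)|>\tfrac{1}{2}n^{-1/(d+1)-\epsilon_d}$ for every $j\ge 2$. With this inclusion secured, the averaged local law of Theorem \ref{thm_boundedcaselocallaw} yields $|m_Q(z)-m_n(z)|\prec (n\eta_0)^{-1}$ at $z=\lambda_1+\ri\eta_0$. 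The core mechanism is then a dichotomy argument: the spectral decomposition gives the deterministic lower bound $\operatorname{Im} m_Q(\lambda_1+\ri\eta_0)\ge (p\eta_0)^{-1}$, while the regularity estimate (\ref{eq_oneregimeedgecontrol}) of Lemma \ref{localestimate2} forces $\operatorname{Im} m_n(E+\ri\eta_0)\asymp\eta_0\ll (n\eta_0)^{-1}$ whenever $|E-E_0|\ge Cn^{-1/2+3\epsilon_d}$. Running this incompatibility through an iterative sharpening, needed to overcome the $n^\epsilon$ slack inherent in stochastic domination, pins $\lambda_1+\ri\eta_0$ inside the singular window $|z-z_0|\le Cn^{-1/2+3\epsilon_d}$, which is precisely (\ref{eq_holdspartoneoneone}).

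Given (\ref{eq_holdspartoneoneone}), the second claim is almost immediate. Applying the linear expansion (\ref{eq_a11coro}) of Lemma \ref{localestimate2} to $(z,z^\prime)=(\lambda_1+\ri\eta_0,z_0)$ yields
\begin{equation*}
m_{1n}(\lambda_1+\ri\eta_0)-m_{1n}(z_0)=\frac{\widehat{\varsigma}_4}{1-\phi\widehat{\varsigma}_3}(\lambda_1-E_0)+\rO\bigl((\log n)\,|\lambda_1-E_0|\,(n^{-1/(d+1)})^{\min\{d-1,1\}}\bigr).
\end{equation*}
Taking real parts, invoking the defining identity $\operatorname{Re} m_{1n}(z_0)=-\xi_{(1)}^{-2}$, and substituting $|\lambda_1-E_0|=\rO_\mathbb{P}(n^{-1/2+3\epsilon_d})$ from the previous step deliver (\ref{eq_secondconnect}).

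The main obstacle is Step 1. Because the edge decay is of Weibull ($\kappa^d$) rather than square-root type, the optimal-local-law and rigidity machinery of the standard separable-covariance literature \cite{Ding&Yang2018,knowles2017anisotropic,yang2019edge} does not apply off the shelf, and the usual characterization of $\widehat{L}_+$ through a square-root singularity must be replaced by the non-standard identity $m_{1n}(\widehat{L}_+)=-l^{-1}$. Compounding this, the $n^\epsilon$ slack in stochastic domination is competitive with the target resolution $n^{-1/2+3\epsilon_d}$, so the contradiction cannot be closed in one shot; the iterative sharpening must exploit the precise dichotomy between the singular scale $\operatorname{Im} m_{1n}(z_0)\asymp n^{-1/2}$ (cf.\ (\ref{eq_zopointrate})) and the regular scale $\operatorname{Im} m_{1n}(E+\ri\eta_0)\asymp\eta_0$ (cf.\ (\ref{eq_oneregimeedgecontrol})) supplied by Lemma \ref{localestimate2}. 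This is exactly where the adaptation of the Kwak--Schnelli approach becomes indispensable.
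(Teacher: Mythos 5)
There is a genuine gap in your Step 1, and it sits exactly at the core mechanism you propose. Your dichotomy compares the random lower bound $\operatorname{Im}m_Q(\lambda_1+\ri\eta_0)\ge (n\eta_0)^{-1}$ with the \emph{deterministic} estimate $\operatorname{Im}m_n(E+\ri\eta_0)\asymp\eta_0$ from (\ref{eq_oneregimeedgecontrol}), transferring between the two via the averaged local law $|m_Q-m_n|\prec(n\eta_0)^{-1}$. But $(n\eta_0)^{-1}=n^{-1/2+\epsilon_d}$ is of exactly the same order as the signal you want to contradict, and is larger than $\eta_0=n^{-1/2-\epsilon_d}$ by $n^{2\epsilon_d}$: the chain $(n\eta_0)^{-1}\le\operatorname{Im}m_Q\le\operatorname{Im}m_n+\rO_\prec((n\eta_0)^{-1})$ yields no contradiction. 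This is not the $n^\epsilon$ slack of stochastic domination that an ``iterative sharpening'' could absorb; it is a structural mismatch, and no iteration of the same inputs closes it. The paper's proof therefore does \emph{not} go through the deterministic equivalent at this resolution: it proves the statement $\operatorname{Im}m_Q(z)\asymp\eta_0$ (and $\operatorname{Im}m_1(z)\asymp\eta_0$) for the \emph{random} quantities directly, on $|z-z_0|\ge n^{-1/2+3\epsilon_d}$, in Lemma \ref{lem_keycomponentend}. That lemma in turn hinges on the fluctuation-averaging bounds of Lemma \ref{lem_fa}, which control averages such as $n^{-1}\sum_{i\ge2}Z_i$ at the much smaller scale $n^{-1/2-\frac12(\frac12-\frac1{d+1})+2\epsilon_d}$ and are fed into a self-consistent analysis of $\operatorname{Im}m_1$ itself (cf.\ (\ref{eq_decomposition})--(\ref{eq_r2control})). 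None of this appears in your proposal, and without it the dichotomy cannot be established at scale $\eta_0$.

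A second, related problem is your opening claim that $\lambda_1+\ri\eta_0\in\mathbf{D}_b^\prime$ can be ``secured'' from a coarse a priori bound $|\lambda_1-\widehat L_+|\le n^{-1/(d+1)}\log n$. Membership in $\mathbf{D}_b^\prime$ requires $|1+\xi_j^2m_{1n,c}(\lambda_1+\ri\eta_0)|>\tfrac12 n^{-1/(d+1)-\epsilon_d}$ for all $j\ge2$, which amounts to locating $\operatorname{Re}m_{1n,c}(\lambda_1+\ri\eta_0)$ relative to the points $-\xi_j^{-2}$ at precisely the resolution of the conclusion (\ref{eq_secondconnect}); assuming it at the outset is essentially circular. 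The paper avoids this by splitting into the cases $\lambda_1\in\mathbf{D}_b^\prime$ and $\lambda_1\notin\mathbf{D}_b^\prime$, handling the latter via part (3) of Lemma \ref{lem_keycomponentend} (existence of a point $z_0'$ with $|z_0'-z_0|\le n^{-1/2+3\epsilon_d}$ and $\operatorname{Im}m_Q(z_0')\gg\eta_0$, obtained from a sign-change argument for $\operatorname{Re}(1/(\xi_{(1)}^2\mathcal G_{11}))$) together with the monotonicity of $E\mapsto\operatorname{Im}m_Q(E+\ri\eta_0)$ to the right of the spectrum. Your treatment of (\ref{eq_secondconnect}), given (\ref{eq_holdspartoneoneone}), does agree with the paper: it is a direct application of (\ref{eq_a11coro}) and the defining relation (\ref{eq: def of gamma}) at $z_0$. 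But the first claim, as written, does not go through without the fluctuation-averaging machinery and the case analysis just described.
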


\quad Armed with Lemma \ref{lem_connection}, we proceed to the proof of Proposition \ref{prop_boundedsetting}.

\begin{proof}[\bf Proof of Proposition \ref{prop_boundedsetting}] Recall (\ref{eq_conditionaledgedefinition}), we have that conditional on $\Omega,$ $m_{1n}(\widehat{L}_+)=-l^{-1}.$  Together with (\ref{eq_expansionlinear}), we conclude that
\begin{equation*}
\operatorname{Re}m_{1n}(\widehat{L}_++\ri n^{-1/2-\epsilon_d})=-l^{-1}+\rO(n^{-1/2-\epsilon_d}).
\end{equation*}
Moreover, according to the definition in (\ref{eq: def of gamma}), we have that
\begin{equation*}
\operatorname{Re} m_{1n}(E_0+\ri n^{-1/2-\epsilon_d})=-\xi_{(1)}^2. 
\end{equation*}
By (\ref{def4}), we obtain that conditional on $\Omega$
\begin{equation*}
\operatorname{Re}m_{1n}(\widehat{L}_++\ri n^{-1/2-\epsilon_d})=\operatorname{Re} m_{1n}(E_0+\ri n^{-1/2-\epsilon_d})+\rO \left( \frac{\log n}{n^{d+1}} \right),
\end{equation*}
which implies that $\widehat{L}_+=E_0+\rO(\log n/n^{d+1}).$ Let $\Xi$ be the probability event that (\ref{eq_holdspartoneoneone}) holds. We therefore conclude from (\ref{eq_closenessequation}) that  when restricted on $\Xi$ and $n$ is sufficiently large, $\lambda_1+\ri \eta_0 \in \mathbf{D}_b.$ Consequently, 
by part I of Lemma \ref{localestimate2}, we find the following holds on $\Xi$
 \begin{gather}\label{eq_keykeyequationequation}
     m_{1n}( \widehat{L}_{+})-m_{1n}(\lambda_{1}+\ri\eta_0)=\frac{\widehat{\varsigma}_4}{(1-\phi\widehat{\varsigma}_3)}\left(\widehat{L}_{+}-\lambda_1-\ri \eta_0\right)+\rO\left( (\log n) (n^{-1/(d+1)})^{\min\{d,2\}}\right).
 \end{gather}
Again by  $m_{1n}(\widehat{L}_+)=-l^{-1},$ Together with the second part of Lemma \ref{lem_connection},   considering the real parts of both sides of (\ref{eq_keykeyequationequation}), we obtain that on $\Xi$
\begin{equation*}
-l^{-1}+\xi_{(1)}^{-2}+\rO_{\mathbb{P}}(n^{-1/2+3\epsilon_d})=\frac{\widehat{\varsigma}_4}{(1-\phi\widehat{\varsigma}_3)}\left(\widehat{L}_{+}-\lambda_1\right)+\rO\left( (\log n) (n^{-1/(d+1)})^{\min\{d,2\}}\right).
\end{equation*}
This completes our proof. 

\end{proof}

\quad The rest of this section leaves to the proof of Lemma \ref{lem_connection}.  We first prove the following lemma which will be used in the proof of Lemma \ref{lem_connection}. It essentially locates the points in $\mathbf{D}_b^\prime$ for which $\operatorname{Im} m_Q(z) \gg  \eta_0$ near the edge. It is a counterpart of \cite[Lemmas 5.12, 5.13 and 5.15]{lee2016extremal} and \cite[Lemmas 5.13, 5.14 and 5.16]{Kwak2021}. Due to similarity, we only sketch the key points of the proof. Recall $z_0$ defined in (\ref{eq: def of gamma}).

\begin{lemma}\label{lem_keycomponentend}  Suppose the assumptions of Lemma \ref{lem_connection}, we have that the followings holds with high probability 
\begin{enumerate}
\item[(1).] For any $z=E+\ri \eta_0 \in \mathbf{D}_b^{\prime}$ satisfying that $|z-z_0| \geq n^{-1/2+3\epsilon_d},$ 
\begin{equation}\label{eq_outsideetazero}
\operatorname{Im} m_1(z) \asymp \eta_0, \ \operatorname{Im} m_Q(z) \asymp \eta_0.
\end{equation}
\item[(2).] For $m_1^{(1)}(z)$ and $m_Q^{(1)}(z)$ defined around (\ref{eq_defnminorG}), we have that for all $z=E+\ri \eta_0 \in \mathbf{D}_b^{\prime},$
\begin{equation}\label{eq_outsideetazerominor}
\operatorname{Im} m_1^{(1)}(z) \asymp \eta_0, \ \operatorname{Im} m_Q^{(1)}(z) \asymp \eta_0.
\end{equation}
\item[(3).] There exists some $E_0' \in \mathbb{R}$ such that for $z_0'=E_0'+\ri \eta_0,$ the followings holds simultaneously 
\begin{equation}\label{eq_insideeta}
|z_0'-z_0| \leq n^{-1/2+3 \epsilon_d}, \ \text{and} \ \operatorname{Im} m(z_0') \gg \eta_0. 
\end{equation}
\end{enumerate}

\end{lemma}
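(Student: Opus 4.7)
My plan is to adapt the strategy from Lee--Schnelli (Lemmas 5.12, 5.13, 5.15 in their 2016 paper) and Kwak et al.\ (Lemmas 5.13, 5.14, 5.16 in their 2021 paper), modified to handle the random diagonal $D^2$ with bounded support under (ii) of Assumption \ref{assum_D} and the non-square-root edge behavior in scenario (a) of Lemma \ref{localestimate2}.

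For Part (1), the deterministic input is (\ref{eq_oneregimeedgecontrol}) of Lemma \ref{localestimate2}, which gives $\operatorname{Im} m_{1n}(z), \operatorname{Im} m_n(z) \asymp \eta_0$ for $z=E+\ri\eta_0 \in \mathbf{D}_b^\prime$ with $|z-z_0|\ge C n^{-1/2+3\epsilon_d}$. The upper bound $\operatorname{Im} m_1(z), \operatorname{Im} m_Q(z) = \rO(\eta_0)$ will follow from the strong local law in Theorem \ref{thm_boundedcaselocallaw} combined with the standard monotonicity of $\eta \mapsto \eta \operatorname{Im} m(\cdot + \ri\eta)$: control at a slightly larger scale $\eta \sim n^{-1/2+2\epsilon_d}$, where the naive error $(n\eta)^{-1}$ is safely small relative to the deterministic signal, is pushed down to $\eta_0$ via this monotonicity. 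The matching lower bound uses the spectral decomposition $\operatorname{Im} m_Q(z) = p^{-1}\sum_i \eta_0/((\lambda_i-E)^2+\eta_0^2)$ together with rigidity of non-outlier eigenvalues: by (\ref{eq: concave decay of rho_Q}) and bulk local-law estimates, a linear number of eigenvalues sit in $[\widehat L_+-\mathtt c, \widehat L_+-n^{-1/(d+1)}]$ with typical spacing of order $n^{-1}$ and hence contribute $\Theta(\eta_0)$ collectively. The separation condition $|z-z_0|\ge n^{-1/2+3\epsilon_d}$ is what prevents the anomalous eigenvalue $\lambda_1$ from masking this bulk signal.

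For Part (2), the removal of $\mathbf{y}_{(1)}$ eliminates the unique near-edge outlier from $\{\xi_i^2\}$ by (\ref{def4}), so the deterministic system for $(m_{1n}^{(1)}, m_n^{(1)})$ no longer has an edge singularity: the analog of (\ref{eq_oneregimeedgecontrol}) holds \emph{uniformly} over all of $\mathbf{D}_b^\prime$, with no excluded neighborhood. Re-running the argument of Section \ref{sec_proofpartiiboundedlocallaw} verbatim on the minor yields the corresponding strong local law for $m_1^{(1)}, m_Q^{(1)}$, and the Part (1) argument then applies uniformly in $z$ to give Part (2). For Part (3), to avoid circular reference to Lemma \ref{lem_connection}, I will argue directly. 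Using the linearization (\ref{eq_expansionlinear}) for $m_{1n}$ near $\widehat L_+$ together with the defining relation (\ref{eq: def of gamma}) for $E_0$, a perturbation $E_0' = E_0 + \Delta$ with $|\Delta| \le n^{-1/2+3\epsilon_d}$ can be chosen so that some eigenvalue $\lambda_j$ of $Q$ is guaranteed to lie within $\eta_0$ of $E_0'$ (extractable from the strong local law of Theorem \ref{thm_boundedcaselocallaw} and the rigidity estimate around the edge, without invoking the first-order result of Lemma \ref{lem_connection}). The single-term contribution $\eta_0/((\lambda_j-E_0')^2+\eta_0^2) \asymp \eta_0^{-1}$ then forces $\operatorname{Im} m_Q(z_0') \ge c/(p\eta_0) \asymp n^{2\epsilon_d}\eta_0 \gg \eta_0$.

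The main obstacle is the gap between the target scale $\eta_0 = n^{-1/2-\epsilon_d}$ and the bare strong-local-law error $(n\eta_0)^{-1}=n^{-1/2+\epsilon_d}$: the averaged local law alone is not tight enough to deliver the lower bounds. I expect to close this gap via the three-piece combination mentioned above --- (i) monotonicity of $\eta \operatorname{Im} m(\cdot +\ri\eta)$, (ii) rigidity of non-outlier eigenvalues in the bulk $[\widehat L_+-\mathtt c,\widehat L_+-\eta_0^{1-\epsilon_d}]$, and (iii) the observation that for the minor system the edge singularity disappears, so that for Part (2) no anomalous-scale exclusion is needed at all. A subsidiary care point is bookkeeping the ordering of Parts (1)--(3) versus Lemma \ref{lem_connection} and Proposition \ref{prop_boundedsetting}, so that each piece relies only on already-established inputs.
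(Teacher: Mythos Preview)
Your proposal has a genuine gap in Part~(1), and the missing ingredient is exactly what the paper's proof hinges on: \emph{fluctuation averaging} (Lemma~\ref{lem_fa}). The monotonicity trick $\eta_0\operatorname{Im}m_1(E+\ri\eta_0)\le \eta\operatorname{Im}m_1(E+\ri\eta)$ is too lossy. At your intermediate scale $\eta=n^{-1/2+2\epsilon_d}$ one has $\operatorname{Im}m_1(E+\ri\eta)\asymp\eta$, so monotonicity yields only $\operatorname{Im}m_1(E+\ri\eta_0)\lesssim\eta^2/\eta_0=n^{-1/2+5\epsilon_d}$, missing the target $\eta_0=n^{-1/2-\epsilon_d}$ by a factor $n^{6\epsilon_d}$. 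Likewise, the bare local law error $(n\eta_0)^{-1}=n^{-1/2+\epsilon_d}$ dominates $\eta_0$, so Theorem~\ref{thm_boundedcaselocallaw} alone cannot deliver either bound. The paper instead inserts the self-consistent relation~(\ref{eq: est of m_1 and m_2}) at scale $\eta_0$ to write $\operatorname{Im}m_1=\mathsf{R}_1+\mathsf{R}_2+\mathsf{R}_3$ as in~(\ref{eq_decomposition}), where $\mathsf{R}_1\asymp\eta_0$ and $\mathsf{R}_2\le\mathfrak{c}'\operatorname{Im}m_1+\ro(\eta_0)$ with $\mathfrak{c}'<1$. The crucial step is showing $\mathsf{R}_3=\ro(\eta_0)$ and the $Z_j$-portion of $\mathsf{R}_2$ is $\ro(\eta_0)$: this requires the improved bound $|n^{-1}\sum_{j\ge2}Z_j|\prec n^{-1/2-\frac12(\frac12-\frac{1}{d+1})+2\epsilon_d}\ll\eta_0$ from Lemma~\ref{lem_fa}, which your proposal omits entirely. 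Without fluctuation averaging, the naive bound $Z_j\prec(n\eta_0)^{-1}$ from Remark~\ref{remk_zibound} leaves $\mathsf{R}_3$ of size $(n\eta_0)^{-1}\gg\eta_0$, and the argument collapses. Your rigidity route for the lower bound is also unavailable: no rigidity statement at this precision is established in the paper for the non-square-root edge regime.

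Your Part~(3) argument is similarly off-track. The paper does \emph{not} locate an eigenvalue near $E_0$ via rigidity (which would indeed risk circularity with Lemma~\ref{lem_connection}). Instead it works with the single resolvent entry $\mathcal{G}_{11}$: using $1/(\xi_1^2\mathcal{G}_{11})=-z(1+\xi_1^2m_1^{(1)}+Z_1)$ together with Part~(2) (which controls $\operatorname{Im}m_1^{(1)}\asymp\eta_0$) and the bound $|Z_1|\le n^{-1/2+\epsilon'}$, a continuity argument on $E\mapsto \operatorname{Re}(1/\mathcal{G}_{11})$ across $[E_0-n^{-1/2+3\epsilon_d},E_0+n^{-1/2+3\epsilon_d}]$ produces a point $z_1$ where $\operatorname{Re}\mathcal{G}_{11}(z_1)=0$, hence $|\operatorname{Im}(z_1\xi_1^2\mathcal{G}_{11}(z_1))|\ge n^{1/2-\epsilon_d/2}$. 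Feeding this large single term into the self-consistent decomposition for $\operatorname{Im}m_1(z_1)$ (again via fluctuation averaging) gives $\operatorname{Im}m_1(z_1)\asymp\eta_0+n^{-1}|\operatorname{Im}(\xi_1^2z_1\mathcal{G}_{11})|\gg\eta_0$. So Part~(3) is a consequence of the resolvent-entry mechanism, not eigenvalue location, and it uses Part~(2) as input --- which is why the paper proves Part~(2) uniformly in $z\in\mathbf{D}_b'$ first.
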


\begin{proof}
Due to similarity, we focus our discussion on $m_1(z)$ and will explain the minor differences for $m_Q(z)$ from line to line. Recall (\ref{eq: decomp m_2}). Our proof relies on the following fluctuation average which provides a stronger control on $n^{-1} \sum_{i=1}^p Z_i$ than the one in Remark \ref{remk_zibound}. They are counterparts of Lemmas 5.8 and 5.9 and Corollary 5.10 of \cite{lee2016extremal}. We deter its proof to Appendix \ref{sec_FAlemma}. 
\begin{lemma}\label{lem_fa} Suppose the assumptions of Lemma \ref{lem_keycomponentend} hold, we have that the followings holds on $\Omega$ 
\begin{enumerate}
\item[(1).] For all $i \neq 1$ and all $z=E+\ri \eta_0 \in \mathbf{D}_b^\prime,$ we have that 
\begin{equation}\label{eq_c5first}
|m_2-m_2^{(1)}| \prec \frac{1}{n \eta_0}, \ \ |m_2-m_2^{(i)}|+|m_2^{(i)}-m_2^{(i1)}| \prec n^{-1+1/(d+1)+4 \epsilon_d}.  
\end{equation}
\item[(2).] For all $z \in \mathbf{D}_b^\prime,$
\begin{equation*}
\left| \frac{1}{n} \sum_{i=2}^n Z_i \right|+\left| \frac{1}{n} \sum_{i=2}^n Z_i^{(1)} \right| \prec n^{-1/2-\frac{1}{2}(\frac{1}{2}-\frac{1}{d+1})+2 \epsilon_d}.
\end{equation*}
\item[(3).] For all $z \in \mathbf{D}_b^\prime,$
\begin{equation*}
\left| \frac{1}{n} \sum_{i=2}^n \frac{(\xi_i^2+\xi_i^4) Z_i}{(1+\xi_i^2 m_{1n}(z))^2 }  \right| \prec n^{-1/2-\frac{1}{2}(\frac{1}{2}-\frac{1}{d+1})+2 \epsilon_d}. 
\end{equation*}
\end{enumerate}
\end{lemma}

\quad Armed with Lemma \ref{lem_fa}, we proceed to finish our proof.  The proof of part (1) is similar to that of (\ref{eq_oneregimeedgecontrol}) by using the local law Theorem \ref{thm_boundedcaselocallaw}. We only provide the key arguments. Using (\ref{eq_m1decompositionfinafinalfinal}), 
(\ref{eq_zibound}), and (\ref{eq_m2uboundeddecomposition}), we see that 
\begin{gather*}
     m_1=-\frac{1}{n}\sum_{i=1}^p \frac{\sigma_i}{z-\frac{\sigma_i}{n}\sum_{j=1}^n\frac{\xi^2_j}{1+\xi_j^2m_1+Z_j}}+\frac{1}{n} \operatorname{tr}(R_1 \Sigma)+\frac{1}{n} \operatorname{tr}(R_2 \Sigma).
\end{gather*}
According to Theorem \ref{thm_boundedcaselocallaw}, by a discussion similar to (\ref{eq: est of m_1 and m_2}) using Remark \ref{remk_zibound}, we have that 
\begin{gather}\label{eq_decomposition}
\begin{split}
    \operatorname{Im}m_1(z)
    &=\frac{1}{n}\sum_{i=1}^p \frac{\sigma_i\eta_0}{|z-\frac{\sigma_i}{n}\sum_{j=1}^n \frac{\xi^2_j}{1+\xi^2_jm_{1n}+	Z_j}|^2}+\frac{1}{n}\sum_{i=1}^p \frac{\frac{\sigma_i^2}{n}\sum_{j=1}^n\frac{\xi^4_j\operatorname{Im}m_{1}+\xi^2_j \operatorname{Im} Z_j}{|1+\xi^2_jm_{1n}+Z_j|^2}}{|z-\frac{\sigma_i}{n}\sum_j\frac{\xi^2_j}{1+\xi^2_jm_{1n}+Z_j}|^2}\\
    &+\rO_{\prec}\left(\frac{1}{n}\sum_{j=1}^n \frac{Z_j}{|1+\xi^2_j m_{1n}+Z_j|^2}+\frac{1}{(n \eta_0)^2}\right) \\
    & :=\mathsf{R}_1+\mathsf{R}_2+\mathsf{R}_3.
\end{split}
\end{gather}
Together with Assumption \ref{assum_additional_techinical}, (\ref{def4}) and Remark \ref{remk_zibound}, we find that for some small constant $c'>0,$ when $n$ is sufficiently large, 
\begin{equation}\label{eq_lowerbound}
\left| z-\frac{\sigma_i}{n} \sum_{j=1}^n \frac{\xi_j^2}{1+\xi_j^2 m_{1n}(z)+Z_j} \right| \geq c'. 
\end{equation}
This implies that 
\begin{equation}\label{eq_R1bound}
\mathsf{R}_1 \asymp \eta_0. 
\end{equation}
For $\mathsf{R}_2,$ on the one hand, by Theorem \ref{thm_boundedcaselocallaw} and (\ref{eq_defnmathsfW}), we can conclude that there exists some constant $0<\mathsf{c}'<1,$
\begin{equation*}
\frac{1}{n}\sum_{i=1}^p \frac{\frac{\sigma_i^2}{n}\sum_{j=1}^n\frac{\xi^4_j}{|1+\xi^2_jm_{1n}+Z_j|^2}}{|z-\frac{\sigma_i}{n}\sum_j\frac{\xi^2_j}{1+\xi^2_jm_{1n}+Z_j}|^2} \leq \mathsf{c}'<1.
\end{equation*} 
Moreover, as $|z-z_0| \geq n^{-1/2+3 \epsilon_d},$ according to (\ref{eq_a11coro}) and Remark \ref{remk_zibound}, we find that $1+\xi_j^2 m_{1n}(z)+Z_j \geq C n^{-1/2+3 \epsilon_d}$ for some constant $C>0.$ Together with (\ref{eq_lowerbound}) and (3) of Lemma \ref{lem_fa}, we find that  with high probability  
\begin{equation*}
\frac{1}{n}\sum_{i=1}^p \frac{\frac{\sigma_i^2}{n}\sum_{j=1}^n\frac{\xi^2_j \operatorname{Im} Z_j}{|1+\xi^2_jm_{1n}+Z_j|^2}}{|z-\frac{\sigma_i}{n}\sum_j\frac{\xi^2_j}{1+\xi^2_jm_{1n}+Z_j}|^2} \ll \eta_0.
\end{equation*}
Consequently, we have that with high probability 
\begin{equation}\label{eq_r2control}
\mathsf{R}_2=\mathsf{c}' \operatorname{Im} m_{1}+\ro(\eta_0). 
\end{equation}
Similarly, we can prove that with high probability $\mathsf{R}_3=\ro(\eta_0).$ Together with (\ref{eq_R1bound}), (\ref{eq_r2control}) and (\ref{eq_decomposition}), we can conclude the prove of $m_1(z).$ The discussion for $m_Q(z)$ is similar except we need to use (\ref{eq_m2uboundeddecomposition}) and (\ref{eq_mqdecomposition}).

The proof of part (2) is similar to that of part (1) using Lemma \ref{lem_fa}, Theorem \ref{thm_boundedcaselocallaw} and Remark \ref{remk_zibound}. The idea is analogous to the proof of Lemma 5.13 of \cite{lee2016extremal} or Lemma 5.14 of \cite{Kwak2021}. We omit further details.  

Finally, for part (3), we find from Lemma \ref{lem:large deviation}, (\ref{eq_outsideetazerominor}) and (\ref{lem:Wald}) that for some small constant $\epsilon'<\epsilon_d/2,$ with high probability, 
\begin{equation}\label{eq_boundboundbound121212121}
|Z_1| \leq n^{-1/2+\epsilon'}.  
\end{equation} 
Without loss of generality, we assume that $\xi_1^2 \geq \xi_2^2 \geq \cdots \geq \xi_n^2.$ On the one hand, by the definition of $m_2(z)$ and (\ref{eq: decomp m_2}), we have 
\begin{equation*}
   m_2=\frac{ \xi_1^2 \mathcal{G}_{11}}{n}+\frac{1}{n}\sum_{i=2}^{n}\frac{\xi^2_i}{-z(1+\xi^2_im_1^{(i)}+Z_i)}.
\end{equation*}
Together with Lemma \ref{lem: Resolvent}, we see that 
\begin{gather}\label{eq_keyrepresentationG11}
        \frac{1}{\xi_1^2 \mathcal{G}_{11}}=-z(1+\xi^2_{1}m_1^{(1)}+Z_{1}). 
\end{gather} 
Denote
\begin{equation*}
 z^{\pm}_{0}=z_0\pm n^{-1/2+3\epsilon_d}.  
\end{equation*}
Recall (\ref{eq: def of gamma}) that $1+\xi_1^2  \operatorname{Re} m_{1n}(E_0+\ri \eta_0)=0.$ Using Theorem \ref{thm_boundedcaselocallaw}, (\ref{lem:trace_difference}) and (\ref{eq_a11coro}), together with (\ref{eq_outsideetazerominor}) and Remark \ref{remk_zibound}, we conclude that for some constant $C>0$
\begin{equation*}
\frac{1}{\xi_1^2 \mathcal{G}_{11}(z_0^-)} \geq  Cn^{-1/2+3\epsilon_d}, \ \text{and} \  \ \frac{1}{\xi_1^2 \mathcal{G}_{11}(z_0^+)} \leq -Cn^{-1/2+3\epsilon_d}. 
\end{equation*}
Consequently, by continuity, we find that there exists $z_{1}=E_{1}+\ri\eta_0$ with $E_{1}\in(E_0-n^{-1/2+3\epsilon_d},E_0+n^{-1/2+3\epsilon_d})$ that $\operatorname{Re}\mathcal{G}_{11}(z_{1})=0$. For the choice of $z_1,$ together with (\ref{eq_keyrepresentationG11}), we find that 
\begin{gather}\label{eq_form}
    |\operatorname{Im}(z_1 \xi_1^2 \mathcal G_{11}(z_{1}))|=\frac{1}{|\operatorname{Im}m^{(1)}_1(z_{1})+\operatorname{Im}Z_{1}|}\ge n^{1/2-\epsilon_d/2},
\end{gather}
where we used (\ref{eq_boundboundbound121212121}) with the assumption $\epsilon'<\epsilon_d/2$ and (\ref{eq_outsideetazerominor}).  On the other hand, following lines of the proof of \cite[Lemma 5.15]{lee2016extremal}, by a decomposition similar to (\ref{eq: m_1 by m_2}) and a discussion similar to (\ref{eq_decomposition}), using Lemma \ref{lem_fa}, we find that 
\begin{gather*}
        \operatorname{Im}m_1(z_{1}) \asymp \eta_0+\frac{\operatorname{Im} (\xi_1^2 z_1 \mathcal{G}_{11}(z_1))}{n}.
\end{gather*}
Together with (\ref{eq_form}), we conclude that
\begin{equation*}
\operatorname{Im} m_1(z_1) \gg \eta_0. 
\end{equation*}
The discussion for $m_Q$ is similar and we omit the details. This completes our proof. 

\end{proof}

\quad Finally, armed with Lemma \ref{lem_keycomponentend}, we proceed to the proof of Lemma \ref{lem_connection}. Since the details are similar to those of Proposition 4.6 of \cite{lee2016extremal} or Proposition 4.7 of \cite{Kwak2021}, we only provide the key ingredients.  

\begin{proof}[\bf Proof of Lemma \ref{lem_connection}]
We first prove (\ref{eq_holdspartoneoneone}). Using the spectral decomposition of $Q,$ for $m_Q(z)$ in (\ref{eq_mq}), we find that  
\begin{gather}\label{eq_spectraldecompositionuseful}
    \operatorname{Im}m_Q(E+\ri\eta_0)=\frac{1}{n}\sum_{i=1}^n\frac{\eta_0}{(\lambda_i-E)^2+\eta_0^2}.
\end{gather}
This yields that 
\begin{equation*}
\operatorname{Im}m_Q(\lambda_1+\ri\eta_0)\ge(n\eta_0)^{-1}\gg\eta_0,
\end{equation*}
where we used the definition of $\eta_0$ in (\ref{eq_eta0definition}). It is clear that $\lambda_1=\rO_{\mathbb{P}}(1).$ First, if $\lambda_1 \in \mathbf{D}_b^\prime,$ then the proof follows directly from (\ref{eq_outsideetazero}). Second, if $\lambda_1 \notin \mathbf{D}_b^{\prime},$ on the other hand, for the upper bound, 
by (3) of Lemma \ref{lem_keycomponentend} and (\ref{eq_zopointrate}), using the definition of $\mathbf{D}_b^{\prime}$ in (\ref{eq_spectralparameterprime}), with an argument similar to Proposition 4.7 of \cite{lee2016extremal}, we have that on $\Omega,$  $\lambda_1<E_0+n^{-1/2+3\epsilon_d}$ holds with $1-\ro(1)$ probability. On the other hand, for the lower bound, we prove by contradiction.  We assume that $\lambda_1<E_0(1)-n^{-1/2+3\epsilon_d}$. Then we see from (\ref{eq_spectraldecompositionuseful}) that $\operatorname{Im}m_Q(E+\ri\eta_0)$ is a decreasing function of $E$ on the interval $(E_0-n^{-1/2+3\epsilon_d},E_0+n^{-1/2+3\epsilon_d})$. However, from Lemma \ref{lem_keycomponentend} and its proof (recall that $z_0^-=E_0-n^{-1/2+3\epsilon_d}$), we have seen that, $\operatorname{Im}m_Q(z_0)\gg\eta_0$, $\operatorname{Im}m_Q(z_0^{-})\sim\eta_0,$ which is a contradiction. It implies that $\lambda_1\ge E_0(1)-n^{-1/2+3\epsilon_d}$ and completes the proof of (\ref{eq_holdspartoneoneone}). 

\quad Then we prove (\ref{eq_secondconnect}). By (\ref{eq_a11coro}), (\ref{eq_zopointrate}) and (\ref{eq_oneregimeedgecontrol}), we find that 
\begin{equation*}
\operatorname{Re} m_{1n}(\lambda_1+\ri \eta_0)=\operatorname{Re} m_{1n}(z_0)+\rO_{\mathbb{P}}(n^{-1/2+3 \epsilon_d})=-\frac{1}{\xi_{(1)}^2}+\rO_{\mathbb{P}}(n^{-1/2+3 \epsilon_d}),
\end{equation*}
where in the last step we used (\ref{eq: def of gamma}). This completes our proof. 
\end{proof}


\subsection{Proof of main results of Section \ref{sec_mainresults}}\label{sec_proofofmainresults}

In this section, we prove Theorems \ref{thm_main_unbounded} and \ref{thm_main_bounded} using the results in Sections \ref{sec_sub_unbounded1st} and \ref{sec_sub_bounded1st}.  

\begin{proof}[\bf Proof of Theorem \ref{thm_main_unbounded}]
For the first part,  according to (\ref{eq_mu1part}), by Lemma \ref{lem_probabilitycontrol}, when $n$ is sufficiently large, 
\begin{equation*}
\mu_1=(\xi_{(1)}^2+d_1)\left(\phi \bar{\sigma} +\rO_{\mathbb{P}}(e) \right). 
\end{equation*} 
Together with Proposition \ref{lem: eigenvalue rigidity}, we find that 
\begin{equation*}
\lambda_1=(\xi_{(1)}^2+d_1)\left(\phi \bar{\sigma} +\rO_{\mathbb{P}}(e) \right)+\rO_{\mathbb{P}}\left( n^{-1/2+2\epsilon}d_1 \right). 
\end{equation*}
Using (\ref{def1}) and (\ref{def3}) and the definition of $d_1$ in (\ref{eq_firstddefinition}), we can complete the proof for the first part. 


For the second part, it follows directly from the results in Lemma \ref{lem_summaryevt}, (\ref{def1}) and (\ref{def3}). This completes our proof. 
\end{proof}

\begin{proof}[\bf Proof of Theorem \ref{thm_main_bounded}]
Due to similarity, we focus our discussion on the separable covariance i.i.d. data as in case (2) of Assumption \ref{assum_model}. The elliptical case can be handled similarly. 
For part (1), (\ref{eq_onlyoneequationdecide}) has been proved in (II) of Lemma \ref{localestimate2}. For (\ref{eq_boundednnnnn}), the proofs follow from Proposition \ref{prop_boundedsetting}, II of Lemma \ref{localestimate2} with the fact $d>1$ and (\ref{def4}). Then (\ref{eq_distributionresultweibull}) follows from (\ref{eq_boundednnnnn}) and Lemma \ref{lem_summaryevt}. 

Then we proceed to the proof of parts (2) and (3). Following \cite[Lemma 2.5]{ding2021spiked}, we see that conditional on $\Omega$ in Lemma \ref{lem_probabilitycontrol}, $(\widehat{L}_+, m_{1n}(\widehat{L}_+))$ should satisfy the following systems of equations 
\begin{gather}\label{eq_edgeequationstwodecide}
    m_{1n}(\widehat{L}_{+})=\frac{1}{n}\sum_{i=1}^p \frac{\sigma_i}{-\widehat{L}_{+}+\frac{\sigma_i}{n}\sum_j\frac{\xi^2_j}{1+\xi^2_jm_{1n}(\widehat{L}_{+})}},\quad 1=\frac{1}{n}\sum_{i=1}^p \frac{\frac{\sigma_i^2}{n}\sum_j\frac{\xi^4_j}{|1+\xi^2_jm_{1n}(\widehat{L}_{+})|^2}}{\left|\widehat{L}_{+}-\frac{\sigma_i}{n}\sum_j\frac{\xi^2_j}{1+\xi^2_jm_{1n}(\widehat{L}_{+})}\right|^2}.
\end{gather}
Similarly, $(L_+, m_{1n,c}(L_+))$ should satisfy the following equations
\begin{gather}\label{eq_edgeequationstwodecide2}
    m_{1n,c}(L_{+})=\frac{1}{n}\sum_i\frac{\sigma_i}{-L_{+}+\sigma_i\int\frac{s}{1+sm_{1n,c}(L_{+})}\mathrm{d}F(s)},\quad 1=\frac{1}{n}\sum_i\frac{\sigma^2_i\int\frac{s^2}{|1+sm_{1n,c}(L_{+})|^2}\mathrm{d}F(s)}{|L_{+}-\sigma_i\int\frac{s}{1+sm_{1n,c}(L_{+})}\mathrm{d}F(s)|^2}.
\end{gather}
Using the definitions of $\varsigma_k$ and $\widehat{\varsigma}_k, 1 \leq k \leq 3,$ by (\ref{def4}) and an argument similar to II of Lemma \ref{localestimate2}, when $n$ is sufficiently large, we see that our assumption $\phi^{-1}<\varsigma_3$ implies $\phi^{-1}<\widehat{\varsigma}_3$ on $\Omega.$ This yields that for some constant $\delta>0$
\begin{gather}\label{eq_>1plusdelta}
    \frac{1}{n} \sum_i\frac{\sigma_i^2 \widehat{\varsigma}_1}{(\widehat{L}_{+}-\sigma_i\widehat{\varsigma}_2)^2} >1+\delta,\quad \frac{1}{n}\sum_i\frac{\sigma_i^2\varsigma_1}{(L_{+}-\sigma_i\varsigma_2)^2}>1+\delta.
\end{gather}
where the first inequality is restricted on the event $\Omega$. From now on, for notional simplicity, we always restrict ourselves on $\Omega$ so that the discussion is purely deterministic. Recall (\ref{eq_phasetransition}) and (\ref{eq_finitesample123}). Combining the second equations in (\ref{eq_edgeequationstwodecide}) and (\ref{eq_edgeequationstwodecide2}) with (\ref{eq_>1plusdelta}), we readily obtain that 
\begin{equation}\label{rem1nbound}
m_{1n}(\widehat{L}_{+})>-l^{-1}, \ \ m_{1n,c}(L_{+})>-l^{-1}.
\end{equation}
Together with (\ref{def4}), we  have that for all $1 \leq j \leq n$ and some constant $\delta'>0$ 
\begin{equation}\label{eq_keyboundhastobeenused}
\frac{1}{\left|1+\xi_j^2 m_{1n}(\widehat{L}_+)\right|} \geq \delta', \   \frac{1}{\left|1+s m_{1n,c}(L_+)\right|} \geq \delta' \ \text{for any} \ 0<s \leq l. 
\end{equation}
We now proceed to the proof. The proof consists of two steps. In the first step, we prove the results assuming that
\begin{equation}\label{eq_keyanasztaz}
\left| m_{1n,c}(L_+)-m_{1n}(\widehat{L}_+) \right|=\mathrm{O}_{\mathbb{P}}(n^{-1/2}), \ |L_+-\widehat{L}_+|=\rO_\mathbb{P}( n^{-1/2}). 
\end{equation}
In the second step, we justify (\ref{eq_keyanasztaz}). We start with step one. 

\vspace{3pt}

\noindent{\bf Step one:} Under the assumption \ref{eq_keyanasztaz}, the key component of the proof is the following lemma. Denote
\begin{equation*}
\mathsf{C}_1:=\frac{1}{n} \sum_{i=1}^p \frac{\sigma_i}{ \left( L_+-\sigma_i \int \frac{s}{1+sm_{1n,c}} \mathrm{d} F(s) \right)^2}, 
\end{equation*}
and
\begin{equation*}
 \mathcal{X}:=\frac{1}{n} \sum_{j=1}^n \left( \frac{\xi_j^2}{1+\xi_j^2 m_{1n,c}(L_+)}-\int \frac{s}{1+sm_{1n,c}(L_+)} \mathrm{d} F(s) \right).
\end{equation*}
According to Assumption \ref{assum_additional_techinical}, we have that 
\begin{equation*}
\mathsf{C}_1 \asymp 1. 
\end{equation*}

\begin{lemma}\label{lem_keyfinalfinalkey} Under the assumptions of Theorem \ref{thm_main_bounded} and (\ref{eq_keyanasztaz}), we have that 
\begin{equation*}
\mathsf{C}_1(\widehat{L}_+-L_+)=\mathsf{C}_1\mathcal{X}+\rO_{\mathbb{P}}(n^{-1}). 
\end{equation*} 
\end{lemma}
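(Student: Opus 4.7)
The plan is to treat both $(L_+, m_{1n,c}(L_+))$ and $(\widehat L_+, m_{1n}(\widehat L_+))$ as solutions of the same self-consistent system driven, respectively, by the deterministic law $F$ and the empirical law $F_n := \frac{1}{n}\sum_j \delta_{\xi_j^2}$, and then Taylor expand the empirical system around the deterministic one. The a priori scale (\ref{eq_keyanasztaz}) guarantees that $\Delta L := \widehat L_+ - L_+$ and $\Delta m := m_{1n}(\widehat L_+) - m_{1n,c}(L_+)$ are both $\rO_{\mathbb P}(n^{-1/2})$, so any genuinely quadratic correction will automatically fit into the $\rO_{\mathbb P}(n^{-1})$ error. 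To keep track of things I would introduce
\begin{equation*}
H(L,m,G) := m - \frac{1}{n}\sum_{i=1}^p\frac{\sigma_i}{-L+\sigma_i\,\mathcal E(m,G)},\qquad \mathcal E(m,G):=\int\frac{s}{1+sm}\,\mathrm dG(s),
\end{equation*}
and observe, via (\ref{eq_edgeequationstwodecide2}) and (\ref{eq_edgeequationstwodecide}), that both $(L_+,m_{1n,c}(L_+),F)$ and $(\widehat L_+,m_{1n}(\widehat L_+),F_n)$ are zeros of $H$ at which $\partial_m H$ also vanishes (this is exactly the content of the second equation in each system).

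Subtracting the two copies of the equation $H=0$ and expanding $H(\widehat L_+,m_{1n}(\widehat L_+),F_n)$ to second order around $(L_+,m_{1n,c}(L_+),F)$ yields
\begin{equation*}
0 = \partial_L H\cdot\Delta L + \partial_m H\cdot\Delta m + \partial_G H[F_n-F] + R,
\end{equation*}
where the derivatives are evaluated at the deterministic edge and $R$ collects every second-order contribution (products of $\Delta L$, $\Delta m$ and $\delta G:=F_n-F$). The decisive simplification is that $\partial_m H(L_+,m_{1n,c}(L_+),F)=0$ by the edge condition, so the $\Delta m$ contribution is annihilated at leading order and we are left with a \emph{single} linear relation between $\Delta L$ and $\delta G$. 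Direct differentiation gives $\partial_L H = -\mathsf C_1$, while a computation of the Gateaux derivative in the direction $\delta G$ reduces $\partial_G H[F_n-F]$ to $\mathcal X$ weighted by the same denominators $(L_+-\sigma_i\,\mathcal E)^{-2}$; invoking the edge identity (the second equation of (\ref{eq_edgeequationstwodecide2})) then lets one rewrite the weighting factor as $\mathsf C_1$, producing the claimed $\mathsf C_1(\widehat L_+-L_+) = \mathsf C_1\mathcal X + R/\text{const}$ after rearrangement.

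The hard part will be bounding the remainder $R$ by $\rO_{\mathbb P}(n^{-1})$ uniformly. The purely algebraic quadratic pieces $\partial_L^2 H\,\Delta L^2$, $\partial_{Lm}H\,\Delta L\Delta m$ and $\partial_m^2 H\,\Delta m^2$ are immediate from (\ref{eq_keyanasztaz}) once one notes, via (\ref{eq_keyboundhastobeenused}), that all denominators $1+\xi_j^2 m_{1n,c}(L_+)$ are bounded uniformly away from zero, so the Hessian of $H$ is $\rO(1)$. The mixed terms $\partial_{LG}H\cdot\Delta L\,\delta G$ and $\partial_{mG}H\cdot\Delta m\,\delta G$ are handled by combining (\ref{eq_keyanasztaz}) with the concentration bound in the last line of (\ref{def4}) applied to the relevant weighted sample means. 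The most delicate contribution is the pure $\delta G$-quadratic term $\partial_G^2 H[(\delta G)^2]$, which requires a variance-type estimate on a symmetric $U$-statistic of the $\xi_j$'s with kernel $\frac{s t}{(1+sm)(1+tm)(-L+\sigma_i\mathcal E)^3}$; this goes beyond the Glivenko-Cantelli-type bound provided by (\ref{def4}), and I expect to need a Hoeffding/Bernstein argument on an enlargement of the event $\Omega$ (or a strengthening of the last estimate in (\ref{def4}) to a $\sqrt{\log n/n}$-type square concentration for the $\xi_j$-weighted sums that enter the second variation of $H$) to close the bookkeeping.
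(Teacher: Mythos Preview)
Your approach is essentially the paper's, repackaged as a Taylor expansion of a single functional. The paper writes out $m_{1n,c}(L_+)-m_{1n}(\widehat L_+)$ and decomposes it algebraically into three pieces $\mathsf T_1+\mathsf T_2+\mathsf T_3$; these correspond exactly to your $\partial_L H\cdot\Delta L$, $\partial_m H\cdot\Delta m$, and $\partial_G H[F_n-F]$, and the cancellation you attribute to $\partial_m H=0$ at the edge is precisely what the paper extracts from the second equation of (\ref{eq_edgeequationstwodecide}) when evaluating $\mathsf T_2$. (The paper uses the second equation at the \emph{empirical} edge rather than the deterministic one; under (\ref{eq_keyanasztaz}) the two differ by $\rO_{\mathbb P}(n^{-1})$, so either works.)

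Your one substantive worry, the pure $\delta G$-quadratic remainder, is an overcomplication and does not require any $U$-statistic machinery or enlargement of $\Omega$. The function $H(L,m,G)$ depends on $G$ only through the single scalar $\mathcal E(m,G)=\int\frac{s}{1+sm}\,\mathrm dG$, which is \emph{linear} in $G$. Hence
\[
\partial_G^2 H[\delta G,\delta G]=\partial_{\mathcal E}^2 H\cdot\bigl(\partial_G\mathcal E[\delta G]\bigr)^2=\rO(1)\cdot\mathcal X^2,
\]
and $\mathcal X^2=\rO_{\mathbb P}(n^{-1})$ is immediate from the last line of (\ref{def4}). This is exactly how the paper handles the second-order pieces hidden in $\mathsf T_1$ and $\mathsf T_3$: each is a product of two factors of size $\rO_{\mathbb P}(n^{-1/2})$ (namely $\Delta L$, $\Delta m$, or $\mathcal X$), with all denominators bounded away from zero by Assumption \ref{assum_additional_techinical} and (\ref{eq_keyboundhastobeenused}). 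A small caveat: the Gateaux derivative $\partial_G H[\delta G]$ naturally produces the weight $\frac{1}{n}\sum_i\sigma_i^2/(L_+-\sigma_i\mathcal E)^2$, not $\mathsf C_1=\frac{1}{n}\sum_i\sigma_i/(L_+-\sigma_i\mathcal E)^2$; your appeal to the edge identity to convert one into the other is not transparent, so you should either carry the $\sigma_i^2$-weight through explicitly or do the algebra the paper's way via the explicit fraction decomposition.
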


Armed with Lemma \ref{lem_keyfinalfinalkey}, we can easily prove parts (2) and (3). Recall (\ref{eq_cltvariance}).  It is clear from (\ref{def4}) $\mathcal{X}=\rO_{\mathbb{P}}(n^{-1/2}),$ and from central limit theorem that $\mathcal{X}$ is asymptotically Gaussian with variance $n^{-1} \vartheta$.   
We decompose that 
\begin{equation*}
\lambda_1-L_+=\lambda_1-\widehat{L}_++\widehat{L}_+-L_+.
\end{equation*}
According to \cite{DX_ell,9779233} and Lemma \ref{localestimate2}, we find that on $\Omega,$ $|\lambda_1-\widehat{L}_+| \prec n^{-2/3}$ and $n^{2/3} \gamma(\lambda_1-\widehat{L}_+)$ follows type-1 Tracy-Widom law. This concludes the general results in part (3). Moreover, for part (2), it is easy to see that when $d>1,$ by Cauchy-Schwarz inequality, $\vartheta$ is bounded from blow so that the Gaussian part dominates the Tracy-Widom part and we hence conclude the proof.

\quad  To complete Step one, we now prove Lemma \ref{lem_keyfinalfinalkey}.

\begin{proof}[\bf Proof of Lemma \ref{lem_keyfinalfinalkey}]

Using the first parts in equations (\ref{eq_edgeequationstwodecide}) and (\ref{eq_edgeequationstwodecide2}), we see that 
    \begin{align}\label{eq_generalgeneraldecomposition}
       m_{1n,c}(L_{+})-m_{1n}(\widehat{L}_{+})&=\frac{1}{n}\sum_i\frac{\sigma_i}{\widehat{L}_{+}-\frac{\sigma_i}{n}\sum_j\frac{\xi^2_j}{1+\xi^2_jm_{1n}(\widehat{L}_{+})}}-\frac{1}{n}\sum_i\frac{\sigma_i}{L_{+}-\frac{\sigma_i}{n}\sum_j\frac{\xi^2_j}{1+\xi^2_jm_{1n,c}(L_{+})}} \nonumber \\
       &+\frac{1}{n}\sum_i\frac{-\sigma_i\int\frac{s}{1+sm_{1n,c}(L_{+})}\mathrm{d}F(s)+\frac{\sigma_i}{n}\sum_j\frac{\xi^2_j}{1+\xi^2_jm_{1n,c}(L_{+})}}{(L_{+}-\frac{\sigma_i}{n}\sum_j\frac{\xi^2_j}{1+\xi^2_jm_{1n,c}(L_{+})})(L_{+}-\sigma_i\int\frac{s}{1+sm_{1n,c}(L_{+})}\mathrm{d}F(s))} \nonumber \\
        &=\frac{1}{n}\sum_i\frac{\sigma_i(L_{+}-\widehat{L}_{+})}{(\widehat{L}_{+}-\frac{\sigma_i}{n}\sum_j\frac{\xi^2_j}{1+\xi^2_jm_{1n,c}(L_{+})})(L_{+}-\frac{\sigma_i}{n}\sum_j\frac{\xi^2_j}{1+\xi^2_jm_{1n,c}(L_{+})})} \nonumber \\
        &+\frac{1}{n}\sum_i\frac{\sigma_i}{\widehat{L}_{+}-\frac{\sigma_i}{n}\sum_j\frac{\xi^2_j}{1+\xi^2_jm_{1n}(\widehat{L}_{+})}}-\frac{1}{n}\sum_i\frac{\sigma_i}{\widehat{L}_{+}-\frac{\sigma_i}{n}\sum_j\frac{\xi^2_j}{1+\xi^2_jm_{1n,c}(L_{+})}} \nonumber \\
        &+\frac{1}{n}\sum_i\frac{\sigma_i\int\frac{s}{1+sm_{1n,c}(L_{+})}\mathrm{d}F(s)-\frac{\sigma_i}{n}\sum_j\frac{\xi^2_j}{1+\xi^2_jm_{1n,c}(L_{+})}}{(L_{+}-\frac{\sigma_i}{n}\sum_j\frac{\xi^2_j}{1+\xi^2_jm_{1n,c}(L_{+})})(L_{+}-\sigma_i\int\frac{s}{1+sm_{1n,c}(L_{+})}\mathrm{d}F(s))} \nonumber \\
        &=\frac{1}{n}\sum_i\frac{\sigma_i(L_{+}-\widehat{L}_{+})}{(\widehat{L}_{+}-\frac{\sigma_i}{n}\sum_j\frac{\xi^2_j}{1+\xi^2_jm_{1n,c}(L_{+})})(L_{+}-\frac{\sigma_i}{n}\sum_j\frac{\xi^2_j}{1+\xi^2_jm_{1n,c}(L_{+})})} \nonumber \\
        &+\frac{1}{n}\sum_i\frac{-\frac{\sigma_i^2}{n}\sum_j\frac{\xi^4_j(m_{1n}(\widehat{L}_{+})-m_{1n,c}(L_{+}))}{(1+\xi^2_jm_{1n}(\widehat{L}_{+}))(1+\xi^2_jm_{1n,c}(L_{+}))}}{(\widehat{L}_{+}-\frac{\sigma_i}{n}\sum_j\frac{\xi^2_j}{1+\xi^2_jm_{1n}(\widehat{L}_{+})})(\widehat{L}_{+}-\frac{\sigma_i}{n}\sum_j\frac{\xi^2_j}{1+\xi^2_jm_{1n,c}(L_{+})})} \nonumber \\
        &+\frac{1}{n}\sum_i\frac{-\sigma_i\int\frac{s}{1+sm_{1n,c}(L_{+})}\mathrm{d}F(s)+\frac{\sigma_i}{n}\sum_j\frac{\xi^2_j}{1+\xi^2_jm_{1n,c}(L_{+})}}{(L_{+}-\frac{\sigma_i}{n}\sum_j\frac{\xi^2_j}{1+\xi^2_jm_{1n,c}(L_{+})})(L_{+}-\sigma_i\int\frac{s}{1+sm_{1n,c}(L_{+})}\mathrm{d}F(s))} \nonumber \\
        &:=\mathsf{T}_1+\mathsf{T}_2+\mathsf{T}_3. 
\end{align}
For the term $\mathsf{T}_1,$ by (\ref{eq_keyanasztaz}), Assumption \ref{assum_additional_techinical} and (\ref{def4}), we can see that 
\begin{equation}\label{eq_T1controlcontrol}
\mathsf{T}_1=\mathsf{C}_1(L_+-\widehat{L}_+)+\rO_{\mathbb{P}}(n^{-1}).
\end{equation}
For the term $\mathsf{T}_2,$ we see that 
\begin{align}\label{eq_T2controlcontrol}
        &\mathsf{T}_2=\frac{1}{n}\sum_i\frac{-\frac{\sigma_i^2}{n}\sum_j\frac{\xi^4_j(m_{1n}(\widehat{L}_{+})-m_{1n,c}(L_{+}))}{(1+\xi^2_jm_{1n}(\widehat{L}_{+}))(1+\xi^2_jm_{1n,c}(L_{+}))}}{(\widehat{L}_{+}-\frac{\sigma_i}{n}\sum_j\frac{\xi^2_j}{1+\xi^2_jm_{1n}(\widehat{L}_{+})})^2} \nonumber \\
        &+\frac{1}{n}\sum_i\frac{\big(\frac{\sigma_i^2}{n}\sum_j\frac{\xi^4_j}{(1+\xi^2_jm_{1n}(\widehat{L}_{+}))(1+\xi^2_jm_{1n,c}(L_{+}))}\big)^2(m_{1n}(\widehat{L}_{+})-m_{1n,c}(L_{+}))^2}{(\widehat{L}_{+}-\frac{\sigma_i}{n}\sum_j\frac{\xi^2_j}{1+\xi^2_jm_{1n}(\widehat{L}_{+})})^2(\widehat{L}_{+}-\frac{\sigma_i}{n}\sum_j\frac{\xi^2_j}{1+\xi^2_jm_{1n,c}(L_{+})})} \nonumber \\
        &=\frac{1}{n}\sum_i\frac{-\frac{\sigma_i^2}{n}\sum_j\frac{\xi^4_j(m_{1n}(\widehat{L}_{+})-m_{1n,c}(L_{+}))}{(1+\xi^2_jm_{1n}(\widehat{L}_{+}))^2}}{(\widehat{L}_{+}-\frac{\sigma_i}{n}\sum_j\frac{\xi^2_j}{1+\xi^2_jm_{1n}(\widehat{L}_{+})})^2}+\frac{1}{n}\sum_i\frac{-\frac{\sigma_i^2}{n}\sum_j\frac{\xi^4_j(m_{1n}(\widehat{L}_{+})-m_{1n,c}(L_{+}))^2}{(1+\xi^2_jm_{1n}(\widehat{L}_{+}))^2(1+\xi^2_jm_{1n,c}(L_{+}))}}{(\widehat{L}_{+}-\frac{\sigma_i}{n}\sum_j\frac{\xi^2_j}{1+\xi^2_jm_{1n,c}(L_{+})})^2} \nonumber \\
        &+\frac{1}{n}\sum_i\frac{\big(\frac{\sigma_i^2}{n}\sum_j\frac{\xi^4_j}{(1+\xi^2_jm_{1n}(\widehat{L}_{+}))(1+\xi^2_jm_{1n,c}(L_{+}))}\big)^2(m_{1n}(\widehat{L}_{+})-m_{1n,c}(L_{+}))^2}{(\widehat{L}_{+}-\frac{\sigma_i}{n}\sum_j\frac{\xi^2_j}{1+\xi^2_jm_{1n}(\widehat{L}_{+})})(\widehat{L}_{+}-\frac{\sigma_i}{n}\sum_j\frac{\xi^2_j}{1+\xi^2_jm_{1n,c}(L_{+})})^2}  \\
        &=-(m_{1n}(\widehat{L}_{+})-m_{1n,c}(L_{+}))+\rO_{\mathbb{P}}(n^{-1}), \nonumber
\end{align}
where in the last step we used the second equation of (\ref{eq_edgeequationstwodecide}) for the first term of (\ref{eq_T2controlcontrol}), and (\ref{eq_keyanasztaz}), Theorem \ref{thm_boundedcaselocallaw}, (\ref{def4}) and Assumption \ref{assum_additional_techinical}  for the second and third terms.  Similarly, for $\mathsf{T}_3,$ we have that
\begin{equation}\label{eq_T3controlcontrol}
\mathsf{T}_3=\mathsf{C}_1 \mathcal{X}+\rO_{\mathbb{P}}(n^{-1}). 
\end{equation}

Insert (\ref{eq_T1controlcontrol}), (\ref{eq_T2controlcontrol}) and (\ref{eq_T3controlcontrol}) into (\ref{eq_generalgeneraldecomposition}), we can conclude the proof. 
\end{proof}

Then we prove (\ref{eq_keyanasztaz}) to complete step two and the proof of the theorem. 

\vspace{3pt} 

\noindent{\bf Step two:} To prove (\ref{eq_keyanasztaz}), we first rewrite (\ref{eq_edgeequationstwodecide}) and (\ref{eq_edgeequationstwodecide2}) a little bit. Recall (\ref{eq:F(m,z)}). We find that (\ref{eq_edgeequationstwodecide}) can be rewritten as 
\begin{gather*}
    F_n(m_{1n}(\widehat{L}_{+}),\widehat{L}_{+})=0,\quad \frac{\partial F_n}{\partial x}(m_{1n}(\widehat{L}_{+}),\widehat{L}_{+})=0,
\end{gather*}
where we denote 
\begin{gather}\label{eq_Fnxyoriginaldefinition}
    F_n(x,y)=\frac{1}{n}\sum_{i=1}^p\frac{\sigma_i}{-y+\frac{\sigma_i}{n}\sum_{j=1}^n\frac{\xi^2_j}{1+x\xi^2_j}}-x.
\end{gather}
Similarly, (\ref{eq_edgeequationstwodecide2}) can be rewritten as 
\begin{gather*}
    F_{n,c}(m_{1n,c}(L_{+}),L_{+})=0,\quad \frac{\partial F_{n,c}}{\partial x}(m_{1n,c}(L_{+}),L_{+})=0,
\end{gather*}
where we denote
\begin{gather*}
    F_{n,c}(x,y)=\frac{1}{n}\sum_{i=1}^p\frac{\sigma_i}{-y+\sigma_i\int\frac{s}{1+xs}dF(s)}-x.
\end{gather*}  
For pair $(\widetilde{x}, \widetilde{y})$ so that $\widetilde{x}>-l^{-1}$ (recall (\ref{rem1nbound})), as long as they satisfy Assumption \ref{assum_additional_techinical} in the sense that $\min_{1 \leq i \leq p} |\Tilde{y}-\sigma_i\int\frac{s}{1+\Tilde{x}s}dF(s)| \geq \tau, $ by (\ref{def4}), we find that 
\begin{gather}\label{eq_controlboundveryveryuseful}
    \left|F_{n,c}(\Tilde{x},\Tilde{y})-F_n(\Tilde{x},\Tilde{y})\right|+\left|\frac{\partial F_{n,c}}{\partial x}(\Tilde{x},\Tilde{y})-\frac{\partial F_{n}}{\partial x}(\Tilde{x},\Tilde{y})\right|+
    \left|\frac{\partial F_{n,c}}{\partial y}(\Tilde{x},\Tilde{y})-\frac{\partial F_{n}}{\partial y}(\Tilde{x},\Tilde{y})\right|=\rO_{\mathbb{P}}(n^{-1/2}).
\end{gather}
Set $(x_0, y_0)=(m_{1n,c}(L_+), L_+).$ Then we have that 
\begin{gather}\label{eq_assumptionequation}
    F_{n,c}(x_0,y_0)=0,\quad \frac{\partial F_{n,c}}{\partial x}(x_0,y_0)=0,\quad  0<\frac{\partial F_{n,c}}{\partial y}(x_0,y_0)<\infty, \quad \frac{\partial^2 F_{n}}{\partial y^2}(x_0,y_0)<0.
\end{gather}
It suffices to prove the following lemma. 
\begin{lemma}\label{lem_stabilityargument}
 There exists a pair $(x_1,y_1)$ with condition $|x_1-x_0|+|y_1-y_0|=\mathrm{O}_{\mathbb{P}}(n^{-1/2})$ such that with probability $1-\ro(1)$
    \begin{gather}\label{eq_final_result}
    F_n(x_1,y_1)=0,\quad \frac{\partial F_{n}}{\partial x}(x_1,y_1)=0. 
\end{gather}
\end{lemma}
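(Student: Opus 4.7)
The plan is to recast \eqref{eq_final_result} as the equation $\Phi_n(x_1,y_1)=0$ for the vector-valued map
\begin{equation*}
\Phi_n(x,y) := \bigl(F_n(x,y),\ \partial_x F_n(x,y)\bigr)^{T},\qquad \Phi_c(x,y) := \bigl(F_{n,c}(x,y),\ \partial_x F_{n,c}(x,y)\bigr)^{T},
\end{equation*}
and to view $\Phi_n$ as a small random perturbation of $\Phi_c$, which by \eqref{eq_assumptionequation} already vanishes at $(x_0,y_0)$. The underlying strategy will be a quantitative implicit/inverse function theorem of Newton--Kantorovich type: provided $D\Phi_c(x_0,y_0)$ is invertible with uniformly bounded inverse and $\Phi_n$ agrees with $\Phi_c$ in $C^1$ up to $O_{\mathbb{P}}(n^{-1/2})$ on a ball of radius $\asymp n^{-1/2}$ around $(x_0,y_0)$, a zero of $\Phi_n$ must exist inside that ball with probability $1-\ro(1)$.

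The first step is to verify that $J_c := D\Phi_c(x_0,y_0)$ is invertible with $\|J_c^{-1}\|=\rO(1)$ uniformly in $n$. Because $\partial_x F_{n,c}(x_0,y_0)=0$, one has
\begin{equation*}
J_c = \begin{pmatrix} 0 & \partial_y F_{n,c}(x_0,y_0) \\ \partial_{xx} F_{n,c}(x_0,y_0) & \partial_{xy} F_{n,c}(x_0,y_0) \end{pmatrix},\qquad \det J_c = -\partial_y F_{n,c}(x_0,y_0)\cdot \partial_{xx} F_{n,c}(x_0,y_0).
\end{equation*}
The first factor is $\asymp 1$ by \eqref{eq_assumptionequation}. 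The second factor $\partial_{xx}F_{n,c}(x_0,y_0)$ will be shown to be $\asymp 1$ by unfolding the square-root edge behavior \eqref{thm_main_squared_root_bounded} from part (b) of Lemma \ref{localestimate2}: differentiating the edge system \eqref{eq_edgeequationstwodecide2} twice in $x$ and matching against the expansion that defines $\gamma$ in \eqref{eq_gammadefinition} expresses $\partial_{xx}F_{n,c}(x_0,y_0)$ in terms of $\gamma$ and the strictly positive quantity $1-\phi\varsigma_3>0$ from \eqref{eq_>1plusdelta}, both of which are $\asymp 1$ uniformly in $n$ under Assumptions \ref{assumption_techincial} and \ref{assum_additional_techinical} (with all denominators kept bounded below by \eqref{eq_keyboundhastobeenused}).

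The second step is to run the perturbation. The bound \eqref{eq_controlboundveryveryuseful} already gives the initial residual $|\Phi_n(x_0,y_0)|=\rO_{\mathbb{P}}(n^{-1/2})$. To transfer invertibility from $D\Phi_c$ to $D\Phi_n$ I must extend \eqref{eq_controlboundveryveryuseful} to the second derivatives $\partial_{xx},\partial_{xy},\partial_{yy}$ of $F_n-F_{n,c}$ on a small ball around $(x_0,y_0)$. Two equivalent routes will do: (i) differentiate the explicit formula \eqref{eq_Fnxyoriginaldefinition} and bound the resulting sums term-by-term using the last estimate in \eqref{def4}, noting that each $\partial_x$ only adds a factor of the uniformly bounded $(1+\xi_j^2 m)^{-1}$ from \eqref{eq_keyboundhastobeenused}; or (ii) apply Cauchy's formula on a disk of radius $\asymp 1$ around $x_0$ on which both $F_n$ and $F_{n,c}$ are holomorphic in $x$ (again by \eqref{eq_keyboundhastobeenused}). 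Either route yields $\|D\Phi_n-D\Phi_c\|_{\mathrm{op}}=\rO_{\mathbb{P}}(n^{-1/2})$ on a ball of radius $Cn^{-1/2}$, so on an event of probability $1-\ro(1)$ the Jacobian $D\Phi_n(x_0,y_0)$ is invertible with $\rO(1)$ inverse. A standard Newton--Kantorovich argument (or a direct contraction of $T(x,y):=(x,y)-D\Phi_n(x_0,y_0)^{-1}\Phi_n(x,y)$ on that ball) then produces the desired $(x_1,y_1)$ with $\Phi_n(x_1,y_1)=0$ and
\begin{equation*}
|(x_1,y_1)-(x_0,y_0)| \lesssim \|D\Phi_n(x_0,y_0)^{-1}\|\cdot |\Phi_n(x_0,y_0)| = \rO_{\mathbb{P}}(n^{-1/2}),
\end{equation*}
which is exactly \eqref{eq_final_result}. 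The main obstacle is the uniform-in-$n$ non-degeneracy $\partial_{xx}F_{n,c}(x_0,y_0)\asymp 1$: this is precisely where the square-root edge regime enters, and its breakdown in the Weibull case of Theorem \ref{thm_main_bounded}(1) is exactly the reason that case required the separate, non-perturbative treatment of Section \ref{sec_sub_bounded1st}.
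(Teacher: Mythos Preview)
Your approach is correct and genuinely different from the paper's. The paper argues topologically: it uses Bolzano's theorem and the implicit function theorem to produce two curves $\{F_n=0\}$ and $\{\partial_x F_n=0\}$ inside the box $\mathcal{N}=\{|x-x_0|+|y-y_0|\le n^{-1/2+\epsilon}\}$, and then shows by a monotonicity/contradiction argument (differentiating the identity $F_n(x(y),y)=0$ along the first curve) that the two curves must cross. Your route packages the same information into a single Newton--Kantorovich step on the map $\Phi_n=(F_n,\partial_x F_n)$, which is cleaner and immediately quantitative. Both proofs ultimately hinge on the same non-degeneracy input: the paper needs $\partial_{xx}F_n\neq 0$ to run the implicit function theorem on $\{\partial_x F_n=0\}$, while you need it for $\det J_c\neq 0$. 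That input is precisely the square-root edge behavior of Lemma~\ref{localestimate2}(b); see also \eqref{eq_secondmomentbounded} and its continuous analog for $F_{n,c}$, which give $\partial_{xx}F_{n,c}(x_0,y_0)\asymp 1$ directly.

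One minor slip to fix: you write ``the strictly positive quantity $1-\phi\varsigma_3>0$ from \eqref{eq_>1plusdelta}'', but \eqref{eq_>1plusdelta} is exactly the statement $\phi\varsigma_3>1+\delta$ (this lemma lives in the regime of parts (2)--(3) of Theorem~\ref{thm_main_bounded}, not part~(1)). This does not harm your argument: what you actually need is only that $\partial_{xx}F_{n,c}(x_0,y_0)$ is bounded away from zero, and this follows from the square-root behavior regardless of the sign of $1-\phi\varsigma_3$. Simply cite \eqref{eq_secondmomentbounded} (or rather its $F_{n,c}$ analog, obtained by the identical computation using \eqref{eq_keyboundhastobeenused}) in place of \eqref{eq_>1plusdelta}. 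The extension of \eqref{eq_controlboundveryveryuseful} to second derivatives is routine by either route you sketch, since each extra $\partial_x$ inserts a bounded factor $(1+\xi_j^2 x)^{-1}$ and the CLT-type bound in the last line of \eqref{def4} (and its straightforward analog for $-1<d\le 1$) applies to any bounded smooth function of $\xi_j^2$.
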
  
  
With Lemma \ref{lem_stabilityargument}, according to (\ref{eq_edgeequationstwodecide}) and Theorem \ref{lem_solutionsystem}, we see that (\ref{eq_keyanasztaz}) holds. In the rest, we prove Lemma \ref{lem_stabilityargument} using (\ref{eq_controlboundveryveryuseful}). 

\begin{proof}[\bf Proof of Lemma \ref{lem_stabilityargument}]For some small $\epsilon>0,$ we consider the probability event $\Xi$ so that (\ref{def4}) holds and (\ref{eq_controlboundveryveryuseful}) reads as
\begin{equation}\label{eq_reinterpretteresult}
   \left|F_{n,c}(\Tilde{x},\Tilde{y})-F_n(\Tilde{x},\Tilde{y})\right|+\left|\frac{\partial F_{n,c}}{\partial x}(\Tilde{x},\Tilde{y})-\frac{\partial F_{n}}{\partial x}(\Tilde{x},\Tilde{y})\right|+
    \left|\frac{\partial F_{n,c}}{\partial y}(\Tilde{x},\Tilde{y})-\frac{\partial F_{n}}{\partial y}(\Tilde{x},\Tilde{y})\right|=\rO(n^{-1/2+\epsilon}).
\end{equation}
We have seen that $\mathbb{P}(\Xi)=1-\mathrm{o}(1).$ Now we fix a realization $\{\xi_i^2\} \in \Xi$ so that the discussions below are purely deterministic.

For the above fixed constant $\epsilon>0,$ we set the region
\begin{gather*}
    \mathcal{N}(x,y):=\{(x,y): |x-x_0|+|y-y_0|\leq n^{-1/2+\epsilon}\},
\end{gather*}
To prove the first part of (\ref{eq_final_result}), it suffices to prove that there exists a solution of $F_n(x,y)=0$ in the region $\mathcal{N}(x,y).$ By Bolzano's theorem, we see that for sufficiently large $n,$ we can find two points $(x_{11},y_{11})$ and $(x_{12},y_{12})$ on $\mathcal{N}(x,y)$ so that $F_{n,c}(x_{11},y_{11})<0, \ F_{n,c}(x_{12},y_{12})>0.$ Together with (\ref{eq_reinterpretteresult}), we see that $ F_{n}(x_{11},y_{11})<0,\ F_{n}(x_{12},y_{12})>0.$ Therefore, by continuity, we can find some point $(x',y')$ so that $F_n(x',y')=0.$ Repeating the above procedure, by implicit function theorem, we find that there exists a curve $x \equiv x(y)$ on $\mathcal{N}(x,y)$ so that $F_n(x,y)=0.$ Similarly, we can show that there exists another curve $\widehat{x} \equiv \widehat{x}(\widehat{y})$ on $\mathcal{N}(\widehat{x},\widehat{y})$ so that the second part of (\ref{eq_final_result}) holds in the sense that $\partial F_n (\widehat{x}, \widehat{y})/\partial \widehat{x}=0.$ 

In order to show (\ref{eq_final_result}), we need to prove that the curves $(x,y)$ and $(\widehat{x}, \widehat{y})$ must have at least one intersection in the region $\mathcal{N}(x,y).$ We prove by contradiction. Otherwise, the curve $(x,y)$ will lie in one of the areas separated by $(\widehat{x},\widehat{y})$ with strictly $\partial F_n(x,y)/ \partial x<0$ or $\partial F_n(x,y)/ \partial x>0$. By (\ref{eq_assumptionequation}), we see that $\mathcal{N}(x,y),$  $\partial F_n(x,y)/\partial y>0$. Without loss of generality, we assume $\partial F_n(x,y)/ \partial x<0$. On the one hand, as $F_n(x,y)=0$, one may conclude that for small neighbor around the points on $(x,y)$, it holds that $\mathrm{d} x/\mathrm{d} y>0$. On the other hand, taking the derivative $F_n(x,y)$ with respect to $y$, we obtain that 
\begin{gather*}
    \frac{\mathrm{d} x}{\mathrm{d} y}\times \left(\frac{1}{n}\sum_{i=1}^p\frac{\frac{ \sigma_i^2}{n}\sum_j\frac{\xi^4_j}{(1+x\xi^2_j)^2}}{(-y+\frac{\sigma_i}{n}\sum_j\frac{\xi^2_j}{1+x\xi^2_j})^2}-1\right)=0,
\end{gather*}
which implies $\partial F_n(x,y)/\partial x=0$ and gives the contradiction. This concludes our proof.

\end{proof}

\end{proof}

\subsection{Proof of the results of Section \ref{sec_statapplication}}\label{sec_proofsectionapplication}
In this section, we prove the results of Section \ref{sec_statapplication} which are related to our statistical applications. 

\begin{proof}[\bf Proof of Theorem \ref{thm_main_spike}]  We start with the proof of part (1). For the data matrix $\widetilde{Y}$ defined using (\ref{eq_datamodel}) around (\ref{eq_eigenvalueintheendspiked}), we denote  $\widetilde{Q}:=\widetilde{Y} \widetilde{Y}^*$ and $\widetilde{\mathcal Q}:=\widetilde{Y}^* \widetilde{Y}.$ Since these two matrices have the same non-zero eigenvalues, we focus on the later one for convenience. For the spiked covariance matrix model in (\ref{eq_truemodelspiked}), we decompose it as follows 
\begin{equation*}
 \widetilde{\Sigma}:=\Sigma_s+\Sigma_o,
\end{equation*}
where we denote the two $p \times p$ matrices as 
\begin{equation}\label{eq_twomatricesdecomposition}
\Sigma_s:=\sum_{i=1}^r\Tilde{\sigma}_i\mathbf{v}_i\mathbf{v}_i^{*} \equiv V_1\Lambda_sV_1^{*},\quad \Sigma_o:=\sum_{i=r+1}^p \sigma_i\mathbf{v}_i\mathbf{v}_i^{*} \equiv V_2\Lambda_oV_2^{*}.
\end{equation}
Consequently, we can decompose $\widetilde{\mathcal Q}$ as follows
\begin{equation*}
\widetilde{\mathcal Q}=DX^{*}\Tilde{\Sigma}XD=DX^{*}\Sigma_sXD+DX^{*}\Sigma_oXD.
\end{equation*}
Note that with high probability 
 \begin{equation*}
        \|DX^{*}\Sigma_oXD\|=\| D^2 X^* \Sigma_0 X \| \leq \| D^2 \| \|X^{*} \Sigma_0 X\| \le \sigma_r \xi_{(1)}^2 \| X^* X \| \sim \xi^2_{(1)},
    \end{equation*}
    where in the last step we used \cite{Wen2021} that $\|X^* X \|$ is bounded from above with high probability. Using (\ref{def1}) and (\ref{def3}) as well as Weyl's inequality, we see that from the assumption of (\ref{spiked_assumption}) that, for $1 \leq i \leq r,$ 
    \begin{equation}\label{eq_proof4.1firstpartone}
   \frac{ \mu_i-\lambda_i(DX^* \Sigma_sXD)}{\widetilde{\sigma}_i}=\ro_{\mathbb{P}}(1). 
    \end{equation}
Then we consider the first few largest  eigenvalues of $DX^* \Sigma_s XD,$ or equivalently those of $\Sigma_s^{1/2} XD^2X^* \Sigma_s^{1/2}.$ By a discussion similar to Lemma D.1 of \cite{ding2021spiked}, we find that if $\lambda$ is an eigenvalues of $\Sigma_s^{1/2} XD^2X^* \Sigma_s^{1/2},$ recalling (\ref{eq_twomatricesdecomposition}), we have that 
    \begin{gather}\label{eq_masterequationeigenvalues}
        \operatorname{det}(V_1^{*}XD^2X^{*}V_1-\lambda\Lambda_s^{-1})=0.
    \end{gather}     
Moreover, due to the rotational invariant property of $X$, without loss of generality, we can assume the columns of $V_1$ are standard basis in $\mathbb{R}^p.$ Consequently, we observe that 
    \begin{gather*}
        V_1^{*}XD^2X^{*}V_1=
        \begin{pmatrix}
        \sum_i\xi^2_iu_{i1}^2 &\sum_i\xi^2_iu_{i1}u_{i2} &\cdots &\sum_i\xi^2_iu_{i1}u_{i r}\\
        \sum_i\xi^2_iu_{i2}u_{i1}&\sum_i\xi^2_iu_{i2}^2 &\cdots &\sum_i\xi^2_iu_{i2}u_{ir}\\
        \vdots &\vdots &\ddots &\vdots\\
        \sum_i\xi^2_iu_{ir}u_{i1} &\sum_i\xi^2_iu_{ir}u_{i2} &\cdots &\sum_i\xi^2_iu_{ir}^2
        \end{pmatrix}_{r\times r},
    \end{gather*}    
 where we used the fact that the $i$-th column of $X$ is $\mathbf{u}_i=(u_{i1}, u_{i2}, \cdots, u_{ip})^*.$  Recall that $u_{i1}$ has the same distribution as the self-normalized random variable $\mathsf{g}_{i1}:=g_{i1}/\sqrt{\sum_{j=1}^p g_{ij}^2},$ where $\{g_{ij}\}$ are i.i.d. standard Gaussian random variables. Let $\operatorname{Kol}$ be the  Kolmogorov distance.  According to the discussions in \cite[Section 4.2.1]{pinelis2009optimal}, we find that for some standard Gaussian random variable $\mathsf{g}$ independent of $u_{i1},$ we have that 
 \begin{equation}\label{eq_koldistancecontrol}
 \operatorname{Kol}(\mathsf{g}, \sqrt{p} u_{i1})=\rO(p^{-1}). 
 \end{equation}
 Since $r$ is finite and $\{\xi_i^2\}$ and $X$ are independent, using the assumptions in (i) of Assumption \ref{assum_D}, by straightforward calculations using Markov inequality, we conclude that 
 \begin{equation*}
   V_1^{*}XD^2X^{*}V_1=\phi^{-1} \mathbb{E} \xi^2 I_r+\mathrm{o}_{\mathbb{P}}(1), 
\end{equation*}    
where $I_r$ is a $r \times r$ identity matrix. Together with (\ref{eq_masterequationeigenvalues}), we conclude that for $1 \leq i \leq r$
\begin{equation}\label{eq_proof.1firstparttwo}
\frac{\lambda_i(DX^* \Sigma_s XD)}{\widetilde{\sigma}_i}=\phi^{-1} \mathbb{E} \xi^2+\mathrm{o}_{\mathbb{P}}(1). 
\end{equation}   
Combining (\ref{eq_proof4.1firstpartone}), we have completed the proof of part (1). 

Then we proceed with part (2). The proof follows closely from a discussion similar to  the proof of \cite[Theorem 3.7]{ding2021spiked}, or \cite[Theorem 2.7]{knowles2013isotropic}, or \cite[Theorem 2.7]{bloemendal2016principal}, or \cite[Theorem 3.6]{DJ22}. Due to similarity, we only sketch the proof strategies, provide the key ingredients and point out the main differences. In fact, our proof will be easier since the spikes are much larger than the edges and we only consider the first few extremal non-outlier eigenvalues. As discussed in   \cite[Appendix D]{ding2021spiked}, or \cite[Section 6]{knowles2013isotropic}, or \cite[Section 4]{bloemendal2016principal}, the proof consists of the following three steps.
\begin{enumerate}
\item[(i).] We first find the permissible regions in which contain the eigenvalues of $\widetilde{\mathcal{Q}}$ with high probability.
\item[(ii).] Then we apply a counting argument to a special case (where all the spikes are well-separate), and show that the results hold under this special case. 
\item[(iii).] Finally we use a continuity argument to extend the results in (ii) to the general case using the gaps in the permissible regions.   
\end{enumerate}

In what follows, we choose a realization $\{\xi_i^2\} \in \Omega$ so that (\ref{def4}) holds with $1-\ro(1)$ probability as in Lemma \ref{lem_probabilitycontrol}. With this restriction, $m_{1n}$ and $\mu_1$ in (\ref{eq: def of mu_1}) are purely deterministic. Recall $d_1$ in (\ref{eq_firstddefinition}) and the $\epsilon$ used therein. Due to similarity, we focus on the polynomial decay setting (\ref{ass3.1}). The exponential decay case can be handled similarly. 

\quad For Step (i), to find the permissible region, for some large constant $\mathsf{C}>0,$ we denote the set  for $1 \leq i \leq k$
\begin{equation}\label{eq_gamm1set}
\Gamma_i:=\left\{ x\in [\lambda_i, \mu_1+ n^{-1/2+2 \epsilon} d_1]: \ \operatorname{dist}(x, \operatorname{spec}({Q}))> \mathsf{C} n^{-1/2+ 2\epsilon} d_1 \right\},
\end{equation}
where $\text{spec}(\widetilde{Q})$ stands for the spectrum of ${Q}.$  The results of Step (i) can be summarized as follows. 
\begin{lemma} There exists some constant $\mathsf{C}>0$ so that the set $\cup_i \Gamma_i$ contains no eigenvalue of $\widetilde{{Q}}.$ 
\end{lemma}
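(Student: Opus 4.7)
The plan is to derive a master equation for the eigenvalues of $\widetilde{\mathcal{Q}}$ outside $\mathrm{spec}(\mathcal{Q})$ and rule out its solutions in $\Gamma_{i}$. Writing $\widetilde{\mathcal{Q}}=\mathcal{Q}+UBU^{*}$ with $U:=DX^{*}V_{1}\in\mathbb{R}^{n\times r}$ and $B:=\Lambda_{s}-\Lambda_{s,\Sigma}$, where $\Lambda_{s,\Sigma}:=\mathrm{diag}(\sigma_{1},\ldots,\sigma_{r})$, the definition of $\Gamma_{i}$ gives $\|\mathcal{G}(x)\|\le\mathsf{C}^{-1}n^{1/2-2\epsilon}d_{1}^{-1}$ for every $x\in\Gamma_{i}$, and the Schur/Sylvester identity yields the exact equivalence
$$x\in\mathrm{spec}(\widetilde{\mathcal{Q}})\iff\det\bigl(I_{r}+BU^{*}\mathcal{G}(x)U\bigr)=0.$$
It therefore suffices to show the left-hand side is bounded away from $0$ uniformly on $\Gamma_{i}$ with high probability.

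The second step is to extract a clean form for $U^{*}\mathcal{G}(x)U$. Using the resolvent symmetry $\Sigma^{1/2}XD\,\mathcal{G}(z)=G(z)\Sigma^{1/2}XD$, together with $\Sigma V_{1}=V_{1}\Lambda_{s,\Sigma}$, one obtains the algebraic identity
$$U^{*}\mathcal{G}(x)U=\Lambda_{s,\Sigma}^{-1}+x\,\Lambda_{s,\Sigma}^{-1/2}V_{1}^{*}G(x)V_{1}\Lambda_{s,\Sigma}^{-1/2}.$$
Substituting the deterministic equivalent $G(x)\approx-x^{-1}(I+m_{2n}(x)\Sigma)^{-1}$ implicit in \eqref{eq: decomp of mathcal_G} and collapsing the $\Lambda_{s,\Sigma}^{-1}$ against itself yields
$$\bigl(U^{*}\mathcal{G}(x)U\bigr)_{ii}\approx\frac{m_{2n}(x)}{1+m_{2n}(x)\sigma_{i}},\qquad 1\le i\le r,$$
modulo a stochastic error $\mathcal{E}(x)$, so that the $i$-th diagonal entry of $I_{r}+BU^{*}\mathcal{G}(x)U$ is, up to $B\mathcal{E}(x)$, equal to $(1+\widetilde{\sigma}_{i}m_{2n}(x))/(1+\sigma_{i}m_{2n}(x))$. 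By Lemma \ref{lem: basic bounds} together with the explicit expansion \eqref{eq: 2.2 beta=2}, on the relevant spectral domain one has both $|m_{2n}(x)|=\mathrm{O}(e)$ and the matching lower bound $|m_{2n}(x)|\gtrsim 1/\mu_{1}\asymp 1/\mathsf{T}$ (the bulk contribution $\phi\bar{\sigma}/x$ dominates the isolated $j=1$ term). Combined with the spike hypothesis $\widetilde{\sigma}_{r}\gg\mathsf{T}$ from \eqref{spiked_assumption}, this gives the deterministic lower bound $|1+\widetilde{\sigma}_{i}m_{2n}(x)|\gtrsim\widetilde{\sigma}_{i}/\mathsf{T}\gg 1$ while $|1+\sigma_{i}m_{2n}(x)|\asymp 1$, so the approximated master-equation factor is comfortably of size $\widetilde{\sigma}_{r}/\mathsf{T}$.

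To absorb the stochastic error $\mathcal{E}(x)$, I would bootstrap Theorem \ref{thm_unboundedcaselocallaw} into an anisotropic local law for $V_{1}^{*}G(x)V_{1}$. Because $V_{1}$ is deterministic with $r$ fixed columns, this reduces to controlling the $r^{2}$ bilinear forms $\mathbf{v}_{j}^{*}R_{k}\mathbf{v}_{l}$ arising from the decomposition \eqref{eq: decomp of mathcal_G} by Lemma \ref{lem:large deviation}, with the same column-isolation device used in Section \ref{prop_subsubsubsubsubsub}: one works with $G^{(1)}$ (where all remaining columns of $Y$ are comparable in size) and reinserts the rank-one $\mathbf{y}_{(1)}\mathbf{y}_{(1)}^{*}$ via Sherman-Morrison, yielding $\|\mathcal{E}(x)\|\prec n^{-1/2-2/\alpha}$ in Case (a) and the analogous bound in Case (b). Since $\|B\|\le 2\widetilde{\sigma}_{1}$, this error is $\ll\widetilde{\sigma}_{r}/\mathsf{T}$ under \eqref{spiked_assumption}, and a lattice argument in $x\in\Gamma_{i}$ controlled by $\|\partial_{x}\mathcal{G}(x)\|\le\|\mathcal{G}(x)\|^{2}$ upgrades pointwise non-vanishing to uniform non-vanishing. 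The main obstacle is precisely this anisotropic bootstrap: the leading-order cancellation $\Lambda_{s,\Sigma}^{-1}-\Lambda_{s,\Sigma}^{-1}=0$ in $U^{*}\mathcal{G}(x)U$ makes the error budget tight, requiring the subleading $m_{2n}$-correction to be pinned down to better than $\widetilde{\sigma}_{r}^{-1}$ accuracy, and the heavy-tailed size separation between $\mathbf{y}_{(1)}$ and the other columns makes the naive use of Lemma \ref{lem:large deviation} insufficient.
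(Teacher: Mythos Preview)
Your master equation and the algebraic identity $U^{*}\mathcal{G}(x)U=\Lambda_{s,\Sigma}^{-1}+x\Lambda_{s,\Sigma}^{-1/2}V_{1}^{*}G(x)V_{1}\Lambda_{s,\Sigma}^{-1/2}$ are both correct, and your final lower bound $|1+\widetilde{\sigma}_{i}m_{2n}(x)|\gtrsim\widetilde{\sigma}_{i}/\mathsf{T}$ is the right quantity to isolate. However, the route you take is genuinely different from the paper's, and less efficient for this particular setting.

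The paper never passes through the $p\times p$ resolvent $G$. Instead it stays on the $n\times n$ side: from the entrywise local law for $\mathcal{G}$ already established in Proposition~\ref{eq_propooursidebulk}, one has $\mathcal{G}(z_{x})\approx -z_{x}^{-1}(I+m_{1n}(z_{x})D^{2})^{-1}$, a \emph{diagonal} matrix. Because Theorem~\ref{thm_main_spike} is stated for elliptically distributed data, rotational invariance lets one assume $\widetilde{\Sigma}$ is diagonal, so $V_{1}$ consists of the first $r$ standard basis vectors. The quadratic form $V_{1}^{*}XD\mathcal{G}DX^{*}V_{1}$ then becomes $\sum_{i}\xi_{i}^{2}u_{ki}u_{li}/(1+m_{1n}\xi_{i}^{2})$ plus a controlled error, and the paper closes this using only the elementary self-normalized moment estimate \eqref{eq_koldistancecontrol} (which gives $u_{ki}^{2}\approx 1/p$, $u_{ki}u_{li}\approx 0$) together with the definition of $m_{2n}$. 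No anisotropic local law for $G$ is needed.

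Your route, by contrast, inserts the identity $U^{*}\mathcal{G}U=\Lambda_{s,\Sigma}^{-1}+x\Lambda_{s,\Sigma}^{-1/2}V_{1}^{*}GV_{1}\Lambda_{s,\Sigma}^{-1/2}$, which forces a leading-order cancellation you yourself flag: the deterministic equivalent of $xV_{1}^{*}GV_{1}$ is $-(I+m_{2n}\Lambda_{s,\Sigma})^{-1}$, so the $\Lambda_{s,\Sigma}^{-1}$ terms annihilate and only the $m_{2n}$-correction survives. To recover that correction with the required precision you need the anisotropic estimate $V_{1}^{*}G(x)V_{1}\approx -x^{-1}(I+m_{2n}(x)\Lambda_{s,\Sigma})^{-1}$ to accuracy $\rO_{\prec}(n^{-1/2-2/\alpha})$. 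This is not established anywhere in the paper (only averaged and entrywise-$\mathcal{G}$ laws are proved), so it constitutes a genuine extra ingredient. Your sketch via the $R_{1},R_{2}$ decomposition is plausible but is itself a nontrivial project, and the column-isolation device you invoke controls $m_{1}^{(1)}$ rather than individual entries of $G$.

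In short: the paper buys simplicity by exploiting rotational invariance and the already-available entrywise control of $\mathcal{G}$, sidestepping both the cancellation and the anisotropic bootstrap. Your route would in principle extend to the non-rotationally-invariant Case~(2), which is precisely the generalization the paper defers in Remark~\ref{rem_multiplevaluespikedremark}; but for the elliptical case at hand it trades a one-line moment computation for a local law that still has to be proved.
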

\begin{proof}
The proof is similar to that of Lemma D.4 of \cite{ding2021spiked} or Lemma 5.4 of \cite{DJ22} and we only sketch the key points here. Due to the rational invariance of $X,$ we can without of generality assume that $\widetilde{\Sigma}$ is diagonal and decompose that  $\widetilde{\Sigma}=\Sigma_1+\Sigma $ with $\Sigma_1=\widetilde{\Sigma}-\Sigma=U^* D U$, where we recall that $\widetilde{\Sigma}$ is constructed based on $\Sigma.$ Here $D$ is an $r \times r$  matrix containing the nonzero eigenvalues of $\Sigma_1$ and $U^*$ is the $p \times r$ matrix containing the first $r$ standard basis in $\mathbb{R}^p.$ Under the assumption of (\ref{spiked_assumption}) and (\ref{ass2}), we see that $D$ is invertible when $n$ is sufficiently large.  By a discussion similar to (\ref{eq_masterequationeigenvalues}), we see that $x$ is an eigenvalue of $\widetilde{\mathcal Q}$ but not $\mathcal{Q}$ if and only if
 \begin{gather}\label{eq_newmasterequation}
    \operatorname{det}(I-D^{1/2}U^{*}XD(x I-DX^{*}\Sigma XD)^{-1}DX^{*}UD^{1/2})=0.
    \end{gather}
Moreover, for $x \in \Gamma_i,  1 \leq i \leq k$ and $\eta:= n^{-1/2},$ we define $z_x=x+\ri \eta.$ According to Proposition \ref{eq_propooursidebulk}, we have that with high probability 
\begin{equation*}
   \left \| (DX^{*}\Sigma XD-z_x I)^{-1}+z^{-1}(I+m_{1n}(z_x)D^2)^{-1} \right \|_{\infty}=\rO(n^{-1/2-1/\alpha+\epsilon}). 
\end{equation*} 
Together with the arguments around (\ref{eq_koldistancecontrol}), using the fact $r$ is finite, $U$ contains the standard basis and the definition of $m_{2n}(z)$ in (\ref{eq_systemequationsm1m2elliptical}), we find that 
\begin{align*}
\left\|z^{-1}U^{*}XD(I+m_{1n}(z_x)D^2)^{-1}DX^{*}U-m_{2n}(z_x)I \right \|_{\infty}=\rO_{\mathbb{P}}(n^{-1/2-1/\alpha+2\epsilon}),
\end{align*}
where we used the fact that $|z_x| \asymp \mu_1$ and (\ref{eq_lowerboundmu1}).  According to a discussion similar to  equation (D.29) of \cite{ding2021spiked} and Lemma \ref{lem: basic bounds}, we see that for $t=1,2,$
\begin{equation*}
m_{tn}(z_x)-m_{tn}(x) \asymp \operatorname{Im} m_{tn}(z)=\rO(n^{-1/2-1/\alpha+\epsilon}).
\end{equation*}  
Combining all the above controls, we find that for some constant $C>0$
\begin{equation*}
\left \| I-D^{1/2}U^{*}XD(x I-DX^{*}\Sigma XD)^{-1}DX^{*}UD^{1/2} \right \|_{\infty}=C\max_{1 \leq j \leq r} |m_{2n}(z_x)+\widetilde{\sigma}_j^{-1}|+\rO_{\mathbb{P}} \left(n^{-1/2-1/\alpha+\epsilon} \right).
\end{equation*}
Since $x \in \Gamma_i,$ together with Lemma \ref{lem: basic bounds}, we see that $|m_{2n}(z_x)+\widetilde{\sigma}_j^{-1}| \gg n^{-1/2-1/\alpha+\epsilon}.$ This implies that $x$ is not an eigenvalue of $\widetilde{\mathcal{Q}}$ and completes the proof. 
\end{proof}

\quad As mentioned in the proof of Theorem 2.7 of \cite{bloemendal2016principal}, once Step (i) is done, Steps (ii) and (iii) are more standard.  For Step (ii), together with the interlacing results as in Lemma C.3 of \cite{ding2021spiked}, we perform the counting argument to prove (\ref{eq_sticking}) for a special case assuming $\widetilde{\sigma}_1>\widetilde{\sigma}_2 > \cdots >\widetilde{\sigma}_r.$ The details can be found in Lemma D.5 of \cite{ding2021spiked} or Lemma 5.5 of \cite{DJ22}.    
For Step (iii), we use a continuity argument for all possible configurations $\{\widetilde{\sigma}_i\}_{1 \leq i \leq r}.$ The details can be found in the proof of Theorem 3.7 of \cite{ding2021spiked}. Since most of the arguments can be made verbatim following lines of the counterparts of \cite{ding2021spiked} or \cite{DJ22} or \cite{bloemendal2016principal} or \cite{knowles2013isotropic}, we omit further details. This completes the proof.      
    
\end{proof}

\begin{proof}[\bf Proof of Corollary \ref{coro_distribution}]
(\ref{eq_universalityresult}) follows directly from (\ref{eq_sticking}), Theorem \ref{thm_main_unbounded}, and the discussions in Remark \ref{rem_multiplevaluespikedremark}. On the other hand, when $\mathbf{H}_a$ and the assumption that $r_*>r$ hold, we have that 
\begin{equation*}
\mathbb{T} \geq  \frac{\mu_r-\mu_{r+1}}{\mu_{r+1}-\mu_{r+2}}.
\end{equation*}
Moreover, according to part (2) of Theorem \ref{thm_main_spike} and (\ref{def1}) or (\ref{def3}), we see that with $1-\ro(1)$ probability, for some constant $C>0$
\begin{equation*}
(\mu_{r+1}-\mu_{r+2})^{-1} \geq C \mathsf{T}^{-1}.
\end{equation*}
In addition, according to part (1) of Theorem \ref{thm_main_spike}, we see from the assumption of (\ref{spiked_assumption}) and (\ref{def1}) or (\ref{def3}) that  
\begin{equation*}
\mu_{r}-\mu_{r+1} \geq C \widetilde{\sigma}_r. 
\end{equation*}
Combining the above two controls, we can prove (\ref{eq_deltan(1)ha}) and (\ref{eq_deltan(1)harproperty}) for the statistic $\mathbb{T}.$ The other two equations can be proved similarly for the statistic $\mathbb{T}_{r_0}$. This completes our proof.   
\end{proof}

\begin{proof}[\bf Proof of Theorem \ref{thm_boostrappingdiscussion}] The proof follows from  strategies similar to Theorem 2.4 of \cite{CHP} or Theorem 3.5 of \cite{2022arXiv220206188Y}. We focus on explaining the main ideas and omit the details. The core of the proof is to introduce the auxiliary  quantities $\theta_i, 1 \leq i \leq r,$ where for each $1 \leq i \leq r,$ $\theta_i$ satisfies the equation 
\begin{gather*}
    \frac{\theta_i}{\Tilde{\sigma}_i}=\Big(1-\frac{1}{n\theta_i}\sum_{j=r+1}^p\frac{\sigma_j}{1-\Tilde{\sigma}_i^{-1}\sigma_j}\Big)^{-1}.
\end{gather*}  
With the restriction that $\theta_i \in [\widetilde{\sigma}_i, 2 \widetilde{\sigma}_i],$ the existence of uniqueness of $\theta_i$ have been justified in \cite{CHP,2022arXiv220206188Y}. Furthermore, under the assumption of (\ref{spiked_assumption}), we can conclude from equation (2.10) of \cite{CHP} that 
\begin{equation}\label{eq_thetaipriorbound}
\theta_i/\widetilde{\sigma}_i=1+\mathrm{o}(1), 1 \leq i \leq r.
\end{equation}
 Moreover, following lines of the proof of Theorem 2.2 of \cite{CHP} or Lemma 3.4 of \cite{2022arXiv220206188Y}, we can conclude that for $1 \leq i \leq r,$ we can obtain that 
 \begin{gather*}
        \frac{\widehat{\lambda}_i}{\theta_i}=1+\mathbf{k}_i^{*}(V_1^{*}XX^{*}V_1-I_r)\mathbf{k}_i+\ro_{\mathbb{P}}(\frac{1}{\sqrt{n}}),
    \end{gather*}   
 where we recall (\ref{eq_twomatricesdecomposition}) for the definition of $V_1$ and $\mathbf{k}_i, 1 \leq i \leq r,$ are the standard basis in $\mathbb{R}^r.$ Together with Lemma \ref{lem:large deviation}, we see that 
 \begin{equation}\label{eq_firstreductiondistribution}
 \frac{\mu_i}{\widehat{\lambda}_i}= \frac{\mu_i}{\theta_i} \left(1+\rO_{\mathbb{P}}(n^{-1/2}) \right). 
\end{equation} 
The rest of the proof leaves to establish the asymptotics of $\mu_i/\theta_i.$ In the actual proof, it is more convenient to work with random quantities $\widehat{\theta}_i, 1 \leq i \leq r,$ defined according to 
\begin{gather}\label{eq_defnwidehattheta}
    \frac{\widehat{\theta}_i}{\Tilde{\sigma}_i}=\left(1-\frac{1}{n \widehat{\theta}_i}\sum_{j=r+1}^p\frac{\sigma_j}{1-\Tilde{\sigma}_i^{-1}\sigma_j}+\frac{1}{n \widehat{\theta}_i}\sum_{k=r+1}^p\frac{\sigma_k}{1-\frac{\sigma_k}{n \widehat{\theta}_i}\sum_{j=1}^n\xi^2_j/\mathbb{E}\xi^2}\right)^{-1}. 
\end{gather} 
Similar to $\theta_i,$ with the restriction $\widehat{\theta}_i \in [\widetilde{\sigma}_i, 2 \widetilde{\sigma}_i],$ we can obtain the uniqueness and existence of the solutions with high probability.

We first summarize some important properties of $\widehat{\theta}_i.$ On the one hand, it is easy to use the definitions of $\theta_i, \widehat{\theta}_i,$ the assumption of (\ref{spiked_assumption}) and (\ref{eq_thetaipriorbound}), we can obtain that 
\begin{equation*}
\frac{\theta_i}{\widehat{\theta}_i}=1+\mathrm{o}_{\mathbb{P}}(1).
\end{equation*}
In light of (\ref{eq_firstreductiondistribution}), we find that it suffices to work with $\mu_i/\widehat{\theta}_i.$ On the other hand,  using the above controls with (\ref{eq_defnwidehattheta}) and (\ref{def1}) or (\ref{def3}), we find that 
\begin{align*}
        \frac{1}{n}\sum_{j=1}^n\frac{\xi^2_j}{\mathbb{E}\xi^2}-\frac{\widehat{\theta}_i}{\Tilde{\sigma}_i}&=\frac{\widehat{\theta}_i}{\Tilde{\sigma}_i}\times\left(\frac{1}{n}\sum_{j=1}^n\frac{\xi^2_j-\mathbb{E}\xi^2-\frac{\xi^2_j}{n \widehat{\theta}_i}\sum_{k=r+1}^p\frac{\sigma_k}{1-\Tilde{\sigma}_1^{-1}\sigma_k}+\frac{\xi^2_i}{n \widehat{\theta}_i}\sum_{k=r+1}^p\frac{\sigma_k}{1-\frac{\sigma_k}{n \widehat{\theta}_i}\sum_{j=1}^n\xi^2_j/\mathbb{E}\xi^2}}{\mathbb{E}\xi^2}\right)\\
        &=\frac{1}{n}\sum_{j=1}^n\frac{\xi^2_j-\mathbb{E}\xi^2}{\mathbb{E}\xi^2}-\frac{\frac{1}{n}\sum_{j=1}^n\xi^2_j/\mathbb{E}\xi^2-\frac{\widehat{\theta}_i}{\Tilde{\sigma}_i}}{\Tilde{\sigma}_i}\times\ro_{\mathbb{P}}(1)+\ro_{\mathbb{P}}(n^{-1/2}),
\end{align*}
which implies that 
\begin{equation}\label{eq_finercontrolend}
\frac{\widehat{\theta}_i}{\widetilde{\sigma}_i}=1+\mathrm{o}_{\mathbb{P}}(n^{-1/2}). 
\end{equation}

Now we proceed to complete the proof following that of Theorem 2.4 of \cite{CHP}. For notational convenience, we now work with the rescaled matrix 
\begin{equation*}
\check{\mathcal Q}:= \check{D} X^* \widetilde{\Sigma} X \check{D}, \ \check{D}^2:=(\mathbb{E} \xi^2)^{-1} D^2, 
\end{equation*}
whose eigenvalues are denoted as $\check{\lambda}_1 \geq \check{\lambda}_2 \geq \cdots \geq \check{\lambda}_{ \{p \wedge n\}}>0.$ Note that $\mu_i=\mathbb{E} \xi^2 \check{\lambda}_i.$ By a discussion similar to (\ref{eq_masterequationeigenvalues}) and (\ref{eq_newmasterequation}), using (\ref{eq_twomatricesdecomposition}), we find that  $\check{\lambda}_i, 1 \leq i \leq r$ satisfy the equation
    \begin{gather*}
        \operatorname{det}(\Lambda_s^{-1}-V_1^{*}X\check{D}(\check{\lambda}_i I-\check{D}X^{*}\Sigma_oX\check{D})^{-1}\check{D}X^{*}V_1)=0.
    \end{gather*}
Denote $\mathbf{B}(x):=xI-\check{D}X^{*}\Sigma_oX\check{D}$ and $\delta_i=(\check{\lambda}_i-\widehat{\theta}_i)/\widehat{\theta}_i$, the above determinant can be rewritten into 
\begin{gather}\label{eq_generalldeterminant}
    \operatorname{det}(\widehat{\theta}_i\Lambda_s^{-1}-\widehat{\theta}_iV_1^{*}X\check{D}\mathbf{B}^{-1}(\widehat{\theta}_i)\check{D}X^{*}V_1+\delta_i\widehat{\theta}_i^2V_1^{*}X\check{D}\mathbf{B}^{-1}(\check{\lambda}_i)\mathbf{B}^{-1}(\widehat{\theta}_i)\check{D}X^{*}V_1)=0.
\end{gather}
Following the procedure in Section 7.1 of \cite{CHP} or Lemma C.5 of \cite{2022arXiv220206188Y}, we find that for $1 \leq i,l \leq r$ (recall that $\widetilde{\Sigma}$ is assumed to be diagonal)
   \begin{gather*}
        \widehat{\theta}_i\mathbf{e}_i^{*}V_1^{*}X\check{D}\mathbf{B}^{-1}(\widehat{\theta}_i)\check{D}X^{*}V_1\mathbf{e}_l=\mathbf{1}(l=i)\sum_{j=1}^nx_{kj}^2\xi^2_j/\mathbb{E}\xi^2+\ro_{\mathbb{P}}(n^{-1/2}).
    \end{gather*}
Similarly, by a discussion similar to Lemma C.6 of \cite{2022arXiv220206188Y}, we conclude that 
    \begin{gather*}
        \delta_i \widehat{\theta}_i^2[V_1^{*}X\check{D}\mathbf{B}^{-1}(\check{\lambda}_i)\mathbf{B}^{-1}(\widehat{\theta}_i)\check{D}X^{*}V_1]_{il}= \delta_i(\mathbf{1}(l=i)+\ro_{\mathbb{P}}(1)).
    \end{gather*}
Inserting the above two controls into (\ref{eq_generalldeterminant}), by the assumption of (\ref{eq_separation}), using Leibniz’s formula for determinant, one has that 
    \begin{gather}\label{eq: spike_det_est}
        \delta_i(1+\ro_{\mathbb{P}}(1))=\sum_{j=1}^nx_{ij}^2\xi^2_j/\mathbb{E}\xi^2-\frac{\widehat{\theta}_i}{\Tilde{\sigma}_i}+\ro_{\mathbb{P}}(n^{-1/2}).
    \end{gather}
    Combining (\ref{eq_finercontrolend}), we conclude that 
    \begin{equation*}
    \frac{\check{\lambda}_i-\widehat{\theta}_i}{\widehat{\theta}_i}=\sum_{j=1}^nx_{ij}^2\xi^2_j/\mathbb{E}\xi^2-1+\ro_{\mathbb{P}}(n^{-1/2}).
    \end{equation*}
Together with (\ref{eq_firstreductiondistribution}) and (\ref{eq_finercontrolend}), using central limit theorem, we can conclude the proof. 

\end{proof}


\section{Proof of some auxiliary lemmas}\label{appendix_last}

\subsection{Preliminary estimates: Proof of Lemmas \ref{lem: basic bounds} and \ref{localestimate2}}\label{sec_proofpreliminarystieltjes}

\subsubsection{Proof of Lemma \ref{lem: basic bounds}}

Due to similarity, we only prove the results for the separable covariance i.i.d. data model when $\xi^2$ decays  polynomially, i.e., when (\ref{eq:F(m,z)}) and (\ref{ass3.1}) hold. The other cases can be proved analogously and we omit the details.    

\begin{proof}[{\bf Proof}]
We start with the first statement.   We now abbreviate $F_n(m_{1n}(z)) \equiv F_{n}(m_{1n}(z),z)$ throughout the proof. 
For the real part,  it suffices to prove that with high probability for some $0<C_2<1<C_1$
\begin{equation}\label{eq_realcontrol}
\operatorname{Re} m_{1n}(z) \in \left[-C_1  \frac{\phi \bar{\sigma} E}{E^2+\eta^2}, -C_2  \frac{\phi \bar{\sigma} E}{E^2+\eta^2}\right].
\end{equation}
Moreover, by continuity and Theorem \ref{lem_solutionsystem}, it suffices to prove the following inequalities
\begin{equation}\label{eq)ccc}
 \operatorname{Re}F_n(-C_2 \phi \bar{\sigma} E (E^2+\eta^2)^{-1}+\ri \operatorname{Im} m_{1n}(z))<0, \  \operatorname{Re}F_n(-C_1 \phi \bar{\sigma} E (E^2+\eta^2)^{-1}+\ri \operatorname{Im} m_{1n}(z))>0.
\end{equation} 
We only focus on the first part. By definition, we have that   
 \begin{align}\label{eq_expansion}
 \operatorname{Re} & F_n(m_{1n}(z))=-\operatorname{Re}m_{1n}(z) \\
&  -\frac{1}{n}\sum_{i=1}^p\frac{\sigma_i\operatorname{Re}(z-\frac{\sigma_i}{n}\sum_{j=1}^n\frac{\xi^2_j}{1+m_{1n}(z)\xi^2_j})}{\operatorname{Re}^2(z-\frac{\sigma_i}{n}\sum_{j=1}^n\frac{\xi^2_j}{1+m_{1n}(z)\xi^2_j})+\operatorname{Im}^2(z-\frac{\sigma_i}{n}\sum_{j=1}^n\frac{\xi^2_j}{1+m_{1n}(z)\xi^2_j})}. \nonumber
 \end{align}
Note that 
 \begin{gather*}
     \operatorname{Re}\left(z-\frac{\sigma_i}{n}\sum_{j=1}^n\frac{\xi^2_j}{1+m_{1n}\xi^2_j} \right)=E-\frac{\sigma_i}{n}\sum_{j=1}^n\frac{\xi^2_j(1+\xi^2_j\operatorname{Re}m_{1n})}{(1+\xi^2_j\operatorname{Re}m_{1n})^2+\xi^4_j\operatorname{Im}^2m_{1n}},\\
     \operatorname{Im}\left(z-\frac{\sigma_i}{n}\sum_{j=1}^n\frac{\xi^2_j}{1+m_{1n}\xi^2_j}\right)=\eta+\frac{\sigma_i}{n}\sum_{j=1}^n\frac{\xi^4_j\operatorname{Im}m_{1n}}{(1+\xi^2_j\operatorname{Re}m_{1n})^2+\xi^4_j\operatorname{Im}^2m_{1n}}.
 \end{gather*}
By a discussion similar to (\ref{eq: 2.2 beta=2}), 
if $\operatorname{Re}m_{1n}=-C_2 (\phi \bar{\sigma} E)/(E^2+\eta^2)$,
%
%
we have that
 \begin{equation}\label{eq_controlcontrolddd}
 \begin{split}
    \operatorname{Re}\left(z-\frac{\sigma_i}{n}\sum_{j=1}^n\frac{\xi^2_j}{1+m_{1n}\xi^2_j} \right)
    &\ge E(1-\ro(1)),\\
     \operatorname{Im}\left(z-\frac{\sigma_i}{n}\sum_{j=1}^n\frac{\xi^2_j}{1+m_{1n}\xi^2_j} \right)
     &\le\eta+ E\times \ro(1).
 \end{split}
 \end{equation}
Therefore, together with (\ref{eq_expansion}), we see that 
\begin{equation}\label{eq: 2.3}
  \begin{split}
   \operatorname{Re}F_n(-C_2 \phi \bar{\sigma} E (E^2+\eta^2)^{-1}+\ri \operatorname{Im} m_{1n}(z))
   &\le C_2\phi\bar{\sigma}\frac{E}{(E^2+\eta^2)}-\frac{1}{n}\sum_{i=1}^p\frac{\sigma_i\times E(1-\ro(1))}{E^2+(\eta+E \times \ro(1))^2}\\
   &\le (C_2-1+\ro(1)) \frac{\phi\bar{\sigma}E}{(E^2+\eta^2)}<0,
\end{split}  
\end{equation}
for sufficient large $n$. This completes the discussion for the real part. For the complex part, the idea is similar and it suffices to prove that when $z \in \mathbf{D}_{u},$ for some constants $C_1, C_2>0$ 
\begin{equation}\label{eq_imaginarycontrol}
\operatorname{Im} m_{1n}(z) \in \left[ C_1 \frac{ \eta \phi \bar{\sigma}}{E^2+\eta^2}, C_2 \eta \left| \operatorname{Re} m_{1n}(z) \right| \right].
\end{equation} 
Equivalently, it suffices to prove that
\begin{equation*}
\operatorname{Im} F_n(\operatorname{Re} m_{1n}(z)+\ri C_2 \eta \left| \operatorname{Re} m_{1n}(z) \right|)<0,  \  \operatorname{Im} F_n\left(\operatorname{Re} m_{1n}(z)+\ri C_1 \frac{ \eta \phi \bar{\sigma}}{E^2+\eta^2} \right)>0,  
\end{equation*}
where by definition
\[
\operatorname{Im}F_n(m_{1n},z)=-\operatorname{Im}m_{1n}+\frac{1}{n}\sum_{i=1}^p\frac{\sigma_i\operatorname{Im}(z-\frac{\sigma_i}{n}\sum_{j=1}^n\frac{\xi^2_j}{1+m_{1n}\xi^2_j})}{\operatorname{Re}^2(z-\frac{\sigma_i}{n}\sum_{j=1}^n\frac{\xi^2_j}{1+m_{1n}\xi^2_j})+\operatorname{Im}^2(z-\frac{\sigma_i}{n}\sum_{j=1}^n\frac{\xi^2_j}{1+m_{1n}\xi^2_j})}.
\]
The proof of the above inequalities is similar to (\ref{eq)ccc}) using (\ref{eq_realcontrol}). We briefly discuss the proof of the first inequality in which case by a discussion similar to (\ref{eq_controlcontrolddd}) 
\[
\begin{split}
   \operatorname{Im}F_n(m_{1n},z)&=\eta\operatorname{Re}m_{1n}+\frac{\phi\bar{\sigma}\eta(1+E)}{E^2+\eta^2(1+E\times \ro(1))^2}\\
   &\le-C_2\frac{\phi\bar{\sigma}\eta E}{(E^2+\eta^2)}+\frac{\phi\bar{\sigma}\eta(1+E)}{E^2(1+\eta^2\times \ro(1))+\eta^2(1+2E\times \ro(1))}<0.
\end{split}
\]
This completes our proof.

For the second statement, from the first statement, we see that it is valid to write $m_{1n}(E).$ Since $\mu_1 \gg d_1$ holds with high probability (see (\ref{eq_lowerboundmu1})), it suffices to prove that for some constants $0<C_2<1<C_1,$ when $|E-\mu_1| \leq C d_1,$  
\begin{equation}\label{eq_boundbound}
m_{1n}(E) \in \left[-C_1 \frac{\phi \bar{\sigma}}{E},-C_2 \frac{\phi \bar{\sigma}}{E}\right].
\end{equation}
Due to simplicity, we again only focus on the proof of the upper bound. According to Theorem \ref{lem_solutionsystem} and (\ref{eq_functionFequal}), we shall have that $F_n(m_{1n}(E))=0.$ Moreover, since $F_{n}(m_{1n}(\mu_1))=0,$ to prove (\ref{eq_boundbound}), it suffices to prove
\begin{equation}\label{eq_reduceddiscussion}
F_{n}(-C_2 \phi \bar{\sigma}/E)<0, \ F_{n}(-C_1 \phi \bar{\sigma}/E)>0.
\end{equation}
Due to similarity, we focus our discussion on the first inequality. By definition, we have that 
\[
 \begin{split}
     F_n(-C_2 \phi \bar{\sigma}/E)&=C_2\frac{\phi\bar{\sigma}}{E}-\frac{1}{n}\sum_{i=1}^p\frac{\sigma_i}{E-\frac{\sigma_i}{n}\sum_{j=1}^n\frac{\xi^2_j}{1-\xi^2_j(C_2\frac{\phi\bar{\sigma}}{E})}}\\
     &=C_2 \frac{\phi\bar{\sigma}}{E}-\frac{1}{n}\sum_{i=1}^p\frac{\sigma_i}{E(1-\frac{\sigma_i}{n} \sum_{j=1}^n \frac{\xi_j^2}{E-C_2\xi_j^2 \phi \bar{\sigma}})}.
 \end{split}
 \]
By a discussion similar to (\ref{eq: 2.2 beta=2}), we further have that 
\begin{equation}\label{eq_similarcontrolone}
 F_n(-C_2 \phi \bar{\sigma}/E)=C_2 \frac{\phi \bar{\sigma}}{E}-\frac{1}{n} \sum_{i=1}^p \frac{\sigma_i}{E(1-\ro(1))}=(C_2-1+\ro(1)) \frac{\phi \bar{\sigma}}{E}<0.  
\end{equation}
This completes the proof of the first statement.

Finally, we prove the third statement using the first two statements. For $z \in \mathbf{D}_{u},$ by definition, we have that
\begin{equation}\label{eq: m_2n}
   \begin{split}
   m_{2n}(z)&=\frac{1}{n}\sum_{j=1}^n\frac{\xi^2_j}{-E-\ri\eta-(E+\ri\eta)\xi^2_j(\operatorname{Re}m_{1n}+\ri \operatorname{Im}m_{1n})}\\
   &=\frac{1}{n}\sum_{j=1}^n\frac{\xi^2_j \left[(-E-\xi^2_j(E\operatorname{Re}m_{1n}-\eta\operatorname{Im}m_{1n})+\ri\eta+\ri\xi^2_j(E\operatorname{Im}m_{1n}+\eta\operatorname{Re}m_{1n})) \right]}{(-E-\xi^2_j(E\operatorname{Re}m_{1n}-\eta\operatorname{Im}m_{1n}))^2+(-\eta-\xi^2_j(E\operatorname{Im}m_{1n}+\eta\operatorname{Re}m_{1n}))^2}.
\end{split} 
\end{equation}
According to the results in the first two statements, the definition of $\mathbf{D}_{u}$ in (\ref{eq_spectraldomainone}) and the elementary relation that $|m_{1n}(z)|=\rO(1),$ we find that for some constants $C_1, C_2>0,$ when $n$ is sufficiently large 
 \begin{equation*}
 |(-E-\xi^2_j(E\operatorname{Re}m_{1n}-\eta\operatorname{Im}m_{1n})+\ri\eta+\ri\xi^2_j(E\operatorname{Im}m_{1n}+\eta\operatorname{Re}m_{1n}))| \leq C_1 E, 
 \end{equation*}
and
\begin{equation*}
(-E-\xi^2_j(E\operatorname{Re}m_{1n}-\eta\operatorname{Im}m_{1n}))^2+(-\eta-\xi^2_j(E\operatorname{Im}m_{1n}+\eta\operatorname{Re}m_{1n}))^2 \geq C_2 (E+\xi_j^2)^2. 
\end{equation*}
Together with a discussion similar to (\ref{eq: 2.2 beta=2}), we readily see that for some large constant $C>0$
\begin{equation}\label{eq_rem2n}
\begin{split}
    |m_{2n}(z)|&\le \frac{C}{n}\sum_{j=1}^n\frac{E \xi^2_j}{(E+\xi_j^2)^2} \le \frac{C}{n}\sum_{j=1}^n\frac{\xi^2_j}{E-\xi^2_j}=\rO(e_{2}).
\end{split}
\end{equation}
Then together with the definition of $m_n(z),$ we see that 
\[
\begin{split}
   |m_n(z)| \le\frac{1}{p|z|}\sum_{i=1}^p\frac{1}{|1+\sigma_i m_{2n}(z)|} \le\frac{1}{p |z|}\sum_{i=1}^p\frac{1}{1+\sigma_i|m_{2n}(z)|}
\le \frac{1}{|z|}=\rO(E^{-1}).
\end{split}
\]

To control the imaginary part, by \eqref{eq: m_2n}, we can write
\[
\operatorname{Im}m_{2n}(z)=\frac{1}{n}\sum_{j=1}^n\frac{\xi^2_j(\eta+\xi^2_j(E\operatorname{Im}m_{1n}+\eta\operatorname{Re}m_{1n}))}{(-E-\xi^2_j(E\operatorname{Re}m_{1n}-\eta\operatorname{Im}m_{1n}))^2+(-\eta-\xi^2_j(E\operatorname{Im}m_{1n}+\eta\operatorname{Re}m_{1n}))^2}.
\]
Combining with (\ref{eq_realcontrol}) and (\ref{eq_imaginarycontrol}), we see that for some constant $C>0$ 
\begin{equation}\label{eq_imm2n}
\begin{split}
   \operatorname{Im}m_{2n}(z)& \leq \frac{C}{n}\sum_{j=1}^n\frac{\xi^2_j(\eta+\xi^2_j\eta)}{(E+\xi^2_j(\rO(1)+\eta^2\times \rO(E^{-1})))^2+(\eta+\xi^2_j\times \rO(1)+\eta\times \rO(E^{-1}))^2}\\
   &=\rO\left(\frac{1}{n}\sum_{j=1}^n\frac{\eta\xi^4_j}{\rO(E^2)} \right)= \rO\left(\frac{\eta}{E} \right),
\end{split}
\end{equation}
where in the last step we used (\ref{def1}). 
Moreover, using the definition of $m_n(z)$ in (\ref{eq_systemequationsm1m2}), we can write
\[
\operatorname{Im}m_n(z)=\frac{1}{p}\sum_{i=1}^p\frac{\eta+\sigma_i\eta\operatorname{Re}m_{2n}+\sigma_i E\operatorname{Im}m_{2n}}{(E+\sigma_i E\operatorname{Re}m_{2n}-\sigma_i\eta\operatorname{Im}m_{2n})^2+(\eta+\sigma_i\eta\operatorname{Re}m_{2n}+\sigma_i E\operatorname{Im}m_{2n})^2}.
\]
Together with (\ref{eq_rem2n}) and (\ref{eq_imm2n}), we can easily see that 
\begin{equation*}
\operatorname{Im} m_n(z)=\rO\left( \frac{\eta }{E^2} \right).
\end{equation*}
This completes our proof. 
\end{proof}

\begin{remark}\label{rem_JHX}
We may observe from the proof of Lemma \ref{lem: basic bounds} that in many cases we can directly write $m_{1n}(\mu_1)$ without considering its imaginary part. For example, when (\ref{ass3.1}) holds, for $z_0=\mu_1+\ri\eta$, using (\ref{eq:F(m,z)}) and (\ref{eq: def of mu_1}), we see that 
\begin{align*}
    \lim_{\eta\downarrow0}\operatorname{Im}m_{1n}(z_0)&=\lim_{\eta\downarrow0}\frac{1}{n}\sum_i\frac{\sigma_i(\eta+\frac{\sigma_i}{n}\sum_{j=1}^n\frac{\xi^4_j\operatorname{Im}m_{1n}(z_0)}{|1+\xi^2_jm_{1n}(z_0)|^2})}{|z_0-\frac{\sigma_i}{n}\sum_{j=1}^n\frac{\xi^2_j}{1+\xi^2_jm_{1n}(z_0)}|^2}\\
    &=\lim_{\eta\downarrow0}\frac{1}{n}\sum_i\frac{\frac{\sigma_i^2}{n}\sum_{j=1}^n\frac{\xi^4_j\operatorname{Im}m_{1n}(z_0)}{|1+\xi^2_jm_{1n}(z_0)|^2}}{|\mu_1-\frac{\sigma_i}{n}\sum_{j=1}^n\frac{\xi^2_j}{1+\xi^2_jm_{1n}(z_0)}|^2}\\
    &=\Big(\frac{1}{n}\sum_i\frac{\frac{\sigma_i^2}{n}\sum_{j=1}^n\frac{(\xi^2_{(1)}+d_2)^2\xi^4_j}{|\xi^2_{(1)}+d_2-\xi^2_j|^2}}{|\mu_1-\frac{\sigma_i}{n}\sum_{j=1}^n\frac{(\xi^2_{(1)}+d_2)\xi^2_j}{\xi^2_{(1)}+d_2-\xi^2_j}|^2}\Big)\times\lim_{\eta\downarrow0}\operatorname{Im}m_{1n}(z_0).
\end{align*}
Then by a discussion similar to (\ref{eq_imm2n}), using the fact that $\alpha \geq 2$, we find that for any $\mu_1\gtrsim\xi^2_{(1)}$ 
\begin{gather*}
    \lim_{\eta\downarrow0}\operatorname{Im}m_{1n}(z_0)=\mathrm{o}(1)\times \lim_{\eta\downarrow0}\operatorname{Im}m_{1n}(z_0),
\end{gather*}
which holds true if and only if $\lim_{\eta\downarrow0}\operatorname{Im}m_{1n}(z_0)=0$. This shows that $m_{1n}(\mu_1)$ is well-defined for $\mu_1\gtrsim\xi^2_{(1)}$.

\end{remark}

\subsubsection{Proof of Lemma \ref{localestimate2}}

Due to similarity, we focus our discussion on the separable covariance i.i.d. data model. The elliptical data model can be handled analogously using (\ref{eq:F(m,z)1}) instead of (\ref{eq:F(m,z)}) whenever it is necessary. We omit the details due to similarity.  

\begin{proof}[\bf Proof] We first prove the results when conditionally. Let $\Omega$ be the event satisfying (c) of Definition \ref{defn_probset}. According to Lemma \ref{lem_probabilitycontrol}, we find that $\mathbb{P}(\Omega)=1-\mathrm{o}(1).$ Now we choose a realization $\{\xi_i^2\} \in \Omega$ so that the proofs of parts (a) and (b)
are purely deterministic. 

\begin{proof}[\bf Proof of (a)] We start with (\ref{eq_conditionaledgedefinition}). According to (\ref{eq_systemequationsm1m2}), we have that
\begin{gather*}
    m_{1n}(z)=\frac{1}{n}\sum_{i=1}^p\frac{\sigma_i}{-z+\frac{\sigma_i}{n} \sum_{j=1}^n\frac{\xi^2_j}{1+\xi^2_jm_{1n}(z)}}.
\end{gather*}
To characterize the bulk of the spectrum, we take the imaginary part on the both sides of the above equation and let $\eta \downarrow 0$ to obtain that 
\begin{gather}
\begin{split}
    \operatorname{Im}m_{1n}(z)
    &=\frac{1}{n}\sum_{i=1}^p \frac{\sigma_i^2 \left(\frac{1}{n}\sum_j\frac{\xi^4_j\operatorname{Im}(m_{1n})}{\operatorname{Re}^2(1+\xi^2_jm_{1n})+\xi^4_j\operatorname{Im}^2(m_{1n})}\right)}{(E-\operatorname{Re}(\frac{\sigma_i}{n}\sum_j\frac{\xi^2_j}{1+\xi^2_jm_{1n}}))^2+\operatorname{Im}^2(\frac{\sigma_i}{n}\sum_j\frac{\xi^2_j}{1+\xi^2_jm_{1n}})}.
\end{split}
\end{gather}
The above equation can be further rewritten as 
\begin{gather}\label{eq: supp for m1phi}
    0=\operatorname{Im}m_{1n}(z)(1-g (m_{1n},E)),   
\end{gather}
where $g(m_{1n},E)$ is denoted as 
\begin{gather*}
    g(m_{1n},E):=\frac{1}{n}\sum_{i=1}^p\frac{\sigma_i^2 \left(\frac{1}{n}\sum_j\frac{\xi^4_j}{\operatorname{Re}^2(1+\xi^2_jm_{1n})+\xi^4_j\operatorname{Im}^2(m_{1n})}\right)}{(E-\operatorname{Re}(\frac{\sigma_i}{n}\sum_j\frac{\xi^2_j}{1+\xi^2_jm_{1n}}))^2+\operatorname{Im}^2(\frac{\sigma_i}{n}\sum_j\frac{\xi^2_j}{1+\xi^2_jm_{1n}})}.
\end{gather*}
Similar to the arguments used in \cite{Kwak2021,lee2016extremal}, it is easy to see that for any fixed $\operatorname{Re}m_{1n}<-l^{-1}$ and $E$, $g(m_{1n},E)\rightarrow0$ when $|\operatorname{Im}m_{1n}|\rightarrow\infty,$ and $g(m_{1n},E)\rightarrow+\infty$ in order to satisfy (\ref{eq: supp for m1phi}) when $|\operatorname{Im}m_{1n}|\rightarrow 0.$ 
%
%
%
Therefore, by monotonicity, there exists a unique $\operatorname{Im}m_{1n}>0$ such that \eqref{eq: supp for m1phi} holds, which corresponds to the bulk of the spectrum. 

Furthermore, for any fixed $\operatorname{Re} m_{1n} >-l^{-1}$ and fixed  $E$ so that Theorem \ref{lem_solutionsystem} holds, we have that $g(m_{1n},E)$ is monotone decreasing in terms of $|\operatorname{Im} m_{1n}|.$
%
%
Let $E_+$ be defined according to $m_{1n}(E_+)=-l^{-1}.$ In view of (\ref{eq:F(m,z)}) and (\ref{eq_functionFequal}), we have that 
\begin{equation*}
l^{-1}=\frac{1}{n} \sum_{i=1}^p  \frac{ \sigma_i}{E_+-\sigma_i \widehat{\varsigma}_2}. 
\end{equation*}
Let $\widetilde{\varsigma}_3$ be defined similarly as $\widehat{\varsigma}_3$ in (\ref{eq_finitesample123}) by replacing $\widehat{L}_+$ with $E_+.$ Based on the above arguments and definitions, it is easy to see that  
%
%
%
%
%
\[
\begin{split}
    &\sup_{\operatorname{Re}m_{1n}\in(-l^{-1},\infty)} g(m_{1n}, E)=g(-l^{-1}, E_+)=\phi \widetilde{\varsigma}_3.
    \\
\end{split}
\]
Assuming that $\phi \widetilde{\varsigma}_3<1,$ we conclude that (\ref{eq: supp for m1phi}) holds only if $\operatorname{Im} m_{1n}(z)=0$ which corresponds to the outside part of the spectrum. This shows that 
%
%
$m_{1n}=-l^{-1}$ is at the right edge of the spectrum and gives the expression of the end point $\widehat{L}_{+}$ as in (\ref{eq_conditionaledgedefinition}). Therefore, $\widehat{L}_+=E_+$ and $\widehat{\varsigma}_3=\widetilde{\varsigma}_3.$ This completes the proof. 


\quad Second, the proof of \eqref{eq: concave decay of rho_Q} follows from an argument similar to Lemma 8.4 of \cite{lee2016extremal} utilizing the estimate (\ref{ass3.4}), we omit the details. 

\quad Third, we prove (\ref{eq_closenessequation}). The closeness of $\varsigma_k$ and $\widehat{\varsigma}_k, k=1,2,3,4,$  follows from arguments similar to the last equation of (\ref{def4}). Now we proceed to the proof of the second equation.
According to the proof of (\ref{eq_conditionaledgedefinition}) and an analogous argument, we found that $m_{1n}(\widehat{L}_+)=-l^{-1}$ and $m_{1n,c}(L_+)=-l^{-1}.$ Together with the definitions of $\varsigma_2, \widehat{\varsigma}_2,$ $m_{2n}$ and $m_{2n,c},$ we find that 
\begin{equation}\label{eq_deterministicrelation}
\varsigma_2=-m_{2n,c}(L_+) L_+, \ \widehat{\varsigma}_2=-m_{2n}(\widehat{L}_+) \widehat{L}_+. 
\end{equation} 
Next, by (\ref{eq_conditionaledgedefinition}) and an analogous argument for $L_+$ (see (\ref{eq_onlyoneequationdecide}) and the proof of part II below), we have that 
\begin{gather}\label{eq_denomimatorcontrol}
    \begin{split}
        0&=\frac{1}{n}\sum_{i=1}^p \frac{-l\sigma_i}{(-L_{+}+\sigma_i\varsigma_2)}+\frac{1}{n}\sum_{i=1}^p \frac{l\sigma_i}{(-\widehat{L}_{+}+\sigma_i\widehat{\varsigma}_2)}\\
        &=\frac{1}{n}\sum_i\frac{-l\sigma_i}{(-L_{+}+\sigma_i\varsigma_2)}+\frac{1}{n}\sum_i\frac{l\sigma_i}{(-L_{+}+\sigma_i\widehat{\varsigma}_2)}+\frac{1}{n}\sum_i\frac{l\sigma_i(\widehat{L}_{+}-L_{+})}{(-\widehat{L}_{+}+\sigma_i \widehat{\varsigma}_2)(-L_{+}+\sigma_i\widehat{\varsigma}_2)}.
    \end{split}
\end{gather} 
By first equation of (\ref{eq_closenessequation}), (\ref{eq_deterministicrelation}) and Assumption \ref{assum_additional_techinical}, we can conclude our proof.

\quad Fourth, we work with (\ref{eq_expansionlinear}) and (\ref{eq_a11coro}). Due to similarity, we only prove (\ref{eq_expansionlinear}). According to (\ref{lem_solutionsystem}), we have that 
\begin{gather*}
    m_{1n}(z)=\frac{1}{n}\sum_{i=1}^p \frac{\sigma_i}{-z+\frac{\sigma_i}{n}\sum_{j=1}^n\frac{\xi^2_j}{1+\xi^2_jm_{1n}(z)}}.
\end{gather*}
Consequently, it is easy to see that for $z= \widehat{L}_{+}-\kappa+\ri\eta\in\mathbf{D}_b,$ 
\begin{gather}\label{eq_decompositionmmmmmm}
\begin{split}
    m_{1n}(\widehat{L}_{+})-m_{1n}(z)
    =R_1(\widehat{L}_{+}-z)+R_2(m_{1n}(\widehat{L}_{+})-m_{1n}(z)),
\end{split}
\end{gather}
where we denote 
\begin{gather*}
    R_1:=\frac{1}{n}\sum_{i=1}^p \frac{\sigma_i}{(-\widehat{L}_{+}+\frac{\sigma_i}{n}\sum_j\frac{\xi^2_j}{1+\xi^2_jm_{1n}(\widehat{L}_{+})})(-z+\frac{\sigma_i}{n}\sum_j\frac{\xi^2_j}{1+\xi^2_jm_{1n}(z)})}\\
    R_2:=\frac{1}{n}\sum_{i=1}^p \frac{\frac{\sigma_i^2}{n}\sum_j\frac{\xi^4_j}{(1+\xi^2_jm_{1n}(\widehat{L}_{+}))(1+\xi^2_jm_{1n}(z))}}{(-\widehat{L}_{+}+\frac{\sigma_i}{n}\sum_j\frac{\xi^2_j}{1+\xi^2_jm_{1n}(\widehat{L}_{+})})(-z+\frac{\sigma_i}{n}\sum_j\frac{\xi^2_j}{1+\xi^2_jm_{1n}(z)})}.
\end{gather*}
To study the terms $R_1$ and $R_2,$ we will need the following control whose proof follows from equations (4.24)-(4.28) of \cite{lee2016extremal} 
\begin{equation}\label{eq_boundused}
\frac{1}{n} \sum_{j=1}^n \frac{\xi_j^4}{(1-\xi_j^2 l^{-1})(1+\xi_j^2 m_{1n}(z))}=
\begin{cases}
\rO\left( \log n \right),  & d \geq 2; \\
\rO \left( |l^{-1}+m_{1n}(z)|^{d-2}\log n \right), & 1<d \leq 2.
\end{cases}
\end{equation}
For the denominator of $R_1$, since $z \in \mathbf{D}_b,$ by a discussion similar to (\ref{eq_denomimatorcontrol}), we find they are bounded from below so that $R_1=\rO(1).$  Furthermore, since $m_{1n}(\widehat{L}_+)=-l^{-1},$ by a straightforward calculation, using the definition of $\widehat{\varsigma}_2$ in (\ref{eq_finitesample123}) and the control (\ref{eq_boundused}), we observe that 
 \begin{gather}\label{eq_R1control}
    \begin{split}
        R_1&=\widehat{\varsigma}_4-\frac{1}{n}\sum_{i=1}^p \frac{\sigma_i (z-\widehat{L}_{+})+\frac{\sigma_i^2}{n}\sum_j\frac{\xi^4_j(m_{1n}(z)+l^{-1})}{(1-\xi^2_jl^{-1})(1+\xi^2_jm_{1n}(s))}}{(-\widehat{L}_{+}+\sigma_i\widehat{\varsigma}_2)^2(-z+\frac{\sigma_i}{n}\sum_j\frac{\xi^2_j}{1+\xi^2_jm_{1n}(z)})}\\
        &=\widehat{\varsigma}_4+\rO(|z-\widehat{L}_{+}|)+\rO(|m_{1n}(z)+l^{-1}|^{ \min\{d-1,1\}}\log n),
    \end{split}
\end{gather}
where in the second step we again used an argument similar to (\ref{eq_denomimatorcontrol}). Similarly, for $R_2,$ we find that 
\begin{gather}\label{eq_R2control}
\begin{split}
R_2&=\phi \widehat{\varsigma}_3+\rO(|z-\widehat{L}_{+}|)+\rO(|m_{1n}(z)+l^{-1}|^{\min\{d-1,1\}}\log n).
\end{split}
\end{gather}
We next provide a useful deterministic control. Using a discussion similar to \cite[Lemma A.4]{Kwak2021}, we find from (\ref{eq_systemequationsm1m2}) that 
\begin{equation}\label{eq_expansionusefullessorequaltoone}
\frac{1}{n} \sum_{i=1}^p \frac{\sigma_i^2 \frac{1}{n} \sum_{j=1}^n \frac{\xi_j^4}{(1+\xi_j^2 m_{1n}(z))^2} }{|-z+\frac{\sigma_i}{n} \sum_{j=1}^n \frac{\xi_j^2}{1+\xi_j^2 m_{1n}(z)}|^2}=1-\frac{1}{n}\sum_i\frac{\sigma_i\eta/\operatorname{Im}m_{1n}(z)}{|-z+\frac{\sigma_i}{n}\sum_j\frac{\xi^2_j}{1+\xi^2_jm_{1n}(z)}|^2}=1-\eta \frac{|m_{1n}(z)|^2}{\operatorname{Im} m_{1n}(z)}.
\end{equation}
Since $\operatorname{Im} m_{1n}(z)>0,$ this implies that
\begin{equation*}
0 \leq 1-\frac{1}{n}\sum_i\frac{\sigma_i\eta/\operatorname{Im}m_{1n}(z)}{|(-z+\frac{\sigma_i}{n}\sum_j\frac{\xi^2_j}{1+\xi^2_jm_{1n}(z)})|^2} \leq 1. 
\end{equation*} 
Together with Cauchy-Schwarz inequality, we see that 
\begin{gather*}
    \begin{split}
       |R_2| \leq (\phi\widehat{\varsigma}_3)^{1/2}\left(1-\frac{1}{n}\sum_i\frac{\sigma_i\eta/\operatorname{Im}m_{1n}(z)}{|(-z+\frac{\sigma_i}{n}\sum_j\frac{\xi^2_j}{1+\xi^2_jm_{1n}(z)})|^2}\right)^{1/2} <1,
    \end{split}
\end{gather*} 
where we used the fact $\operatorname{Im}m_{1n}(z)>0$ and $\eta>0.$ Using (\ref{eq_decompositionmmmmmm}), we find that $m_{1n}(\widehat{L}_+)-m_{1n}(z) \asymp \widehat{L}_+-z.$ Then we can conclude our proof using (\ref{eq_decompositionmmmmmm}), (\ref{eq_R1control}) and (\ref{eq_R2control}).

\quad Finally, we prove the controls for the imaginary parts. For (\ref{eq_zopointrate11}), the discussion is  similar to that of Lemma 4.5 of \cite{Kwak2021}. According to (\ref{eq_functionFequal}), we find see that 
\begin{gather}\label{eq_diudiudiudiudiu}
   \begin{split}
       -m_{1n}(z)&=\frac{1}{n}\frac{\sigma_{1}}{z-\frac{\sigma_{1}}{n}\sum_{j=1}^n\frac{\xi^2_j}{1+\xi^2_jm_{1n}(z)}}+\frac{1}{n}\sum_{i=2}^{p}\frac{\sigma_i}{z-\frac{\sigma_i}{n}\sum_{j=1}^n\frac{\xi^2_j}{1+\xi^2_jm_{1n}(z)}}\\
       &=\rO(\frac{1}{n\eta})+\frac{1}{n}\sum_{i=2}^{p}\frac{\sigma_i}{z-\frac{\sigma_i}{n}\sum_{j=1}^n\frac{\xi^2_j}{1+\xi^2_jm_{1n}(z)}},
   \end{split} 
\end{gather}
where in the step we the fact that $\operatorname{Im} m_{1n}(z) \geq 0$ and the trivial bound that 
\begin{equation}\label{eq_trivialcontroleta}
\frac{1}{n} \left|\left(z-\frac{\sigma_{1}}{n}\sum_{j=1}^n\frac{\xi^2_j}{1+\xi^2_jm_{1n}(z)}\right)^{-1} \right| \leq n^{-1} \left(\eta+\frac{\sigma_1}{n} \sum_{j=1}^n \frac{\xi_j^4 \operatorname{Im} m_{1n}(z)}{|1+\xi_j^2 m_{1n}(z)|^2} \right)^{-1} \leq (n \eta)^{-1}. 
\end{equation}
Taking the imaginary part on both sides of (\ref{eq_diudiudiudiudiu}), we see that for some constant $0<c<1,$
\begin{gather*}
\begin{split}
    \operatorname{Im}m_{1n}(z)&=\frac{1}{n}\sum_{i=2}^{p}\frac{\sigma_i(\eta+\frac{\sigma_i}{n}\sum_{j=1}^n\frac{\xi^4_j\operatorname{Im}m_{1n}(z)}{|1+\xi^2_jm_{1n}(z)|})}{|z-\frac{\sigma_i}{n}\sum_{j=1}^n\frac{\xi^2_j}{1+\xi^2_jm_{1n}(z)}|^2}+\rO(\frac{1}{n\eta})\\
    &=\frac{1}{n}\sum_{i=2}^{p}\frac{\sigma_i\eta}{|z-\frac{\sigma_i}{n}\sum_{j=1}^n\frac{\xi^2_j}{1+\xi^2_jm_{1n}(z)}|^2}+\frac{1}{n}\sum_{i=2}^{p}\frac{\frac{\sigma_i^2}{n}\sum_{j=1}^n\frac{\xi^4_j\operatorname{Im}m_{1n}(z)}{|1+\xi^2_jm_{1n}(z)|}}{|z-\frac{\sigma_i}{n}\sum_{j=1}^n\frac{\xi^2_j}{1+\xi^2_jm_{1n}(z)}|^2}+\rO(\frac{1}{n\eta})\\
    &=\rO(\eta)+\rO(\frac{1}{n\eta})+ c \operatorname{Im} m_{1n}(z),
    \end{split}
\end{gather*}
where in the last step we used discussions similar to (\ref{eq_L1bound}) and (\ref{eq_L3BOUND}) below. This concludes the proof. Then we prove (\ref{eq_oneregimeedgecontrol}) and (\ref{eq_zopointrate}) following \cite[Lemma 5.2]{lee2016extremal}. Due to similarity, we focus our analysis on $m_{1n}(z)$ and discuss $m_n(z)$ briefly in the end. In what follows, for notational simplicity, without loss of generality, we assume that on $\Omega,$ $\xi_{(i)}^2=\xi_i^2.$ In what follows, we identify $\eta \equiv \eta_0$ till the end of the proof of the lemma.  According to (\ref{eq_systemequationsm1m2}), we find that
\begin{align}\label{eq_decompositionl1l2l3}
    \operatorname{Im}m_{1n}(z)&=\frac{1}{n}\sum_{i=1}^p\frac{\sigma_i(\eta+\frac{\sigma_i}{n}\sum_{j=1}^n\frac{\xi^4_j\operatorname{Im}m_{1n}(z)}{|1+\xi^2_jm_{1n}(z)|^2})}{|z-\frac{\sigma_i}{n}\sum_{j=1}^n\frac{\xi^2_j}{1+\xi^2_jm_{1n}(z)}|^2} \nonumber \\
    &=\frac{1}{n}\sum_{i=1}^p \frac{\sigma_i\eta}{|z-\frac{\sigma_i}{n}\sum_{j=1}^n\frac{\xi^2_j}{1+\xi^2_jm_{1n}(z)}|^2}+\frac{1}{n}\sum_{i=1}^p \frac{\frac{\sigma_i^2}{n}\frac{\xi^4_1\operatorname{Im}m_{1n}(z)}{|1+\xi^2_1m_{1n}(z)|^2}}{|z-\frac{\sigma_i}{n}\sum_{j=1}^n\frac{\xi^2_j}{1+\xi^2_jm_{1n}(z)}|^2}+ \frac{1}{n} \sum_{i=1}^p \frac{\frac{\sigma_i^2}{n}\sum_{j=2}^n\frac{\xi^4_j\operatorname{Im}m_{1n}(z)}{|1+\xi^2_jm_{1n}(z)|^2}}{|z-\frac{\sigma_i}{n}\sum_{j=1}^n\frac{\xi^2_j}{1+\xi^2_jm_{1n}(z)}|^2} \nonumber \\
    &=\mathsf{L}_1+\mathsf{L}_2+\mathsf{L}_3.  
    \end{align}
For the denominator, by the results and arguments in Section \ref{sec_proofpartiboundedlocallaw}, we observe that when $z\in \mathbf{D}_b^\prime$  
\begin{align*}
z-\frac{\sigma_i}{n} \sum_{j=1}^n \frac{\xi_j^2}{1+\xi_j^2 m_{1n}(z)}& = z-\frac{\sigma_i}{n} \sum_{j=2}^n \frac{\xi_j^2}{1+\xi_j^2 m_{1n}(z)}+\frac{\sigma_i}{n} \frac{\xi_1^2}{1+\xi_1^2 m_{1n}(z)} \\
&=z-\frac{\sigma_i}{n} \sum_{j=2}^n \frac{\xi_j^2}{1+\xi_j^2 m_{1n,c}(z)}+\rO((n \eta)^{-1}+n^{-1/2-1/(d+1)}). 
\end{align*} 
Together with Assumption \ref{assum_additional_techinical} and (\ref{def4}), we find that for some small constant $c'>0,$ when $n$ is sufficiently large, 
\begin{equation*}
\left| z-\frac{\sigma_i}{n} \sum_{j=1}^n \frac{\xi_j^2}{1+\xi_j^2 m_{1n}(z)} \right| \geq c'. 
\end{equation*}
This implies that 
\begin{equation}\label{eq_L1bound}
\mathsf{L}_1 \asymp \eta. 
\end{equation}

For $\mathsf{L}_2,$ on the one hand, when $|z-z_0| \geq Cn^{-1/2+3 \epsilon_d},$ by (\ref{eq_a11coro}), we conclude that on $\Omega,$ for some constant $C>0$
\begin{equation}\label{eq_L2bound1}
|\mathsf{L}_2| \leq n^{-1/2-3 \epsilon_d} \operatorname{Im} m_{1n}(z).  
\end{equation}
On the other hand, when $z=z_0$ so that $\operatorname{Re} m_{1n}(z)=-\xi_1^2,$ we can rewrite $\mathsf{L}_2$ as 
\begin{equation}\label{eq_L2bound2}
\mathsf{L}_2=\frac{1}{n}\sum_{i=1}^ p\frac{\frac{\sigma_i}{n\operatorname{Im}m_{1n}(z)}}{|z-\frac{\sigma_i}{n}\sum_{j}\frac{\xi^2_j}{1+\xi^2_jm_{1n}(z)}|^2}. 
\end{equation}

Next, for $\mathsf{L}_3,$ by (\ref{eq_defnmathsfW}), (\ref{def4}) and the results and arguments in Section \ref{sec_proofpartiboundedlocallaw}, using the trivial bound for $\mathsf{L}_2$ that $|\mathsf{L}_2|=\rO((n\eta)^{-1}),$ we conclude that when $z \in \mathbf{D}_b^{\prime},$ for some constant $0<\mathfrak{c}<1$
\begin{equation}\label{eq_L3BOUND}
|\mathsf{L}_3| \leq \mathfrak{c} \operatorname{Im} m_{1n}(z). 
\end{equation}

\quad Consequently, we find that (\ref{eq_oneregimeedgecontrol}) follows from (\ref{eq_decompositionl1l2l3}), (\ref{eq_L1bound}), (\ref{eq_L2bound1}) and (\ref{eq_L3BOUND}). Moreover, (\ref{eq_zopointrate}) follows from (\ref{eq_decompositionl1l2l3}), (\ref{eq_L1bound}), (\ref{eq_L2bound2}) and (\ref{eq_L3BOUND}) by solving the associated quadratic equation. Finally, we mention that the results for $\operatorname{Im}m_n(z)$ essentially follows from  (\ref{eq_systemequationsm1m2}) that
\begin{gather*}
    \operatorname{Im}m_n(z)=\frac{1}{n}\sum_{i=1}^p \frac{\eta}{|z-\frac{\sigma_i}{n}\sum_j\frac{\xi^2_j}{1+\xi^2_jm_{1n}(z)}|^2}+\frac{1}{n}\sum_{i=1}^p \frac{\frac{1}{n}\sum_{j}\frac{\xi^4_j\operatorname{Im}m_{1n}(z)}{|1+\xi^2_jm_{1n}(z)|^2}}{|z-\frac{\sigma_i}{n}\sum_j\frac{\xi^2_j}{1+\xi^2_jm_{1n}(z)}|^2},
\end{gather*}
with the results for $\operatorname{Im}m_{1n}(z)$. This completes our proof.

\end{proof}

\begin{proof}[\bf Proof of Part (b)] 

For (\ref{eq_boundedfrombelowimportant}), on the one hand, when when $d>1$ and $\phi^{-1}<\widehat{\varsigma}_3,$ the result has been proved in (\ref{rem1nbound}).  On the other hand, when $-1<d \leq 1,$  we employ the proof idea as in the proof of Lemma A.3 of \cite{lee2013local11111} using a continuity argument. Recall (\ref{eq_Fnxyoriginaldefinition}). Denote 
\begin{equation*}
   g(x,y) \equiv \frac{\partial F_n(x,y)}{\partial x}+1=\frac{1}{n}\sum_{i=1}^p\frac{\frac{\sigma_i^2}{n}\sum_{j=1}^n\frac{\xi^4_j}{(1+x\xi^2_j)^2}}{(-y+\frac{\sigma_i}{n}\sum_{j=1}^n\frac{\xi^2_j}{1+x\xi^2_j})^2}.
\end{equation*}
From our assumption that $-1<d \leq 1$ and (\ref{ass3.4}), we find that there exist constants $C,C_0>0$ such that $\mathrm{d}F(x)\ge C(l-x)^{d}\ge C_0(l-x)$ for $x\in(0,l)$. Let $D$ be a sufficiently large constant and choose a sufficiently small constant $0<\epsilon<D^{-1}$, we have that when $n$ is sufficiently large, there exists some constants $C_1, C_2, C_3>0$
    \begin{align*}
            g(-(l+\epsilon)^{-1},\widehat{L}_{+})&=\frac{1}{n}\sum_{i=1}^p\frac{\frac{\sigma_i^2}{n}\sum_{j=1}^n\frac{(l+\epsilon)^2\xi^4_j}{(l+\epsilon-\xi^2_j)^2}}{\left(\widehat{L}_{+}-\frac{\sigma_i}{n}\sum_{j=1}^n\frac{(l+\epsilon)\xi^2_j}{l+\epsilon-\xi^2_j} \right)^2}\\
            &\ge\frac{C_1}{n}\sum_{i=1}^p\frac{\sigma_i^2\int_{l-(D-1)\epsilon}^l\frac{(l+\epsilon)^2x^2}{(l+\epsilon-x)^2}\mathrm{d}F(x)}{(\widehat{L}_{+}-\sigma_i\rO(1))^2} \ge\frac{1}{n}\sum_{i=1}^p\frac{C_2\int_{l-(D-1)\epsilon}^l\frac{(l-x)}{(l+\epsilon-x)^2}\mathrm{d}x}{(\widehat{L}_{+}-\sigma_i\rO(1))^2}\\
            &=\frac{1}{n}\sum_{i=1}^p\frac{C_2\int_{\epsilon}^{D\epsilon}\frac{(t-\epsilon)}{t^2}\mathrm{d}t}{(\widehat{L}_{+}-\sigma_i\rO(1))^2} \geq C_3(\log D-1+\frac{1}{D})>1,
    \end{align*}
    for sufficiently large $D>0.$ Similar arguments apply to $g(-(l-\epsilon),\widehat{L}_+).$ Consequently, by the continuity of $g(x,y)$, we obtain that $ \partial F_n (-l^{-1},\widehat{L}_{+}) /\partial x>0. $ Since $\partial F_n (m_{1n}(\widehat{L}_+),\widehat{L}_{+}) /\partial x=0,$ we can conclude that (\ref{rem1nbound}) still holds. That is, $m_{1n}(\widehat{L}_{+})>-l^{-1}$. This finishes the proof of (\ref{eq_boundedfrombelowimportant}).

For (\ref{thm_main_squared_root_bounded}) and (\ref{eq_gammadefinition}), using (\ref{eq_boundedfrombelowimportant}),   by  a discussion similar to (\ref{eq_assumptionequation}), we see that 
\begin{equation}\label{eq_secondmomentbounded}
\frac{\partial^2 F_n(m_{1n}(\widehat{L}_+), \widehat{L}_+)}{\partial x^2} \asymp 1.
\end{equation} 
Armed with this input, the square root behavior of $\rho$ at $\widehat{L}_+$ can be obtained in the same way as Lemma A.1 of \cite{lee2013local11111}. Due to similarity, we omit the details. This completes our proof of Part (b).  
\end{proof}

Finally, it is easy to check that we can follow lines of the proofs of parts (a) and (b) to prove the unconditional results by replacing the related quantities verbatim. We omit further details. 

\end{proof}


%

\subsection{Control of some bad probability events: proof of Lemma \ref{lem_probabilitycontrol}}\label{sec_appendxi_goodevent}

In this subsection, we prove Lemma \ref{lem_probabilitycontrol} case by case. We first prove Case (a) in Definition \ref{defn_probset}. 

\begin{proof}[\bf Proof of Case (a).] First, the last statement of (\ref{def1}) follows directly from strong law of large number. In fact, the result holds almost surely. 


Then,  we prove the second statement of (\ref{def1}). For the upper bound, since $\{\xi_i^2\}$ are independent, we readily see that when $n$ is sufficiently large, for some constant $C'>0,$  
\begin{equation*}
\begin{split}
    \mathbb{P}(\xi_{(1)}^2 \leq Cn^{1/\alpha}
    \log n)&=\left(1-\mathbb{P}(\xi^2> Cn^{1/\alpha}
    \log n) \right)^n \ge\left(1-\frac{L(C n^{1/\alpha}\log n)}{(C n^{1/\alpha}\log n)^{\alpha}}\right)^n\\
    &\ge(1-C' n^{-1}\log^{-\alpha} n)^n \asymp \exp\left( -1/(C' \log^{\alpha} n)\right) \asymp 1-\rO(\log^{-\alpha}n). 
\end{split}
\end{equation*}
where in the second step we used the assumption (\ref{ass3.1})  and in the third step we used the assumption that $L(\cdot)$ is a slowly varying function. This proves the upper bound. Similarly, for the lower bound, we can show that  for some large constant $C>0$
\begin{equation}\label{eq_bbbbboneoneoneone}
    \mathbb{P}(\xi^2_{(1)}\le n^{1/\alpha}\log^{-1}n)=\rO(\log n/n^C).
\end{equation}
This concludes the proof of the second statement. 

Next, we prove the first statement using the second one. Note that 
\begin{equation*}
\begin{split}
    \mathbb{P}(\xi^2_{(1)}-\xi^2_{(2)}<n^{1/\alpha}\log^{-1}n)&=\mathbb{P}(\xi^2_{(1)}<n^{1/\alpha}\log^{-1}n+\xi^2_{(2)}) =\mathbb{P}(\xi^2_{(1)}<Cn^{1/\alpha}\log^{-1}n)=\rO(\log n/n^C),
\end{split}
\end{equation*}
where the second and third steps we used the results of the second statement. 

Then we justify the  fourth statement. In what follows, without loss of generality, we assume that $n^b$ is an integer.  For $c>1$ and $b>1/2$, we notice that for some large constant $C>0$
\begin{equation}\label{eq_controlkey}
\begin{split}
   \mathbb{P}( & \xi^2_{(1)}-\xi^2_{(n^b)}<c^{-1}n^{1/\alpha}\log^{-1}n)\le\mathbb{P}(\xi^2_{(n^b)}\ge(1-c^{-1})n^{1/\alpha}\log^{-1}n)\\
   &=\sum_{k= n^{b} }^{n}\binom{n}{k}\left[\mathbb{P}(\xi^2\ge(1-c^{-1})n^{1/\alpha}\log^{-1}n)\right]^k \left[\mathbb{P}(\xi^2\le(1-c^{-1})n^{1/\alpha}\log^{-1}n)\right]^{n-k}\\
   &=\sum_{k= n^{b} }^{n}\binom{n}{k}\left(\frac{\log^{\alpha}n}{n} \right)^k \left(1-\frac{\log^{\alpha}n}{n}\right)^{n-k}\\
   &\le\sum_{k= n^{b} }^{n}\left(\frac{en}{k} \right)^k \left(\frac{\log^{\alpha}n}{n}\right)^k \left(1-\frac{\log^{\alpha}n}{n} \right)^{n-k}\\
   &\le\sum_{k= n^{b} }^{n}\left(\frac{e}{k}\right)^k\log^{\alpha k}n e^{-\log^{\alpha}n(1-k/n)}=\rO(\log n/n^C),
\end{split}
\end{equation}
where in the first step we used (\ref{eq_bbbbboneoneoneone}), in the third step we used (\ref{ass3.1}) and in the fourth step we used Stirling's formula. This concludes the proof.  

Finally, we proceed to the proof of the third statement. Define a sequence of intervals $I_k:=\{Cn^{1/\alpha}\log^{-1}n+kn^{\epsilon},Cn^{1/\alpha}\log^{-1}n+(k+1)n^{\epsilon}\},k=[\![1,n^{1/\alpha-\epsilon}]\!]$. It is easy to see that if $\xi^2_{(i)}-\xi^2_{(i+1)}<n^{\epsilon}$ when $\xi^2_{(i)},\xi^2_{(i+1)}\in I_k$ for some $k$. Setting $\mathsf p_k:=\mathbb{P}(\xi^2\in I_k)$, we see that
\begin{equation*}
\mathbb{P}(|j\in[\![1,n]\!]:\xi^2_j\in I_k|=0)=(1- \mathsf p_k)^n,\quad \mathbb{P}(|j\in[\![1,n]\!]:\xi^2_j\in I_k|=1)=n \mathsf p_k(1-\mathsf p_k)^{n-1}.
\end{equation*}
We now provide an estimate for $p_k$ using (\ref{ass3.1}). Note that
\begin{align}\label{eq_pkbound}
  \mathsf p_k&=\mathbb{P}(Cn^{1/\alpha}\log^{-1}n+kn^{\epsilon}\le\xi^2\le Cn^{1/\alpha}\log^{-1}n+(k+1)n^{\epsilon}) \nonumber \\
   &\le \frac{1}{(Cn^{1/\alpha}\log^{-1}n+kn^{\epsilon})^{\alpha}}-\frac{1}{(Cn^{1/\alpha}\log^{-1}n+(k+1)n^{\epsilon})^{\alpha}} \nonumber \\
   &\le Cn^{-1}\log^{\alpha}n\frac{(1+(k+1)n^{-1/\alpha+\epsilon}\log n)^{\alpha}-(1+kn^{-1/\alpha+\epsilon}\log n)^{\alpha}}{(1+kn^{-1/\alpha+\epsilon}\log n)^{\alpha}} \nonumber \\
   &\le Cn^{-1}\log^{\alpha}n^{-1/\alpha+\epsilon}\log n=Cn^{-(1+1/\alpha)+\epsilon}\log^{\alpha}n.
\end{align}
Armed with the above estimate, we see that when $n$ is sufficiently large, 
\begin{equation*}
\mathbb{P}(\xi^2_{(i)},\xi^2_{(i+1)}\in I_k)\le\mathbb{P}(|j\in[\![1,n]\!]:\xi^2_j\in I_k|\ge2)=1-(1-\mathsf p_k)^n-n \mathsf p_k(1-\mathsf p_k)^{n-1}\le n^2\mathsf p_k^2.
\end{equation*}
Consequently, together with (\ref{eq_pkbound}), we have that for some constant $C_1>0$
\begin{equation}\label{eq_controldddd}
\mathbb{P}(\xi^2_{(i)}-\xi^2_{(i+1)}\le n^{\epsilon})\le\sum_{k=1}^{n^{1/\alpha-\epsilon}}n^2\mathsf p_k^2\le C n^{1/\alpha+2-\epsilon}n^{-(2+2/\alpha)+2\epsilon}\log^{\alpha}n=n^{-1/\alpha+\epsilon}\log^{\alpha}n=\ro(1),
\end{equation}
as long as $\epsilon<1/\alpha$. This finishes the proof of the third statement.

\end{proof}


Then we prove Case (b) of Definition \ref{defn_probset}. 
\begin{proof}[\bf Proof of Case (b).] Due to similarity and for notational simplicity, we focus on the case $\beta=1.$ The general setting can be proved analogously and we omit the details. 
 
We start with the second statement of (\ref{def3}). For the upper bound, for any $C>1,$ following Markov inequality, we have that for some universal constant $C'>0$ when $n$ is sufficiently large, by (\ref{ass3.2}),  
\[
\begin{split}
    \mathbb{P}(\xi^2_{(1)}<C \log n)&=(1-\mathbb{P}(\xi^2\ge C\log n))^n\ge \left(1-\frac{\mathbb{E}e^{t\xi^2}}{e^{tC\log n}}\right)^n\\
    &=\left(1-\frac{C^{\prime}}{n^{tC}} \right)^n \asymp \exp(-1/n^{{tC}-1}) \asymp 1-\rO(n^{-({tC}-1)}).
\end{split}
\]
We can therefore conclude our proof using $t=1.$ Similarly, we can prove the lower bound that for some large constant $C_1>1$
\begin{align*}
\mathbb{P}(\xi^2_{(1)} \leq C^{-1} \log n)=\rO(n^{-C_1}). 
\end{align*} 
This completes the proof of the first statement. 


For the first statement, the discussion is similar to (\ref{eq_pkbound}). The main difference is that the sequence of intervals are defined as $I_k:=\{C^{-1}\log n+k\times C^{-1}\log n,C^{-1}\log n+(k+1)\times C^{-1}\log n\},k\in[\![1,(C-C^{-1})/C^{-1}]\!].$ By Chernoff bound, we can control $\mathsf p_k:=\mathbb{P}(\xi^2\in I_k)$ as follows 
\[
\begin{split}
   \mathsf p_k&=\mathbb{P}(C^{-1}\log n+k\times C^{-1}\log n\le\xi^2\le C^{-1}\log n+(k+1)\times C^{-1}\log n)\\
    &=\mathbb{P}(\xi^2\ge C^{-1}\log n+k\times C^{-1}\log n)-\mathbb{P}(\xi^2\ge C^{-1}\log n+(k+1)\times C^{-1}\log n)\\
    &\le C^{\prime}(n^{-t(k+1)C^{-1}}-\inf_{t^{\prime}>0}n^{-t^{\prime}(k+2)C^{-1})})\\
    &\le C^{\prime}n^{-tC^{-1}},
\end{split}
\]
where $C'>0$ is some universal constant and in the third step we used (\ref{ass3.2}). Now we choose $t$ so that $tC^{-1}>2.$ Then by a discussion similar to (\ref{eq_controldddd}), we have
\[
\mathbb{P}(\xi^2_{(1)}-\xi^2_{(2)}\le C^{-1}\log n)\le n^2 \mathsf p_k^2\le C^{\prime}n^{2-tC^{-1}}.
\]
This completes the proof of the first statement.

Finally,  the last statement follows directly from the strong law of large number. In fact, the result holds almost surely. 
%
\end{proof}


Finally we prove Case (c) of Definition \ref{defn_probset}. 

\begin{proof}[\bf Proof of Case (c).] 
Note that the fourth statement holds trivially and surely. 

For the first statement, under the assumption of  (\ref{ass3.4}), we see that the lower bound follows from that  
\[
\begin{split}
   \mathbb{P}(l-\xi^2_{(1)}>n^{-1/(d+1)-\epsilon_d})&=\big(1-\mathbb{P}(l-\xi^2_{(1)}<n^{-1/(d+1)-\epsilon_d})\big)^n\\
&\ge (1-Cn^{-\epsilon_d(d+1)-1})^n\\
&\ge 1-Cn^{-\epsilon_d(d+1)}. 
\end{split}
\]
Similarly, for the upper bound, we find that when $n$ is sufficiently large, for some constant $C'>0$
\[
\begin{split}
    \mathbb{P}(l-\xi^2_{(1)}>n^{-1/(d+1)}\log n)&\le n\big(1-\mathbb{P}(l-\xi^2\le n^{-1/(d+1)}\log n)\big)^{n-1}\\
    &\le n\big(1-C^{-1}n^{-1}\log^{d+1}n\big)^{n-1}\\
    &\le ne^{-C^{-1}\log^{d+1}n} \leq n^{-C'}.
\end{split}
\]
This completes the proof of the first statement. 

For the third statement, we prove by contradiction, i.e., there exists some sequence $\mathtt{a}_n=\ro(1),$ $l-\xi_{\lfloor b n \rfloor} \leq \mathtt{a}_n$ holds with high probability.  In fact, by a discussion similar to (\ref{eq_controlkey}) using (\ref{ass3.4}), we have that as long as $c \equiv c_n > n/\mathtt{a}_n,$
\[
\begin{split}
    \mathbb{P}(l-\xi^2_{(c)} \leq \mathtt{a}_n)&=\mathbb{P}(\xi^2_{(c)} \geq l-\mathtt{a}_n)\\
    &=\sum_{k=c+1}^{n}\binom{n}{k}\mathbb{P}(\xi^2>l-\mathtt{a}_n)^k\mathbb{P}(\xi^2\le l-\mathtt{a}_n)^{n-k}=\rO(n^{-C}),
\end{split}
\]
for some constant $C>0$ when $n$ is sufficiently large. This completes our proof for the third statement.

For the second statement, its discussion is similar to (\ref{eq_pkbound}). In this case, we will define the partition of the intervals as $I_k=[l-(k+1)n^{-1/(d+1)-\epsilon_d},l-kn^{-1/(d+1)-\epsilon_d}]$ for $k=[\![1,n^{\epsilon_d}\log n]\!].$  Analogous to the arguments of (\ref{eq_pkbound}), we have that 
\[
\mathsf p_k=\mathbb{P}(\xi^2\in I_k)\le Cn^{-\epsilon_d}n^{-1/(d+1)}(n^{-1/(d+1)}\log n)^d=Cn^{-1-\epsilon_d}\log^dn. 
\]  
Using the above control with (\ref{eq_controldddd}), we readily obtain that 
\[
\mathbb{P}(\xi^2_{(1)}-\xi^2_{(2)}\le n^{-1/(d+1)-\epsilon_d})\le n^2\mathsf p_k^2\le Cn^{-2\epsilon_d}\log^{2d}n.
\]
This completes the proof of the second statement. 

Finally, we proceed to the proof of the last statement. Denote the random variable $\tau_{\xi_i}$ as follows
\begin{gather*}
    \tau_{\xi^2_i}:=\frac{\xi^2_i}{1+\xi^2_im_{1n,c}(z)}-\int\frac{t}{1+tm_{1n,c}(z)}\mathrm{d}F(t).
\end{gather*}
By definition $\mathbb{E}\tau_{\xi^2_i}=0$. On the one hand, according to the discussion around (\ref{eq_defnmathsfW}), we find that 
\begin{gather*}
    \frac{1}{n}\sum_{i=1}^p\frac{\sigma_i^2\int\frac{t^2}{|1+tm_{1n,c}(z)|^2}\mathrm{d}F(t)}{|z-\sigma_i\int\frac{t}{1+tm_{1n,c}(z)}\mathrm{d}F(t)|^2}<1.
\end{gather*}
Together with Assumption \ref{assum_additional_techinical} and the continuity of $m_{2n,c},$ we can therefore conclude that for some constant $C_0>0,$
\begin{gather*}
    \int\frac{t^2}{|1+tm_{1n,c}(z)|^2}\mathrm{d}F(t)<C_0.
\end{gather*}
As a consequence, by Cauchy-Schwarz inequality, we find that for some constants $C_1, C_2>0$
\begin{gather*}
  \mathbb{E}|\tau_{\xi^2}|^2\le C_1  \int\frac{t^2}{|1+tm_{1n,c}(z)|^2}\mathrm{d}F(t) < C_2<\infty. 
\end{gather*}
Since $\tau_{\xi_i^2}, 1 \leq i \leq n,$ are independent, we can conclude our proof using Markov inequality. 
\end{proof}


\subsection{Fluctuation averaging arguments: Proof of Lemma \ref{lem_fa}}\label{sec_FAlemma}
In this section, we prove the fluctuation averaging results in Lemma \ref{lem_fa} following the strategies of Section 6 of \cite{lee2016extremal}. Fluctuation averaging is a common step in the proof of local laws for random matrix models, especially when the LSD has a square root decay behavior near the edge so that the entries of the resolvents can be controlled under some ansatz; see the monograph \cite{erdHos2017dynamical} for a review. However, in our setting, due to the lack of square root decay as in (\ref{eq: concave decay of rho_Q}), many entries of the resovelents, even the off-diagonal ones can be large when $\eta \sim n^{-1/2}.$ To address this issue, we will follow the strategies of \cite{lee2016extremal} to focus on the resolvent fractions instead of the entries themselves; see the discussion above Sections 6.1 of \cite{Kwak2021,lee2016extremal}. In what follows, due to similarity, we focus on the parts which deviate from \cite[Section 6]{lee2016extremal} the most.

   \begin{proof}[\bf Proof of Lemma \ref{lem_fa}]
In what follows, with loss of generality, we assume that $\xi_1^2 \geq \xi_2^2 \geq \cdots \geq \xi_n^2.$   
   
  We start with part (1). Recall (\ref{eq: decomp m_2}). Using Theorem \ref{thm_boundedcaselocallaw} and Remark \ref{remk_zibound}, we have that 
 \begin{gather*}
\begin{split}
    |m_2-m_2^{(1)}| \le \left|\frac{1}{n}\frac{\xi^2_{1}}{z(1+\xi^2_{1}m_{1n}+\rO_{\prec}((n \eta_0)^{-1}))} \right|+\left|\frac{1}{n}\sum_{i=2}^{p}\frac{ \rO_{\prec}((n\eta_0)^{-1})}{z(1+\xi^2_im_{1n}+\rO_{\prec}((n \eta_0)^{-1}))(1+\xi^2_im_{1n}+\rO_{\prec}((n\eta_0)^{-1}))}\right|.
\end{split}
\end{gather*} 
For the first term on the right-hand side of the equation,  it can be trivially bounded by $(n \eta_0)^{-1}$ by a discussion similar to (\ref{eq_trivialcontroleta}) using (\ref{eq_zopointrate11}).  The second term can also be controlled by $(n \eta_0)^{-1}$ using a discussion similar to (\ref{eq_L1bound}). The proves the first equation in (\ref{eq_c5first}).  For the second equation, due to similarity, we focus on $|m_2-m_2^{(i)}|.$ Using  (\ref{eq_decompositionleavoneout}), we have that
\begin{gather}\label{eq_differenceshouldbehere}
    |m_{2}-m_2^{(i)}|\le\frac{|\mathcal{G}_{ii}|}{n}+\frac{1}{n}\sum_{j \neq i}|\mathcal{G}_{jj}-\mathcal{G}_{jj}^{(i)}|.
\end{gather}
For $\mathcal{G}_{ii},$ by  Lemma \ref{lem: Resolvent}, Theorem \ref{thm_boundedcaselocallaw} and the assumption that $z \in \mathbf{D}_b^\prime$ in (\ref{eq_spectralparameterprime}), we conclude that with high probability, for some constant $C>0$
\begin{gather}\label{eq_giiboundinverse}
    |\mathcal{G}_{ii}|=\frac{1}{|z(1+\xi^2_im^{(i)}_1+Z_i)|}\le Cn^{1/(d+1)+\epsilon_d}. 
\end{gather} 
For $\mathcal{G}_{jj}-\mathcal{G}_{jj}^{(i)},$ by  Lemma \ref{lem: Resolvent}, (\ref{lem:Wald}) and Lemma \ref{lem:large deviation}, 
\begin{gather*}
    |\mathcal{G}_{jj}-\mathcal{G}_{jj}^{(i)}|=|\frac{\mathcal{G}_{ij}\mathcal{G}_{ji}}{\mathcal{G}_{ii}}| \prec |\mathcal{G}_{ii}||\mathcal{G}_{jj}^{(i)}|^2\frac{\operatorname{Im}m_1^{(ij)}}{n\eta_0} \prec \frac{n^{2 \epsilon_d}}{n }|\mathcal{G}_{ii}||\mathcal{G}_{jj}^{(i)}|^2,
\end{gather*}  
where in the last step we used (\ref{eq_zopointrate11}) and Theorem \ref{thm_boundedcaselocallaw}. Inserting all the above bounds back to (\ref{eq_differenceshouldbehere}) and use the trivial bound that $|\mathcal{G}_{jj}^{(i)}| \leq \eta_0,$ we can conclude the proof. This completes the proof of part (1). 

We now proceed to the proof of parts (2) and (3). Due to similarity, we focus on the details of part (2) and briefly mention how to prove (3) in the end. For simplicity, following the conventions in \cite{Kwak2021,lee2016extremal}, we denote the operator
$$P_i:=\mathbf{1}-\mathbb{E}_i,$$
where $\mathbb{E}_i$ is the conditional expectation with respect to $\mathbf{y}_i$. Using Lemma \ref{lem: Resolvent}, we see that on $\Omega$
\begin{gather}\label{eq_pigiiinverse}
    \frac{1}{n}\sum_{i=2}^n P_i(\frac{1}{\mathcal{G}_{ii}})=\frac{1}{n}\sum_{i=2}^n P_i(-z-z\mathbf{y}_i^{*}G^{(i)}(z)\mathbf{y}_i)=-\frac{z}{n}\sum_{i=2}^n Z_i.
\end{gather}
Consequently, it suffices to show that 
\begin{gather*}
    \left|\frac{1}{n}\sum_{i}P_i(\frac{1}{\mathcal{G}_{ii}})\right|\prec n^{-1/2-\frac{1}{2}(\frac{1}{2}-\frac{1}{d+1})+2 \epsilon_d}.
\end{gather*}  
By Chebyshev's inequality, it suffices to prove the following lemma. 
\begin{lemma}\label{lem_fakeycomponents}
Under the assumptions of Lemma \ref{lem_fa}, for any $z\in\mathbf{D}_b^{\prime}$ and fixed even number $M\in\mathbb{N}$, we have
\begin{gather*}
    \mathbb{E}^X\left|\frac{1}{n}\sum_{i=2}^n P_i(\frac{1}{\mathcal{G}_{ii}(z)})\right|^M\prec n^{M(-1/2-\frac{1}{2}(\frac{1}{2}-\frac{1}{d+1})+2 \epsilon_d)}.
\end{gather*}
\end{lemma}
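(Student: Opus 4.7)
The plan is to first convert the reciprocal $1/\mathcal{G}_{ii}$ into something tractable. By Lemma \ref{lem: Resolvent}, $1/\mathcal{G}_{ii}(z)=-z-z\mathbf{y}_i^{*}G^{(i)}(z)\mathbf{y}_i$, and since $G^{(i)}$ is independent of $\mathbf{y}_i$, applying $P_i$ annihilates the first term and the deterministic part of $\mathbf{y}_i^{*}G^{(i)}\mathbf{y}_i$, leaving $P_i(1/\mathcal{G}_{ii})=-zZ_i$. Because $|z|\asymp 1$ on $\mathbf{D}_b^\prime$, the task reduces to showing $\mathbb{E}^X\bigl|\tfrac{1}{n}\sum_{i=2}^n Z_i\bigr|^M\prec n^{M(-1/2-\frac{1}{2}(\frac{1}{2}-\frac{1}{d+1})+2\epsilon_d)}$. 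I will adapt the fluctuation averaging scheme of \cite[Lemma 5.8]{lee2016extremal} (separable-covariance version carried out in \cite[Lemma C.1]{Kwak2021}) to the present non-square-root setting.

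Concretely, I will expand $\mathbb{E}^X|\tfrac{1}{n}\sum_i Z_i|^M$ as $n^{-M}\sum_{(i_1,\dots,i_M)}\mathbb{E}^X[Z_{i_1}^{\sharp_1}\cdots Z_{i_M}^{\sharp_M}]$, where each $\sharp_k\in\{\text{id},\text{conjugate}\}$, and classify tuples by their repetition pattern. The key property is that if some index $i_k$ appears only once in the tuple, then the conditional expectation $\mathbb{E}_{i_k}Z_{i_k}^{\sharp_k}$ is zero and the term vanishes, unless the other factors have been expanded to make their dependence on $\mathbf{y}_{i_k}$ explicit. Iterating the leave-one-out identity of Lemma \ref{lem: Resolvent}, i.e.\ $G^{(\mathcal{T})}_{ab}=G^{(\mathcal{T}j)}_{ab}+G^{(\mathcal{T})}_{aj}G^{(\mathcal{T})}_{jb}/G^{(\mathcal{T})}_{jj}$, each $Z_{i_k}$ is rewritten as a polynomial in quadratic forms of the type $\mathbf{y}_i^{*}A\mathbf{y}_j$ with $A$ independent of $\mathbf{y}_i,\mathbf{y}_j$. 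These are then estimated via Lemma \ref{lem:large deviation} and the Ward identity (\ref{lem:Wald}), which give the pointwise bound $|Z_i|\prec\xi_i^2\sqrt{\operatorname{Im} m_1^{(i)}(z)/(n\eta_0)}\prec n^{-1/2+\epsilon_d}$ for $i\geq 2$ on $\Omega$, after inserting the edge estimate (\ref{eq_zopointrate11}) and using that $\xi_i^2\le l$ is bounded.

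Combining the naive size $n^{-1/2+\epsilon_d}$ per factor with an additional $\Psi$-gain per single index (where $\Psi$ is the typical size of a fluctuation after conditioning), the dominant contribution comes from tuples in which each index appears exactly twice, plus a controlled number of singles. The extra gain factor $n^{-\frac{1}{2}(\frac{1}{2}-\frac{1}{d+1})}$ appearing in the target exponent, which is weaker than the usual square-root-edge gain $n^{-1/2}$ of \cite{lee2016extremal}, is dictated by the density (\ref{eq: concave decay of rho_Q}) together with the separation estimate (\ref{def4}): indices $j\geq 2$ with $\xi_j^2$ close to $l$ satisfy $|1+\xi_j^2 m_{1n}(z)|\gtrsim n^{-1/(d+1)-\epsilon_d}$ by construction of $\mathbf{D}_b^\prime$, and the effective number of such indices is of order $n^{1-1/(d+1)}$.

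The principal obstacle will be the bookkeeping of this $d$-dependent gain in the resolvent expansion: unlike the square-root case where $\operatorname{Im} m_1\asymp\eta$ and the single-index gain $\sqrt{\operatorname{Im} m_1/(n\eta)}$ is uniform in $z$, here $\operatorname{Im} m_1$ can range up to $(n\eta_0)^{-1}$ as $z$ approaches $z_0$, and the relevant weighting of indices $j$ by $|1+\xi_j^2 m_{1n}(z)|^{-1}$ is sharply controlled only via the edge density. To handle this I plan to partition $\{2,\dots,n\}$ dyadically according to the size of $l-\xi_j^2$, estimate the contribution of each dyadic class separately using the adapted version of \cite[Lemma 6.1--6.3]{lee2016extremal}, and then sum using the matching-weight structure induced by (\ref{ass3.4}). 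The combinatorial estimates on admissible index patterns and the resolvent-algebraic identities are otherwise identical to those in \cite[Section 6]{lee2016extremal} and I do not anticipate difficulty there.
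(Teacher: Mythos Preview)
Your reduction $P_i(1/\mathcal{G}_{ii})=-zZ_i$ and the moment-expansion plan are correct and match the paper's route, which also follows \cite[Section~6]{lee2016extremal}. The paper, however, does \emph{not} introduce any dyadic partition of $\{2,\dots,n\}$ in $l-\xi_j^2$. Instead it works entirely through the resolvent fractions $F_{ij}^{(\mathcal{T},\mathcal{T}')}:=\mathcal{G}_{ij}^{(\mathcal{T})}/\mathcal{G}_{jj}^{(\mathcal{T}')}$ and establishes the uniform pointwise bounds
\[
|P_i(1/\mathcal{G}_{ii})|\prec (n\eta_0)^{-1},\qquad |F_{ij}|\prec n^{-(1/2-1/(d+1))/2+\epsilon_d},\qquad \bigl|F_{ij}^{(\emptyset,i)}/\mathcal{G}_{ii}\bigr|\prec (n\eta_0)^{-1},
\]
together with their $(\mathcal{T},\mathcal{T}')$-minor analogues. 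The $d$-dependent gain is obtained not from counting indices near $l$, but from the a priori bound $|\mathcal{G}_{ii}^{(j)}|\le Cn^{1/(d+1)+\epsilon_d}$ (which holds for \emph{all} $i\ne 1$ by the very definition of $\mathbf{D}_b'$) combined with the Ward bound $(\operatorname{Im}m_1^{(ij)}/(n\eta_0))^{1/2}\prec (n\eta_0)^{-1}$. Once these three estimates are in hand, the combinatorics of \cite[Lemma~6.6]{lee2016extremal} go through verbatim.

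So your plan is not wrong, but the dyadic decomposition and the ``effective number of indices'' heuristic are unnecessary detours: the definition of $\mathbf{D}_b'$ already encodes the lower bound $|1+\xi_j^2 m_{1n,c}(z)|>\tfrac12 n^{-1/(d+1)-\epsilon_d}$ uniformly in $j\ge 2$, which is exactly what feeds the uniform $F_{ij}$ bound. Replacing your partition step by directly verifying the three displayed estimates above (the paper does this in about ten lines) gives a shorter and cleaner argument with the same final exponent.
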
 
\begin{proof}
The proof strategy and technique follows closely from Section 6 of \cite{lee2016extremal}. In what follows, we adopt the way how \cite[Section 6.3]{Kwak2021} generalizes \cite[Section 6.2]{lee2016extremal} and only check the core estimates that have been used in \cite{lee2016extremal}. We first provide some notations following the conventions of \cite[Section 6.1]{lee2016extremal}. For any subset $\mathcal{T},\mathcal{T}^{\prime}\subset\{1,\dots, n\}$ with $i,j\notin\mathcal{T}$ and $j\notin\mathcal{T}^{\prime}$, we set
\begin{gather*}
  F_{ij}^{(\mathcal{T},\mathcal{T}^{\prime})} \equiv   F_{ij}^{(\mathcal{T},\mathcal{T}^{\prime})}(z):=\frac{\mathcal{G}_{ij}^{(\mathcal{T})}(z)}{\mathcal{G}_{jj}^{(\mathcal{T}^{\prime})}(z)}. 
\end{gather*}
In case $\mathcal{T}=\mathcal{T}^\prime=\emptyset$, we simply write $F_{ij}=F_{ij}^{(\mathcal{T},\mathcal{T}^{\prime})}$. With Lemma \ref{lem: Resolvent}, according to \cite[Lemma 6.1]{lee2016extremal}, we have that for any subset $\mathcal{T},\mathcal{T}^{\prime}\subset\{1,\dots, p\}$ with $i,j\notin\mathcal{T}$ and $j \notin\mathcal{T}^{\prime}$, and $\gamma\notin\mathcal{T}\bigcup\mathcal{T}^{\prime}$
\begin{equation*}
F_{ij}^{(\mathcal{T},\mathcal{T}^{\prime})}=F_{ij}^{(\mathcal{T}\gamma,\mathcal{T}^{\prime})}+F_{i\gamma}^{(\mathcal{T},\mathcal{T}^{\prime})}F_{\gamma j}^{(\mathcal{T},\mathcal{T}^{\prime})}, \ 
\end{equation*}
and 
\begin{equation*}
F_{ij}^{(\mathcal{T},\mathcal{T}^{\prime})}=F_{ij}^{(\mathcal{T},\mathcal{T}^{\prime}\gamma)}-F_{ij}^{(\mathcal{T},\mathcal{T}^{\prime}\gamma)}F_{j\gamma}^{(\mathcal{T},\mathcal{T}^{\prime})}F_{\gamma j}^{(\mathcal{T},\mathcal{T}^{\prime})}.
\end{equation*}
Moreover, we have that for $\gamma \notin \mathcal{T}$
\begin{equation*}
 \frac{1}{\mathcal{G}_{ii}^{(\mathcal{T})}}=\frac{1}{\mathcal{G}_{ii}^{(\mathcal{T\gamma})}} \left(1-F_{i\gamma}^{(\mathcal{T},\mathcal{T})}F_{\gamma i}^{(\mathcal{T},\mathcal{T})}\right).
\end{equation*}
In order to apply the techniques of \cite[Section 6.2]{lee2016extremal}, we need to prove the following estimates 
\begin{gather}\label{eq_keybounds}
\begin{split}
   & |m_1(z)-m_{1n}(z)|\prec (n \eta_0)^{-1},\ \quad \operatorname{Im}m_1(z)\prec (n \eta_0)^{-1}, \ \left|P_i(\frac{1}{\mathcal{G}_{ii}}) \right|\prec (n \eta_0)^{-1}, \ i \neq 1, \\
   & \max_{i\neq j}|F_{ij}(z)|\prec n^{-(1/2-1/(d+1))/2+\epsilon_d},\quad i,j \neq 1,\\
   & \max_{i\neq j} \left|\frac{F_{ij}^{(\emptyset,i)}(z)}{\mathcal{G}_{ii}(z)} \right|\prec (n \eta_0)^{-1},\quad i, j\neq 1,
    \end{split}
\end{gather}
First, the first part of (\ref{eq_keybounds}) follows from Theorem \ref{thm_boundedcaselocallaw}, Lemma \ref{lem: est for Im m_1} and Remark \ref{remk_zibound} (recall (\ref{eq_pigiiinverse})). Second, for the second part of (\ref{eq_keybounds}), by a discussion similar to (\ref{eq_giiboundinverse}), for $i \neq j$ and $i,j \neq 1,$ we have that for some constant $C>0,$ with high probability
\begin{equation}\label{eq_priorpriorbound}
 |\mathcal{G}_{ii}^{(j)}| \leq Cn^{1/(d+1)+\epsilon_d}.
\end{equation}
Together with Lemma \ref{lem: Resolvent}, we see that for some constant $C>0$
\begin{gather*}
    \begin{split}
        |F_{ij}|&=|z\mathcal{G}_{ii}^{(j)}\mathbf{y}_i^{*}G^{(ij)}\mathbf{y}_j| \prec \left|z\mathcal{G}_{ii}^{(j)}\frac{1}{n}\|G^{(ij)}\Sigma\|_F \right| \le C \left|\mathcal{G}_{ii}^{(j)}\left(\frac{\operatorname{Im}m_1^{(ij)}}{n\eta} \right)^{1/2} \right|\\
        &\prec n^{1/(d+1)+\epsilon_d}\frac{1}{n\eta_0}=n^{1/(d+1)-1/2+2\epsilon_d},
    \end{split}
\end{gather*}
where in the second step we used (\ref{lem:Wald}) and in the third step we used (\ref{eq_priorpriorbound}) and the fact $z \in \mathbf{D}_b^\prime.$ Finally, for the third part of (\ref{eq_keybounds}), using Lemma \ref{lem: Resolvent}, Lemma \ref{lem:large deviation} and (\ref{lem:Wald}), we see that 
\begin{equation*}
 \left| \frac{F_{ij}^{(\emptyset,i)}}{\mathcal{G}_{ii}}\right|=\left|\frac{\mathcal{G}_{ij}}{\mathcal{G}_{jj}^{(i)}\mathcal{G}_{ii}} \right|=\left| z\mathbf{y}_i^{*}G^{(ij)}\mathbf{y}_j \right| \prec \sqrt{\frac{\operatorname{Im} m^{(ij)}_1(z)}{n \eta}}.
\end{equation*}
We can therefore conclude our proof using Lemma \ref{lem: est for Im m_1}, Remark \ref{remk_zibound} and (\ref{lem:trace_difference}). 

Using (\ref{eq_keybounds}) and Assumption \ref{assum_model}, we can follow the proof of Corollary 6.4 of \cite{lee2016extremal} verbatim  and conclude that for any $\mathcal{T},\mathcal{T}^{\prime},\mathcal{T}^{\prime\prime}\in\{2,\dots,n\}$ with $|\mathcal{T}|,|\mathcal{T}^{\prime}|,|\mathcal{T}^{\prime\prime}|\le M,$ where $M$ is some large positive even integer, and for $z\in\mathbf{D}_b^{\prime}$, we have that when $i \neq j, i, j \neq 1,$
\begin{gather}\label{eq_keybounds2}
\begin{split}
   & |F_{ij}^{(\mathcal{T},\mathcal{T}^{\prime})}(z)|\prec n^{-(1/2-1/(d+1))/2+\epsilon_d},\\
   & \left|\frac{F_{ij}^{(\mathcal{T}^{\prime},\mathcal{T}^{\prime\prime})}(z)}{G^{(\mathcal{T})}_{ii}(z)} \right|\prec (n \eta_0)^{-1}, \  \ \left|P_i\left(\frac{1}{\mathcal{G}_{ii}^{(\mathcal{T})}} \right)\right|\prec (n \eta_0)^{-1}.
    \end{split}
\end{gather}
Once the key ingredients (\ref{eq_keybounds}) and (\ref{eq_keybounds2}) have been proved, we can follow lines of \cite[Lemma 6.6]{lee2016extremal} or \cite[Lemma 6.11]{Kwak2021} to conclude the proof. Due to similarity, we omit the details.
\end{proof}

\quad This completes the proof of part (2). The proof of part (3) is similar except we need to following the proof of Lemma \ref{lem_fakeycomponents} and \cite[Lemma 6.12]{lee2016extremal}  to show 
\begin{equation*}
   \mathbb{E}^X\left| \frac{1}{n} \sum_{i=2}^n \frac{1}{(1+\xi_i^2 m_{1n}(z))^2} P_i(\frac{1}{\mathcal{G}_{ii}(z)})\right|^M\prec n^{M(-1/2-\frac{1}{2}(\frac{1}{2}-\frac{1}{d+1})+2 \epsilon_d)}.
\end{equation*}
We omit the proof and refer the readers to the proof of \cite[Lemma 6.12]{lee2016extremal} for more details. This completes the proof of Lemma \ref{lem_fa}.

   \end{proof}


\bibliographystyle{abbrv}
\bibliography{randompopulation,randompopulation1}

\end{document}